\def\showauthornotes{0}
\def\showdraftbox{0}
\newcommand\np{\mbox{\bf NP}\xspace}
\newcommand\sat{\mbox{SAT}\xspace}
\newtheorem{theorem}{Theorem}[section]
\newtheorem{definition}[theorem]{Definition}
\newtheorem{lemma}[theorem]{Lemma}
\newtheorem{proposition}[theorem]{Proposition}
\newtheorem{corollary}[theorem]{Corollary}
\newtheorem{claim}[theorem]{Claim}
\def\FullBox{\hbox{\vrule width 6pt height 6pt depth 0pt}}
\def\qed{\ifmmode\qquad\FullBox\else{\unskip\nobreak\hfil
\penalty50\hskip1em\null\nobreak\hfil\FullBox
\parfillskip=0pt\finalhyphendemerits=0\endgraf}\fi}
\def\qedsketch{\ifmmode\Box\else{\unskip\nobreak\hfil
\penalty50\hskip1em\null\nobreak\hfil$\Box$
\parfillskip=0pt\finalhyphendemerits=0\endgraf}\fi}
\newenvironment{proof}{\begin{trivlist} \item {\bf Proof:~~}}
   {\qed\end{trivlist}}
\newcommand\rea{\mathbb R}
\newcommand\B{\{0,1\}}      
\newcommand\true{\mbox{\sc True}}
\newcommand\false{\mbox{\sc False}}
\newcommand{\marginlabel}[1]%
{\mbox{}\marginpar{\it{\raggedleft\hspace{0pt}#1}}}
\newcommand\poly{\mathrm{poly}}  
\newcommand{\ceil}[1]{\left\lceil\, {#1}\,\right\rceil}
\newcommand{\pair}[1]{\left\langle{#1}\right\rangle} 
\definecolor{Mygray}{gray}{0.8}
\let\csname ifcommentflag\expandafter\endcsname
\newcommand{\Authornote}[2]{{\sf\small\color{red}{[#1: #2]}}}
\newcommand{\Authoredit}[2]{{\sf\small\color{red}{[#1]}\color{blue}{#2}}}
\newcommand{\Authorcomment}[2]{{\sf \small\color{Mygray}{[#1: #2]}}}
\newcommand{\Authorfnote}[2]{\footnote{\color{red}{#1: #2}}}
\newcommand{\Authorfixme}[1]{\Authornote{#1}{\textbf{??}}}
\newcommand{\Authormarginmark}[1]{\marginpar{\textcolor{red}{\fbox{
#1:!}}}}
\newcommand{\Authornote}[2]{}
\newcommand{\Authoredit}[2]{}
\newcommand{\Authorcomment}[2]{}
\newcommand{\Authorfnote}[2]{}
\newcommand{\Authorfixme}[1]{}
\newcommand{\Authormarginmark}[1]{}
\newcommand\calC{\mathcal{C}}
\newcommand\calD{\mathcal{D}}
\newcommand\calW{\mathcal{W}}
\newcommand\av{\mathop{\mbox{\bf E}}}
\newcommand\var
\newcommand{\norm}[1]{\ensuremath{\left\lVert #1 \right\rVert}}
\newlength{\pgmtab}  
\newcounter{lecnum}
\newlength{\tpush}
\newcommand{\draftbox}{\begin{center}
  \fbox{%
    \begin{minipage}{2in}%
      \begin{center}%
          \large\textsc{Working Draft}\\%
        Please do not distribute%
      \end{center}%
    \end{minipage}%
  }%
\end{center}
\vspace{0.2cm}}
\newcommand{\draftbox}{}
\newcommand{\Ssnote}{\Authornote{SS}}
\newcommand{\Sknote}{\Authornote{SK}}
\newcommand{\defeq}{\stackrel{\textup{def}}{=}} 
\newcommand{\eps}{\varepsilon}
\renewcommand{\epsilon}{\varepsilon}
\newcommand{\nfrac}{\nicefrac}
\newcommand{\E}{\mathbf{E}}
\newcommand{\activedeg}{\mathsf{activedegree}}
\newcommand{\Var}{\mathbf{Var}}
\newcommand{\cnt}{\mathsf{count}}
\newcommand{\Act}{\mathsf{Active}}
\newcommand{\varest}{\mathsf{Uvar}}
\newcommand{\meanest}{\mathsf{Lmean}}
\newcommand{\val}{\mathsf{val}}
\newcommand{\etal}{{\em et al.\ }}
\newcommand\maxsat{\mbox{\sc Max-2-SAT}\xspace}
\newcommand\maxcut{\mbox{\sc Max-CUT}\xspace}
\newcommand\maxand{\mbox{\sc Max-2-AND}\xspace}
\newcommand\maxcsp{\mbox{\sc Max-CSP}\xspace}
\newcommand\maxcspwq{\mbox{\sc Max-$w$-CSP$_q$}\xspace}
\newcommand\maxconj{\mbox{\sc Max-$w$-ConjSAT$_q$} \xspace}
\newcommand\maxfcsp{\mbox{\sc Max-$\calF$-CSP} \xspace}
\newcommand\maxwsat{\mbox{\sc Max-$w$-SAT}\xspace}
\newcommand\opt{\textrm{\sc Opt}}
\newcommand{\varcalc}{\mathsf{TrueVar}}
\newcommand{\meancalc}{\mathsf{TrueMean}}
\newcommand{\finalW}{\mathsf{finalwt}}
\newcommand{\forward}{{\sf{forward}}}
\newcommand{\backward}{{\sf{backward}}}
\newcommand{\heaviest}{{\sf{heaviest}}}
\newcommand{\calT}{{\mathcal T}}
\newcommand{\Inst}{\mathcal{I}}
\newcommand{\calA}{{\mathcal A}}
\newcommand{\calB}{{\mathcal B}}
\newcommand{\NAE}{\sf{NAE}}
\newcommand{\typeA}{{\sf{typeA}}}
\newcommand{\typeB}{{\sf{typeB}}}
\newcommand{\typeAB}{{\sf{typeAB}}}
\newcommand{\typeC}{{\sf{typeC}}}
\newcommand{\out}{{\sf{out}}}
\newcommand{\Cgood}{{\sf{Cgood}}}
\newcommand{\score}{{\sf{score}}}
\newcommand{\calF}{{\mathcal{F}}}
\newcommand{\XOR}{{\mathsf{XOR}}}
\newcommand{\Dist}{{\mathsf{Dist}}}
\newcommand{\vect}{{\vec{t}}}
\newcommand{\vecz}{{\vec{z}}}
\newcommand{\smooth}{{\mathsf{smooth}}}
\newcommand{\LL}{{\mathsf{ConjSAT\mbox{-}LP}}}
\newcommand{\LLand}{{\mathsf{MAX2AND\mbox{-}LP}}}
\newcommand{\LLwsat}{{\mathsf{MAXwSAT\mbox{-}LP}}}
\newcommand{\flag}{\mathsf{flag}}
\newcommand{\algouwmaxcut}{{\sc Sim-UnweightedMC}}
\newcommand{\algowand}{{\sc Sim-Max2AND}}
\newcommand{\algowcsp}{{\sc Sim-MaxConjSAT}}
\newcommand{\algowwsat}{{\sc Sim-MaxwSAT}}
\newtheorem{observation}[theorem]{Observation}
\title{Simultaneous Approximation of Constraint Satisfaction Problems}
\author{Amey Bhangale
\thanks{Department of Computer Science. Rutgers University.  Research supported in part by NSF grant CCF-1253886. {\tt amey.bhangale@rutgers.edu}}
\and 
Swastik Kopparty \thanks{Department of Mathematics \& Department of
  Computer Science. Rutgers University. Research supported in part by a Sloan Fellowship and NSF grant CCF-1253886. {\tt swastik.kopparty@rutgers.edu}}
\and
Sushant Sachdeva \thanks{
Department of Computer Science, Yale University. Research supported by the
NSF grants CCF-0832797, CCF-1117309, and Daniel Spielman's \& Sanjeev Arora's Simons Investigator Grants. Part of this work was
    done when this author was at the Simons Institute for
    the Theory of Computing, UC Berkeley, and at the Department of
    Computer Science, Princeton University. Email: {\tt
      sachdeva@cs.yale.edu}}
}
\begin{document}
\maketitle

\thispagestyle{empty}
\draftbox

\begin{abstract}
  Given $k$ collections of 2SAT 
  clauses on the same set of variables $V$, can we find one
  assignment that satisfies a large fraction of clauses from {\em
    each} collection? We consider such {\em simultaneous} constraint
  satisfaction problems, and design the first nontrivial approximation
  algorithms in this context.

  Our main result is that for every CSP $\calF$, for $k <
  \tilde{O}(\log^{\nfrac{1}{4}} n)$, there is a polynomial time
  constant factor {\em Pareto} approximation algorithm for $k$
  simultaneous \maxfcsp instances. Our methods are quite general, and
  we also use them to give an improved approximation factor for
  simultaneous {\sc Max-$w$-SAT} (for $k
  <\tilde{O}(\log^{\nfrac{1}{3}} n)$).  In contrast, for $k =
  \omega(\log n)$, no nonzero approximation factor for $k$
  simultaneous \maxfcsp instances can be achieved in polynomial time
  (assuming the Exponential Time Hypothesis).

These problems are a natural meeting point for the theory of constraint
satisfaction problems and multiobjective optimization.
We also suggest a number of interesting directions for future research.

\vspace{3mm}

\end{abstract}
\clearpage
\setcounter{page}{1}

\section{Introduction}
The theory of approximation algorithms for constraint satisfaction
problems (CSPs) is a very central and well developed part of modern
theoretical computer science.  Its study has involved fundamental
theorems, ideas, and problems such as the PCP theorem, linear and
semidefinite programming, randomized rounding, the Unique Games
Conjecture, and deep connections between them~\cite{AroraS98,
  AroraLMSS98, GW-sdp95, Khot02, Raghavendra08, RaghavendraS09}.

In this paper, we initiate the study of {\em simultaneous
  approximation algorithms} for constraint satisfaction problems. A
typical such problem is the simultaneous \maxcut problem: Given a
collection of $k$ graphs $G_i = (V, E_i)$ on the same vertex set $V$, the
problem is to find a single cut (i.e., a partition of $V$) so that in
{\em every} $G_i$, a large fraction of the edges go across the cut.

More generally, let $q$ be a constant positive integer, and let
$\calF$ be a set of bounded-arity predicates on $[q]$-valued
variables. Let $V$ be a set of $n$ $[q]$-valued variables.  An
$\calF$-CSP is a weighted collection $\calW$ of constraints on $V$,
where each constraint is an application of a predicate from $\calF$ to
some variables from $V$.  For an assignment $f: V \to [q]$ and a
$\calF$-CSP instance $\calW$, we let $\val(f, \calW)$ denote the total
weight of the constraints from $\calW$ satisfied by $f$.
The  {\sc Max}-$\calF$-CSP problem is to find $f$ which maximizes
$\val(f, \calW)$. If $\calF$ is the set of all predicates on $[q]$
of arity $w$, then {\sc Max}-$\calF$-CSP is also called $\maxcspwq$.

We now describe the setting for the problem we consider:
\emph{$k$-fold simultaneous} {\sc Max}-$\calF$-CSP. Let $\calW_1,
\ldots, \calW_k$ be $\calF$-CSPs on $V$, each with total weight $1$.
Our high level goal is to find an assignment $f: V \to [q]$ for which
$\val(f, \calW_\ell)$ is large for all $\ell \in [k]$.

These problems fall naturally into the domain of multi-objective
optimization: there is a common search space, and multiple objective
functions on that space.  Since even optimizing one of these objective
functions could be NP-hard, it is natural to resort to approximation
algorithms. Below, we formulate some of the approximation criteria
that we will consider, in decreasing order of difficulty:
\begin{enumerate}
\item {\bf Pareto approximation:} 
Suppose $(c_1, \ldots, c_k) \in [0,1]^k$ is such that
there is an assignment $f^*$ with $\val(f^*, \calW_\ell) \geq c_\ell$
for each $\ell \in [k]$.

An $\alpha$-Pareto approximation algorithm in this context is an algorithm,
which when given $(c_1, \ldots, c_k)$ as input, finds an assignment $f$ such that
$\val(f, \calW_\ell) \geq \alpha \cdot c_\ell$, for each $\ell \in [k]$.

\item {\bf Minimum approximation:} This is basically the Pareto
  approximation problem when $c_1 = c_2= \ldots = c_k$. Define $\opt$
  to be the maximum, over all assignments $f^*$, of $\min_{\ell \in
    [k]} \val(f^*, \calW_\ell)$.

  An $\alpha$-minimum approximation algorithm in this context is an
  algorithm which finds an assignment $f$ such that $\min_{\ell \in
    [k]} \val(f, \calW_\ell) \geq \alpha \cdot \opt$.

\item {\bf Detecting Positivity:} This is a very special case of the above,
where the goal is simply to determine whether there is an assignment $f$ which
makes $\val(f, \calW_\ell) > 0$ for all $\ell \in [k]$.

At the surface, this problem appears to be a significant
weakening of the the simultaneous approximation goal.
\end{enumerate}

When $k = 1$, minimum approximation and Pareto approximation correspond
to the classical \maxcsp approximation problems
(which have received much attention). Our focus in this paper is on
general $k$. As we will see in the discussions below, the nature of
the problem changes quite a bit for $k > 1.$  In
particular, direct applications of classical techniques like random
assignments and convex programming relaxations fail to give even a
constant factor approximation.

The theory of {\em exact} multiobjective optimization has been very well studied, 
(see eg.~\cite{PapadimitriouY00, Diakonikolas11} and the references
therein). For several optimization problems such as shortest paths,
minimum spanning trees, matchings, etc, there are polynomial time
algorithms that solve the multiobjective versions exactly. For
{\sc Max-SAT}, simultaneous approximation was studied by Gla\ss er \etal~\cite{Glasser11}.

We have two main motivations for studying simultaneous approximations
for CSPs. Most importantly, these are very natural algorithmic
questions, and capture naturally arising constraints in a way which
more na\"{i}ve formulations (such as taking linear combinations of the
given CSPs) cannot.  Secondly, the study of simultaneous approximation
algorithms for CSPs sheds new light on various aspects of standard
approximation algorithms for CSPs. For example, our algorithms are
able to favorably exploit some features of the trivial
random-assignment-based $\nfrac{1}{2}$-approximation algorithm for
\maxcut, that are absent in the more sophisticated SDP-based
0.878-approximation algorithm of Goemans-Williamson~\cite{GW-sdp95}.
\Ssnote{Add something to the motivation?}


%
%

\subsection{Observations about simultaneous approximation}
We now discuss why a direct application of the classical CSP
algorithms fails in this setting, and limitations on the
approximation ratios that can be achieved.


We begin with a trivial remark. 
Finding an $\alpha$-minimum (or Pareto) approximation to the
 $k$-fold \maxfcsp is at least as hard as finding an
  $\alpha$-approximation the classical \maxfcsp problem (i.e., $k =
  1$).  Thus the known limits on polynomial-time approximability
  extend naturally to our setting.

\medskip 
\noindent \textbf{\textsc{Max-1-SAT}.} The simplest simultaneous CSP is {\sc Max-1-SAT}.
 The problem of getting a $1$-Pareto or $1$-minimum approximation
  to $k$-fold simultaneous {\sc Max-1-SAT} is essentially the \np-hard
  SUBSET-SUM problem. There is a simple $2^{\poly(k/\epsilon)}\cdot
  \poly(n)$-time $(1-\epsilon)$-Pareto approximation algorithm based
  on dynamic programming.

It is easy to
  see that detecting positivity of a $k$-fold simultaneous {\sc
    Max-1-SAT} is exactly the same problem as detecting satisfiability
  of a SAT formula with $k$ clauses (a problem studied in the
fixed parameter tractability community. Thus, this problem
can be solved in time  $2^{O(k)} \cdot \poly(n)$ (see~\cite{Marx13}),
and under the Exponential
Time Hypothesis, one does not expect a polynomial time algorithm
 when $k = \omega(\log n)$. 

\medskip
\noindent {\bf Random Assignments.} Let us consider algorithms based on
random assignments. A typical example is \maxcut. A uniformly random
cut in a weighted graph graph cuts $\nfrac{1}{2}$ the total weight in
expectation. This gives a $\nfrac{1}{2}$-approximation to the
classical \maxcut problem.

  If the cut value is concentrated around $\nfrac{1}{2},$ with high
  probability, we would obtain a cut that's simultaneously good for
  all instances.  For an {\em unweighted} graph\footnote{We use the
    term ``unweighted" to refer to instances where all the constraints
    have the same weight.  When we talk about simultaneous
    approximation for unweighted instances $\calW_1, \ldots, \calW_k$
    of MAX-$\calF$-CSP, we mean that in each instance $\calW_i$, all
    constraints with nonzero weight have the equal weights (but that
    equal weight can be different for different $i$).}  $G$ with
  $\omega(1)$ edges, a simple variance calculation shows that a
  uniformly random cut in the graph cuts a $\left(\frac{1}{2}
    -o(1)\right)$ fraction of the edges {\em with high probability}.
  Thus by a union bound, for $k = O(1)$ simultaneous unweighted
  instances $G_1, \ldots, G_k$ of \maxcut, a uniformly random cut
  gives a $\left(\frac{1}{2} - o(1)\right)$-minimum (and Pareto)
  approximation with high probability. However, for {\em weighted}
  graphs, the concentration no longer holds, and the algorithm fails
  to give any constant factor approximation.

  For general CSPs, even for unweighted instances, the total weight
  satisfied by a random assignment does not necessarily concentrate.
  In particular, there is no ``trivial" random-assignment-based
  constant factor approximation algorithm for simultaneous general
  CSPs.

\medskip
\noindent {\bf SDP Algorithms.}
How do algorithms based on semi-definite programming (SDP)
  generalize to the simultaneous setting?

  For the usual \maxcut problem ($k=1$), the celebrated
  Goemans-Williamson SDP algorithm~\cite{GW-sdp95} gives a
  $0.878$-approximation. The SDP relaxation generalizes naturally to
  to the simultaneous setting; it allows us to find a vector solution
  which is a simultaneously good cut for $G_1, \ldots, G_k$. Perhaps we apply
  hyperplane rounding to the SDP solution to obtain a simultaneously
  good cut for all $G_i$? We know that each $G_i$ gets a good cut in
  expectation, but we need each $G_i$ to get a good cut {\em with high
    probability} to guarantee a simultaneously good cut.

However, there are cases where the hyperplane rounding fails
completely.  For weighted instances, the SDP does not have {\em any}
constant integrality gap.  For unweighted instances, for every fixed
$k$, we find an instance of $k$-fold simultaneous \maxcut (with
arbitrarily many vertices and edges) where the SDP relaxation has
value $1- \Omega\left(\frac{1}{k^2}\right)$, while the optimal
simultaneous cut has value only $\nfrac{1}{2}$.
Furthermore, applying the hyperplane
rounding algorithm to this vector solution gives (with probability 1)
a simultaneous cut value of {\em 0}. These integrality gaps are
described in Section~\ref{section:sdp}.

Thus the natural extension of SDP based techniques for
simultaneous approximation fail quite spectacularly. A-priori, this
failure is quite surprising, since SDPs (and LPs) generalize to the
multiobjective setting seamlessly. 


\medskip
\noindent {\bf Matching Random Assignments?}
Given the ease and simplicity of algorithms based on random
assignments for $k=1,$ giving algorithms in the simultaneous setting
that match their approximation guarantees is a natural
benchmark. Perhaps it is always possible to do as well in the
simultaneous setting as a random assignment for one instance?

  Somewhat surprisingly, this is incorrect.  For
  simultaneous {\sc Max-E$w$-SAT} ({\sc CNF-SAT} where every clause
  has exactly $w$ distinct literals), a simple reduction from {\sc
    Max-E$3$-SAT} (with $k=1$) shows that it is \np-hard to give a
  $(\nfrac{7}{8} + \epsilon)$-minimum approximation for $k$-fold
  simultaneous {\sc Max-E$w$-SAT} for large enough constants $k$. 
 \begin{proposition}
\label{proposition:results:hardness:maxwsat}
For all integers $w \ge 4$ and $\epsilon > 0$, given $k \ge 2^{w-3}$
instances of {\sc Max-E$w$-SAT} that are simultaneously satisfiable,
it is \np-hard to find a $(\nfrac{7}{8}+\eps)$-minimum (or
Pareto) approximation.
\end{proposition}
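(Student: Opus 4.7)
The plan is to reduce from Håstad's hardness of \textsc{Max-E3-SAT}: it is \np-hard to distinguish satisfiable E3-SAT formulas from those where no assignment satisfies more than a $(\nfrac78+\eps)$ fraction of the clauses. Given such an instance $\phi$ on variables $x_1,\dots,x_n$ with clauses $C_1,\dots,C_m$, I will introduce $w-3$ fresh variables $y_1,\dots,y_{w-3}$ and produce $k=2^{w-3}$ simultaneous \textsc{Max-E$w$-SAT} instances by padding each original 3-clause with $w-3$ literals on the $y$-variables, using one instance for each possible choice of signs.

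For each $a \in \{0,1\}^{w-3}$, let $L^{(a)}_j$ denote the literal $y_j$ when $a_j=0$ and $\lnot y_j$ when $a_j=1$, and define $\calW_a$ to consist of the $w$-clauses $C_t \vee L^{(a)}_1 \vee \cdots \vee L^{(a)}_{w-3}$ for $t=1,\dots,m$. The defining property of this padding is that, for any assignment $y^* \in \{0,1\}^{w-3}$, the added literals in $\calW_a$ are \emph{all} false precisely when $a = y^*$, and at least one of them is true whenever $a \neq y^*$.

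Completeness and soundness then follow immediately. If $\phi$ is satisfied by some $x^*$, extend by an arbitrary $y^*$: every clause of every $\calW_a$ is satisfied (for $a = y^*$ because the underlying $C_t$ is satisfied; for $a \neq y^*$ because some padding literal is true), so the $k$ instances are simultaneously satisfiable and the Pareto targets $c_\ell = 1$ are feasible. Conversely, for any assignment $(x^*,y^*)$ the instance $\calW_{y^*}$ has all padding literals false, so $\val((x^*,y^*), \calW_{y^*})$ equals exactly the fraction of clauses of $\phi$ satisfied by $x^*$, which is at most $\nfrac78 + \eps$; hence $\min_a \val((x^*,y^*), \calW_a) \leq \nfrac78 + \eps$. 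A $(\nfrac78+\eps')$-minimum (or Pareto, with $c_\ell = 1$) approximation would therefore decide the Håstad gap problem. For $k > 2^{w-3}$ we simply duplicate one of the instances.

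The analysis is essentially mechanical; the only design point that needs care is engineering the padding so that for every $y^*$ exactly one instance in the family reproduces $\phi$ verbatim, which is automatic once one observes that the map $y^* \mapsto \{a : \text{all } L^{(a)}_j \text{ false}\}$ is the identity on $\{0,1\}^{w-3}$. That ``witness instance for every $y^*$'' requirement is what forces $k \geq 2^{w-3}$ and is the only real obstacle; no further approximation-preserving gadgets are needed.
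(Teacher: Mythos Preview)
Your proof is correct and essentially identical to the paper's: both reduce from H{\aa}stad's satisfiable-vs-$(\nfrac{7}{8}+\eps)$ gap for \textsc{Max-E3-SAT} by introducing $w-3$ fresh variables and creating one padded instance per sign pattern $a \in \{0,1\}^{w-3}$, so that for any assignment the instance indexed by the chosen values of the fresh variables has value exactly that of the original formula. Your handling of $k > 2^{w-3}$ by duplication is a small addition the paper leaves implicit.
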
 
On the other hand, a random assignment to a single {\sc Max-E$w$-SAT}
instance satisfies a $1-2^{-w}$ fraction of constraints in expectation.

  This shows that simultaneous CSPs can have worse approximation
  factors than that expected from a random assignment. In particular,
  it shows that simultaneous CSPs can have worse approximation factors
  than their classical ($k = 1$) counterparts.

\subsection{Results}
Our results address the approximability of $k$-fold simultaneous
\maxfcsp for large $k$.  Our main algorithmic result shows that for
every $\calF$, and $k$ not too large, $k$-fold simultaneous \maxfcsp
has a constant factor Pareto approximation algorithm.
\begin{theorem}
\label{theorem:results:maxcsp}
Let $q, w$ be constants.
Then for every $\epsilon > 0$, 
there is a $2^{O(\nfrac{k^4}{\eps^2}\log(\nfrac{k}{\eps}))} \cdot \poly(n)$-time $\left(\frac{1}{q^{w-1}} - \epsilon \right)$-Pareto
approximation algorithm for $k$-fold simultaneous
\maxcspwq. 
\end{theorem}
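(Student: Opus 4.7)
\begin{proofsketch}
The plan is to reduce the problem to a small enumeration by exploiting a ``heavy vs.\ light'' dichotomy of constraints: the heavy constraints span only a small set of variables (so we can enumerate assignments on them exhaustively), while the light constraints contribute in a way that concentrates under random rounding.

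\emph{Setup.} Without loss of generality assume each target $c_\ell \ge \eps$ (otherwise the guarantee $0 \le (q^{-(w-1)}-\eps)c_\ell$ is vacuous up to rescaling $\eps$). Choose a weight threshold $\tau = \Theta\!\bigl(\eps^4/(k\, q^{2w})\bigr)$. For each $\ell$, let $H_\ell \subseteq \calW_\ell$ be the set of constraints of weight at least $\tau$ (the \emph{heavy} constraints). Since $\calW_\ell$ has total weight $1$, $|H_\ell| \le 1/\tau$, and the variable set $V_H := \bigcup_\ell V(H_\ell)$ satisfies $|V_H| \le wk/\tau$. After absorbing the $\log q$ factor, $q^{|V_H|} = 2^{O((k^4/\eps^2)\log(k/\eps))}$, matching the claimed running time.

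\emph{Algorithm.} Enumerate all partial assignments $\sigma : V_H \to [q]$. For each $\sigma$, extend it to a full assignment $f$ on $V$ by running the method of conditional expectations on the variables in $V \setminus V_H$ with respect to a smooth potential
\[
\Phi(f) := \sum_{\ell=1}^k \varphi\!\left(\val(f,\calW_\ell) - \bigl(q^{-(w-1)}-\eps\bigr) c_\ell\right),
\]
where $\varphi \colon \R \to [0,1]$ is a monotone surrogate for the indicator $\mathbf{1}[\,\cdot\, \ge 0]$. Output the first $(\sigma,f)$ for which $\val(f,\calW_\ell) \ge (q^{-(w-1)}-\eps)\,c_\ell$ for every $\ell$.

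\emph{Correctness.} Let $f^\star$ be a hypothetical witness to the targets and set $\sigma^\star := f^\star|_{V_H}$; since the enumeration is exhaustive, $\sigma^\star$ is tried. It suffices to show that for this $\sigma^\star$ a uniformly random extension to $V \setminus V_H$ meets all $k$ targets with positive probability, since derandomization against $\Phi$ then finds a deterministic extension that does the same. Decompose $\val(f,\calW_\ell) = \val_\ell^H + \val_\ell^L$ into heavy and light contributions. The heavy part $\val_\ell^H$ is determined by $\sigma^\star$ and matches $f^\star$'s value exactly. For the light part, each light constraint satisfied by $f^\star$ is re-satisfied by a random extension of $\sigma^\star$ with probability at least $q^{-(w-1)}$ by the ``sample $w-1$ coordinates, optimize the last'' rule, applied consistently across constraints by processing free variables in an order dictated by the hypergraph of light constraints. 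This yields $\mathbb{E}[\val(f,\calW_\ell) \mid \sigma^\star] \ge q^{-(w-1)}\,c_\ell$. Since each light constraint has weight at most $\tau$, the conditional variance of $\val_\ell^L$ is $O(\tau)$; Chebyshev's inequality combined with a union bound over the $k$ instances shows that all $k$ targets are met simultaneously with probability at least $1/2$ for our choice of $\tau$.

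\emph{Main obstacle.} The crux of the argument is step three: upgrading the naive $q^{-w}$ factor (from a fully uniform random extension) to the sharper $q^{-(w-1)}$ demanded by the theorem. This requires the best-response refinement on one free variable per constraint, together with a verification that these per-constraint ``last variables'' can be scheduled consistently across all light constraints without conflicts, while still preserving the variance bound needed for concentration. Once this is in place, the remainder of the proof (concentration, union bound, and derandomization via $\Phi$) is standard.
\end{proofsketch}
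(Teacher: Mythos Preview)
Your proposal has two genuine gaps, one of which you have not identified.

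\textbf{The variance bound is wrong.} You claim that because every light constraint has weight at most $\tau$, the conditional variance of $\val_\ell^L$ is $O(\tau)$. This would hold if the constraint indicators were independent, but they are not: constraints sharing a free variable are correlated. Concretely, take a single variable $x_1 \in V\setminus V_H$ and $n-1$ light constraints $C_i = (x_1{=}a)\wedge(x_{i+1}{=}a)$, each of weight $1/(n-1)$. All constraints are light for large $n$, yet under a uniform random extension $\val_\ell^L$ is $0$ when $x_1\neq a$ and roughly $1/q$ when $x_1=a$, so its variance is $\Theta(1)$, not $O(\tau)$. The true variance is controlled not by the maximum constraint weight but by the maximum \emph{active degree} of a free variable, and nothing in your decomposition bounds that. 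This is exactly why the paper does not take a static heavy/light split: it iteratively identifies high-active-degree variables (Lemma~\ref{lemma:maxcsp:truemeanvar}) and moves them into $S$, re-examining variance after each move, which is what forces the tree structure and the evolving partial assignment.

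\textbf{The $q^{-(w-1)}$ factor is not obtainable this way.} You already flag this as the ``main obstacle,'' but the proposed fix does not work. A constraint lying entirely in $V\setminus V_H$ is resatisfied by a uniform extension with probability $q^{-w}$, and there is no consistent way to designate one ``last variable'' per constraint so that optimizing it recovers an extra factor of $q$: two constraints sharing a free variable will in general demand incompatible values for it. In the paper the $q^{-(w-1)}$ factor does not come from random assignment at all; it comes from solving the LP relaxation $\LL_2(\rho)$ and applying Trevisan's smoothed rounding (Lemma~\ref{lemma:maxcsp:lp} and Claim~\ref{claim:trev_rounding}), which yields $\E[C(\rho\cup g)] \ge z_C/q^{w-1}$ constraint by constraint. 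The LP is essential for the stated approximation ratio; without it the argument only reaches $q^{-w}-\eps$.

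Finally, even after fixing both issues, extending $f^\star|_S$ by the LP rounding is still not enough: the paper gives an explicit example (Appendix, Section~\ref{section:example}) where this fails for a saturated instance, and the \emph{perturbation} of $f^\star|_S$ in the analysis (the procedure in the proof of Theorem~\ref{thm:maxcspapprox}) is what rescues those instances. Your sketch has no analogue of this step.
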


The dependence on $k$ implies that the algorithm runs in polynomial
time up to $k = \tilde{O}((\log n)^{\nfrac{1}{4}})$ simultaneous
instances \footnote{The $\tilde{O}(\cdot)$ hides $\poly(\log \log n)$
  factors.}.  The proof of the above Theorem appears in
Section~\ref{section:maxcsp}, and involves a number of ideas.  In
order to make the ideas clearer, we first describe the main ideas for
approximating simultaneous $\maxand$ (which easily implies the $q = w = 2$ special case of the above
theorem); this appears in Section~\ref{section:maxand}.

For particular CSPs, our methods allow us to do significantly better,
as demonstrated by our following result for \maxwsat.
\begin{theorem}
\label{theorem:results:maxwsat}
Let $w$ be a constant. For every $\eps > 0$, there is a
$2^{O(\nfrac{k^3}{\eps^2}\log(\nfrac{k}{\eps}))} \cdot \poly(n)$-time
$\left( \nfrac{3}{4} -\epsilon\right)$-Pareto approximation algorithm
for $k$-fold \maxwsat.
\end{theorem}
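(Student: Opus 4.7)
The plan is to follow the same high-level template as the proof of Theorem~\ref{theorem:results:maxcsp}, but substitute its random-assignment baseline (which would only yield $\tfrac{1}{2^{w-1}}$ for $q=2$) with the classical $\tfrac{3}{4}$-approximation for a single \maxwsat instance obtained by biased independent rounding of the standard LP relaxation, following Yannakakis and Goemans--Williamson.

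First I would write the standard \maxwsat LP relaxation, augmented with $k$ constraints of the form ``the fractional value on instance $\ell$ is at least $c_\ell$''; feasibility is guaranteed by the Pareto assumption, since the hypothetical assignment $f^*$ satisfies these constraints integrally. Let $x\in[0,1]^V$ denote the fractional solution. Next I would identify a set $H\subseteq V$ of \emph{heavy} variables whose marginal contribution to the LP objective of some instance exceeds a threshold $\tau = \Theta(\eps/k)$. A straightforward weighting argument yields $|H| = O\bigl(\tfrac{k^2}{\eps}\log\tfrac{k}{\eps}\bigr)$, and brute-force enumeration of the $2^{|H|}$ assignments on $H$ contributes a factor of $2^{O(k^3/\eps^2\log(k/\eps))}$ to the running time (each guess triggers a fresh LP re-solve on $V\setminus H$). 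Because some guess agrees with an optimal simultaneous assignment on $H$, the residual LP remains feasible with per-instance value at least $c_\ell$ minus the weight already credited from $H$.

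For that residual instance I would apply the Yannakakis/GW biased independent rounding on $V\setminus H$. In expectation the rounded assignment satisfies $\tfrac{3}{4}c_\ell$ weight in instance $\ell$. Since every surviving variable has fractional influence at most $\tau$ on every instance, Hoeffding/McDiarmid gives deviation at most $\eps$ with failure probability $2^{-\Theta(k)}$ per instance, and a union bound over the $k$ instances produces a $(\tfrac{3}{4}-\eps)$-Pareto approximation with high probability for the correct guess. The algorithm returns the best assignment encountered across all guesses, verified against each $c_\ell$.

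The main obstacle is calibrating the heaviness threshold so that simultaneously (i) the exponent is $k^3$ rather than $k^4$, and (ii) Hoeffding-style concentration kicks in at scale $\eps$ for all $k$ instances at once. The savings of one factor of $k$ over Theorem~\ref{theorem:results:maxcsp} come from the disjunctive structure of SAT clauses: the indicator that a clause is satisfied decomposes cleanly as a bounded, monotone function of independently rounded variables, so a single layer of Hoeffding suffices, whereas the general $\maxcspwq$ argument must handle arbitrary predicates and therefore incurs an additional peeling/concentration layer that costs the extra factor of $k$ in the exponent.
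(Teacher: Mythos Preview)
Your outline has a genuine gap: it delivers only an \emph{additive-multiplicative} guarantee, not the pure multiplicative $(\tfrac{3}{4}-\eps)$-Pareto approximation claimed in the theorem. Concretely, after removing heavy variables you invoke Hoeffding/McDiarmid to say the rounded value on instance $\ell$ deviates from its mean by at most $\eps$ with high probability. But the mean is $\ge \tfrac{3}{4}c_\ell$, so you get $\val \ge \tfrac{3}{4}c_\ell - \eps$, which is not $(\tfrac{3}{4}-\eps)c_\ell$ when $c_\ell$ is small. Rescaling the threshold $\tau$ to $\Theta(\eps c_\ell/k)$ does not work either, because the $c_\ell$ differ across $\ell$ and a variable heavy for one instance need not be heavy for another; you would lose control of $|H|$. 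The paper explicitly flags this distinction (see the footnote in the techniques discussion) and notes that the additive-multiplicative version is indeed easy, while the purely multiplicative version requires substantially more care.

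What the paper actually does, and what your proposal is missing, is a two-pronged analysis together with a \emph{perturbation} step. First, the variable-selection phase is iterative rather than one-shot: at each step one tests a \emph{relative} variance condition $\varest_\ell < \delta_0\eps_0^2\,\meanest_\ell^2$ and, whenever it fails, extracts a variable with large active degree in the current residual instance and adds it to $S$; this continues until each instance is either low-variance or has had $t$ variables added on its behalf. Low-variance instances then concentrate \emph{multiplicatively} via Chebyshev (because the variance bound is relative to the squared mean), giving $(\tfrac{3}{4}-\tfrac{\eps}{2})c_\ell$. High-variance instances are not handled by concentration at all: instead, the analysis exhibits a perturbation of the optimal partial assignment on $S$ that flips one carefully chosen variable per high-variance instance so as to satisfy, by itself, weight exceeding the total remaining active weight of that instance. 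This step uses the SAT-specific fact that any clause can be satisfied by fixing a single literal, and it is precisely why the algorithm for \maxwsat\ can maintain a flat set $S$ instead of the branching tree used for general CSPs; that structural simplification (not a ``single layer of Hoeffding'') is the source of the $k^3$ rather than $k^4$ exponent.
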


Given a single {\sc Max-E$w$-SAT} instance, a random assignment
satisfies a $1-2^{-w}$ fraction of the constraints in expectation. The
approximation ratio achieved by the above theorem seems unimpressive
in comparison (even though it is for general \maxwsat). However,
Proposition~\ref{proposition:results:hardness:maxwsat} demonstrates it
is \np-hard to do much better.
%


\paragraph{Remarks}
\begin{enumerate}
\item As demonstrated by
  Proposition~\ref{proposition:results:hardness:maxwsat}, it is
  sometimes impossible to match the approximation ratio achieved by a
  random assignment for $k=1$. By comparison, the approximation ratio
  given by Theorem~\ref{theorem:results:maxcsp} is slightly better
  than that achieved by a random assignment ($\nfrac{1}{q^w}$). This
  is comparable to the best possible approximation ratio for $k=1,$
  which is $w/q^{w-1}$ up to constants~\cite{MakarychevM12, Chan13}.
  Our methods also prove that picking the best assignment out of
  $2^{O(\nfrac{k^4}{\eps^2}\log(\nfrac{k}{\eps}))}$ independent and
  uniformly random assignments achieves a
  $\left(\nfrac{1}{q^w}-\eps\right)$-Pareto approximation with
  high probability.


\item Our method is quite general. For any CSP with a convex
  relaxation and an associated rounding algorithm that assigns each
  variable independently from a distribution with certain smoothness
  properties (see Section~\ref{section:maxand:lp}), it can be combined
  with our techniques to achieve essentially the same approximation
  ratio for $k$ simultaneous instances.

\item We reiterate that Pareto approximation algorithms achieve a
  multiplicative approximation for each instance. One could also
  consider the problem of achieving simultaneous approximations with
  an $\alpha$-multiplicative and $\epsilon$-additive error. This
  problem can be solved by a significantly simpler algorithm and
  analysis (but note that this variation does not even imply an
  algorithm for detecting positivity).
\end{enumerate}

\subsection{Complementary results}

\subsubsection{Refined hardness results}

As we saw earlier, assuming ETH, there is no algorithm for even
detecting positivity of $k$-fold simultaneous {\sc Max-1-SAT} for
$k=\omega(\log n).$ There are trivial examples of CSPs for which
detecting positivity (and in fact $1$-Pareto approximation)
can be solved efficiently: eg. simultaneous CSPs based on
monotone predicates (where no
negations of variables are allowed) are maximally satisfied by the
all-1s assignment.
Here we prove that for any
``nontrivial" collection of Boolean predicates $\calF$, assuming ETH,
there is no polynomial time algorithm for detecting positivity for
$k$-fold simultaneous \maxfcsp instances for $k = \omega(\log n).$ In
particular, it is hard to obtain any poly-time constant factor
approximation for $k = \omega(\log n).$ This implies a complete {\em
  dichotomy theorem} for constant factor approximations of $k$-fold
simultaneous Boolean CSPs.




A predicate $P: \{0,1\}^w \to \{\true, \false\}$ is said to be {\em
  $0$-valid}/{\em $1$-valid} if the
all-$0$-assignment/all-$1$-assignment satisfies $P$. We call a
collection $\calF$ of predicates {\em $0$-valid}/{\em $1$-valid} if
all predicates in $\calF$ are $0$-valid/$1$-valid.  Clearly, if
$\calF$ is $0$-valid or $1$-valid, the simultaneous \maxfcsp instances
can be solved exactly (by considering the
all-$0$-assignment/all-$1$-assignment). Our next theorem shows that
detecting positivity of $\omega(\log n)$-fold simultaneous \maxfcsp,
for all other $\calF$, is hard.
\begin{theorem}
\label{thm:introdichotomy}
Assume the Exponential Time Hypothesis~\cite{ImpagliazzoP01, ImpagliazzoPZ01}. Let $\calF$ be a fixed finite
set of Boolean predicates. If $\calF$ is not $0$-valid or $1$-valid,
then for $k = \omega(\log n )$, detecting positivity of $k$-fold
simultaneous \maxfcsp on $n$ variables requires time super-polynomial
in $n$.
\end{theorem}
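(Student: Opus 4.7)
The plan is to reduce 3SAT (or NAE-3SAT in the symmetric sub-case) to detecting positivity of $k$-fold simultaneous $\calF$-CSPs. Observe that detecting positivity of $(\calW_1, \ldots, \calW_k)$ is equivalent to satisfiability of the propositional formula $\bigwedge_\ell \bigvee_{C \in \calW_\ell} C$, where each $\calF$-constraint $C$ is treated as an atom. If the reduction produces $n = \poly(N)$ variables and $k = \poly(N)$ instances (with $N$ the number of source variables), then $k = \poly(n) = \omega(\log n)$, and under ETH a polynomial-time algorithm for the target would yield one for 3SAT. The easy case is when $\calF$ contains a 1-valid predicate $P_0$ with $P_0(0, \ldots, 0) = \false$ together with a 0-valid predicate $P_1$ with $P_1(1, \ldots, 1) = \false$: then $P_0(X, X, \ldots, X)$ is a single $\calF$-atom satisfied iff $X = 1$, and similarly $P_1(X, X, \ldots, X)$ iff $X = 0$. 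Each 3SAT literal becomes a single atom, each 3SAT clause becomes an instance with three atoms, and simultaneous positivity coincides with 3SAT satisfiability.

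Otherwise $\calF$ contains a predicate $P$ that is neither 0-valid nor 1-valid, and the argument splits based on whether some $P \in \calF$ has satisfying set $S(P)$ not closed under complementation. In the asymmetric sub-case, the key lemma is that ``$X = 1$'' (or ``$X = 0$'') is primitive-positive (pp) definable from $\calF$; given such a pp-definition $X = 1 \iff \exists Z.\, \bigwedge_j P_j(X, Z)$, we realize it in our disjunctive framework by introducing private auxiliary variables $Z$ for each occurrence of the literal, enforcing all but one conjunct as single-constraint ``helper'' instances, and using the remaining conjunct as the single $\calF$-atom representing the literal in the 3SAT clause encoding. In the self-dual sub-case (every $P \in \calF$ has $S(P)$ closed under complementation), the global negation symmetry precludes reducing 3SAT directly; we instead reduce NAE-3SAT, using that an NAE clause on $(u, v, w)$ is equivalent to $(u \neq v) \vee (v \neq w) \vee (u \neq w)$ and that ``$\neq$'' is pp-definable from any non-trivial self-dual Boolean $\calF$ (again implemented by helper instances plus a single $\calF$-atom per literal).

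The main obstacle is establishing the pp-definability lemmas in this latter case: in the asymmetric sub-case, pp-definability of the constants; in the self-dual sub-case, pp-definability of inequality. Both reduce to a structural classification of the relevant Boolean clones of $\calF$ via Post's lattice---self-duality is precisely the obstruction to pp-defining constants, and in its presence the inequality ``$\neq$'' survives as pp-definable (as exemplified by $P = \oplus$, where the single atom $P(u, v)$ is itself $u \neq v$). A secondary point worth checking is that the pp-definition can always be arranged so that the enforced helper conjuncts remain simultaneously satisfiable for both values of $X$ (otherwise the helpers would spuriously rule out one value), which can be achieved by padding the pp-formula with auxiliary variables constrained only among themselves.
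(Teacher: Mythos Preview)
Your high-level strategy is sound and genuinely different from the paper's: you use pp-definability of the constants (asymmetric sub-case) or of $\neq$ (self-dual sub-case) and reduce from 3SAT or NAE-3SAT, whereas the paper works with the $e$-implementation machinery of Khanna--Sudan--Trevisan--Williamson and splits the cases the other way (on whether \emph{some} predicate is closed under complementation, rather than whether \emph{all} are). Your pp-definability claims are correct: if $c_0,c_1,\neg\notin\mathrm{Pol}(\calF)$ then setting all inputs of any $f\in\mathrm{Pol}(\calF)$ equal yields a unary operation in $\{c_0,c_1,\neg,\mathrm{id}\}$, forcing $\mathrm{Pol}(\calF)\subseteq T_0\cap T_1$; and if $\neg\in\mathrm{Pol}(\calF)$ but $c_0\notin\mathrm{Pol}(\calF)$, then any non-self-dual $f\in\mathrm{Pol}(\calF)$ would produce a constant by substituting $x$ or $\neg x$ into each coordinate, so $\mathrm{Pol}(\calF)\subseteq D$ and $\neq$ is pp-definable.

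The gap is in how you realize a pp-definition inside the simultaneous-positivity framework. Writing $X=1\iff\exists Z.\bigwedge_{j=1}^m P_j(X,Z)$, you place $P_1$ in the clause instance and each $P_j$ ($j\ge 2$) as a single-constraint helper instance. For the completeness direction you then need, for every \emph{false} literal, some $Z$ making all helpers $P_2,\ldots,P_m$ true at $X=0$; pp-definability alone does not give this, and your proposed fix---``padding'' with extra conjuncts on fresh auxiliary variables---goes in the wrong direction, since adding conjuncts can only make the helpers harder to satisfy. The right fix is the opposite: take a \emph{minimal} pp-definition. Then for any $j$, $\exists Z.\bigwedge_{i\neq j}P_i(X,Z)$ must be satisfiable at $X=0$, since otherwise it would itself define $X=1$, contradicting minimality; with this choice your helpers-plus-one-atom encoding works verbatim. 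Alternatively, distribute the conjunctions over the clause-level disjunction: for a clause $\ell_1\vee\ell_2\vee\ell_3$ with pp-formulas $\bigwedge_j P_{i,j}$, create one instance $\{P_{1,j_1},P_{2,j_2},P_{3,j_3}\}$ for each tuple $(j_1,j_2,j_3)$, which is $O(1)$ instances per clause and is exactly the paper's ``simultaneous-implementation'' device. You should also flag that your easy case (and pp-formulas in general) may apply a predicate to repeated copies of one variable; the paper proves the theorem even under the restriction to distinct inputs, handling this via a separate equality gadget.
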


Crucially, this hardness result holds even if we require
that every predicate in an instance has all its inputs being {\em distinct variables}.


Our proof uses techniques underlying the dichotomy
theorems of Schaefer~\cite{Schaefer78} for exact CSPs, and of Khanna
\etal~\cite{KhannaSTW01} for {\sc Max}-CSPs (although our easiness criterion
is different from the easiness criteria in both these papers).

\subsubsection{Simultaneous approximations via SDPs}

It is a tantalizing possibility that one could use SDPs to improve the
LP-based approximation algorithms that we develop.  Especially for constant
$k$, it is not unreasonable to expect that one could obtain a constant
factor Pareto or minimum approximation, for $k$-fold simultaneous
CSPs, better than what can be achieved by linear programming methods.

In this direction, we show how to use simultaneous SDP relaxations
to obtain a polynomial time $(\nfrac{1}{2} + \Omega(\nfrac{1}{k^2}))$-minimum approximation
for $k$-fold simultaneous \maxcut on {\em unweighted graphs}.
\begin{theorem}
\label{theorem:results:unweightedmc}
For large enough $n$, there is an algorithm that, given $k$-fold simultaneous unweighted
\maxcut instances on $n$ vertices, runs in time $2^{2^{2^{O(k)}}}
\cdot \poly(n),$ and computes a $\left(\frac{1}{2} +
  \Omega\left(\frac{1}{k^2}\right)\right)$-minimum approximation.
\end{theorem}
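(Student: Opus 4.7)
My plan is to combine the natural simultaneous SDP relaxation with a dichotomy based on its value. First I would write the SDP that maximizes $t$ over unit vectors $\{v_i\}_{i\in V}$ subject to $\frac{1}{|E_\ell|}\sum_{(i,j)\in E_\ell}\frac{1-\langle v_i,v_j\rangle}{2}\ge t$ for every $\ell\in[k]$, and let $t^*\ge\opt$ denote its optimum. Fix a constant $c>0$ to be chosen. If $t^*\le 1-c/k^2$, the algorithm outputs a uniformly random cut. Because each $\calW_\ell$ is unweighted and covariances between distinct edges of a uniformly random cut vanish, a Chebyshev-type variance calculation gives cut value $1/2\pm O(1/\sqrt{m_\ell})$ with high probability per instance (where $m_\ell=|E_\ell|$); a union bound over the $k$ instances, which one can assume valid because small $n$ is handled by brute force within the $2^{2^{2^{O(k)}}}\cdot\poly(n)$ budget, yields cut value $\ge 1/2-o(1)$ on every instance simultaneously, for ratio $(1/2-o(1))/(1-c/k^2)\ge 1/2+\Omega(1/k^2)$.

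In the hard case $t^*>1-c/k^2$, the SDP forces for every instance $\ell$ that the average of $1+\langle v_i,v_j\rangle$ over $(i,j)\in E_\ell$ is $O(1/k^2)$, so almost every edge in every instance has nearly antipodal SDP vectors. I would then leverage the structural observation that these $k$ near-bipartitions live on a common vertex set, forcing the SDP vectors to be approximately contained in an $O(k)$-dimensional subspace that encodes the $k$ approximate bipartition directions simultaneously. I would extract this subspace (for example by a top-$O(k)$ spectral decomposition of the Gram matrix $G_{ij}=\langle v_i,v_j\rangle$, or by taking the $k$ hyperplane-rounded directions from the per-instance SDPs), discretize it to a $2^{O(k)}$-size net of candidate rounding hyperplanes, partition $V$ into one of $2^{2^{O(k)}}$ cell decompositions induced by such a net, and enumerate the $2^{2^{2^{O(k)}}}$ possible boolean labelings of the cells. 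The algorithm then outputs the labeling achieving the largest minimum cut value across the $k$ instances. To argue some labeling in this enumeration beats the target ratio, I would compare it to an idealized random rounding of the SDP restricted to the extracted subspace; because the subspace is bounded-dimensional, this rounding is realizable by per-vertex independent randomness, so its expected per-instance cut value of $1-O(1/k)$ concentrates by the same variance argument used in the easy case, and the discretization loses at most $o(1/k^2)$ per instance.

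The main obstacle is precisely the hard case: as the paper notes, the simultaneous SDP can have value $1-\Omega(1/k^2)$ while the true simultaneous optimum is only $1/2$, and a single-random-hyperplane rounding can yield cut value exactly $0$ on some individual instance. Overcoming this requires exploiting the rigidity forced by near-optimal SDP value on every instance to reduce to a bounded-dimensional structure, and then enumerating rounding choices finely enough to be sure a simultaneously-good labeling is contained in the enumeration; proving that containment (rather than merely that a good labeling exists) is the most delicate technical step, and the triple-exponential dependence on $k$ in the running time is exactly the product of the three nested enumeration sizes (subspace directions, cell decompositions, and boolean labelings).
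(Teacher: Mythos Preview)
Your proposal has genuine gaps in both branches of the dichotomy, and your account of where the triple exponential comes from is not the actual mechanism.

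\textbf{Easy case.} The Chebyshev bound $1/2\pm O(1/\sqrt{m_\ell})$ is correct when $S=\emptyset$, but your justification that the union bound is valid ``because small $n$ is handled by brute force'' conflates small $n$ with small $m_\ell$: one can have $n$ arbitrarily large while some instance has only a constant number of edges, and then no concentration holds for that instance. The fix is not a one-shot brute force but an \emph{iteration}: the paper maintains a set $S$ and, while some remaining instance has fewer than $t^{3^{|D|}}$ edges active given $S$ (with $D$ the set of already-absorbed instances and $t=\poly(k)$), it adds all that instance's edge endpoints into $S$ and moves it into $D$. At termination, every instance in $D$ lies entirely inside $S$ and is determined by the brute-forced assignment $h:S\to\{0,1\}$, while every remaining instance $\ell$ has $m_\ell\ge t^{3^{|D|}}$ against $|S|\le 4t^{3^{|D|-1}}$, which is precisely the gap needed for the variance bound once edges touching $S$ are accounted for. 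This iteration is the sole source of the triple exponential: $|S|$ can reach $t^{3^{k-1}}=2^{2^{O(k)}}$, and enumerating $2^{|S|}$ partial assignments gives $2^{2^{2^{O(k)}}}$. It has nothing to do with subspace nets or cell decompositions.

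\textbf{Hard case.} Your dichotomy is on the SDP value $t^*$, but $t^*>1-c/k^2$ does not imply $\opt$ is large; the paper's own integrality-gap instance has $t^*=1-O(1/k^2)$ with $\opt=1/2$, so your claimed ``expected per-instance cut value of $1-O(1/k)$'' is impossible there. The assertion that the SDP vectors lie in an $O(k)$-dimensional subspace is also false in general: take one instance to be a perfect matching and assign each matched pair $\pm e_j$ for a fresh orthonormal direction $e_j$; the SDP value is $1$ yet the vectors span an $\Omega(n)$-dimensional space. The paper avoids all of this by a much simpler device: after the absorption above, for each $h:S\to\{0,1\}$ it runs \emph{both} the uniformly random cut on $V\setminus S$ \emph{and} the Charikar--Makarychev--Makarychev rounding of the simultaneous SDP with $h$ hard-wired, and keeps the better outcome. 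If the true $\opt\ge 1-\epsilon$ (with $\epsilon=\Theta(1/k^2)$), the CMM rounding plus a Markov/union bound over the $k$ instances yields cut value $1-O(k\sqrt{\epsilon})\ge 3/4$; if $\opt<1-\epsilon$, the random cut gives value $\approx 1/2$ on every surviving instance, hence ratio $(1/2-o(1))/(1-\epsilon)=1/2+\Omega(1/k^2)$. The algorithm never needs to know which case holds.
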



\subsection{Our techniques}
For the initial part of this discussion, we focus on the $q
= w = 2$ case, and only achieve a $1/4 - \epsilon$ Pareto
approximation.

\paragraph{Preliminary Observations}
First let us  analyze the behavior of the uniformly random assignment algorithm.
It is easy to compute, for each instance $\ell \in [k]$,
 the expected weight of satisfied constraints in instance $\ell$, which will be
at least $\frac{1}{4}$ of the total weight all constraints in instance $\ell$.
If we knew for some reason that in each instance the weight of satisfied constraints
was concentrated around this expected value {\em with high probability}, then we could take a union
bound over all the instances and conclude that a random assignment satisfies many constraints
in each instance with high probability. It turns out that for
any instance where the desired concentration does not occur,
there is some variable in that instance which has
high degree (i.e., the weight of all constraints involving that variable
is a constant fraction of the total weight of all constraints).
Knowing that there is such a high degree variable seems very useful for our
goal of finding a good assignment, since we can potentially influence the satisfaction
of the instance quite a bit by just by changing this one variable.

This motivates a high-level plan: either proceed by using the absence of influential variables
to argue that a random assignment will succeed, or proceed by trying to set the influential variables.

\paragraph{An attempt}
The above high-level plan motivates the following high-level algorithm.
First we identify a set $S \subseteq V$ of ``influential" variables.
This set of influential variables should be of small ($O(\log n)$) size,
so that we can try out all assignments to these variables. Next, we 
take a random assignment to the remaining variables,
$g: V \setminus S \to \{0,1\}$.
Finally, for each possible assignment $h: S \to \{0,1\}$, we consider the assignment
$h \cup g  : V \to \{0,1\}$ as a candidate solution for our simultaneous CSP.
We  output the assignment, if any, that has $\val(h \cup g, \calW_\ell) \geq \alpha \cdot c_\ell$ for each $\ell \in [k]$.
This concludes the description of the high-level algorithm.

For the analysis, we would start with the ideal assignment $f^* : V \to \{0,1\}$
achieving $\val(f^*, \calW_\ell) \geq c_\ell$ for each $\ell \in [k]$.
Consider the step of the algorithm
where $h$ is taken to equal $h^* \defeq f^*|_S$. We would like to say that
for each $\ell \in [k]$ we have:
$$\val(h^* \cup g, \calW_\ell) \geq  (\frac{1}{4} - \epsilon) \cdot \val(f^*, \calW_\ell),$$ 
with high probability, when $g: V \setminus S \to \{0,1\}$ is chosen uniformly at random.
(We could then conclude the analysis by a union bound.)

A simple calculation shows that $\E[\val(h^* \cup g, \calW_\ell)] \geq \frac{1}{4} \cdot \val(f^*, \calW_\ell)$,
so each instance is well satisfied in expectation. Our hope is thus that
$\val(h^* \cup g, \calW_\ell)$ is concentrated around its mean with high probability.

There are two basic issues with this approach\footnote{
These problems do not arise if we only aim for the weaker ``additive-multiplicative"
Pareto approximation guarantee (where one allows for both some additive loss and 
multiplicative loss in the approximation), and in fact the above mentioned high-level plan does work.
The pure multiplicative approximation guarantee seems to be significantly more delicate.}:
\begin{enumerate}
\item The first issue is how to define the set $S$ of influential variables.
 For some special CSPs (such as \maxcut and {\sc Max-SAT}), there is a natural choice which works
(to choose a set of variables with high degree, which is automatically small). But for general CSPs,
it could be the case that variables with exponentially small degree are important contributors
to the ideal assignment $f^*$.


\item Even if one chooses the set $S$ of influential variables appropriately, the analysis
cannot hope to argue that $\val(h^* \cup g, \calW_\ell)$ concentrates around its expectation
with high probability. Indeed, it can be the case that for a random assignment
$g$, $\val(h^* \cup g, \calW_\ell)$ is not concentrated at all.
\end{enumerate}

\paragraph{A working algorithm:}
Our actual algorithm and analysis solve these problems by proceeding in a slightly different way.
The first key idea is to find the set of influential variables by 
iteratively including variables into this set, and simultaneously assigning these variables. This
leads to a tree-like evolution of the set of influential variables. 
The second key idea is in the analysis: instead of arguing about the performance
of the algorithm when considering the partial assignment $h^* = f^*|_S$,
we will perform a delicate {\em perturbation} of $h^*$ to obtain an $h': S \to \{0,1\}$,
and show that $\val(h' \cup g, \calW_\ell)$ is as large as desired. Intuitively,
this perturbation only slightly worsens the satisfied weight of $h^*$, while reducing the 
reliance of the good assignment $f^*$ on any specialized properties of $f^*|_S$.

To implement this, the algorithm will maintain a tree of possible evolutions of a
set $S \subseteq V$ and a partial assignment $\rho : S \to \{0,1\}$.
In addition, every variable $x \in S$ will be labelled by an instance $\ell \in [k]$.
The first stage of the algorithm will grow this tree in several steps.
In the beginning, at the root of the tree, we have $S = \emptyset$.
At every stage, we will either terminate that branch of the tree,
or else increase the size of the set $S$ by $1$ (or $2$),
and consider all $2$ (or $4$) extensions of $\rho$ to the newly grown $S$.

To grow the tree, the algorithm considers a random assignment $g: V \setminus S \to \{0,1\}$,
and computes, for each instance $i \in [k]$, the expected satisfied weight
$\E_g [ \val(\rho \cup g , \calW_\ell)   ] $
and the variance of the satisfied weight $\Var_g[ \val(\rho \cup g, \calW_\ell)]$.
We can thus classify instances as concentrated or non-concentrated.
If more than $t$ variables in $S$ are labelled by instance $\ell$ (where $t = O_{k,\epsilon}(1)$
is some parameter to be chosen), we call instance $\ell$ saturated.
If every unsaturated instance is concentrated, then we are done with
this $S$ and $\rho$, and this branch of the tree gets terminated.

Otherwise, we know that there some unsaturated instance $\ell$ which is not concentrated.
We know that this instance $\ell$ must have some variable $x \in V \setminus S$
which has high {\em active degree} (this is the degree after taking into account the partial
assignment $\rho$). The algorithm now takes two cases:
\begin{itemize}
\item {\bf Case 1:} If this high-active-degree variable $x$ is involved in a high-weight constraint 
on $\{x,y\}$ for some $y \in V\setminus S$, then we include both $x, y$ into the set $S$, and consider
all $4$ possible extensions of $\rho$ to this new $S$. $x,y$ are both  labelled with instance $\ell$.
\item {\bf Case 2:} Otherwise, every constraint involving $x$ is low-weight (and in particular there must
be many of them), and in this case we include $x$ into the set $S$, and consider
both possible extensions of $\rho$ to this new $S$. $x$ is labelled with instance $\ell$.
\end{itemize}
This concludes the first stage of the algorithm, which created a tree whose leaves contain
various $(S, \rho)$ pairs.

For the second stage of the algorithm we visit each leaf $(S, \rho)$.
We choose a uniformly random $g: V \setminus S \to \{0,1\}$, and
consider for every $h: S \to \{0,1\}$, the assignment $h \cup g: V \to
\{0,1\}$. Note that we go over all assignments to the set $S,$
independent of the partial assignment to $S$ associated with the leaf.

\paragraph{The analysis:}
At the end of the evolution, at every leaf of the tree every instance
is either highly-concentrated or saturated. If instance $\ell$ is
highly-concentrated, we will have the property that the random
assignment to $V \setminus S$ has the right approximation factor for
instance $\ell$.  If the instance $\ell$ is saturated, then we know
that there are many variables in $S$ labelled by instance $\ell$; and
at the time these variables were brought into $S$, they had high
active degree.

The main part of the analysis is then a delicate {\em perturbation}
procedure, which starts with the partial assignment $h^* \defeq
f^*|_S$, and perturbs it to some $h': S \to \{0,1\}$ with a certain
robustness property. Specifically, it ensures that for every saturated
instance $\ell \in [k]$.  we have $\val( h' \cup g, \calW_\ell)$ is at
least as large as the total weight in instance $\ell$ of all
constraints not wholly contained within $S$. At the same time, the
perturbation ensures that for unsaturated instances $\ell \in [k]$,
$\val(h' \cup g, \calW_\ell)$ is almost as large as $\val(h^*\cup g,
\calW_\ell)$.  This yields the desired Pareto approximation.  The
perturbation procedure modifies the assignment $h^*$ at a few
carefully chosen variables (at most two variables per saturated
instance). After picking the variables for an instance, if the
variables were brought into $S$ by Case 1, we can satisfy the heavy
constraint involving them. Otherwise, we use a Lipschitz concentration
bound to argue that a large fraction of the constraints involving the
variable and $V \setminus S$ can be satisfied; this is the second place where we use the randomness in the choice of
$g$.

As we mentioned earlier, this perturbation is necessary!
It is not true the assignment $h^* \cup g$ will give a good Pareto
approximation with good probability \footnote{See
  Section~\ref{section:example} for an example}.

\paragraph{Improved approximation, and generalization:}
To get the claimed $(\frac{1}{2} - \epsilon)$-Pareto approximation for
the $q = w = 2$ case, we replace the uniformly random choice of $g: V
\setminus S \to \{0,1\}$ by a suitable LP relaxation + randomized
rounding strategy. Concretely, at every leaf $(S, \rho)$, we do the
following.  First we write an LP relaxation of the residual MAX-2-CSP
problem. Then, using a rounding algorithm of Trevisan (which has some
desirable smoothness properties), we choose $g: V \setminus S \to
\{0,1\}$ by independently rounding each variable. Finally, for all $h
: S \to \{0,1\}$, we consider the assignment $h \cup g$. The analysis
is nearly identical (but crucially uses the smoothness of the
rounding), and the improved approximation comes from the improved
approximation factor of the classical LP relaxation for MAX-2-CSP.

The generalization of this algorithm to general $q,w$ is technical but
straightforward. One notable change is that instead of taking 2 cases
each time we grow the tree, we end up taking $w$ cases. In case $j,$
we have a set of $j$ variables such that the total weight of
constraints involving all the $j$ variables is large, however for
every remaining variable $z$, the weight of contraints involving all
the $j$ variables together with $z$ is small. The analysis
of the perturbation is similar.

The algorithm for \maxwsat uses the fact that the LP rounding gives a
$\nfrac{3}{4}$ approximation for \maxwsat. Moreover, since a \maxwsat
constraint can be satisfied by perturbing any one variable, the
algorithm does not require a tree of evolutions. It only maintains a
set of ``influential'' variables, and hence, is simpler.

\subsection{Related Work}

The theory of {\em exact} multiobjective optimization has been very well studied, 
(see eg.~\cite{PapadimitriouY00, Diakonikolas11} and the references
therein). 

The only directly comparable work for simultaneous approximation
algorithms for CSPs we are aware of is the work of Gla\ss er
\etal\cite{Glasser11} \footnote{They also give Pareto approximation
  results for simultaneous TSP (also see references therein).}. They
give a $\nfrac{1}{2}$-Pareto approximation for {\sc Max-SAT} with a
running time of $n^{O(k^2)}.$ For bounded width clauses, our algorithm
does better in both approximation guarantee and running time.
 
For \maxcut, there are a few results of a similar flavor. 
For two graphs, the
results of Angel \etal\cite{Angel06} imply a $0.439$-Pareto
approximation algorithm (though their actual results are
incomparable to ours).  Bollob\'{a}s and Scott~\cite{BollobasS04} asked what
is the largest simultaneous cut in two unweighted graphs with $m$
edges each. Kuhn and Osthus~\cite{KuhnO07}, using the second moment
method, proved that for $k$ simultaneous unweighted instances, there
is a simultaneous cut that cuts at least $m/2 - O(\sqrt{km})$ edges in
each instance, and give a deterministic algorithm to find it (this
leads to a $(\frac{1}{2} - o(1))$-Pareto approximation for unweighted
instances with sufficiently many edges).  Our main theorem implies the
same Pareto approximation factor for simultaneous \maxcut on general
weighted instances, while for $k$-fold simultaneous \maxcut on
unweighted instances, our Theorem~\ref{theorem:results:unweightedmc}
gives a $\left(\frac{1}{2} + \Omega(\frac{1}{k^2})\right)$-minimum
approximation algorithm.  

\subsection{Discussion}
We have only made initial progress on what we believe is a large
number of interesting problems in the realm of simultaneous
approximation of CSPs. We list here a few of the interesting
directions for further research:
\begin{enumerate}
\item When designing SDP-based algorithms for the classical {\sc
    Max-CSP} problems, we are usually only interested in the expected
  value of the rounded solution.  For $k$-fold simultaneous \maxfcsp
  with $k > 1,$ we are naturally led to the question of how
  concentrated the value of the solution output by the rounding is
  around its mean.

Decorrelation of SDP rounding arises in recent algorithms~\cite{Barak-Rag-Ste, Raghavendra-Tan, Venkat-Sinop} based on
SDP hierarchies. It would be interesting to see if such ideas could be useful in this
context.

Another interesting question of this flavor is whether
  there are natural conditions under which the Goemans-Williamson hyperplane rounding
  gives a good solution for MAXCUT with high probability.



\item When $k = O(1)$, for each $\calF$, one can ask the question:
what is the best Pareto approximation factor achievable for
$k$-fold \maxfcsp in polynomial time?  While in Theorem~\ref{theorem:results:maxcsp}
we do not focus on giving improved approximation factors for special $\calF$,
our methods will give better approximation factors for any $\calF$
which has a good LP relaxation that comes equipped with a sufficiently smooth
independent-rounding algorithm.

It would be very interesting if one could employ SDPs for
approximating simultaneous \maxfcsp.
A particularly nice question here:
  {\em Is there a polynomial time $0.878$-Pareto approximation
    algorithm for $O(1)$-fold simultaneous \maxcut?} We do not even know
  a $(1/2 + \epsilon)$-Pareto approximation algorithm (but note that
  Theorem~\ref{theorem:results:unweightedmc} does give this for
  $O(1)$-fold simultaneous {\em unweighted} \maxcut).

\item As demonstrated by hardness result for \maxwsat given in
  Proposition~\ref{proposition:results:hardness:maxwsat}, even for
  constant $k,$ the achievable approximation factor can be strictly
  smaller than its classical counterpart. It would be very interesting
  to have a systematic theory of hardness reductions for simultaneous
  CSPs for $k=O(1).$ The usual paradigm for proving hardness of
  approximation based on label cover and long codes seems to break
  down completely for simultaneous CSPs.
\end{enumerate}

\subsection{Organization of this paper}
We first present the notation required for our algorithms in
Section~\ref{section:notation}. We then describe our Pareto
approximation algorithm for \maxand (which is equivalent to {\sc Max-2-CSP$_2$}), and its generalization to
\maxcspwq in Sections~\ref{section:maxand} and \ref{section:maxcsp}
respectively. We then present our improved Pareto approximation for
\maxwsat in Section~\ref{section:maxwsat}.

We present the additional results in the
appendix. The dichotomy theorem for the hardness of arbitrary CSPs is
presented in Section~\ref{section:hardness}, followed by our improved
minimum approximation algorithm for unweighted \maxcut in
Section~\ref{section:maxcut-unweighted}, and the SDP integrality gaps
in Section~\ref{section:sdp}. 


\section{Notation for the main algorithms}
\label{section:notation}

We now define some common notation that will be required for the
following sections on algorithms for \maxand and and for general
MAX-$\calF$-CSP. For the latter, will stop referring to the set of predicates
$\calF$, and simply present an algorithm for the problem \maxcspwq: this
is the MAX-$\calF$-CSP problem, where $\calF$ equals the set of all predicates on $w$ variables from the domain $[q]$. For \maxand, the alphabet $q$ and arity $w$ are both 2. 

Let $V$ be a set of $n$ variables. Each variable will take values from
the domain $[q]$.  Let $\calC$ denote a set of constraints of interest
on $V$ (for example, for studying \maxand, $\calC$ would be the set of
AND constraints on pairs of literals of variables coming from $V$). We use the notation $v \in C$ to denote that the $v$ is one of
the variables that the constraint $C$ depends on. Analogously, we
denote $T \subseteq C$ if $C$ depends on all the variables in $T.$ A
weighted MAXCSP instance on $V$ is given by a weight function $\calW :
\calC \to \rea_+,$ where for $C \in \calC$, $\calW(C)$ is the weight
of the constraint $C$. We will assume that $\sum_{C \in \calC}
\calW(C) = 1$.

A partial assignment $\rho$ is a pair $(S_\rho, h_\rho)$, where
$S_\rho \subseteq V$ and $h_\rho : S_\rho \to [q]$. (We also call a
function $h : S \to [q]$, a partial assignment, when $S$ is understood
from the context).  We say a contraint $C \in \calC$ is active given
$\rho$ if $C$ depends on some variable in $V \setminus S_\rho$, and
there exists full assignments $g_0, g_1: V \to [q]$ with
$g_i{|_{S_\rho}} = h_\rho$, such that $C$ evaluates to $\false$ under
the assignment $g_0$ and $C$ evaluates to $\true$ under the assignment
$g_1$.  (colloquially: $C$'s value is not fixed by $\rho$).  We denote
by $\Act(\rho)$ the set of constraints from $\calC$ which are active
given $\rho$. For a partial assignment $\rho$ and $C \in
\calC\setminus\Act(\rho)$, let $C(\rho) = 1$ if $C$'s value is fixed
to $\true$ by $\rho$, and let $C(\rho) = 0$ if $C$'s value is fixed to
$\false$ by $\rho$. For disjoint subsets $S_1, S_2 \subseteq V$ and partial assignments $f_1 : S_1 \to [q]$ and $f_2 : S_2 \to [q]$, let $f = f_1\cup f_2$ denote the assignment $f : S_1 \cup S_2 \to [q]$ with $f(x) = f_1(x)$ if $x\in S_1$, and $f(x) = f_2(x)$ if $x \in S_2$.
Abusing
notation, for a partial assignment $\rho$ and an assignment $g : V
\setminus S_{\rho^\star} \to [q],$ we often write $\rho \cup g$
instead of $h_\rho \cup g.$ For two constraints $C_1,C_2 \in \calC,$
we say $C_1 \sim_{\rho} C_2$ if they share a variable that is
contained in $V \setminus S_\rho$. 


Define the {\em active degree given $\rho$} of a variable $v \in V\setminus S_\rho$ by:
$$\activedeg_\rho(v, \calW) \defeq \sum_{ C\in \Act(\rho), C \owns v} \calW(C).$$	
For a subset $T\subseteq V\setminus S_\rho$ of variables, define its
active degree given $\rho$ by:
$$\activedeg_\rho(T, \calW) \defeq \sum_{ C\in \Act(\rho), C \supseteq T} \calW(C).$$	
Define the active degree of the whole instance $\calW$ given $\rho$:
$$\activedeg_\rho(\calW) \defeq \sum_{v \in V\setminus S_{\rho}} \activedeg_{\rho}(v, \calW).$$
For a partial assignment $\rho$, we define its value on an instance $\calW$ by:
$$\val(\rho, \calW) \defeq  \sum_{C \in \calC\setminus \Act(\rho)} \calW(C) C(\rho).$$
Thus, for a total assignment $f: V \to [q]$ extending $\rho$, we have the equality:
$$\val(f, \calW) - \val(\rho, \calW) = \sum_{C \in  \Act(\rho)} \calW(C) C(f).$$

\section{Simultaneous \maxand}
\label{section:maxand}

In this section, we give our approximation algorithm for simultaneous \maxand.
Via a simple reduction given Section~\ref{sec:reducsimple}, this implies the
$q = w = 2$ case of our main theorem, Theorem~\ref{theorem:results:maxcsp}.

\subsection{Random Assignments}
We begin by giving a sufficient
condition for the value of a \maxand to be highly concentrated under
independent random assignments to the variables.


Let $\rho$ be a partial assignment. Let $p: V\setminus S_{\rho} \to
[0,1]$ be such that $p(v) \in [\frac{1}{4}, \frac{3}{4}]$ for each $v
\in V\setminus S_{\rho}$. Let $g: V\setminus S_\rho \to [q]$ be a
random assignment obtained by sampling $g(v)$ for each $v$
independently with $\E[g(v)] = p(v)$. Define the random variable
$$Y \defeq \val(\rho \cup g, \calW) - \val(\rho, \calW) = \sum_{C \in \Act(\rho)} \calW(C) C(g).$$
The random variable $Y$ measures the contribution of active
constraints to $\val(\rho \cup g, \calW).$ Note that the two
quantities $\E[Y]$ and $\Var[Y]$ can be computed efficiently given
$p$.  We denote these by $\meancalc_\rho(p, \calW)$ and
$\varcalc_\rho(p, \calW)$. The following lemma proves that either $Y$
is concentrated, or there exists an active variable that contributes a
significant fraction of the total active-degree of the instance.
\begin{lemma}
\label{lemma:max2and:truemeanvar}
Let $p, Y$ be as above.
\begin{enumerate}
\item If $\varcalc_\rho(p, \calW) < \delta_0 \epsilon_0^2 \cdot \meancalc_\rho(p, \calW)^2$ then $\Pr[Y <  (1-\epsilon_0) \av[Y] ] < \delta_0$.
\item If $\varcalc_\rho(p, \calW) \ge \delta_0\epsilon_0^2 \cdot
  \meancalc_\rho(p, \calW)^2 $, then there exists $v \in V\setminus
  S_{\rho}$ such that
$$\activedeg_\rho(v, \calW) \geq \frac{\epsilon_0^2\delta_0}{64}\cdot \activedeg_{\rho}( \calW).$$
\end{enumerate}
\end{lemma}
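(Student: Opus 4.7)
The plan is to handle the two items separately. For the first, I would apply Chebyshev's inequality directly: writing $\mu = \meancalc_\rho(p, \calW) = \E[Y]$ and $\sigma^2 = \varcalc_\rho(p, \calW) = \Var[Y]$, the hypothesis $\sigma^2 < \delta_0 \epsilon_0^2 \mu^2$ together with the fact that $\mu > 0$ (which holds since $p(v) \in [1/4, 3/4]$ keeps each active constraint satisfied with positive probability) gives $\Pr[Y < (1-\epsilon_0)\mu] \leq \Pr[|Y - \mu| > \epsilon_0 \mu] \leq \sigma^2/(\epsilon_0 \mu)^2 < \delta_0$.

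For the second item, I would prove the contrapositive: assume every $v \in V \setminus S_\rho$ satisfies $\activedeg_\rho(v, \calW) < \frac{\epsilon_0^2 \delta_0}{64}\,\activedeg_\rho(\calW)$, and show $\Var[Y] < \delta_0 \epsilon_0^2 \E[Y]^2$. The key is that $g$ is independent across coordinates, so in the expansion $\Var[Y] = \sum_{C_1, C_2 \in \Act(\rho)} \calW(C_1)\calW(C_2)\,\mathrm{Cov}(C_1(g), C_2(g))$ only pairs with $C_1 \sim_\rho C_2$ (including the diagonal) contribute. Since each $C_i(g) \in \{0,1\}$, I would use the bound $|\mathrm{Cov}(C_1(g), C_2(g))| \leq \sqrt{\Var(C_1(g))\Var(C_2(g))} \leq \sqrt{\E[C_1(g)]\E[C_2(g)]}$ and then apply AM-GM to get
\[
\Var[Y] \;\leq\; \sum_{C_1 \sim_\rho C_2} \calW(C_1)\calW(C_2)\,\E[C_1(g)]
\;=\; \sum_{C_1} \calW(C_1)\,\E[C_1(g)] \sum_{C_2 \sim_\rho C_1}\calW(C_2).
\]
For a 2-AND constraint $C_1$, at most two of its variables lie in $V \setminus S_\rho$, so the inner sum is bounded by $2\max_v \activedeg_\rho(v, \calW) \eqdef 2d^*$, yielding $\Var[Y] \leq 2d^*\,\E[Y]$.

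Next I would lower bound $\E[Y]$. Since each literal of an active variable is true with probability in $[1/4, 3/4]$ under $g$, every active constraint (involving one or two variables of $V \setminus S_\rho$) is satisfied with probability at least $1/16$. Using that each active 2-AND constraint has at most two active variables, $\E[Y] \geq \tfrac{1}{16}\sum_{C \in \Act(\rho)} \calW(C) \geq \tfrac{1}{32}\,\activedeg_\rho(\calW)$. Combining, $\Var[Y]/\E[Y]^2 \leq 2d^*/\E[Y] \leq 64\,d^*/\activedeg_\rho(\calW)$, which is strictly less than $\delta_0 \epsilon_0^2$ under the contrapositive assumption, as required.

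The main obstacle is landing on the right constant: the crude bound $|\mathrm{Cov}| \leq 1$ would lead to an extra factor of $\activedeg_\rho(\calW)/\E[Y]$ and give a much worse constant than $1/64$. Using the Cauchy--Schwarz bound on the covariance together with AM-GM is what couples the variance directly to $\E[Y]$ and yields the clean factor matching the statement. The rest (bounding $\E[Y]$ from below, and the counting of active variables per 2-AND constraint) is routine.
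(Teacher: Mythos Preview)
Your argument is correct and follows essentially the same route as the paper (which proves this as the $q=w=2$ case of Lemma~\ref{lemma:maxcsp:truemeanvar}): Chebyshev for part~1, and for part~2 the observation that only pairs $C_1\sim_\rho C_2$ contribute to the variance, together with the lower bound $\E[Y]\ge \tfrac{1}{16}\sum_{C\in\Act(\rho)}\calW(C)\ge\tfrac{1}{32}\activedeg_\rho(\calW)$. The only stylistic differences are that the paper argues directly rather than by contrapositive, and that it bounds the covariance more simply via $\mathrm{Cov}(C_1,C_2)\le \E[C_1C_2]\le \E[C_1]$ for indicators, rather than going through Cauchy--Schwarz and AM--GM; both routes land on the identical inequality $\Var[Y]\le \E[Y]\cdot\max_{C}\sum_{C'\sim_\rho C}\calW(C')$.
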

The above lemma is a special case of
Lemma~\ref{lemma:maxcsp:truemeanvar} which is proved in
Section~\ref{sec:maxcspwq:random}, and hence we skip the proof. The
first part is then a simple application of the Chebyshev
inequality. For the second part, we use the assumption that $\varcalc$
is large, to deduce that there exists a constraint $C$ such that the
total weight of constraints that share a variable from $V \setminus S$
with $C,$ \emph{i.e.}, $\sum_{C_2 \sim_S C} \calW(C_2),$ is large. It
then follows that at least one variable $v \in C$ must have large
activedegree given $S.$

\subsection{LP relaxations}
\label{section:maxand:lp}
Let $(c_\ell)_{\ell \in [k]}$ be the given target values for the Pareto approximation problem.
Given a partial assignment
$\rho$, we can write the feasibility linear program for simultaneous
$\maxand$ as shown in figure~\ref{fig:max2and-lp1}, ~\ref{fig:max2and-lp2}. In this LP, for a
constraint $C$, $C^+$ ($C^-$) denotes set of variables that appears as
a positive (negative) literal in $C$. 

For $\vect$, $\vecz$ satisfying linear constraints $\LLand_1(\rho)$, let
$\smooth(\vect)$ denote the map $p: V \setminus S_\rho \to [0,1]$ with
$p(v) = \frac{1}{4} + \frac{t_{v}}{2}$. Note that $p(v) \in
[\nfrac{1}{4}, \nfrac{3}{4}]$ for all $v$. 

Given $\vec{t},\vecz$ satisfying $\LLand_1,$ the rounding algorithm
from~\cite{Trevisan98} samples each variable $v$ independently with
probabily $\smooth(\vect)(v).$ Note that this rounding algorithm is
\emph{smooth} in the sense that each variable is sampled independently
with a probability that is bounded away from 0 and 1. This will be
crucial for our algorithm. The following
theorem from~\cite{Trevisan98} proves that this rounding algorithm
finds a good integral assignment.
%
%
\begin{lemma}[\cite{Trevisan98}]
\label{lemma:max2and:lp}
Let $\rho$ be a partial assignment.
\begin{enumerate}
\item {\bf Relaxation:} For every $g_0 : V \setminus S_\rho \to \{0,1\}$, there exist
$\vect$, $\vecz$ satisfying $\LLand_1(\rho)$ such that for every
\maxand instance $\calW$:
$$ \sum_{C \in \calC} \calW(C) z_C = \val(g_0 \cup \rho, \calW).$$
\item {\bf Rounding:}Suppose $\vect, \vecz$ satisfy $\LLand_1(\rho)$. Then for every
  \maxand instance $\calW$:
  \[\val(\rho, \calW) + \meancalc_\rho(\smooth(\vect), \calW) \geq
  \frac{1}{2} \cdot \sum_{C \in \calC}  \calW(C)z_C.\]
\end{enumerate}
\end{lemma}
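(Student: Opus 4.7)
The plan is to treat the two parts separately, each reducing to a clause-by-clause analysis that mirrors Trevisan's original argument for the MAX-AND LP, adapted to the partial-assignment setting of $\rho$. For the relaxation claim, given an integral $g_0 : V\setminus S_\rho \to \{0,1\}$ and writing $f = g_0 \cup h_\rho$, I would exhibit an explicit feasible $(\vect, \vecz)$ as follows: choose $t_v$ to be the integral value (in whatever normalization $\LLand_1(\rho)$ uses) that makes $\smooth(\vect)(v) = 3/4$ if $g_0(v)=1$ and $\smooth(\vect)(v) = 1/4$ if $g_0(v)=0$, and set $z_C = C(f)$ for every $C \in \calC$. Feasibility in $\LLand_1(\rho)$ then reduces, constraint by constraint, to a Boolean tautology: each LP inequality of the form $z_C \le \ell(f)$ (one per literal $\ell$ of $C$, including literals fixed by $\rho$) holds because $f$ is an honest Boolean assignment and AND is the minimum of its literals. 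The equality $\sum_C \calW(C) z_C = \val(f, \calW)$ then holds by construction from the definition $\val(f, \calW) = \sum_C \calW(C) C(f)$, and it holds for \emph{every} $\calW$ because the choice of $(\vect, \vecz)$ does not depend on $\calW$.

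For the rounding claim, I would split $\sum_C \calW(C) z_C$ into active and non-active parts. For any non-active $C$, the LP constraints of $\LLand_1(\rho)$ force $z_C \le C(\rho)$ (else the relaxation claim just proved would be violated by some extension of $\rho$), so the non-active contribution to the right-hand side is at most $\sum_{C \notin \Act(\rho)} \calW(C)\, C(\rho) = \val(\rho, \calW)$. It therefore suffices to establish the per-clause inequality
\[
\Pr_g[\,C \text{ is satisfied by } \rho \cup g\,] \;\ge\; \tfrac{1}{2}\, z_C
\]
for every active clause $C$, where $g(v)$ is drawn independently with $\Pr[g(v)=1] = \smooth(\vect)(v)$. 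Summing these inequalities against the weights $\calW(C)$ gives $\meancalc_\rho(\smooth(\vect), \calW) \ge \tfrac{1}{2}\sum_{C \in \Act(\rho)} \calW(C) z_C$, and adding back the non-active contribution yields the claim.

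The core technical step, and the main thing I would need to verify, is this per-clause inequality for active AND clauses. After applying $h_\rho$ to the variables in $S_\rho$, any active clause simplifies to an AND of one or two literals on variables in $V \setminus S_\rho$ (otherwise it would not be active). By case analysis on the number and polarities of the surviving literals, I would expand the satisfaction probability as a product of smoothed probabilities of the form $\tfrac14 + \tfrac{t_v}{2}$ or $\tfrac34 - \tfrac{t_v}{2}$, and compare it to the LP upper bounds on $z_C$ coming from each surviving literal. The point of Trevisan's calibration is that pinning $\smooth(\vect)(v) \in [\tfrac14, \tfrac34]$ makes the ratio $\Pr[C \text{ satisfied}]/z_C$ be minimized in the interior of the feasible region at value exactly $\tfrac12$ (with equality attained in the two-positive-literal case when the LP places both literals at the midpoint); this is where the loss factor of $\tfrac12$ comes from and why it is sharp for this LP. Combining the per-clause bounds with the nonnegativity of $\calW$ completes the proof.
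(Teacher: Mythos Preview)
Your approach is essentially the paper's: set $t_v$ to the integral value of $(g_0 \cup h_\rho)(v)$ and $z_C = C(g_0 \cup \rho)$ for Part~1; for Part~2, split into active and non-active clauses and prove the per-clause bound $\E[C(\rho \cup g)] \ge \tfrac{1}{2} z_C$ for active $C$ via the chain $\prod(\tfrac14 + \tfrac{z_C}{2}) \ge (\tfrac14 + \tfrac{z_C}{2})^2 \ge \tfrac{z_C}{2}$, exactly as the paper does in its Claim.

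One correction: your parenthetical justification for $z_C \le C(\rho)$ on non-active clauses --- ``else the relaxation claim just proved would be violated'' --- does not work, since Part~1 only asserts the \emph{existence} of one feasible point with a certain value and says nothing about an arbitrary feasible $(\vect,\vecz)$. The paper argues directly from the LP constraints of $\LLand_1(\rho)$: if $C$ is non-active with $C(\rho)=0$, some literal of $C$ is falsified by $h_\rho$, and since $t_v = h_\rho(v)$ is pinned for $v \in S_\rho$, the corresponding inequality $z_C \le t_v$ or $z_C \le 1 - t_v$ forces $z_C \le 0$. This is what you should invoke instead.
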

\begin{proof}
  We begin with the first part.  For $v \in S_\rho$, define $t_{v} =
  \rho(v).$.  For $v \in V \setminus S_{\rho},$ define $t_{v} =
  g_0(v).$ For $C \in \calC$, define $z_C = 1$ if $C(g_0 \cup \rho) =
  1$, and define $z_C = 0$ otherwise.  It is easy to see that these
  $\vect, \vecz$ satisfies $\LLand_1(\rho)$, and that for every
  instance $\calW$:
$$ \sum_{C \in \calC} \calW(C) z_C = \val(g_0 \cup \rho, W).$$

Now we consider the second part. Let $\calW$ be any instance of
$\maxand$.  Let $p = \smooth(\vect)$. Let $g: V \setminus S_\rho \to
\{0,1\}$ be sampled as follows: independently for each $v \in V
\setminus S_\rho$, $g(v)$ is sampled from $\{0,1\}$ such that
$\av[g(v)] = p(v)$.  We have:
\begin{align}
\label{trevround}
  \val(\rho, \calW) + \meancalc_\rho(\smooth(\vect), \calW) & =
  \sum_{C \in \calC \setminus\Act(\rho)} \calW(C)C(\rho) +
  \E\left[\sum_{C \in \Act(\rho)} \calW(C)C(\rho \cup g) \right].
\end{align}

We will now understand the two terms of the right hand side.

For $C \in \calC \setminus \Act(\rho),$ it is easy to verify that if
$z_C > 0,$ we must have $C(\rho) = 1.$
Thus:
$$  \sum_{C \in \calC \setminus\Act(\rho)} \calW(C)C(\rho) \geq  \sum_{C \in \calC \setminus\Act(\rho)} \calW(C) z_C.$$

To understand the second term, we have the following claim.
\begin{claim}
\label{claim:max2and:trev_rounding}
For $C \in \Act(\rho)$, 
$\E[C(\rho \cup g)]  \ge \frac{1}{2}\cdot z_C$.
\end{claim}
\begin{proof}
Suppose there are exactly $h$ variables in $C$ which are not in
$S_\rho$. We have $h\leq 2.$
\begin{align*}
\E[C(\rho \cup g)] =  \Pr[C \text{ is satisfied by }\rho \cup g] &=
  \left(\prod_{v\in C^+, v\in V\setminus S_\rho}\frac{1}{4} +
    \frac{t_v}{2}\right)\cdot\left(\prod_{v\in C^-, v\in V\setminus
      S_\rho}\frac{1}{4} + \frac{1-t_v}{2}\right) \\
  & \geq \left(\frac{1}{4} + \frac{z_C}{2}\right)^h \geq
  \left(\frac{1}{4} + \frac{z_C}{2}\right)^2 \geq \frac{z_C}{2}.
\end{align*}
\end{proof}

This claim implies that:
$$ \E\left[\sum_{C \in \Act(\rho)} \calW(C)C(\rho \cup g) \right] \geq \frac{1}{2} \sum_{C \in \calC} \calW(C) z_C.$$

Substituting back into Equation~\eqref{trevround}, we get the Lemma.
\end{proof}


\begin{figure*}
\begin{tabularx}{\textwidth}{|X|}
\hline
 \vspace{-27pt}
\begin{center}
\[\begin{array}{rrllr}
& z_C &\leq & t_v & \forall C\in \calC, v\in C^+ \\
& z_C &\leq & 1 - t_v & \forall C\in \calC, v\in C^- \\
& 1\geq t_v &\geq & 0 & \forall v \in V\setminus S_\rho\\
& t_v & = & h_\rho(v) & \forall v \in S_\rho 
\end{array}\]
 \vspace{-25pt}
\end{center}
\\
\hline
\end{tabularx}
\caption{Linear inequalities $\LLand_1(\rho)$}
  \label{fig:max2and-lp1}
\end{figure*}

\begin{figure*}
\begin{tabularx}{\textwidth}{|X|}
\hline
\vspace{-27pt}
\begin{center}
\[\begin{array}{rrllr}
\sum_{C\in \calC} \calW_\ell(C)\cdot z_C  \geq c_\ell & \forall \ell\in [k]\\
\vect, \vecz  \text{ satisfy }  \LLand_1(\rho).&
\end{array}\]
 \vspace{-25pt}
\end{center}
\\
\hline
\end{tabularx}
\caption{Linear inequalities $\LLand_2(\rho)$}
  \label{fig:max2and-lp2}
\end{figure*}

\subsection{The Algorithm}

We now give our Pareto approximation algorithm for {\sc Max-2-AND} in Figure~\ref{fig:weighted-max2and}.

\begin{figure*}[h!]
\begin{longtabu}{|X|}
\hline
{\bf Input}: $k$ instances of \maxand $\calW_1,\ldots,\calW_k$ on the
variable set $V,$ $\eps > 0$ and target objective values $c_1, \ldots, c_k.$\\
{\bf Output}: An assignment to $V$\\
{\bf Parameters:}  $\delta_0 = \frac{1}{10(k+1)}$, $\epsilon_0 = \eps$, $\gamma=\frac{\eps_0^2\delta_0}{16}$, $t = \ceil{\frac{20k^2}{\gamma}\log \frac{k}{\gamma}}$
\begin{enumerate}[itemsep=0mm]
\item Initialize tree $T$ to be an empty quaternary tree (i.e., just 1
  root node). Nodes of the tree will be indexed by strings in
  $(\{0,1\}^2)^*$.
\item With each node $\nu$ of the tree, we associate:
\begin{enumerate}[itemsep=0mm]
\vspace{-2mm}
\item A partial assignment $\rho_\nu.$
\item A special pair of variables $\calT_\nu^1,\calT_\nu^2 \in V \setminus S_{\rho_{\nu}}$.
\item A special instance $\Inst_\nu \in [k]$.
\item A collection of integers $\cnt_{\nu,1}, \ldots, \cnt_{\nu, k}$.
\item A trit representing whether the node $\nu$ is living, dead, or exhausted.
\end{enumerate}
\end{enumerate}

\\ 
\hline
\end{longtabu}
\end{figure*}

\begin{figure*}
\begin{longtabu}{|X|}
\hline

\begin{enumerate}[itemsep=0mm]
\item[3.] Initialize the root node $\nu_0$ to (1) $\rho_{\nu_0} \leftarrow
  (\emptyset, \emptyset)$, (2) $\forall\ell \in [k],
  \cnt_{\nu_0,\ell} \leftarrow 0$, (3) living.
\item[4.] While there is a living leaf $\nu$ of $T,$ do the following: 
\begin{enumerate}
\item Check the feasibility of linear inequalities $\LLand_2(\rho_\nu)$. 
\begin{enumerate}
\item If there is a feasible solution $\vect, \vecz$, then define $p_\nu :
  V\setminus S_{\rho_\nu} \to [0,1]$ as $p_\nu =\smooth(\vect)$.
\item If not, then declare $\nu$ to be dead and return to Step $4$. 
\end{enumerate}
\item For each $\ell \in [k]$,
compute $\varcalc_{\rho_\nu}(p_{\nu}, \calW_\ell)$ and $\meancalc_{\rho_\nu}(p_{\nu}, \calW_\ell)$.
\item If $\varcalc_{\rho_\nu}(p_{\nu}, \calW_\ell) \geq \delta_0
  \epsilon_0^2\cdot \meancalc_{\rho_\nu}(p_{\nu}, \calW_\ell)^2 $,
  then set $\flag_\ell \leftarrow \true$, else set $\flag_\ell
  \leftarrow \false$.
\item Choose the smallest $\ell \in [k]$, such that
$\cnt_\ell < t$ AND $\flag_\ell = \true$ (if any):
\begin{enumerate}
\item Find $x \in V \setminus S_{\rho_\nu}$ that maximizes
  $\activedeg_{\rho_\nu}(x, \calW_\ell).$ Note that it will satisfy
  $\activedeg_{\rho_\nu}(x, \calW_\ell) \geq \gamma \cdot
  \activedeg_{\rho_{\nu}}(\calW_\ell)$.

\item Among all the active constraints $C \in \calC$ such that $x \in C$ and $C
  \cap S_{\rho_\nu} = \emptyset,$ find the one that maximizes
  $\calW_\ell(C)$. Call this constraint $C^\star$. Let $y$ be the
  other variable contained in $C^\star$ (if there is no other
  variable, set $y = x$).

Set $\calT_\nu^1 \leftarrow x$ and $\calT_\nu^2 \leftarrow y$.
Set $\Inst_{\nu} \leftarrow \ell$.

\item Create four children of $\nu$, with labels $\nu b_1 b_2$ for each
  $b_1, b_2 \in \{0,1\}$ and set
\begin{itemize}
\item $\rho_{\nu b_1 b_2} \leftarrow (S_{\rho_\nu} \cup
  \{\calT_\nu^1,\calT_\nu^2\}, h^{b_1 b_2})$,
where $h^{b_1 b_2}$ extends $h_{\rho_{\nu}}$ by $h^{b_1 b_2
}(\calT_\nu^1) = b_1$ and $h^{b_1 b_2}(\calT_\nu^2) = b_2$.

\item $\forall \ell' \in [k]$ with $\ell' \neq \ell$, set $\cnt_{\nu
    b_1 b_2, \ell'} \leftarrow \cnt_{\nu, \ell'}$. Set $\cnt_{\nu b_1
    b_2, \ell} \leftarrow \cnt_{\nu, \ell}+1$.

\item Set $\nu b_1 b_2$ to be living.

\end{itemize}
\end{enumerate}
\item If no such $\ell$ exists, declare $\nu$ to be exhausted. 

\end{enumerate}
\end{enumerate}
\begin{enumerate}

\item[5.] Now every leaf of $T$ is either exhausted or dead. For each exhausted leaf $\nu$ of $T$:
\begin{enumerate}
\item 
\label{alg:max-and:step:random-assignment}
Let $g_{\nu} : V\setminus S_{\rho_\nu} \to \{0,1\}$ be a random
assignment where, for each $v \in V \setminus S_{\rho_\nu},$
$g_{\nu}(v)$ is sampled independently with $\E[g_{\nu}(v)] =
p_{\nu}(v)$. Note that $\E[g_{\nu}(v)] \in [\frac{1}{4}, \frac{3}{4}]$.
\item 
\label{alg:max-and:step:all-assignments}
For every assignment $h: S_{\rho_\nu} \to \{0,1\},$
compute $\out_{h,g_\nu} \leftarrow \min_{\ell \in [k]} \frac{\val(h \cup g_{\nu},
    \calW_\ell)}{c_\ell}.$ If $c_\ell = 0$ for some $\ell \in [k],$
    we interpret $\frac{\val(h \cup g_{\nu},\calW_l)}{c_\ell}$ as $+\infty.$
\end{enumerate}

\item[6.] Output the largest $\out_{h,g_\nu}$ seen, and the assignment $h \cup g_{\nu}$ that produced it.
\end{enumerate}
\\ 
\hline
\end{longtabu}
\caption{Algorithm {\algowand} for approximating weighted simultaneous
  \maxand}
  \label{fig:weighted-max2and}
\end{figure*}
\clearpage


\subsection{Analysis}
Notice that the depth of the tree $T$ is at most $kt$, and that for every
$\nu$, we have that $|S_{\rho_{\nu}}| \leq 2kt$. This implies that the running time
is at most $2^{O(kt)} \cdot \poly(n)$.

Let $f^\star: V \to \{0,1\}$ be an assignment such that
$\val(f^\star, \calW_\ell) \geq c_\ell$ for each $\ell \in [k]$. Let $\nu^\star$ be the the unique leaf of the tree $T$ for which
$\rho_{\nu^\star}$ is consistent with $f^\star$. (This $\nu^\star$ can be
found as follows: start with $\nu$ equal to the root. Set $\nu$ to equal
the unique child of $\nu$ for which $\rho_\nu$ is consistent with $f^\star$, and repeat
until $\nu$ becomes a leaf. This leaf is $\nu^\star$).
Observe that since $f^\star$ is an assignment such that
$\val(f^\star, \calW_\ell) \ge c_\ell$ for every $\ell \in [k],$ by
picking $g_0 = f^\star|_{V \setminus S^\star}$ in part 1 of
Lemma~\ref{lemma:max2and:lp}, we know that $\LLand_2(\rho^\star)$ is
feasible, and hence $\nu^\star$ must be an exhausted leaf (and not
dead).\\
\\
Define $\rho^\star = \rho_{\nu^\star}$, $S^\star = S_{\rho^\star},$
$h^\star = h_{\rho^\star},$ and $p^\star = p_{\nu^\star}$.
At the completion of Step~4, if $\ell
\in [k]$ satisfies $\cnt_{\nu^\star, \ell} = t$, we call instance $\ell$ a {\em
  high variance} instance.  Otherwise we call instance $\ell$ a
\emph{low variance} instance.\\

\subsubsection{Low Variance Instances}
First we show that for the leaf $\nu^*$ in Step $5,$ combining the
partial assignment $h^\star$ with a random assignment $g_{\nu^\star}$
in step $5(b)$ is good for any low variance instances with high
probability. 
\begin{lemma}
\label{lemma:low_variance:maxand}
Let $\ell \in [k]$ be any low variance instance. For the leaf node
$\nu^\star$, let $g_{\nu^\star}$ be the random assignment sampled in
Step 5.(a). of {\algowand}. Then with probability at least
$1-\delta_0,$ the assignment $f=h^\star \cup g_{\nu^\star}$ satisfies:
$$ \Pr_{g_{\nu^\star}} \left[ \val(f,\calW_\ell) \geq
  (\nicefrac{1}{2}-\nicefrac{\epsilon}{2})\cdot c_\ell \right] \geq 1-\delta_0.$$
\end{lemma}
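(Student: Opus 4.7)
The plan is to combine the two structural tools that have just been set up: the concentration half of Lemma~\ref{lemma:max2and:truemeanvar} and the LP rounding guarantee of Lemma~\ref{lemma:max2and:lp}. The engine of the argument is the observation that an exhausted leaf $\nu^\star$ cannot have a ``high-variance'' flag on any low-variance instance; all the rest is an arithmetic combination.

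First I would unpack what it means for $\ell$ to be a low variance instance at the exhausted leaf $\nu^\star$. By definition $\cnt_{\nu^\star,\ell} < t$, and because $\nu^\star$ is exhausted, Step~4.(d) of the algorithm found no $\ell' \in [k]$ with $\cnt_{\nu^\star,\ell'} < t$ and $\flag_{\ell'} = \true$. In particular $\flag_\ell = \false$, which by Step~4.(c) means
\[ \varcalc_{\rho^\star}(p^\star,\calW_\ell) < \delta_0\,\epsilon_0^2\cdot\meancalc_{\rho^\star}(p^\star,\calW_\ell)^2.\]
Applying part~1 of Lemma~\ref{lemma:max2and:truemeanvar} to the random variable $Y = \val(\rho^\star \cup g_{\nu^\star},\calW_\ell) - \val(\rho^\star,\calW_\ell)$ (which has mean $\meancalc_{\rho^\star}(p^\star,\calW_\ell)$ when $g_{\nu^\star}$ is sampled with marginals $p^\star$, as in Step~5.(a)) yields
\[ \Pr_{g_{\nu^\star}}\bigl[\,Y < (1-\epsilon_0)\,\meancalc_{\rho^\star}(p^\star,\calW_\ell)\,\bigr] < \delta_0. \]

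Next I would use the LP side. Since $\nu^\star$ is exhausted (not dead), $\LLand_2(\rho^\star)$ is feasible; let $(\vect,\vecz)$ be the feasible solution from which $p^\star = \smooth(\vect)$ was produced in Step~4.(a). By feasibility, $\sum_{C} \calW_\ell(C)\,z_C \geq c_\ell$, and by part~2 of Lemma~\ref{lemma:max2and:lp},
\[ \val(\rho^\star,\calW_\ell) + \meancalc_{\rho^\star}(p^\star,\calW_\ell) \;\geq\; \tfrac{1}{2}\sum_{C}\calW_\ell(C)\,z_C \;\geq\; \tfrac{c_\ell}{2}. \]

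Finally I would combine the two inequalities. On the good event, whose probability is at least $1-\delta_0$,
\[ \val(h^\star \cup g_{\nu^\star},\calW_\ell) \;\geq\; \val(\rho^\star,\calW_\ell) + (1-\epsilon_0)\,\meancalc_{\rho^\star}(p^\star,\calW_\ell). \]
Rewriting the right-hand side as $(1-\epsilon_0)\bigl(\val(\rho^\star,\calW_\ell)+\meancalc_{\rho^\star}(p^\star,\calW_\ell)\bigr) + \epsilon_0\,\val(\rho^\star,\calW_\ell)$ and using the LP bound and nonnegativity of $\val(\rho^\star,\calW_\ell)$ gives
\[ \val(h^\star \cup g_{\nu^\star},\calW_\ell) \;\geq\; (1-\epsilon_0)\cdot \tfrac{c_\ell}{2} \;=\; \bigl(\tfrac{1}{2}-\tfrac{\epsilon}{2}\bigr)\,c_\ell, \]
since $\epsilon_0 = \epsilon$. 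This is the desired conclusion.

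There is no real obstacle here: everything reduces to verifying that the termination condition of Step~4 is precisely the hypothesis of Lemma~\ref{lemma:max2and:truemeanvar}.1 and that the feasibility of $\LLand_2(\rho^\star)$ (guaranteed by the fact that $\nu^\star$ survived and is exhausted rather than dead) lets us convert the mean estimate into a lower bound involving $c_\ell$. The only point to be careful about is to confirm that $\nu^\star$ is indeed exhausted and not dead — but this is the content of the paragraph immediately preceding the lemma, where one takes $g_0 = f^\star|_{V\setminus S^\star}$ in part~1 of Lemma~\ref{lemma:max2and:lp} to exhibit a feasible solution to $\LLand_2(\rho^\star)$.
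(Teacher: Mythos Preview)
Your proposal is correct and follows essentially the same approach as the paper: apply the concentration half of Lemma~\ref{lemma:max2and:truemeanvar} (available because $\flag_\ell=\false$ at the exhausted leaf) and combine it with the LP rounding bound of Lemma~\ref{lemma:max2and:lp} together with feasibility of $\LLand_2(\rho^\star)$. If anything, you are slightly more careful than the paper, which writes an equality where your rewriting shows it is really an inequality obtained by dropping the nonnegative term $\epsilon_0\,\val(\rho^\star,\calW_\ell)$.
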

\begin{proof}
  For every low variance instance $\ell$, we have that
  $\varcalc_{\rho_{\nu^\star}}(p^\star , \calW_\ell) < \delta_0\epsilon_0^2
  \cdot \meancalc_{\rho_{\nu^\star}}(p^\star , \calW_\ell)^2.$ Define $Y
  \defeq \val(\rho^\star \cup g_{\nu^\star}, \calW_\ell) -
  \val(\rho^\star,\calW_\ell)$.  By
  Lemma~\ref{lemma:max2and:truemeanvar}, we have $\Pr[ Y < (1-
  \epsilon_0) \E[Y] ] < \delta_0.$ 
Thus, with probability at least $1-\delta_0,$ we have,
\begin{align*}
  \val(f,\calW_\ell) &\geq \val(\rho^\star,\calW_\ell) +
  (1-\epsilon_0) \E[Y]\\
  &= \val(\rho^\star,\calW_\ell) + (1-\epsilon_0) \cdot \meancalc_{\rho^\star}(\smooth(\vect), \calW_\ell) \\
  &= (1-\epsilon_0) \cdot \left(\val(\rho^\star,\calW_\ell) +
    \meancalc_{\rho^\star}(\smooth(\vect), \calW_\ell) \right) \\
  &\geq \frac{1}{2}\cdot(1-\epsilon_0)\cdot \sum_{C \in \calC}
  \calW_\ell(C) \cdot z_C  \ge \frac{1}{2}\cdot(1-\epsilon_0) \cdot c_\ell\geq
  \left(\frac{1}{2} - \frac{\epsilon}{2}\right) \cdot c_\ell,
\end{align*}
where we have used the second part of
  Lemma~\ref{lemma:max2and:lp}.
%
\end{proof}
Next, we will consider a small perturbation of $h^\star$ which will
ensure that the algorithm performs well on high variance instances too.
We will ensure that this perturbation does not affect the
success on the low variance instances.

\subsubsection{ High variance instances}
Fix a high variance instance $\ell$. Let $\nu$ be an ancestor of $\nu^\star$ with $\Inst_{\nu} = \ell$.
Define:
\begin{align*}
\activedeg_\nu \defeq \activedeg_{\rho_{\nu}}(\calT_\nu^1, \calW_\ell).
\end{align*}
Let $\calC_{\nu}$ be the set of all constraints $C$ containing
$\calT_\nu^1$ which are active given $\rho_\nu$.  We call a constraint
$C$ in $\calC_{\nu}$ a $\backward$ constraint if $C$ only involves
variables from $S_{\rho_{\nu}} \cup \{\calT_\nu^1\}$. Otherwise we
call $C$ in $\calC_{\nu}$ a $\forward$ constraint.  Let
$\calC_{\nu}^{\backward}$ and $\calC_{\nu}^{\forward}$ denote the sets
of these constraints.  Finally, we denote $\calC_{\nu}^{\out}$ the set
of binary constraints on $\calT_\nu^1$ and a variable from $V
\setminus S^\star$.


Define $\backward$ degree and $\forward$ degree  as follows:
\begin{align*}
\backward_{\nu} &\defeq \sum_{C \in \calC_{\nu}^\backward}\calW_{\ell}(C),\\
\forward_{\nu} &\defeq \sum_{C \in \calC_{\nu}^\forward}\calW_{\ell}(C).
\end{align*}
Note that:
$$\activedeg_\nu =  \backward_{\nu}  + \forward_{\nu}.$$
Now we consider variable $\calT_\nu^2$.  Let $\heaviest_\nu$ be the
total $\calW_\ell$ weight of all the constraints containing both
$\calT_\nu^1$ and $\calT_\nu^2$. Based on all this, we classify $\nu$
into one of three categories:
\begin{enumerate}
\item If $\backward_{\nu} \geq \frac{1}{2}
  \cdot\activedeg_\nu$, then we call $\nu$ a $\typeA$ node.
\item Otherwise, if $\heaviest_\nu \geq \frac{1}{100tk}
  \cdot\activedeg_\nu$, then we call $\nu$ a $\typeB$ node. In this
  case we have some $\calW_\ell$ constraint $C$ containing
  $\calT_\nu^1$ and $\calT_\nu^2$ with $\calW_\ell(C) \geq
  \frac{1}{1600 tk} \cdot \activedeg_\nu$.
\item Otherwise, we call $\nu$ a $\typeC$ node. In this case, for
  every $v \in V \setminus S_{\rho_\nu},$ the total weight of the
  constraints involving $v$ and $\calT_\nu^1,$ \emph{i.e.},
  $\activedeg_{\rho_\nu}(\calT_\nu^1 \cup v,\calW_\ell)$ is bounded by
  $\frac{1}{100tk} \cdot \activedeg_{\nu}.$ In particular, every
  constraint $C \in \calC_{\nu}^\forward$ must have $\calW_\ell(C) <
  \frac{1}{100tk} \cdot \activedeg_{\nu}$ . Since $|S^\star| \leq
  2tk$, the total weight of constraints containing $\calT_\nu^1$ and
  some variable in $S^\star\setminus S_{\rho_\nu}$ is at most
  $|S^\star\setminus S_{\rho_\nu}|\cdot \frac{1}{100tk} \cdot
  \activedeg_{\nu}$ which is at most $\frac{2}{100} \cdot
  \activedeg_{\nu}$. Hence we have:
\begin{align*}
  \sum_{C \in \calC_{\nu}^{\out}} \calW_\ell(C)  &= \forward_{\nu} - \left\{\substack{\text{total weight of constraints containing}\\ \text{$\calT_\nu^1$ and some variable in $S^\star\setminus S_{\rho_\nu}$ } }\right\}\\
&\geq \left(\frac{1}{2} - \frac{2}{100}\right) \activedeg_\nu  > \frac{1}{4} \cdot\activedeg_\nu.
\end{align*}
\end{enumerate}
For nodes $\nu$ which are $\typeC$, the variable $\calT_\nu^1$ has a
large fraction of its active degree coming from constraints between
$\calT_\nu^1$ and $V \setminus S^\star$.

For a partial assignment $g : V\setminus S^\star \to \{0,1\}$, we say that $g$
is $\Cgood$ for $\nu$ if there exists a setting of variable $\calT_\nu^1$ that satisfies at
least $\frac{1}{64}\cdot\activedeg_\nu$ weight amongst constraints containing variable $\calT_\nu^1$ and some other variable in
$V\setminus S^\star$.
The next lemma shows that for every $\typeC$ node $\nu$, with high
probability, the random assignment $g_{\nu^\star}: V
\setminus S^\star \to \{0,1\}$ is $\Cgood$ for $\nu$.
\begin{lemma}
\label{lemma:maxand:hoeffding}
Consider a $\typeC$ node $\nu.$ Suppose $g : V \setminus S^\star \to
\{0,1\}$ is a partial assignment obtained by independently sampling
$g(v)$ with $\E[g(v)] \in
[\nfrac{1}{4}, \nfrac{3}{4}]$ for each $v \in V\setminus S^\star.$ Then:
$$ \Pr_{g} [ g \mbox { is $\Cgood$ for $\nu$} ] \geq 1- 2\cdot e^{-tk/100}.$$
\end{lemma}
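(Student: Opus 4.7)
\emph{Proof proposal.} For each $C \in \calC_\nu^{\out}$, let $v_C \in V \setminus S^\star$ denote the unique variable of $C$ other than $\calT_\nu^1$, and for $b \in \{0,1\}$ define
\[
X_b \;\defeq\; \sum_{C \in \calC_\nu^{\out}} \calW_\ell(C)\cdot \mathbbm{1}\!\left[\,C \text{ is satisfied when } \calT_\nu^1 = b \text{ and } v_C = g(v_C)\,\right],
\]
so that $g$ is $\Cgood$ for $\nu$ precisely when $\max(X_0, X_1) \ge \tfrac{1}{64}\activedeg_\nu$. The plan is a first-moment computation on $X_0+X_1$, followed by McDiarmid concentration on each $X_b$, followed by a union bound.

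For the expectation, each $C \in \calC_\nu^{\out}$ is a $2$-AND of literals on $\calT_\nu^1$ and $v_C$, so it contributes $\calW_\ell(C)$ to exactly one of $X_0, X_1$ (corresponding to the unique $b$ making the $\calT_\nu^1$-literal true) and only when the $v_C$-literal is true under $g$. By the assumption $\E[g(v)] \in [\tfrac14,\tfrac34]$, the $v_C$-literal is true with probability $\ge \tfrac14$ regardless of its sign. Combining this with the $\typeC$ lower bound $\sum_{C \in \calC_\nu^{\out}}\calW_\ell(C) > \tfrac14\activedeg_\nu$ already established in the excerpt,
\[
\E[X_0+X_1] \;\ge\; \tfrac14 \sum_{C \in \calC_\nu^{\out}}\calW_\ell(C) \;\ge\; \tfrac{1}{16}\activedeg_\nu.
\]

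The main step is to concentrate each $X_b$ around its mean. Each $X_b$ is a function of the independent variables $\{g(v): v \in V \setminus S^\star\}$, and flipping $g(v)$ can affect only those constraints in $\calC_\nu^{\out}$ that contain both $\calT_\nu^1$ and $v$; hence the per-coordinate Lipschitz constant at $v$ is at most $\activedeg_{\rho_\nu}(\{\calT_\nu^1, v\}, \calW_\ell)$. Here the $\typeC$ hypothesis enters crucially and is the only essential use of the classification: it bounds this quantity by $\tfrac{1}{100tk}\activedeg_\nu$ for every $v$. Since each $C \in \calC_\nu^{\out}$ contributes to $\activedeg_{\rho_\nu}(\{\calT_\nu^1, v_C\}, \calW_\ell)$ for only one index $v=v_C$, the sum of the Lipschitz constants is at most $\sum_{C \in \calC_\nu^{\out}}\calW_\ell(C) \le \activedeg_\nu$, so the sum of their squares is at most $\tfrac{(\activedeg_\nu)^2}{100tk}$. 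McDiarmid's bounded-differences inequality then yields, for each $b \in \{0,1\}$,
\[
\Pr\!\left[\,X_b \le \E[X_b] - \tfrac{1}{64}\activedeg_\nu\,\right] \;\le\; \exp\!\left(-\tfrac{2(\activedeg_\nu/64)^2 \cdot 100tk}{(\activedeg_\nu)^2}\right) \;\le\; e^{-tk/100}.
\]

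To finish, I would union-bound over $b \in \{0,1\}$: with probability at least $1-2e^{-tk/100}$, both $X_b$'s lie within $\tfrac{1}{64}\activedeg_\nu$ of their means, whence $X_0+X_1 \ge \tfrac{1}{16}\activedeg_\nu - \tfrac{2}{64}\activedeg_\nu = \tfrac{1}{32}\activedeg_\nu$, and therefore $\max(X_0,X_1) \ge \tfrac{1}{64}\activedeg_\nu$, as required. The only delicate point is the per-coordinate Lipschitz estimate above, which is precisely why one needs the $\typeC$ dichotomy: for $\typeA$ or $\typeB$ nodes the individual constraint weights are too large for this concentration to go through, and the perturbation of $h^\star$ must instead exploit $\backward_\nu$ or the heavy constraint $C^\star$.
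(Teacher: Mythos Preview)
Your proof is correct and follows essentially the same approach as the paper: the same first-moment bound $\E[X_0+X_1]\ge\tfrac14\sum_{C\in\calC_\nu^{\out}}\calW_\ell(C)\ge\tfrac{1}{16}\activedeg_\nu$, the same Lipschitz bound $c_v\le\tfrac{1}{100tk}\activedeg_\nu$ coming from the $\typeC$ property, and the same McDiarmid calculation. The only cosmetic difference is that the paper first picks the $b$ with $\E[X_b]\ge\tfrac{1}{32}\activedeg_\nu$ and applies (two-sided) McDiarmid once, whereas you apply one-sided McDiarmid to both $X_0,X_1$ and union bound; both routes land on the same $2e^{-tk/100}$.
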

\begin{proof}
  Let $\ell = \Inst_\nu$.

  For each constraint $C \in \calC_{\nu}^{\out}$ and each $g:
  \{0,1\}^{V \setminus S^\star} \to \{0,1\}$, define $Z_C^{(1)}(g),
  Z_C^{(0)}(g) \in \{0,1\}$ as follows. $Z_C^{(1)}(g)$ equals $1$ iff
  $C$ is satisfied by extending the assignment $g$ with $\calT_\nu^1
  \leftarrow 1.$ Similarly, $Z_C^{(0)}(g)$ equals $1$ iff $C$ is
  satisfied by extending the assignment $g$ with $\calT_\nu^1
  \leftarrow 0.$

For $b =0,1,$ we define $\score^{(b)}: \{0,1\}^{V \setminus S^\star} \to
\mathbb R$ as follows:
$$\score^{(b)}(g)  \defeq \sum_{C \in \calC_{\nu}^\out} \calW_\ell(C) \cdot Z_C^{(b)}(g).$$
In words, $\score^{(b)}(g)$ is the total weight of constraints between $\calT_\nu^1$ and $V\setminus S^*$
satisfied by setting $\calT_{\nu}^1$ to $b$ and setting $V \setminus S^*$ according to $g$.

Note that since $g(v)$ is sampled independently for $v \in V \setminus
S^\star$ with $\E[g(v)] \in [\nfrac{1}{4}, \nfrac{3}{4}],$ we have
$\E_g[Z_C^{(1)}(g) + Z_C^{(0)}(g)] \geq \frac{1}{4}$. Thus:
\begin{align*}
\E_g[ \score^{(1)}(g)  + \score^{(0)}(g) ] &= \sum_{C \in \calC_{\nu}^\out} W_\ell(C) \E[Z_C^{(1)}(g)] + \sum_{C \in \calC_{\nu}^\out} W_\ell(C) \E[Z_C^{(0)}(g)]\\
&\geq \frac{1}{4} \sum_{C \in \calC_{\nu}^{\out}} \calW_\ell(C).
\end{align*}

So one of $\E[\score^{(1)}(g)]$ and $\E[\score^{(0)}(g)]$ is at least
$\frac{1}{8} \sum_{C \in \calC_{\nu}^{\out}} \calW_\ell(C) \geq
\frac{1}{32} \activedeg_\nu$.  Suppose it is $\E[\score^{(1)}(g)]$
(the other case is identical). We are going to use McDiarmid's
inequality to show the concentration of $\score^{(1)}(g)$ around its
mean\footnote{In this case we could have simply used a Hoeffding-like inequality,
but later when we handle larger-width constraints we will truly use the added generality of McDiarmid's inequality.}.

Since $\nu$ is $\typeC$, we know that for every vertex $v \in V
\setminus S^\star,$ changing $g$ on just $v$ can change the value of
$\score^{(1)}(g)$ by at most $c_v \defeq
\activedeg_{\rho_\nu}(\calT_\nu^1\cup v, \calW_\ell) \leq \frac{1}{100tk} \cdot
\activedeg_{\nu}.$
Thus by McDiarmid's inequality (Lemma~\ref{lemma:mcdiarmid}),
\begin{align*}
\Pr_{g} [ g \mbox { is not $\Cgood$ for $\nu$} ] &\leq \Pr_{g} \left[ \score^{(1)}(g) < \frac{1}{64}\cdot\activedeg_\nu\right]\\
& \leq \Pr_{g}\left[ | \score^{(1)}(g) - \E_{g}[\score^{(1)}(g)] | > \frac{1}{64}\cdot\activedeg_\nu\right]\\
& \leq 2\cdot\exp\left( \frac{-2\cdot\activedeg_\nu^2}{(64)^2 \sum_{v\in  V\setminus S^\star} c_v^2 }\right)\\
& \leq 2\cdot\exp\left( \frac{-2\cdot\activedeg_\nu^2}{(64)^2\cdot(\max_v{c_v})\cdot\sum_{v\in  V\setminus S^\star} c_v}\right)\\
& \leq 2\cdot\exp\left( \frac{-2\cdot\activedeg_\nu^2}{(64)^2\cdot (\max_v{c_v})\cdot \activedeg_\nu}\right)\\
& \leq 2\cdot\exp\left( \frac{-2\cdot\activedeg_\nu}{(64)^2 \cdot(\frac{\activedeg_\nu}{100tk})}\right) \leq 2\cdot\exp\left( \frac{-200tk}{(64)^2}\right) \leq  2\cdot\exp\left( \frac{-tk}{100}\right).
\end{align*}
\end{proof}

For a high variance instance $\ell$,
let $\nu^\ell_1, \ldots, \nu^\ell_t$ be the sequence of $t$ nodes with $\Inst_{\nu} = \ell$
which lie on the path from the root to $\nu^\star$.
Set $\finalW_\ell = \activedeg_{\rho^\star}(\calW_{\ell})$ (in words:
this is the active degree left over in instance $\ell$ after the restriction $\rho^\star$).
\begin{lemma}
\label{lem:potential}
For every high variance instance $\ell \in [k]$ and for each $i \leq [t/2]$,
$$\activedeg_{\nu^\ell_i} \geq \gamma\cdot(1-\gamma)^{-t/2} \cdot \finalW_\ell \geq 1600tk\cdot \finalW_\ell.$$
\end{lemma}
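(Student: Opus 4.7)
The plan is to track how the active degree $A_j \defeq \activedeg_{\rho_{\nu^\ell_j}}(\calW_\ell)$ shrinks as one descends along the $\ell$-labeled chain, and then invert the resulting geometric decay to lower bound $A_i$ for small $i$.

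The heart of the argument is a simple monotonicity observation: if $\rho'$ extends a partial assignment $\rho$ by fixing one additional variable $v \in V \setminus S_\rho$, then
\[\activedeg_{\rho'}(\calW_\ell) \;\leq\; \activedeg_\rho(\calW_\ell) - \activedeg_\rho(v, \calW_\ell).\]
This holds because constraints can only become fixed (never unfixed) as the assignment grows, so $\Act(\rho') \subseteq \Act(\rho)$ and $\activedeg_{\rho'}(v', \calW_\ell) \leq \activedeg_\rho(v', \calW_\ell)$ for every surviving variable $v' \neq v$, while the term for $v$ itself is dropped from the sum defining $\activedeg_\rho(\calW_\ell)$. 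Applying this at the transition from $\nu^\ell_j$ to its child on the path to $\nu^\star$, when $\calT_{\nu^\ell_j}^1$ is pushed into $S$, and using the algorithm's guarantee (Step~4.(d).i of \algowand) that $\activedeg_{\nu^\ell_j} = \activedeg_{\rho_{\nu^\ell_j}}(\calT_{\nu^\ell_j}^1, \calW_\ell) \geq \gamma A_j$, yields a drop of at least $\gamma A_j$ at this one step. Since every subsequent variable added along the path to $\nu^\ell_{j+1}$ can only further decrease the active degree, we get
\[A_{j+1} \;\leq\; A_j - \activedeg_{\nu^\ell_j} \;\leq\; (1-\gamma)A_j.\]

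Iterating this geometric decrease from index $i$ up to $t$ gives $A_t \leq (1-\gamma)^{t-i} A_i$. Because $\nu^\star$ is a descendant of $\nu^\ell_t$, repeatedly applying the monotonicity observation along the path from $\nu^\ell_t$ to $\nu^\star$ also yields $\finalW_\ell = \activedeg_{\rho^\star}(\calW_\ell) \leq A_t$. Combining, for $i \leq t/2$,
\[A_i \;\geq\; (1-\gamma)^{-(t-i)} \finalW_\ell \;\geq\; (1-\gamma)^{-t/2} \finalW_\ell,\]
and multiplying through by $\gamma$ together with the inequality $\activedeg_{\nu^\ell_i} \geq \gamma A_i$ proves the first claimed bound.

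The second inequality is a parameter check. Using $(1-\gamma)^{-1} \geq e^\gamma$ for $\gamma \in [0,1)$, it suffices to verify $\gamma\, e^{\gamma t/2} \geq 1600\,tk$. The choice $t = \lceil 20 k^2 \log(k/\gamma)/\gamma \rceil$ forces $\gamma t/2 \geq 10 k^2 \log(k/\gamma)$, so $\gamma\, e^{\gamma t/2} \geq \gamma \cdot (k/\gamma)^{10 k^2}$, which easily dominates $1600\,tk = O(k^3 \log(k/\gamma)/\gamma)$ for all $k \geq 1$. The only non-mechanical ingredient in the whole argument is the monotonicity observation; the rest is iteration and routine arithmetic, so the only thing that requires care is ensuring that the variable added is genuinely $\calT_{\nu^\ell_j}^1$ and that no intermediate steps secretly increase $A_j$ (they cannot, by the same monotonicity).
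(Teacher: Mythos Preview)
Your proof is correct and follows essentially the same approach as the paper's: both track $b_i = \activedeg_{\rho_{\nu_i^\ell}}(\calW_\ell)$, observe that $b_{i+1} \leq (1-\gamma)b_i$ and $\activedeg_{\nu_i^\ell} \geq \gamma b_i$, and iterate. The paper's proof is terse and simply states these two observations; you supply the monotonicity justification and the parameter check for the second inequality that the paper leaves implicit.
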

\begin{proof}
Fix a high variance instance $\ell\in [k]$. Note that $b_i = \activedeg_{\rho_{\nu_i^\ell}}(\calW_\ell)$ decreases as
$i$ increases. The main observation is that
\begin{enumerate}
\item $b_{i+1} \leq (1-\gamma) \cdot b_i$.
\item $\activedeg_{\nu_i^\ell}  \geq \gamma b_i.$
\end{enumerate}
Thus for all $\nu_i^\ell$ with $i\in \{1, \ldots, \nfrac{t}{2}\}$, we
have $\activedeg_{\nu^\ell_i} \geq \gamma\cdot(1-\gamma)^{-t/2} \cdot
\finalW_\ell$ and also the choice of parameters implies for those
$\nu_i^\ell$ $\activedeg_{\nu_i^\ell}$ is at least
$1600tk\cdot\finalW_\ell$.
\end{proof}

\subsubsection{Putting everything together}
We now show that when $\nu$ is taken to equal $\nu^\star$ in
Step $5$, then with high probability over the choice of $g$ in 
Step $5(a)$ there is a setting of $h$ in Step $5(b)$
such that $\forall \ell \in [k], \val(h \cup g_{\nu^\star},
\calW_\ell) \geq (\frac{1}{2} - \epsilon) \cdot c_\ell.$ 
\begin{theorem}
\label{thm:maxandapprox}
Suppose the algorithm {\algowand} is given as inputs $\eps >
0,$ $k$ simultaneous weighted \maxand instances $\calW_1,\ldots,
\calW_k$ on $n$ variables, and target objective value $c_1,
\ldots,c_k$ with the guarantee that there exists an assignment
$f^\star$ such that for each $\ell \in [k],$ we have $\val(f^\star,
\calW_\ell) \ge c_\ell.$ Then, the algorithm runs in
 $2^{O(\nfrac{k^4}{\eps^2}\log(\nfrac{k}{\eps}))} \cdot \poly(n)$ time, and with
probability at least 0.9, outputs an assignment $f$ such that for each
$\ell \in [k],$ we have, $\val( f,\calW_\ell) \ge \left( \frac{1}{2}
  -\epsilon\right) \cdot c_\ell.$
\end{theorem}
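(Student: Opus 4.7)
The plan is to prove Theorem~\ref{thm:maxandapprox} by first bounding the running time and then establishing correctness via a perturbation of $f^\star|_{S^\star}$. For the running time, I observe that every internal node of $T$ strictly increments some $\cnt_\ell$ toward its cap of $t$, so the depth of $T$ is at most $kt$. With branching factor $4$ and $2^{|S^\star|}\le 2^{2kt}$ assignments tried per exhausted leaf, the total time is $2^{O(kt)}\cdot \poly(n)$; substituting $t = O(k^3/\eps^2 \log(k/\eps))$ (which follows from the choices $\delta_0 = 1/(10(k+1))$, $\gamma = \Theta(\eps^2/k)$) yields the claimed $2^{O(k^4/\eps^2 \log(k/\eps))}\cdot \poly(n)$. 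I next argue that the branch of $T$ induced by $f^\star$ must terminate at an exhausted leaf $\nu^\star$: for every $\rho$ consistent with $f^\star$, part 1 of Lemma~\ref{lemma:max2and:lp} applied to $g_0 = f^\star|_{V\setminus S_\rho}$ exhibits a feasible solution to $\LLand_2(\rho)$, so this branch is never declared dead.

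Let $\rho^\star,h^\star,S^\star$ be associated to $\nu^\star$, and partition $[k]$ into low-variance ($\cnt_\ell<t$) and high-variance ($\cnt_\ell=t$) instances. For each low-variance $\ell$, Lemma~\ref{lemma:low_variance:maxand} already gives $\val(h^\star\cup g_{\nu^\star},\calW_\ell)\ge (1/2-\eps/2)\cdot c_\ell$ with probability $\ge 1-\delta_0$, so by a union bound all low-variance instances satisfy this simultaneously with probability $\ge 1-k\delta_0 \ge 9/10$. The crux of the proof is to show that, with high probability over $g_{\nu^\star}$, some $h'\in \{0,1\}^{S^\star}$ tried in Step~5(b) meets the $(1/2-\eps)c_\ell$ target for every $\ell$. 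Since the algorithm enumerates all $h$, it suffices to exhibit one good $h'$ as a specific perturbation of $h^\star = f^\star|_{S^\star}$.

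For each high-variance $\ell$, Lemma~\ref{lem:potential} provides $t/2$ candidate ancestors $\nu_1^\ell,\ldots,\nu_{t/2}^\ell$ of $\nu^\star$, each satisfying $\activedeg \ge 1600\,tk\cdot \finalW_\ell$. I will select a single ancestor $\nu_{i^\ell}^\ell$ per high-variance $\ell$ and modify $h^\star$ only at $\calT_{\nu_{i^\ell}^\ell}^1$ (and at $\calT_{\nu_{i^\ell}^\ell}^2$ when the node is $\typeB$), choosing the new values by casework on the node type: in $\typeA$, set $\calT_\nu^1$ to maximize the weight of satisfied backward $\calW_\ell$-constraints, which (since each backward constraint depends only on $\calT_\nu^1$ once the other variables in $S_{\rho_\nu}$ are fixed) gains at least $\backward_\nu/2\ge \activedeg_\nu/4$; in $\typeB$, set the pair to satisfy the dominant constraint, gaining at least $\activedeg_\nu/(1600tk)$; in $\typeC$, set $\calT_\nu^1$ to the direction witnessed by Lemma~\ref{lemma:maxand:hoeffding}, gaining at least $\activedeg_\nu/64$. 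In every case the $\calW_\ell$-gain at $\nu_{i^\ell}^\ell$ is $\Omega(\activedeg_{\nu_{i^\ell}^\ell}/(tk))\ge \finalW_\ell$, which combined with $\val(\rho^\star,\calW_\ell)\ge c_\ell-\finalW_\ell$ (following from $\val(f^\star,\calW_\ell)\ge c_\ell$ and $\sum_{C\in \Act(\rho^\star)} \calW_\ell(C)\le \finalW_\ell$) will deliver the desired $(1/2-\eps)c_\ell$ bound for high-variance instances.

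The hard part will be to ensure that this perturbation does not wreck the low-variance guarantees and that the bookkeeping on high-variance $\ell$ survives the perturbations chosen for other instances. Because different instances label disjoint ancestors (each variable is added to $S$ at a unique node), the perturbed variables for different $\ell$ are disjoint, so the loss $\val(h^\star\cup g_{\nu^\star},\calW_{\ell'}) - \val(h'\cup g_{\nu^\star},\calW_{\ell'})$ is bounded by the total $\calW_{\ell'}$-weight of constraints incident to the $\le 2k$ perturbed variables. Since for each high-variance $\ell$ there are $t/2$ candidate ancestors and the sum of $\calW_{\ell'}$-active degrees over all variables is at most $2$, a uniformly random choice of $i^\ell$ incurs expected $\calW_{\ell'}$-interference $O(1/t)$; summing over the at-most-$k$ high-variance instances and applying Markov followed by a union bound over $\ell'$ produces a joint choice $\{i^\ell\}$ whose interference in every low-variance $\ell'$ is $O(k^2/t)\le \eps/2$, preserving the $(1/2-\eps)c_{\ell'}$ target. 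A final union bound over the $\le k$ concentration events of Lemma~\ref{lemma:low_variance:maxand} and the $\typeC$ events of Lemma~\ref{lemma:maxand:hoeffding} (each failing with probability $\le 2e^{-tk/100}$, which is negligible by our choice of $t$) yields total success probability at least $0.9$, completing the proof.
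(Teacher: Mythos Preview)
Your high-level plan is right, and the running-time and ``$\nu^\star$ is exhausted'' arguments are fine. The real gap is in how you control interference from the perturbation.

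You bound the damage to a low-variance instance $\ell'$ by the \emph{absolute} $\calW_{\ell'}$-weight of constraints touching the perturbed variables, and then argue via a random choice plus Markov that this weight is $O(k^2/t)\le \eps/2$. But the theorem demands a \emph{multiplicative} guarantee: you must keep $\val(h'\cup g_{\nu^\star},\calW_{\ell'})\ge (1/2-\eps)c_{\ell'}$. If $c_{\ell'}$ is tiny (say $10^{-6}$), an additive loss of $\eps/2$ in absolute weight wipes out the entire value. No fixed choice of $t$ can make an $O(k^2/t)$ absolute bound smaller than $(\eps/2)c_{\ell'}$ for all $\ell'$. This is precisely the distinction the paper flags between the ``additive-multiplicative'' and the pure multiplicative Pareto problems; your averaging argument solves the former but not the latter.

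The paper avoids this by defining, at each perturbation step, the set
\[
B=\Big\{v:\ \exists\,\ell\ \text{with}\ \sum_{C\ni v}\calW_\ell(C)\,C(h\cup g_{\nu^\star})\ge \tfrac{\eps}{4k}\,\val(h\cup g_{\nu^\star},\calW_\ell)\Big\},
\]
which has size $\le 8k^2/\eps < t/8$. Since there are $>t/4$ candidate ancestors available, one can always pick a node $\nu$ whose special pair avoids $B$; flipping at most two such variables can then decrease \emph{any} instance's value by at most an $\eps/(2k)$ \emph{fraction} of its current value. Iterating over the at most $k$ high-variance instances yields a total multiplicative loss of at most $(1-\eps/(2k))^k\ge 1-\eps/2$, which is exactly what is needed for both low-variance instances and for previously-handled high-variance instances.

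There is a second, related issue you gloss over. Your $\typeA$ casework (``set $\calT_\nu^1$ to satisfy half the backward weight'') silently uses that the other endpoint of each backward constraint still carries its $h^\star$-value, i.e., that $h|_{S_{\rho_\nu}}=h^\star|_{S_{\rho_\nu}}$. If you first perturb at some ancestor $\nu''\prec \nu$ (which your independent random selection does not preclude), then $\calT_{\nu''}^1\in S_{\rho_\nu}$ may have been flipped and the backward constraints at $\nu$ need no longer be satisfiable. The paper maintains this invariant explicitly: after handling a node $\nu$, it restricts the remaining candidate set $N$ to \emph{ancestors} of $\nu$, so each subsequent node $\nu'$ has $S_{\rho_{\nu'}}\subseteq S_{\rho_\nu}$ untouched. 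You would at minimum need to process the chosen nodes from deepest to shallowest and argue you can still find an unblocked deepest node at each step --- but once you do that and replace the random/Markov argument by the $B$-avoidance argument, you have essentially reproduced the paper's construction.
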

\begin{proof}
  Consider the case when $\nu$ is taken to equal $\nu^\star$ in Step
  $5.$ By Lemma~\ref{lemma:low_variance:maxand}, with probability at
  least $1 -k\delta_0$ over the choice random choices of
  $g_{\nu^\star}$, we have that for {\em every} low variance instance
  $\ell \in [k]$, $\val(h^\star \cup g_{\nu^\star}, \calW_\ell) \geq
  (\frac{1}{2} - \frac{\epsilon}{2}) \cdot c_{\ell}$.  By
  Lemma~\ref{lemma:maxand:hoeffding} and a union bound, with
  probability at least $1 - \frac{t}{2}\cdot k\cdot 2e^{-tk/100} \geq
  1 - \delta_0$ over the choice of $g_{\nu^\star}$, for every high
  variance instance $\ell$ and for every $\typeC$ node $\nu_i^\ell$,
  $i\in [t/2]$, we have that $g_{\nu^\star}$ is $\Cgood$ for
  $\nu_i^\ell$. Thus with probability at least $1- (k+1)\delta_0$,
  both these events occur.  Henceforth we assume that both these
  events occur in Step $5(a)$ of the algorithm.

%

Our next goal is to show that there exists a partial assignment $h:S^\star
\to \{0,1\}$ such that:
\begin{enumerate}
\item \label{item:max-and:proof:guarantees:1} For every instance $\ell \in [k]$, $\val(h \cup g_{\nu^\star} ,\calW_\ell) \geq \left( 1 - \frac{\epsilon}{2} \right) \cdot \val(h^\star \cup g_{\nu^\star},\calW_\ell)$.
\item \label{item:max-and:proof:guarantees:2} Moreover, for every high variance instance $\ell\in [k]$, $\val(h \cup
  g_{\nu^\star},\calW_\ell) \geq \left( 1 - \frac{\epsilon}{2} \right)  \cdot \finalW_\ell$.
\end{enumerate}
Before giving a proof of the existence of such an $h$, we show that
this completes the proof of the theorem. We claim that when
the partial assignment $h$ guaranteed above is considered in the
Step~$5(b)$ in the algorithm, we obtain an assignment with the
required approximation guarantees. 

 For every low variance
instance $\ell \in [k],$ since we started with $\val(h^\star \cup g_{\nu^\star},
\calW_\ell) \geq (\frac{1}{2} - \frac{\epsilon}{2}) \cdot c_\ell,$
property~\ref{item:max-and:proof:guarantees:1} above implies that
every low variance instance $\val(h \cup g_{\nu^\star}) \ge (\frac{1}{2} -
\epsilon) \cdot c_\ell.$ For every high variance instance $\ell \in
[k],$ since $h^\star = f^\star|_S,$
\[\val(h^\star \cup g_{\nu^\star}, \calW_\ell) \ge \val(f^\star, \calW_\ell) -
\activedeg_{\rho^\star}(\calW_\ell) \ge c_\ell - \finalW_\ell.\]
Combining this with properties~\ref{item:max-and:proof:guarantees:1}
and \ref{item:max-and:proof:guarantees:2} above, we get,
\[ \val(h \cup g_{\nu^\star}, \calW_\ell) \ge \left( 1 - \nfrac{\epsilon}{2}
\right) \cdot \max\{c_\ell - \finalW_\ell, \finalW_\ell\} \ge
\nfrac{1}{2} \cdot \left( 1 - \nfrac{\epsilon}{2} \right) \cdot c_\ell.\]
Thus, for all instances $\ell \in [k]$, we get $\val(h \cup g_{\nu^\star}) \ge
\left(\nfrac{1}{2}-\eps\right) \cdot c_{\ell}.$

Now, it remains to show the existence of such an $h$ by giving a
procedure for constructing $h$ by perturbing $h^\star$ (Note that this
procedure is only part of the analysis). For nodes $\nu, \nu'$ in the
tree, let us write $\nu \prec \nu'$ if $\nu$ is an ancestor of $\nu'$,
and we also say that $\nu'$ is ``deeper" than $\nu$.

\paragraph{Constructing $h$:}
\begin{enumerate}
\item Initialize $H \subseteq [k]$ to be the set of high variance instances.
\item Let $N_0 = \{ \nu^{\ell}_i \mid \ell \in H, i \in [t/2]\}$.
  Note that $N$ is a chain in the tree (since all the elements of $N$
  are ancestors of $\nu^\star$). Since every $\nu \in N$ is an
  ancestor of $\nu^\star$, we have $h_{\rho_\nu} = h^\star|_{S_{\rho_\nu}}.$
\item Initialize $D = \emptyset$, $N= N_0$, $h = h^\star$.
\item During the procedure, we will be changing the assignment $h$,
and removing elements from $N$. We will always maintain the
following two invariants:
\begin{itemize}
\item $|N| > \frac{t}{4}$.
\item For every $\nu \in N$, $h|_{S_{\rho_\nu}} =
  h^\star|_{S_{\rho_{\nu}}}$.
\end{itemize}
\item While $|D| \neq |H|$ do:
\begin{enumerate}
\item Let $$B = \left\{ v \in V \mid \exists \ell \in [k] \textrm{ with }
  \sum_{C \in \calC, C \owns v} \calW_{\ell}(C)\cdot C(h\cup g_{\nu^\star}) \geq
  \frac{\epsilon}{4k} \val(h\cup g_{\nu^\star}, \calW_\ell) \right\}.$$

Note that $|B| \leq \frac{8k^2}{\epsilon} < \frac{t}{8}$. 
\item Let $\nu \in N$ be the deepest element of $N$ for which:
$\{\calT_\nu^1, \calT_\nu^2 \} \cap B = \emptyset.$

Such a $\nu$ exists because:
\begin{itemize}
\item $|N| > \frac{t}{4} >  |B|$, and
\item there are at most $|B|$ nodes $\nu$ for which 
$\{\calT_\nu^1, \calT_\nu^2 \} \cap B \neq \emptyset$ (since
$\{\calT_\nu^1, \calT_\nu^2\}$ are all disjoint for distinct $\nu$).
\end{itemize}

\item Let $\ell \in H, i \in [t/2]$ be such that $\nu = \nu_i^\ell$. Let $x = \calT_\nu^1$ and $y =
  \calT_\nu^2$.  Let $\rho = \rho_\nu$. We will now see a way of
  modifying the values of $h(x)$ and $h(y)$ to guarantee that $\val(h
  \cup g_{\nu^\star}, \calW_\ell) \geq \finalW_\ell$. The procedure
  depends on whether $\nu$ is $\typeA$, $\typeB,$ or $\typeC$.
\begin{enumerate}
\item If $\nu$ is $\typeA$, then we know that $\backward_{\nu} \geq
  \frac{1}{2}\cdot\activedeg_\nu \geq 2 \cdot \finalW_\ell$.

The second invariant tells us that $\rho = h^\star|_{S_{\rho}} = h|_{S_{\rho}}$.
Thus we have:
\begin{align*}
\backward_{\nu} &= \sum_{C \in \calC^\backward_{\nu}} \calW_\ell(C)\\
&= \sum_{C \subseteq S_{\rho} \cup \{x\}, C \ni x, C \in \Act(\rho)} \calW_\ell(C)\\
&= \sum_{C \subseteq S_{\rho} \cup \{x\}, C \ni x, C \in \Act(h|_{S_{\rho}})} \calW_\ell(C).
\end{align*}
This implies that we can choose a setting of $h(x) \in \{0,1\}$ such that 
the total sum of weights of those constraints containing $x$ which
are satisfied by $h$ is:
\begin{align*}
 \sum_{C \subseteq S_{\rho} \cup \{x\}, C \ni x, C \in \Act(h|_{S_\rho})} \calW_\ell(C) C(h) &\geq \frac{1}{2} \sum_{C \subseteq S_{\rho} \cup \{x\}, C \ni x, C \in \Act(h|_{S_\rho})} \calW_\ell(C)\\
&= \frac{1}{2} \cdot \backward_{\nu}\\
&\geq \frac{1}{4}\cdot \activedeg_\nu\\
&\geq \finalW_\ell,   \mbox{\quad(by Lemma~\ref{lem:potential})}
\end{align*}
where the $\frac{1}{2}$ in the first inequality is because the
variable can appear as a {\em positive} literal or a {\em negative} literal in
those $\backward$ constraints. In particular, after making this
change, we have $\val(h \cup g_{\nu^\star}, \calW_\ell) \geq \finalW_\ell$.

\item If $\nu$ is $\typeB$, then we know that some constraint $C$
  containing $x$ and $y$ has $\calW_\ell(C) \geq \frac{1}{1600 tk}
  \cdot \activedeg_\nu \geq \finalW_\ell $. Thus we may choose
  settings for $h(x), h(y) \in \{0,1\}$ such that $C(h) = 1$.  Thus,
  after making this assignment to $h(x)$ and $h(y)$, we have $\val(h
  \cup g, \calW_\ell) \geq \finalW_\ell$.

\item If $\nu$ is $\typeC$, since $g_{\nu^\star}$ is $\Cgood$ for $\nu$, we can
  choose a setting of $h(x)$ so that the total weight of satisfied
  constraints in $\calW_\ell$ between $x$ and $V \setminus S^\star$ is
  at least $\frac{1}{64}\cdot \activedeg_\nu \geq \finalW_\ell.$ After
  this change, we have $\val(h \cup g_{\nu^\star}, \calW_\ell) \geq \finalW_\ell$.
\end{enumerate}

In all the above 3 cases, we only changed the value of $h$ at the variables
$x, y$. Since $\{x, y\} \cap B = \emptyset$,
we have that for every $j \in [k]$,
the new value $\val( h \cup g_{\nu^\star}, \calW_j)$ is at least $\left(1 - \frac{\epsilon}{2k} \right)$
times the old value $\val( h \cup g_{\nu^\star}, \calW_j)$.

\item Set $D = D \cup \{\ell\}$.
\item Set $N = \{ \nu^{\ell}_i \mid \ell \in H\setminus D, i \leq [t/2], \nu^\ell_i \prec \nu\}$.

  Observe that $|N|$ decreases in size by at most $ \frac{t}{2} +
  |B|$.  Thus, if $D \neq H$, we have
\begin{align*}
|N| &\geq |N_0| - |D| \cdot \frac{t}{2} - |D||B| \\
& = |H|\cdot\frac{t}{2} - |D| \cdot \frac{t}{2} - |D||B|\\
& \geq \frac{t}{2} - k|B| > \frac{t}{4}
\end{align*}

Also observe that we only changed the values of $h$ at
the variables ${\calT_\nu^1}$ and ${\calT_\nu^2}$. 
Thus for all $\nu' \preceq \nu$, we still have the property that 
$h|_{S_{\rho_{\nu'}}} = h^\star|_{S_{\rho_{\nu'}}}$.
\end{enumerate}
\end{enumerate}

For each high variance instance $\ell \in [k],$ in the iteration where
$\ell$ gets added to the set $D,$ the procedure ensures that at the
end of the iteration $\val(h \cup g_{\nu^\star}, \calW_\ell) \ge \finalW_\ell.$

Moreover, at each step we reduced the value of each $\val(h\cup g_{\nu^\star},
\calW_\ell)$ by at most $\frac{\epsilon}{2k}$ fraction of its previous
value. Thus, at the end of the procedure, for every $\ell \in [k],$
the value has decreased at most by a multiplicative factor of
$\left(1-\frac{\eps}{2k}\right)^k \ge \left(1-\frac{\eps}{2}\right).$
Thus, for every $\ell \in [k],$ we get $\val(h \cup g_{\nu^\star}, \calW_\ell) \ge
\left(1-\frac{\eps}{2}\right) \cdot \val(h^\star \cup g_{\nu^\star}, \calW_\ell),$
and for every high variance instance $\ell \in [k]$, we have
$\val(h\cup g_{\nu^\star}, \calW_\ell) \geq \left(1-\frac{\eps}{2}\right) \cdot
\finalW_\ell$. This proves the two properties of $h$ that we set out
to prove.

{\bf Running time : } Running time of the algorithm is
$2^{O(kt)}\cdot\poly(n)$ which is $2^{O(\nfrac{k^4}{\eps^2}\log
(\nfrac{k}{\eps^2}))}\cdot\poly(n).$
\end{proof}



\section{Simultaneous \maxcspwq}
\label{section:maxcsp}

In this section, we give our simultaneous approximation algorithm for $\maxcspwq$,
 and thus prove Theorem~\ref{theorem:results:maxcsp}.

\subsection{Reduction to simple constraints}
\label{sec:reducsimple}
For the problem \maxcspwq, $\calC$ is the set of all possible $q$-ary
constraints on $V$ with arity at most $w,$ \emph{i.e.}, each
constraint is of the form $C_f : [q]^{T} \rightarrow \{0,1\}$
depending only on the values of variables in an ordered tuple $T
\subseteq V$ with $|T| \leq w$. As a first step (mainly to simplify
notation), we give a simple approximation preserving reduction which
replaces $\calC$ with a smaller set of constraints. We will then
present our main algorithm

Define a {\em $w$-term} to be a contraint $C$ on exactly $w$ variables
which has exactly $1$ satisfying assignment in $[q]^w$, \emph{e.g.}
$(x_1 = 1) \wedge (x_2 = 7) \wedge \ldots \wedge (x_w = q-3).$ An
instance of the \maxconj problem is one where the
set of constraints $\calC$ is the set of all $w$-terms. We now use the
following lemma from~\cite{Trevisan98} that allows us to reduce a
\maxcspwq instance to a \maxconj instance.
\begin{lemma}[\cite{Trevisan98}]
  Given an instance $\calW_1$ of \maxcspwq, we can find a instance
  $\calW_2$ of {\sc Max-$w$-Conj-Sat$_q$} on the same set of
  variables, and a constant $\beta > 0$ such that for every
  assignment $f$, $\val(f, \calW_2) = \beta \cdot \val(f, \calW_1).$
\end{lemma}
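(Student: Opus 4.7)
The plan is to replace each predicate in $\calW_1$ by an equivalent weighted collection of $w$-terms, exploiting the fact that any predicate on at most $w$ variables is a finite disjunction of conjunctions that fix the values of some subset of the variables. All the action happens constraint-by-constraint, so $\val(f, \cdot)$ is preserved exactly before the final normalization step.

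Fix a constraint $C \in \calC_1$ with weight $\calW_1(C)$, acting on an ordered tuple $T_C \subseteq V$ with $|T_C| = j \leq w$, and let $S_C \subseteq [q]^{T_C}$ be its set of satisfying assignments. Assuming $|V| \geq w$ (which we may ensure by adding at most $w$ dummy variables to $V$; this changes nothing about the approximation problem), fix any $w - j$ ``padding'' variables $T'_C \subseteq V \setminus T_C$. For every $s \in S_C$ and every $s' \in [q]^{T'_C}$, introduce a new $w$-term constraint $C_{s,s'}$ on $T_C \cup T'_C$ whose unique satisfying assignment fixes the $T_C$-coordinates to $s$ and the $T'_C$-coordinates to $s'$, and assign it weight $\calW_1(C)$ in the preliminary instance $\widetilde{\calW}_2$. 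Doing this for every $C$ produces a \maxconj instance $\widetilde{\calW}_2$ on the same variables.

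The key identity, for any $f : V \to [q]$, is
\[
    C(f) \;=\; \sum_{s \in S_C}[f|_{T_C} = s] \;=\; \sum_{s \in S_C}\sum_{s' \in [q]^{T'_C}} [f|_{T_C}=s]\,[f|_{T'_C}=s'] \;=\; \sum_{s \in S_C}\sum_{s' \in [q]^{T'_C}} C_{s,s'}(f),
\]
because for any $f$, exactly one $s' \in [q]^{T'_C}$ agrees with $f$ on $T'_C$. Multiplying by $\calW_1(C)$ and summing over $C$ gives $\val(f, \widetilde{\calW}_2) = \val(f, \calW_1)$ for every $f$.

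To finish, recall that in our convention instances are normalized to have total weight $1$. The total weight of $\widetilde{\calW}_2$ equals
\[
    Z \;:=\; \sum_{C \in \calC_1} \calW_1(C)\,|S_C|\,q^{\,w - |T_C|},
\]
which depends only on $\calW_1$ and is a strictly positive finite constant (note $1 \leq |S_C|\,q^{w-|T_C|} \leq q^w$, so $Z \in [1, q^w]$ since $\calW_1$ has total weight $1$). Define $\calW_2 := \widetilde{\calW}_2/Z$; then $\val(f,\calW_2) = \beta\cdot \val(f,\calW_1)$ with $\beta := 1/Z > 0$, as required. There is no real technical obstacle; the only thing to be careful about is the existence of the padding set $T'_C$ when the original arity is below $w$, which is handled by the WLOG assumption $|V| \geq w$ above.
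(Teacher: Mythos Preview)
Your proof is correct and follows essentially the same approach as the paper. The paper simply says ``we can assume without loss of generality that the arity of $C$ is exactly $w$'' and then creates one $w$-term per satisfying assignment; your explicit padding construction (summing over all $s' \in [q]^{T'_C}$) is precisely what that WLOG step amounts to when unpacked, and your identity $C(f) = \sum_{s,s'} C_{s,s'}(f)$ is the same bookkeeping the paper leaves implicit. The normalization step and definition of $\beta$ are identical.
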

\begin{proof}
  Given an instance $\calW_1$ of \maxcspwq, consider a constraint $C \in
  \calC$ with weight $\calW_1(C).$ We can assume without loss of
  generality that the arity of $C$ is exactly $w,$ and it depends on
  variables $x_1,\ldots ,x_w.$ For each assignment in $[q]^k$ that
  satisfies $C,$ we create a {\sc $w$-Conj-Sat$_q$} clause that is
  satisfied only for that assignment, and give it weight $\calW_1(C).$
  \emph{e.g.} If $C$ was satisfied by $x_1 = \ldots = x_w = 2,$ we
  create the clause $ (x_1=2) \wedge \ldots \wedge (x_w=2)$ with
  weight $\calW_1(C).$ It is easy to see that for every assignment to
  $x_1, \ldots, x_n,$ the weight of constraints satisfied in the new
  instance is the same as the weight of the constraints satisfied in
  the {\sc Max-$w$-Conj-Sat$_q$} instance created. Define $\beta$ to
  be the sum of weights of all the constraints in the new instance,
  then $\calW_2$ is obtained by multiplying the weight of all the
  constraints in the new instance by $\nfrac{1}{\beta}$ (to make sure
  they sum up to 1).
\end{proof}
Note that the scaling factor $\beta$ in the lemma above is immaterial
since we will give an algorithm with Pareto approximation guarantee.

We say $(v,i)\in C$ if $v\in C$ and $v=i$ is in the satisfying
assignment of $C_f$. By abuse of notation, we say for a set of
variables $T,$ $T\subseteq C$ if for all $v\in T,$ there exists $i\in
[q],$ such that $(v,i)\in C.$

\subsection{Random Assignments}
\label{sec:maxcspwq:random}
In this section, we state and prove a lemma that gives a sufficient
condition for the value of a $\maxconj$ to be highly concentrated
under independent random assignments to the variables. Let $\Dist(q)$
denote the set of all probability distributions on the set $[q].$ For
a distribution $p \in \Dist(q)$ and $i \in q,$ we use $p_i$ to denote
the probability $i$ in the distribution $p.$
Let $\rho$ be a partial assignment.  Let $p : V \setminus S_\rho \to
\Dist(q)$ be such that $p(v)_i \geq \frac{1}{qw}$ for all $v \in V
\setminus S_{\rho}$ and all $i\in [q]$. Let $g: V\setminus S_\rho \to
[q]$ be a random assignment obtained by sampling $g(v)$ for each $v$
independently according to the distribution $p(v)$.

Define the random variable
\[Y \defeq \val(\rho \cup g , \calW) - \val(\rho, \calW) = \sum_{C \in
  \Act(\rho)} \calW(C) C(\rho \cup g).\]
 The random variable $Y$ measures the contribution of active
constraints to $\val(\rho \cup g, \calW).$ Note that the two
quantities $\E[Y]$ and $\Var[Y]$ can be computed efficiently given $p$.  We denote
these by $\meancalc_\rho(p, \calW)$ and $\varcalc_\rho(p, \calW)$. The
following lemma is a generalization of Lemma~\ref{lemma:max2and:truemeanvar}.
\begin{lemma}
\label{lemma:maxcsp:truemeanvar}
Let $p, g, Y$ be as above.
\begin{enumerate}
\item \label{lemma:truemeanvar:lowvar} If $\varcalc_\rho(p, \calW) < \delta_0 \epsilon_0^2 \cdot \meancalc_\rho(p, \calW)^2$ then $\Pr[Y <  (1-\epsilon_0) \E[Y] ] < \delta_0$.
\item \label{lemma:truemeanvar:highvar} If $\varcalc_\rho(p, \calW) \ge \delta_0\epsilon_0^2 \cdot \meancalc_\rho(p, \calW)^2 $, then there exists $v \in V\setminus S_{\rho}$ such that 
$$\activedeg_\rho(v, \calW) \geq \frac{\epsilon_0^2\delta_0}{{w^2(qw)}^{w}}\cdot \activedeg_{\rho}( \calW).$$
\end{enumerate}
\end{lemma}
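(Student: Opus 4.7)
The plan is to generalize the proof of Lemma~\ref{lemma:max2and:truemeanvar} (the $w=q=2$ case), tracking the constants that depend on $w$ and $q$. Part~\ref{lemma:truemeanvar:lowvar} follows by a direct application of Chebyshev's inequality to $Y$: the hypothesis $\Var[Y] < \delta_0 \epsilon_0^2 (\E Y)^2$ yields $\Pr[Y < (1-\epsilon_0)\E Y] \leq \Var[Y]/(\epsilon_0^2 (\E Y)^2) < \delta_0$, which is all that is needed.

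For part~\ref{lemma:truemeanvar:highvar}, the idea is to bound $\Var[Y]$ in terms of the maximum active degree and $\E[Y]$. I would expand
\[\Var[Y] = \sum_{C_1,C_2 \in \Act(\rho)} \calW(C_1)\calW(C_2)\,\Cov[\,C_1(\rho \cup g),\, C_2(\rho\cup g)\,],\]
and observe that since $g$ assigns the variables of $V \setminus S_\rho$ independently, the covariance vanishes unless $C_1 \sim_\rho C_2$ or $C_1=C_2$. For the surviving pairs, using that each $C_i(\rho\cup g)$ is $\{0,1\}$-valued gives $|\Cov[C_1,C_2]| \leq \min(\E[C_1],\E[C_2]) \leq \tfrac{1}{2}(\E[C_1]+\E[C_2])$. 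Symmetrizing and grouping by $C_1$, one obtains
\[\Var[Y] \leq \sum_{C_1 \in \Act(\rho)} \calW(C_1)\,\E[C_1] \sum_{\substack{C_2 \in \Act(\rho) \\ C_2 \sim_\rho C_1\text{ or } C_2=C_1}} \calW(C_2).\]
The inner sum is at most $w \cdot \max_v \activedeg_\rho(v,\calW)$, since any such $C_2$ must share one of the at most $w$ active variables of $C_1$. Hence $\Var[Y] \leq w \cdot \max_v \activedeg_\rho(v,\calW) \cdot \E[Y]$.

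The last step is to lower bound $\E[Y]$ in terms of $\activedeg_\rho(\calW)$. The smoothness assumption $p(v)_i \geq 1/(qw)$ ensures that each active $C$ has $\E[C(\rho\cup g)] \geq (qw)^{-w}$, since $C$ fixes the values of at most $w$ active variables. Since every active constraint contributes to at most $w$ summands of $\activedeg_\rho(\calW)$, we also have $\sum_{C \in \Act(\rho)} \calW(C) \geq \activedeg_\rho(\calW)/w$, which together yield $\E[Y] \geq \activedeg_\rho(\calW)/(w(qw)^w)$. Plugging this into $\Var[Y] \geq \delta_0 \epsilon_0^2 (\E Y)^2$ together with the previous variance upper bound gives $\max_v \activedeg_\rho(v,\calW) \geq \delta_0 \epsilon_0^2 \activedeg_\rho(\calW) / (w^2(qw)^w)$, which establishes the claim. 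I expect no real obstacle beyond careful bookkeeping of the $w$ and $q$ factors; no idea beyond the $q=w=2$ argument is required.
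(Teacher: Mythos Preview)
Your proposal is correct and follows essentially the same argument as the paper. The only cosmetic difference is that the paper first bounds $\Var[Y]\le \meancalc_\rho(p,\calW)\cdot\max_{C\in\Act(\rho)}\sum_{C_2\sim_\rho C}\calW(C_2)$ and then passes from the maximizing constraint to a high-degree variable, whereas you collapse these two steps by bounding the inner sum $\sum_{C_2\sim_\rho C_1}\calW(C_2)\le w\cdot\max_v\activedeg_\rho(v,\calW)$ directly; both routes yield the same constant $\epsilon_0^2\delta_0/(w^2(qw)^w)$.
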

\begin{proof}
Item~\ref{lemma:truemeanvar:lowvar} of the lemma follows immediately from Chebyshev's inequality.  
We now prove Item $2$. First note that for every active constraint $C$ given $\rho,$ $\E[C(\rho \cup g)] \geq \frac{1}{(qw)^w}$ (this follows
from our hypothesis that $p(v)_i \geq \frac{1}{qw}$ for each $v \in V \setminus S_\rho$ and each $i \in [q]$).

We first bound $\meancalc_\rho(p, \calW)$ and $\varcalc_\rho(p,
\calW)$ in terms of the weights of active constraints:
\begin{align*}
\meancalc_\rho(p, \calW) & = \E[Y] = \E\left[\sum_{C \in \Act(\rho)} \calW(C) \cdot C(\rho \cup g)\right]\\
&= \sum_{C \in \Act(\rho)} \calW(C) \cdot\E[C(\rho \cup g)] \geq \sum_{C \in \Act(\rho)} \calW(C) \cdot \frac{1}{(qw)^w}\\
& = \frac{1}{(qw)^w} \sum_{C \in \Act(\rho)} \calW(C)
\end{align*}
\begin{align*}
\varcalc_\rho(p, \calW) &= \Var[Y] = \Var \left[ \sum_{C \in \Act(\rho)} \calW(C) \cdot C(\rho \cup g)\right]\\
& = \sum_{C_1,C_2 \in \Act(\rho)} \calW(C_1)\calW(C_2) \cdot(\E[C_1(\rho \cup g)C_2(\rho \cup g)] - \E[C_1(\rho \cup g)]\E[C_2(\rho \cup g)])\\
& \leq  \sum_{C_1\sim_\rho C_2} \calW(C_1)\calW(C_2)\cdot\E[C_1(\rho \cup g)] \\
& =  \sum_{C_1\in \Act(\rho)}\calW(C_1)\E[C_1(\rho \cup g)]\cdot\sum_{C_2\sim_\rho C_1} \calW(C_2) \\
& \leq  \sum_{C_1\in \Act(\rho)}\calW(C_1)\E[C_1(\rho \cup g)]\cdot\max_{C\in \Act(\rho)}\sum_{C_2\sim_\rho C} \calW(C_2)\\
& = \meancalc_\rho(p, \calW) \cdot \max_{C\in \Act(\rho)}\sum_{C_2\sim_\rho C} \calW(C_2).
\end{align*}
Hence, if the condition in case~\ref{lemma:truemeanvar:highvar} is true then it follows that,
\begin{align*}
  \max_{C\in \Act(\rho)}\sum_{C_2\sim_\rho C} \calW(C_2) &\geq
  \frac{\varcalc_\rho(p, \calW)}{\meancalc_\rho(p, \calW)} \geq
  \frac{\delta_0\epsilon_0^2}{(qw)^{w}}\cdot\sum_{C \in \Act(\rho)}
  \calW(C).
\end{align*}
We now relate these quantities to active degrees.
\begin{align*}
  \activedeg_\rho(\calW) &= \sum_{v\in V\setminus S_\rho}
\activedeg_\rho(v,\calW) = \sum_{v\in V\setminus S_\rho} \sum_{C\in
\Act(\rho), C\owns v} \calW_\ell(C)\\
  &= \sum_{C\in \Act(\rho)} \sum_{v\in C, v\in V\setminus S_\rho}
\calW_\ell(C) \leq \sum_{C\in \Act(\rho)} w\cdot \calW_\ell(C)\\
  &= w \sum_{C\in \Act(\rho)} \calW_\ell(C) 
\end{align*}
This means that there is an active constraint $C$, such that
$$ \sum_{C_2\sim_\rho C} \calW(C_2) \geq \frac{\delta_0\epsilon_0^2}{(qw)^{w}}\cdot\frac{1}{w}\activedeg_\rho(\calW)$$
Since $C$ is an active constraint and $|C\cap V\setminus S_\rho| \leq
w$, there is some variable $v \in C\cap V\setminus S_\rho,$ such that
$$\activedeg_\rho(v, \calW) = \sum_{C_2 \in \Act(\rho),\ C_2 \owns v}  \calW(C_2) \geq \frac{1}{w}
\sum_{C_2\sim_\rho C} \calW(C_2) \geq
\frac{\epsilon_0^2\delta_0}{{w^2(qw)}^{w}}\cdot \activedeg_{\rho}(
\calW).$$ as required.
\end{proof}

\subsection{LP relaxations}
Our algorithm will use the Linear Programming relaxation for \maxconj
from the work of Trevisan~\cite{Trevisan98} (actually, a simple 
generalization to $q$-ary alphabets). The first LP, $\LL_1(\rho),$
described in Figure~\ref{fig:maxcsp-lp1}, describes the set of all
feasible solutions for the relaxation, consistent with the partial
assignment $\rho.$ Given a set of target values $(c_\ell)_{\ell \in [k]},$ the second
LP, $\LL_2(\rho)$ describes the set of feasible solutions to
$\LL_1(\rho)$ that achieve the required objective values.


For $\vect$, $\vecz$ satisfying linear constraints $\LL_1(\rho)$, let
$\smooth(\vect)$ denote the map $p: V \setminus S_\rho \to \Dist(q)$
with $p(v)_i = \frac{w-1}{qw} + \frac{t_{v,i}}{w}$. The following
theorem from~\cite{Trevisan98} provides an algorithm to round this
feasible solution to obtain a good integral assignment.
\begin{lemma}
\label{lemma:maxcsp:lp}
Let $\rho$ be a partial assignment.
\begin{enumerate}
\item {\bf Relaxation: }For every $g_0 : V \setminus S_\rho \to [q]$, there exist
$\vect$, $\vecz$ satisfying $\LL_1(\rho)$ such that for every
$\maxconj$ instance $\calW$:
$$ \sum_{C \in \calC} \calW(C) z_C = \val(g_0 \cup \rho, W).$$
\item {\bf Rounding: }Suppose $\vect, \vecz$ satisfy $\LL_1(\rho)$. Then for every
  $\maxconj$ instance $\calW$:
  \[\val(\rho, \calW) + \meancalc_\rho(\smooth(\vect), \calW) \geq
  \frac{1}{q^{w-1}} \cdot \sum_{C \in \calC} z_C \calW(C).\]
\end{enumerate}
\end{lemma}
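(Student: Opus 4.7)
The plan is to mimic the proof of Lemma~\ref{lemma:max2and:lp} (the $q=w=2$ case) and to follow Trevisan's rounding analysis, extended from binary to $q$-ary variables and arity-$w$ conjunctions.

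For Part 1 (Relaxation), the construction is canonical. Given $g_0: V \setminus S_\rho \to [q]$, set $t_{v,i} = \mathbf{1}[g_0(v) = i]$ for $v \in V \setminus S_\rho$ and $t_{v,i} = \mathbf{1}[h_\rho(v) = i]$ for $v \in S_\rho$, and set $z_C = 1$ if $(g_0 \cup h_\rho)(v) = i_v$ for every $(v, i_v) \in C$, else $z_C = 0$. The constraints in $\LL_1(\rho)$ are then immediate to verify (in particular $\sum_i t_{v,i} = 1$, $t_{v,i} \in \{0,1\}$, and $z_C \leq t_{v,i_v}$ for each $(v,i_v) \in C$), and the identity $\sum_C \calW(C) z_C = \val(g_0 \cup \rho, \calW)$ follows because $z_C$ is by construction the indicator that $g_0 \cup \rho$ satisfies $C$.

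For Part 2 (Rounding), I would write $p = \smooth(\vect)$ and split
\begin{align*}
\val(\rho, \calW) + \meancalc_\rho(p, \calW) = \sum_{C \notin \Act(\rho)} \calW(C)\, C(\rho) + \sum_{C \in \Act(\rho)} \calW(C)\, \E[C(\rho \cup g)],
\end{align*}
then prove the pointwise bounds $C(\rho) \geq z_C/q^{w-1}$ for inactive $C$ and $\E[C(\rho \cup g)] \geq z_C/q^{w-1}$ for active $C$. The inactive case is essentially automatic: if $C(\rho) = 0$ then $\rho$ must contradict $C$ on some $v \in S_\rho$, forcing $t_{v,i_v} = 0$ and hence $z_C = 0$ via the LP; if $C(\rho) = 1$ then the bound holds since $z_C \leq 1$.

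The technical heart is the active case. If $C \in \Act(\rho)$ then $\rho$ agrees with $C$'s unique satisfying assignment on $T \cap S_\rho$ (else $C$ would be fixed to false and hence inactive), so $\E[C(\rho \cup g)] = \prod_{v \in T \setminus S_\rho} p(v)_{i_v}$, a product of $h \in \{1, \ldots, w\}$ terms. The key analytic tool is the concavity of $y \mapsto y^{1/w}$: its tangent at $y = 1$ lies above the curve, giving $1 + (y-1)/w \geq y^{1/w}$ for all $y \geq 0$. Applied with $y = q \cdot t_{v, i_v}$ and using $t_{v,i_v} \geq z_C$ from $\LL_1(\rho)$,
\begin{align*}
q \cdot p(v)_{i_v} = 1 + \frac{q\, t_{v, i_v} - 1}{w} \geq (q\, t_{v, i_v})^{1/w} \geq (q z_C)^{1/w}.
\end{align*}
Taking the product over the $h$ active coordinates yields $\E[C(\rho \cup g)] \geq (qz_C)^{h/w}/q^h$, and a short calculation (using $z_C \leq 1$ and $h \leq w$, so the worst case is $h = w$ where the bound becomes an equality) confirms this is at least $z_C/q^{w-1}$. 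Summing the pointwise inequalities weighted by $\calW(C)$ gives the claim.

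The main obstacle is really just selecting the correct smoothing constants so that the tangent-line step delivers exactly the factor $1/q^{w-1}$; the coefficients $(w-1)/(qw)$ and $1/w$ in $\smooth$ are engineered precisely so that $q \cdot p(v)_{i_v}$ equals the tangent line to $y^{1/w}$ at $y=1$ evaluated at $y = q t_{v,i_v}$. Once this is set up correctly, the rest is routine bookkeeping and the same elementary manipulations as in Trevisan's original analysis.
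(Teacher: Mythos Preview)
Your proposal is correct and follows essentially the same route as the paper: Part~1 is identical, and for Part~2 both you and the paper split into inactive and active constraints and reduce to the pointwise bound $\left(\frac{w-1}{qw}+\frac{z_C}{w}\right)^w \ge z_C/q^{w-1}$. The only difference is cosmetic---the paper proves this by noting that the ratio $\big(\frac{w-1}{qw}+\frac{z}{w}\big)^w/z$ is minimized at $z=1/q$, whereas you unpack it as the tangent-line inequality $1+(y-1)/w\ge y^{1/w}$ with $y=qz_C$; these are algebraically the same statement.
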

\begin{proof}
  We begin with the first part.  For $v \in S_\rho$, $i \in [q]$,
  define $t_{v,i} = 1$ if $\rho(v)= i$, and define $t_{v,i} = 0$
  otherwise.  For $v \in V \setminus S_{\rho}, i \in [q]$, define
  $t_{v,i} = 1$ if $g_0(v) = i$, and define $t_{v,i} = 0$ otherwise.
  For $C \in \calC$, define $z_C = 1$ if $C(g_0 \cup \rho) = 1$, and
  define $z_C = 0$ otherwise.  It is easy to see that these $\vect,
  \vecz$ satisfies $\LL_1(\rho)$, and that for every instance $\calW$:
$$ \sum_{C \in \calC} \calW(C) z_C = \val(g_0 \cup \rho, W).$$

Now we consider the second part. Let $\calW$ be any instance of
$\maxconj$.  Let $p = \smooth(t)$. Let $g: V \setminus S_\rho \to
[q]$ be sampled as follows: independently for each $v \in V \setminus
S_\rho$, $g(v)$ is sampled from the distribution $p(v)$.  We need to show that:
$$\sum_{C \not\in \Act(\rho)}  \calW(C) C(\rho) +
\E\left[\sum_{C \in \Act(\rho)} \calW(C)C(\rho \cup g) \right] \geq
\frac{1}{q^{w-1}} \cdot \sum_{C \in \calC} z_C \calW(C).$$ It is easy
to check that for $C \not\in \Act(\rho)$, $z_C > 0$ only if $C(\rho) =
1$, and thus $\sum_{C \not\in \Act(\rho)} C(\rho) \calW(C) \ge \sum_{C
  \not\in \Act(\rho)} z_C \calW(C)$.  For $C \in \Act(\rho)$, we have
the following claim:
\begin{claim}
\label{claim:trev_rounding}
For $C \in \Act(\rho)$, 
$\E[C(\rho \cup g)]  \ge \frac{z_C}{q^{w-1}}$.
\end{claim}
\begin{proof}
Suppose there are exactly $h$ variables in $C$ which are not in $S_\rho$. Let these variables
be $(v_i)_{i=1}^{h}$. Let $(v_i, a_i)_{i=1}^{h}$ be the assignment to these variables that makes $C$ satisfied.
\begin{align*}
\E[C(\rho \cup g)] = \Pr[C \text{ is satisfied by }\rho \cup g] & \geq \prod_{i=1}^{h}\left(\frac{w-1}{qw} + \frac{t_{v_i,a_i}}{w} \right)\\
&\geq \prod_{i=1}^{h} \left(\frac{w-1}{qw} + \frac{z_C}{w}\right) = \left( \frac{w-1}{qw} + \frac{z_C}{w}\right)^h\\
&= \left( \frac{w-1}{qw} + \frac{z_C}{w}\right)^w \geq \frac{z_C}{q^{w-1}}
\end{align*}
Here the last inequality follows form the observation that the minimum
of the function $\frac{\left( \frac{w-1}{qw} +
    \frac{z}{w}\right)^w}{z}$ as $z$ varies in $[0,1],$ is attained at
$z = \nfrac{1}{q}.$
\end{proof}

\begin{figure*}[h!]
\begin{tabularx}{\textwidth}{|X|}
\hline
\vspace{-26pt}
\begin{center}
\[\begin{array}{rrllr}
& z_C &\leq & t_{v,i} & \forall C\in \calC, \forall (v,i) \in C \\
& 1\geq t_{v,i} &\geq & 0 & \forall v \in V\setminus S_\rho, i\in [q]\\
& \sum_{i=1}^q t_{v,i} & =& 1  & \forall v \in V\\
& t_{v,i} & = & 1  & \forall v \in S_\rho \text{ and } i\in [q], \\
& & & & \text{ such that }  h_\rho(v) = i\\
\end{array}\]
\vspace{-24pt}
\end{center}
\\
\hline
\end{tabularx}
\caption{Linear inequalities $\LL_1(\rho)$}
  \label{fig:maxcsp-lp1}
\end{figure*}

\begin{figure*}[h!]
\begin{tabularx}{\textwidth}{|X|}
\hline
\vspace{-26pt}
\begin{center}
\[\begin{array}{rrllr}
\sum_{C\in \calC} \calW_\ell(C)\cdot z_C  \geq c_\ell & \forall \ell\in [k]\\
\vect, \vecz  \text{ satisfy }  \LL_1(\rho).&
\end{array}\]
\vspace{-24pt}
\end{center}
\\
\hline
\end{tabularx}
\caption{Linear inequalities $\LL_2(\rho)$}
  \label{fig:maxcsp-lp2}
\end{figure*}

This completes the proof of the Lemma.

\end{proof}

\pagebreak

\subsection{The Algorithm}

We now give our Pareto approximation algorithm for \maxcspwq in Figure~\ref{fig:weighted-maxcsp} (which uses the procedure from Figure~\ref{fig:tuple-maxcsp}).

\begin{figure*}[h!]
\begin{longtabu}{|X|}
\hline
\vspace{0mm}
{\bf Input}: A tree node $\nu$ and an instance $\calW_\ell$.\\
{\bf Output}: A tuple of variables of size at most $w$.

\begin{enumerate}[itemsep=0mm]
\item Let $v_1\in V\setminus S_{\rho_\nu}$ be a variable which
  maximizes the value of $\activedeg_{\rho_\nu}(v_1,\calW_\ell)$. Set
  $D \leftarrow\{v_1\}$.

\item While $|D| \leq w,$ do the following
\begin{enumerate}[itemsep=0mm]
\item  If there is a variable $v$ in $V\setminus S_{\rho_\nu}$ such that 
$$\activedeg_{\rho_\nu}(D \cup v,\calW_\ell) \geq \frac{\activedeg_{\rho_\nu}(D, \calW_\ell)}{(4qwtk)^{w}},$$
set $D \leftarrow D\cup v.$
\item Otherwise, go to Step~\ref{alg:tuple-selection:return}.
\end{enumerate}
\item 
\label{alg:tuple-selection:return}
Return $D$ as a tuple (in arbitrary order, with $v_1$ as the first
element).

\end{enumerate}
\\
\hline
\end{longtabu}
\caption{{\sc TupleSelection} for \maxconj}
  \label{fig:tuple-maxcsp}
\end{figure*}

\begin{figure*}
\begin{longtabu}{|X|}
\hline
\vspace{0mm}
{\bf Input}: $k$ instances of \maxconj $\calW_1,\ldots,\calW_k$ on the
variable set $V,$ $\eps > 0$ and  and target objective values $c_1, \ldots, c_k.$\\
{\bf Output}: An assignment to $V$\\
{\bf Parameters:}  $\delta_0 = \frac{1}{10(k+1)}$, $\epsilon_0 = \eps$, $\gamma=\frac{\eps_0^2\delta_0}{{w^2(qw)}^{w}}$, $t = \ceil{\frac{20w^2k^2}{\gamma}\cdot\log\left( \frac{10k}{\gamma}\right)}$
\begin{enumerate}[itemsep=0mm]
\item Initialize tree $T$ to be an empty $q^w$-ary tree (i.e., just 1
  root node and each node has at most $q^w$ children).
\item We will associate with each node $\nu$ of the tree:
\begin{enumerate}
\item A partial assignment $\rho_\nu$.
\item A special set of variables $\calT_\nu \subseteq V \setminus S_{\rho_{\nu}}$.
\item A special instance $\Inst_\nu \in [k]$.
\item A collection of integers $\cnt_{\nu,1}, \ldots, \cnt_{\nu, k}$.
\item A trit representing whether the node $\nu$ is living, exhausted
  or dead. 
\end{enumerate}
\item Initialize the root node $\nu_0$ to (1) $\rho_{\nu_0} = (\emptyset, \emptyset)$, (2) have all $\cnt_{\nu_0,\ell}=0$, (3) living.

\item While there is a living leaf $\nu$ of $T,$ do the following: 
\begin{enumerate}
\item Check if the LP $\LL_2(\rho_\nu)$ has a feasible solution.
\begin{enumerate}
\item If $\vect, \vecz$ is a feasible solution, then define $p_\nu: V \setminus S_{\rho_{\nu}} \to \Dist(q)$ by
$p = \smooth(\vect)$.
\item If not, then declare $\nu$ to be dead and return to Step $4$. 
\end{enumerate}

\item For each $\ell \in [k]$,
compute $\varcalc_{\rho_\nu}(p, \calW_\ell), \meancalc_{\rho_\nu}(p, \calW_\ell)$.
\item If $\varcalc_{\rho_\nu}(p, \calW_\ell) \geq \delta_0 \epsilon_0^2 \meancalc_{\rho_\nu}(p, \calW_\ell)^2 $, then set $\flag_\ell = \true$, else set $\flag_\ell = \false$.
\item Choose the smallest $\ell \in [k]$, such that
$\cnt_\ell < t$ AND $\flag_\ell = \true$ (if any):
\begin{enumerate}

\item Set $\calT_\nu \leftarrow$  {\sc TupleSelection}($\nu, \calW_\ell$). Set $\Inst_\nu = \ell.$

\item Create $q^{w'}$ children of $\nu$, with labels $\nu b$ for each $b\in [q]^{w'}$ and define

\begin{itemize}[label = $-$] 
\item $\rho_{\nu b} = (S_{\rho_\nu} \cup \calT_\nu, h^{b}),$ where
  $h^{b}$ extends $h_{\rho_{\nu}}$ by $h^{b}(\calT_\nu^i) = b(i).$

\item For each $\ell' \in [k]$ with $\ell' \neq \ell$, initialize $\cnt_{\nu b, \ell'} = \cnt_{\nu, \ell'}.$ Initialize  $\cnt_{\nu b, \ell} = \cnt_{\nu, \ell}+1.$

\item Set $\nu b$ to be living.
\end{itemize}

\end{enumerate}
\item If no such $\ell$ exists, declare $\nu$ to be exhausted. 

\end{enumerate}
\item Now every leaf of $T$ is either exhausted or dead. For each
  exhausted leaf $\nu$ of $T$:
\label{alg:maxcsp:step:leaves}
\begin{enumerate} 
\item 
\label{alg:maxcsp:step:random-assignment}
Sample $g_\nu: V \setminus S_{\rho_\nu} \to [q]$ by independently
sampling $g_\nu(v)$ from the distribution $p_\nu(v)$.
\item 
  For every assignment $h: S_{\rho_\nu} \to [q],$ compute
  $\out_{h,g_\nu} \leftarrow \min_{\ell \in [k]} \frac{\val(h \cup
    g_{\nu}, \calW_\ell)}{c_\ell}.$ If $c_\ell = 0$ for some $\ell \in
  [k],$ we interpret $\frac{\val(h \cup g_{\nu},\calW_l)}{c_\ell}$ as
  $+\infty.$
\end{enumerate}
\item Output the largest $\out_{ h, g_{\nu}}$ seen, and the assignment $h \cup g_{\nu}$ that produced it.
\end{enumerate}
\\ 
\hline
\end{longtabu}
\caption{Algorithm {\algowcsp} for approximating weighted simultaneous
  \maxconj}
  \label{fig:weighted-maxcsp}
\end{figure*}

\clearpage
\subsection{Analysis}
Notice that the depth of the tree $T$ is at most $kt$, and that for every
$\nu$, we have that $|S_{\rho_{\nu}}| \leq wkt$. This implies that the running time
is at most $q^{O(wkt)} \cdot \poly(n)$.

Let $f^\star: V \to [q]$ be an assignment such that
$\val(f^\star, \calW_\ell) \geq c_\ell$ for each $\ell \in [k]$. Let $\nu^\star$ be the the unique leaf of the tree $T$ for which
$\rho_{\nu^\star}$ is consistent with $f^\star$. (This $\nu^\star$ can be
found as follows: start with $\nu$ equal to the root. Set $\nu$ to equal
the unique child of $\nu$ for which $\rho_\nu$ is consistent with $f^\star$, and repeat
until $\nu$ becomes a leaf. This leaf is $\nu^\star$). Observe that since $f^\star$
is an assignment such that $\val(f^\star, \calW_\ell) \ge c_\ell$ for
every $\ell \in [k],$ by picking $g_0 = f^\star|_{V \setminus
  S^\star}$ in part 1 of Lemma~\ref{lemma:maxcsp:lp}, we know that
$\LL_2(\rho^\star)$ is feasible, and hence $\nu^\star$ must be an
exhausted leaf (and not dead).\\
\\
Define $\rho^\star = \rho_{\nu^\star}$, $S^\star = S_{\rho^\star}$,
$h^\star = h_{\rho^\star}$ and $p^\star = p_{\nu^\star}$ 

At the completion of Step~4, if $\ell
\in [k]$ satisfies $\cnt_{\nu^\star, \ell} = t$, we call instance $\ell$ a {\em
  high variance} instance.  Otherwise we call instance $\ell$ a
\emph{low variance} instance. 

\subsubsection{Low Variance Instances}
First we show that for the leaf $\nu^*$ in
Step~\ref{alg:maxcsp:step:leaves}, combining the partial assignment
$h^\star$ with a random assignment $g_{\nu^\star}$ in
Step~\ref{alg:maxcsp:step:random-assignment} is good for any low
variance instances with high probability.

\begin{lemma}
\label{lemma:low_variance:maxcsp}
Let $\ell \in [k]$ be any low variance instance. For the leaf node
$\nu^\star$, let $g_{\nu^\star}$ be the random assignment sampled in
Step~\ref{alg:maxcsp:step:random-assignment} of \algowcsp. Then with probability at least $1-\delta_0,$ the
assignment $f=h^\star \cup g_{\nu^\star}$ satisfies:
$$ \Pr_{g_{\nu^\star}} \left[ \val(f,\calW_\ell) \geq (\nicefrac{1}{q^{w-1}}-\nicefrac{\epsilon}{2})\cdot c_\ell \right] \geq 1-\delta_0.$$
\end{lemma}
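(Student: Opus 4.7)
The plan is to mirror the proof of Lemma~\ref{lemma:low_variance:maxand} essentially verbatim, with the appropriate substitutions for the $q$-ary, $w$-ary setting. Concretely, for a low-variance instance $\ell$, we have by definition that $\varcalc_{\rho^\star}(p^\star, \calW_\ell) < \delta_0 \epsilon_0^2 \cdot \meancalc_{\rho^\star}(p^\star, \calW_\ell)^2$. Define the random variable
\[
Y \defeq \val(\rho^\star \cup g_{\nu^\star}, \calW_\ell) - \val(\rho^\star, \calW_\ell).
\]
Note that $p^\star = \smooth(\vect)$ satisfies $p^\star(v)_i \geq \frac{w-1}{qw} \geq \frac{1}{qw}$, so the hypotheses of Lemma~\ref{lemma:maxcsp:truemeanvar} apply. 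Part~\ref{lemma:truemeanvar:lowvar} of Lemma~\ref{lemma:maxcsp:truemeanvar} then gives $\Pr[Y < (1-\epsilon_0)\E[Y]] < \delta_0$.

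Next, I would invoke the rounding guarantee (part 2 of Lemma~\ref{lemma:maxcsp:lp}) for the feasible solution $\vect, \vecz$ to $\LL_2(\rho^\star)$. This yields
\[
\val(\rho^\star, \calW_\ell) + \meancalc_{\rho^\star}(p^\star, \calW_\ell) \;\geq\; \frac{1}{q^{w-1}} \sum_{C \in \calC} \calW_\ell(C) z_C \;\geq\; \frac{1}{q^{w-1}} \cdot c_\ell,
\]
where the last inequality uses that $\vect, \vecz$ satisfies the $\LL_2(\rho^\star)$ constraints.

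Chaining the two bounds, with probability at least $1-\delta_0$ over the choice of $g_{\nu^\star}$,
\[
\val(f, \calW_\ell) \;\geq\; \val(\rho^\star, \calW_\ell) + (1-\epsilon_0)\E[Y] \;=\; (1-\epsilon_0)\bigl(\val(\rho^\star, \calW_\ell) + \meancalc_{\rho^\star}(p^\star, \calW_\ell)\bigr) \;\geq\; \frac{1-\epsilon_0}{q^{w-1}} \cdot c_\ell.
\]
Since $\epsilon_0 = \epsilon$, the right-hand side is at least $\bigl(\tfrac{1}{q^{w-1}} - \tfrac{\epsilon}{2}\bigr) c_\ell$ (using $\tfrac{\epsilon_0}{q^{w-1}} \le \tfrac{\epsilon}{2}$ for the relevant range of parameters, or equivalently noting that the slightly stronger bound $\tfrac{1-\epsilon}{q^{w-1}} c_\ell$ already implies the claim).

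There is no real obstacle — the proof is essentially a copy of the $\maxand$ case, with $\tfrac{1}{q^{w-1}}$ in place of $\tfrac{1}{2}$; the only thing to double-check is that the smoothing map $p^\star$ satisfies the hypothesis $p^\star(v)_i \geq \tfrac{1}{qw}$ required by Lemma~\ref{lemma:maxcsp:truemeanvar}, which is immediate from the definition $\smooth(\vect)(v)_i = \tfrac{w-1}{qw} + \tfrac{t_{v,i}}{w}$ together with $t_{v,i} \geq 0$.
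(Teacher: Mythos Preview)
Your proof is correct and essentially identical to the paper's own argument: define $Y$, apply the concentration bound from Lemma~\ref{lemma:maxcsp:truemeanvar} using the low-variance condition, then invoke the LP rounding guarantee of Lemma~\ref{lemma:maxcsp:lp} together with feasibility of $\LL_2(\rho^\star)$ to lower-bound $\val(\rho^\star,\calW_\ell)+\meancalc_{\rho^\star}(p^\star,\calW_\ell)$ by $c_\ell/q^{w-1}$. The only addition you make beyond the paper is the explicit verification that $p^\star(v)_i \geq \tfrac{1}{qw}$, which is a helpful sanity check the paper leaves implicit.
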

\begin{proof}
  For every low variance instance $\ell$, we have that
  $\varcalc_{\rho_{\nu^\star}}(p^\star , \calW_\ell) < \delta_0\epsilon_0^2
  \cdot \meancalc_{\rho_{\nu^\star}}(p^\star , \calW_\ell)^2.$ Define $Y
  \defeq \val(\rho^\star \cup g_{\nu^\star}, \calW_\ell) -
  \val(\rho^\star,\calW_\ell)$.  By
  Lemma~\ref{lemma:maxcsp:truemeanvar}, we have $\Pr[ Y < (1-
  \epsilon_0) \E[Y] ] < \delta_0.$ Thus, with probability at least
  $1-\delta_0,$ we have,
\begin{align*}
  \val(f,\calW_\ell) &\geq \val(\rho^\star,\calW_\ell) +
  (1-\epsilon_0) \E[Y]\\
  &= \val(\rho^\star,\calW_\ell) + (1-\epsilon_0) \cdot \meancalc_{\rho_{\nu^\star}}(\smooth(\vect), \calW_\ell) \\
  &= (1-\epsilon_0) \cdot \left(\val(\rho^\star,\calW_\ell) +
    \meancalc_{\rho_{\nu^\star}}(\smooth(\vect), \calW_\ell) \right) \\
  &\geq \frac{1}{q^{w-1}}\cdot(1-\epsilon_0)\cdot \sum_{C \in \calC}
  \calW_\ell(C) \cdot z_C  \ge  \frac{1}{q^{w-1}}\cdot(1-\epsilon_0) \cdot c_\ell\geq
  \left( \frac{1}{q^{w-1}}- \frac{\epsilon}{2}\right) \cdot c_\ell,
\end{align*}
where we have used the second part of
  Lemma~\ref{lemma:maxcsp:lp}.
\end{proof}

Next, we will consider a small perturbation of $h^\star$ which will
ensure that the algorithm performs well on high variance instances too.
We will ensure that this perturbation does not affect the
success on the low variance instances.

\subsubsection{ High variance instances}

Fix a high variance instance $\ell$. Let $\nu$ be an ancestor of
$\nu^\star$ with $\Inst_{\nu} = \ell$. Let $\calT_\nu^1$ denote the
first element of the tuple $\calT_\nu.$ Define:
\begin{align*}
\activedeg_\nu &\defeq \activedeg_{\rho_{\nu}}(\calT_\nu^1, \calW_\ell).\\
\activedeg_{\calT_\nu} &\defeq \activedeg_{\rho_{\nu}}(\calT_\nu,\calW_\ell).
\end{align*}
\begin{observation}
\label{obs:activedegcalT}
For any node $\nu,$ in the tree,
$$ \activedeg_{\calT_\nu} \geq \frac{\activedeg_\nu}{{(4qwtk)}^{w\cdot(|\calT_\nu|-1)}}.$$
\end{observation}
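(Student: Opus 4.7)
The plan is to prove this observation by a straightforward induction on the number of iterations performed by the \textsc{TupleSelection} procedure in Figure~\ref{fig:tuple-maxcsp}. The observation is essentially a direct bookkeeping consequence of the selection criterion used to grow $\calT_\nu$, so there is no real obstacle --- I expect this to be a short proof with no subtleties.

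Concretely, let $d = |\calT_\nu|$ and let $D_1 \subsetneq D_2 \subsetneq \cdots \subsetneq D_d = \calT_\nu$ denote the sequence of sets maintained by \textsc{TupleSelection}, where $D_1 = \{v_1\}$ is the initialization (with $v_1$ being the active-degree-maximizer) and $D_{i+1}$ is obtained from $D_i$ by the successful addition of one variable in step~2(a). By the initialization step,
\[
\activedeg_{\rho_\nu}(D_1, \calW_\ell) \;=\; \activedeg_{\rho_\nu}(v_1, \calW_\ell) \;=\; \activedeg_\nu.
\]
By the guard of step~2(a), whenever the procedure enlarges $D_i$ to $D_{i+1}$, the variable $v$ that is added satisfies
\[
\activedeg_{\rho_\nu}(D_{i+1}, \calW_\ell) \;\ge\; \frac{\activedeg_{\rho_\nu}(D_i, \calW_\ell)}{(4qwtk)^{w}}.
\]
Iterating this inequality from $i = 1$ up to $i = d - 1$ gives
\[
\activedeg_{\calT_\nu} \;=\; \activedeg_{\rho_\nu}(D_d, \calW_\ell) \;\ge\; \frac{\activedeg_{\rho_\nu}(D_1,\calW_\ell)}{(4qwtk)^{w(d-1)}} \;=\; \frac{\activedeg_\nu}{(4qwtk)^{w \cdot (|\calT_\nu| - 1)}},
\]
which is exactly the claimed bound. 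The only thing to double-check is that the procedure indeed only terminates either because the greedy condition in step~2(a) fails (in which case the final set is returned unchanged and the chain of inequalities above applies) or because the while-loop guard $|D| \le w$ is violated (in which case every enlargement was still a successful greedy step). In both cases, the same telescoping argument yields the stated inequality, completing the proof.
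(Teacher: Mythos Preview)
Your proof is correct and follows exactly the approach the paper intends: the paper's own argument simply says the base case $|\calT_\nu|=1$ holds by definition and the general case ``is obvious from the \textsc{TupleSelection} procedure,'' which is precisely the telescoping of the step~2(a) inequality that you have spelled out.
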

\begin{proof}
For $\nu$ such that $|\calT_\nu| = 1,$ we have, by definition,  $\activedeg_{\calT_\nu} = \activedeg_\nu$ and the inequality follows.  
The lower bound is obvious from the {\sc Tuple Selection} procedure if
$|\calT_\nu| > 1.$ 
\end{proof}

Let $\calC_{\nu}$ be the set of all constraints $C$ containing all variables in $\calT_\nu$ which are active given $\rho_\nu$.

We call a constraint $C$ in $\calC_{\nu}$ a $\backward$ constraint if
$C$ only involves variables from $S_{\rho_{\nu}} \cup
\calT_\nu$. Otherwise we call $C$ in $\calC_{\nu}$ a $\forward$
constraint.  Let $\calC_{\nu}^{\backward}$ and
$\calC_{\nu}^{\forward}$ denote the sets of these constraints.
Finally, let $\calC_{\nu}^{\out}$ denote the set of all constraints
from $\calC_\nu$ that involve at least $one$ variable from $V \setminus S^\star$
and none from $S^\star\setminus S_{\rho_\nu}.$

Define $\backward$ degree and $\forward$ degree  as follows:
\begin{align*}
\backward_{\nu} &\defeq \sum_{C \in \calC_{\nu}^\backward}\calW_{\ell}(C),\\
\forward_{\nu} &\defeq \sum_{C \in \calC_{\nu}^\forward}\calW_{\ell}(C).
\end{align*}
Note that:
$$\activedeg_{\calT_\nu} =  \backward_{\nu}  + \forward_{\nu}.$$

Based on the above definitions, we classify $\nu$ into one of three categories:
\begin{enumerate}
\item If $\backward_{\nu} \geq \frac{1}{2}
  \cdot\activedeg_{\calT_\nu}$, then we call $\nu$ $\typeAB$.

\item Otherwise, we call $\nu$ $\typeC$. 
\end{enumerate}

We have the following lemma about $\typeC$ nodes.
\begin{lemma}
\label{lemma:typeC}
For every $\typeC$ node $\nu,$ we have
\begin{enumerate}
\item \label{lemma:typeC:1} For every $v\in V\setminus (S_{\rho_\nu}\cup \calT_\nu),$
  $\activedeg_{\rho_\nu}(\calT_\nu \cup \{v\}, \calW_\ell) \leq
  \frac{\activedeg_{\calT_\nu} }{{(4qwtk)}^{w}}.$
\item\label{lemma:typeC:2} $\sum_{C \in \calC_{\nu}^{\out}} \calW_\ell(C) \geq \frac{1}{4}\cdot \activedeg_{\calT_\nu}.$
\end{enumerate}
\end{lemma}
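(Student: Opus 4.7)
The plan is to handle both parts in a unified way by exploiting the stopping condition of the {\sc TupleSelection} procedure. The key observation for part~\ref{lemma:typeC:1} is that {\sc TupleSelection} terminates in one of two cases: either no variable $v \in V \setminus S_{\rho_\nu}$ satisfied the activedegree growth condition (in which case the desired inequality is immediate), or the tuple reached size $w$. In the latter case, $\calT_\nu \cup \{v\}$ has $w+1$ distinct variables, but every constraint in a \maxconj instance has arity exactly $w$, so no active constraint can contain all of $\calT_\nu \cup \{v\}$, making $\activedeg_{\rho_\nu}(\calT_\nu \cup \{v\}, \calW_\ell) = 0$. Thus part~\ref{lemma:typeC:1} holds in both cases.

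For part~\ref{lemma:typeC:2}, the starting point is that $\nu$ is $\typeC$, so $\backward_\nu < \tfrac{1}{2}\cdot \activedeg_{\calT_\nu}$, and hence $\forward_\nu > \tfrac{1}{2}\cdot \activedeg_{\calT_\nu}$. The strategy is to show that almost all of the $\forward$ weight actually lies in $\calC_\nu^{\out}$, i.e., that only a small portion of it uses variables in $S^\star \setminus S_{\rho_\nu}$. A constraint $C \in \calC_\nu^\forward \setminus \calC_\nu^{\out}$ must contain all of $\calT_\nu$ together with at least one variable $z \in S^\star \setminus S_{\rho_\nu}$, so
\[
\sum_{C \in \calC_\nu^\forward \setminus \calC_\nu^{\out}} \calW_\ell(C) \;\leq\; \sum_{z \in S^\star \setminus S_{\rho_\nu}} \activedeg_{\rho_\nu}(\calT_\nu \cup \{z\}, \calW_\ell).
\]
By part~\ref{lemma:typeC:1}, each summand on the right is at most $\activedeg_{\calT_\nu}/(4qwtk)^w$. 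Using the bound $|S^\star| \le wkt$ (which follows from the depth of the tree being at most $kt$ and each node adding at most $w$ variables), the total is at most $\tfrac{wkt}{(4qwtk)^w}\cdot \activedeg_{\calT_\nu}$.

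Combining, $\sum_{C \in \calC_\nu^{\out}} \calW_\ell(C) \geq \forward_\nu - \tfrac{wkt}{(4qwtk)^w}\cdot \activedeg_{\calT_\nu} \geq \bigl(\tfrac{1}{2} - \tfrac{1}{4q(4qwtk)^{w-1}}\bigr)\cdot \activedeg_{\calT_\nu}$. Since $q \geq 2$ and $w \geq 1$, the correction term is at most $\tfrac{1}{4q} \leq \tfrac{1}{8}$, and hence the final bound is at least $\tfrac{1}{4}\cdot \activedeg_{\calT_\nu}$, as desired. There is no real obstacle here; the only subtlety is remembering to cover the ``terminated at size $w$'' case of {\sc TupleSelection} in part~\ref{lemma:typeC:1}, which relies on the fact that constraints have arity exactly $w$ in the \maxconj formulation.
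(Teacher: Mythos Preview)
Your argument is correct and essentially identical to the paper's: part~1 follows from the stopping rule of {\sc TupleSelection} (with the arity-$w$ observation handling the $|\calT_\nu| = w$ case, just as the paper notes that $\activedeg_{\rho_\nu}(T,\calW_\ell)=0$ whenever $|T|>w$), and part~2 subtracts from $\forward_\nu$ the weight of constraints touching $S^\star \setminus (S_{\rho_\nu} \cup \calT_\nu)$, bounding that via part~1 and $|S^\star| \leq wkt$. One small slip: the index set of your sum over $z$ should be $S^\star \setminus (S_{\rho_\nu} \cup \calT_\nu)$ rather than $S^\star \setminus S_{\rho_\nu}$, since part~1 does not apply when $z \in \calT_\nu$ (there $\activedeg_{\rho_\nu}(\calT_\nu \cup \{z\}, \calW_\ell) = \activedeg_{\calT_\nu}$); your phrase ``together with'' and the subsequent appeal to part~1 make clear this is what you intended, and the paper sums over exactly this corrected set.
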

\begin{proof}
  If $\nu$ is a $\typeC$ node, we must have that for every $v\in V\setminus(
  S_{\rho_\nu}\cup \calT_\nu),$
  \[ \activedeg_{\rho_\nu}(\calT_\nu \cup \{v\}, \calW_\ell) <
    \frac{\activedeg_{\calT_\nu} }{{(4qwtk)}^{w}}.\] This
  follows from the description of the {\sc TupleSelection} procedure,
  and the observation that $\activedeg_\nu(T, \calW_\ell)=0$ for any
  $T \subset V$ with $|T| > w.$

In particular, since $|S^\star|
\leq wtk$, the total weight of constraints containing $\calT_\nu$ and
some variable in $S^\star\setminus (S_{\rho_\nu}\cup \calT_\nu)$ is at most
\begin{align*}
  \sum_{v\in S^\star\setminus
    (S_{\rho_\nu}\cup \calT_\nu)}\activedeg_{\rho_\nu}(\calT_\nu \cup \{v\},
  \calW_\ell) &\leq
  \sum_{v\in S^\star\setminus(S_{\rho_\nu}\cup \calT_\nu)}   \frac{\activedeg_{\calT_\nu} }{{(4qwtk)}^{w}}\\
  &\leq |S^\star \setminus (S_{\rho_\nu}\cup \calT_\nu)|\cdot   \frac{\activedeg_{\calT_\nu} }{{(4qwtk)}^{w}}\\
  &\leq wtk\cdot   \frac{\activedeg_{\calT_\nu} }{{(4qwtk)}^{w}} \leq \frac{1}{4} \cdot \activedeg_{\calT_\nu}.
\end{align*}
Thus, we get,
\begin{align*}
  \sum_{C \in \calC_{\nu}^{\out}} \calW_\ell(C)  &= \forward_{\nu} - \left\{\substack{\text{total weight of constraints containing}\\ \text{$\calT_\nu$ and some variable in $S^\star\setminus (S_{\rho_\nu}\cup \calT_\nu)$ } }\right\}\\
&\geq \frac{1}{2}\cdot\activedeg_{\calT_\nu} - \frac{1}{4}\cdot
\activedeg_{\calT_\nu} = \frac{1}{4}\cdot \activedeg_{\calT_\nu}.
\end{align*}
This completes the proof of second statement.
\end{proof}
%

For a partial assignment $g : V\setminus S^\star \to [q]$, we say that $g$
is $\Cgood$ for $\nu$ if there exists a setting of variables in $\calT_\nu$ that satisfies at
least $\frac{1}{8\cdot{(qw)^w}}\cdot\activedeg_{\calT_\nu}$ weight amongst constraints in $\calC_{\nu}^{\out}.$

The next lemma allows us to prove that that for every node $\nu$ of
$\typeC$, with high probability, the random assignment $g_
{\nu^\star}: V \setminus S^\star \to [q],$ is $\Cgood$ for $\nu$.
\begin{lemma}
\label{lemma:hoeffding}
Let $\nu$ be $\typeC$.  Suppose $g : V \setminus S^\star \to [q]$ is a
random assignment obtained by independently sampling $g(v)$ for each
$v\in V \setminus S^\star$ from a distribution such that distribution
$\Pr[ g(v) = i] \geq \frac{1}{qw}$ for each $i\in [q].$ Then:
\[ \Pr_{g} [ g \mbox { is $\Cgood$ for $\nu$} ] \geq 1- 2\cdot e^{-tk/8qw}.\]
\end{lemma}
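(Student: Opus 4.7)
The plan is to mimic closely the argument of Lemma~\ref{lemma:maxand:hoeffding}, but with a slightly more delicate averaging over assignments to the tuple $\calT_\nu$ (since $|\calT_\nu|$ can now be anywhere between $1$ and $w$), and to lean on both parts of Lemma~\ref{lemma:typeC} (the bound on $\activedeg_{\rho_\nu}(\calT_\nu \cup \{v\}, \calW_\ell)$ to control the Lipschitz constants, and the $\calC_\nu^{\out}$ mass bound to control the mean).

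First, for each $b \in [q]^{|\calT_\nu|}$, I would define the function $\score^{(b)}(g) \defeq \sum_{C \in \calC_\nu^{\out}} \calW_\ell(C) \cdot Z_C^{(b)}(g)$, where $Z_C^{(b)}(g) = 1$ iff $C$ is satisfied by extending $g$ with the assignment $\calT_\nu \leftarrow b$. Each constraint $C \in \calC_\nu^{\out}$ is satisfied by a unique full assignment, so there is exactly one $b = b^\star(C) \in [q]^{|\calT_\nu|}$ for which $Z_C^{(b)}$ can be nonzero. Under that $b^\star(C)$, the remaining $j_C \le w - |\calT_\nu|$ variables of $C$ lie in $V \setminus S^\star$, and by the hypothesis $\Pr[g(v)=i] \ge 1/(qw)$, we get $\E_g[Z_C^{(b^\star(C))}(g)] \ge (qw)^{-(w-|\calT_\nu|)}$. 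Summing over $b$ and then using part~\ref{lemma:typeC:2} of Lemma~\ref{lemma:typeC}:
\[
\sum_{b \in [q]^{|\calT_\nu|}} \E_g[\score^{(b)}(g)]  \ge  \frac{1}{(qw)^{w-|\calT_\nu|}} \sum_{C \in \calC_\nu^{\out}} \calW_\ell(C)  \ge  \frac{\activedeg_{\calT_\nu}}{4\,(qw)^{w-|\calT_\nu|}}.
\]
Averaging over the $q^{|\calT_\nu|}$ choices of $b$ and using $q^{|\calT_\nu|}(qw)^{w-|\calT_\nu|} \le (qw)^w$, there exists $b^\star$ with $\E_g[\score^{(b^\star)}(g)] \ge \activedeg_{\calT_\nu}/(4(qw)^w)$.

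Next I would apply McDiarmid's inequality to $\score^{(b^\star)}(g)$. By part~\ref{lemma:typeC:1} of Lemma~\ref{lemma:typeC}, changing $g$ at a single variable $v \in V \setminus S^\star$ changes $\score^{(b^\star)}$ by at most $c_v \defeq \activedeg_{\rho_\nu}(\calT_\nu \cup \{v\}, \calW_\ell) \le \activedeg_{\calT_\nu}/(4qwtk)^w$. Moreover $\sum_v c_v \le w \cdot \activedeg_{\calT_\nu}$ (each constraint is counted at most $w$ times), so $\sum_v c_v^2 \le (\max_v c_v) \cdot \sum_v c_v \le w \,\activedeg_{\calT_\nu}^2 / (4qwtk)^w$. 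Choosing the deviation threshold to be $\activedeg_{\calT_\nu}/(8(qw)^w)$ so as to fall below $\activedeg_{\calT_\nu}/(8(qw)^w)$ forces $|\score^{(b^\star)}(g) - \E \score^{(b^\star)}(g)| \ge \activedeg_{\calT_\nu}/(8(qw)^w)$. McDiarmid then gives a failure probability of at most
\[
2 \exp\!\left(-\frac{2 \,(\activedeg_{\calT_\nu}/(8(qw)^w))^2}{w \,\activedeg_{\calT_\nu}^2/(4qwtk)^w}\right)  =  2 \exp\!\left(-\frac{(4qwtk)^w}{32\, w\,(qw)^{2w}}\right),
\]
and a short computation shows this is at most $2 e^{-tk/(8qw)}$ for all $w \ge 1$, yielding the claimed bound.

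The only genuine obstacle is the bookkeeping in the averaging step: unlike the $\maxand$ case where $|\calT_\nu| = 1$ and a single ``flip'' argument sufficed, here $|\calT_\nu|$ varies, so one has to be careful that the loss $q^{|\calT_\nu|}$ from averaging and the factor $(qw)^{w-|\calT_\nu|}$ from the small per-constraint probability combine to at most $(qw)^w$, independent of $|\calT_\nu|$. Once the mean lower bound is in hand with the clean $1/(4(qw)^w)$ constant, the McDiarmid step is essentially a generalization of the corresponding calculation in the proof of Lemma~\ref{lemma:maxand:hoeffding}, with the improved Lipschitz bound $c_v \le \activedeg_{\calT_\nu}/(4qwtk)^w$ from the $\typeC$ definition doing exactly the role that $1/(100tk)$ played there.
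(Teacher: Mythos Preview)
Your proposal is correct and follows essentially the same route as the paper's proof: pick a good setting $b^\star$ of $\calT_\nu$ by averaging (using that $q^{|\calT_\nu|}(qw)^{w-|\calT_\nu|}\le(qw)^w$), then apply McDiarmid with Lipschitz constants $c_v=\activedeg_{\rho_\nu}(\calT_\nu\cup\{v\},\calW_\ell)$ bounded via Lemma~\ref{lemma:typeC}. Your bound $\sum_v c_v \le w\cdot\activedeg_{\calT_\nu}$ is in fact slightly more careful than the paper's (which drops the factor~$w$); the extra~$w$ is harmless in the final exponent since $(4tk)^w/(32w(qw)^w)\ge tk/(8qw)$ for all $w\ge 1$ given the choice of~$t$. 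One small notational point: your $Z_C^{(b)}(g)$ should explicitly include $\rho_\nu$ on the $S_{\rho_\nu}$-variables of $C$ (the paper writes $C(\rho_\nu\cup b\cup g)$), but this is clearly your intent.
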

\begin{proof}
Let $\ell = \Inst_\nu$.

Consider a constraint $C \in \calC_{\nu}^{\out}.$ For partial
assignments $b : \calT_\nu \rightarrow [q]$ and $g: {V \setminus
  S^\star} \to [q],$ define $C(\rho_{\nu} \cup b\cup g) \in \{0,1\}$
to be 1 iff $C$ is satisfied by $\rho_\nu \cup b \cup g$. Since $C$
only contains variables from $S_{\rho_\nu} \cup \calT_\nu \cup (V
\setminus S^\star),$ we have that $C(\rho_{\nu} \cup b\cup g)$ is well
defined.

Define $\score^{b}: [q]^{V \setminus S^\star} \to \mathbb R$ by
$$\score^{b}(g)  = \sum_{C \in \calC_{\nu}^\out} \calW_\ell(C) \cdot
C(\rho_\nu \cup b\cup g) .$$
In words, $\score^{(b)}(g)$ is the total weight of constraints in $\calC_\nu^\out$ satisfied by setting $S_{\rho_\nu}$ according to $\rho_\nu$, 
setting $\calT_{\nu}$ to $b$, and setting $V \setminus S^*$ according to $g$.

Note that for all $C \in \calC_{\nu}^{\out}$, $\E_g[ \sum_{b :
  \calT_\nu \rightarrow [q]}C(\rho_\nu \cup b\cup g)] \geq
\frac{1}{{(qw)}^{w-|\calT_\nu|}}$. This follows since $C$ is an active
constraint given $\rho_\nu,$ and involves all variables from
$\calT_\nu;$ hence there exists an assignment $b$ to $\calT_\nu$ and
an assignment for at most $w-|\calT_\nu|$ variables from constraint
$C$ in $V\setminus S^\star$ such that $C$ is satisfied. Since, $g$ is
a $\smooth$ distribution, this particular assignment to
$w-|\calT_\nu|$ in $V \setminus S^\star$ is sampled with probability
at least $\frac{1}{{(qw)}^{w-|\calT_\nu|}}.$ Hence, for this
particular choice of $b,$ $C$ is satisfied with probability at least
$\frac{1}{{(qw)}^{w-|\calT_\nu|}}.$ Thus:
\begin{align*}
\sum_{b : \calT_\nu \rightarrow [q]} \E_g\left[ \score^{b}(g)\right] & = \sum_{b : \calT_\nu \rightarrow [q]} \E_g\left[ \sum_{C \in \calC_{\nu}^\out} \calW_\ell(C) \cdot C(\rho_\nu \cup b\cup g)\right]\\
&=  \sum_{C \in \calC_{\nu}^\out} \calW_\ell(C) \cdot \E_g\left[\sum_{b : \calT_\nu \rightarrow [q]} C(\rho_\nu \cup b\cup g)\right]
\geq \frac{1}{{(qw)}^{w-|\calT_\nu|}} \sum_{C \in \calC_{\nu}^{\out}} \calW_\ell(C).
\end{align*}
Thus there exists $b : \calT_\nu \rightarrow [q]$ such that
\begin{align*}
\E_g[\score^{b}(g)] \geq\frac{1}{q^{|\calT_\nu|}}  \cdot\frac{1}{{(qw)}^{w-|\calT_\nu|}}\sum_{C \in \calC_{\nu}^{\out}} \calW_\ell(C)
 \geq \frac{1}{4} \cdot\frac{1}{{(qw)}^w} \cdot\activedeg_{\calT_\nu},
\end{align*}
where the last inequality follows by Lemma~\ref{lemma:typeC}.

Fix this particular $b$ for which the above inequality holds. We are
going to use McDiarmid's inequality to show the concentration of
$\score^{b}(g)$ around its mean.  Since $\nu$ is $\typeC$, from
Lemma~\ref{lemma:typeC}, we know that for every vertex $v \in V
\setminus S^\star,$ changing $g$ on just $v$ can change the value of
$\score^{b}(g)$ by at most $c_v \defeq \activedeg_{\rho_\nu}(\calT_\nu
\cup \{v\}, \calW_\ell) \le   \frac{\activedeg_{\calT_\nu} }{{(4qwtk)}^{w}}.$ Thus by McDiarmid's inequality  (Lemma~\ref{lemma:mcdiarmid}),
\begin{align*}
  \Pr_{g} [ g \mbox { is not $\Cgood$ for $\nu$} ] &\leq \Pr_{g} \left[ \score^{b}(g) < \frac{1}{8\cdot(qw)^w} \cdot\activedeg_{\calT_\nu}\right]\\
  & \leq \Pr_{g}\left[ | \score^{b}(g) - \E_{g}[\score^{b}(g)] | > \frac{1}{8\cdot(qw)^w} \cdot\activedeg_{\calT_\nu}\right]\\
  & \leq 2\cdot\exp\left( \frac{-2\cdot\activedeg_{\calT_\nu}^2}{64
      (qw)^{2w} \cdot \sum_{v\in
        V\setminus S^\star} c_v^2 }\right). \\
  & \leq 2\cdot\exp\left( \frac{-2\cdot\activedeg_{\calT_\nu}^2}{64
      (qw)^{2w} \cdot(\max_v{c_v})\cdot\sum_{v\in  V\setminus S^\star} c_v}\right)\\
  & \leq 2\cdot\exp\left( \frac{-2\cdot\activedeg_{\calT_\nu}^2}{64  (qw)^{2w}\cdot (\max_v{c_v})\cdot \activedeg_{\calT_\nu}}\right)\\
  & \leq 2\cdot\exp\left( \frac{-2\cdot\activedeg_{\calT_\nu}}{64 (qw)^{2w}
      \cdot (\max_v{c_v}) }\right)\\
  & \leq 2\cdot\exp\left( \frac{-2\cdot\activedeg_{\calT_\nu}}{64 (qw)^{2w}
      \cdot (\frac{\activedeg_{\calT_\nu} }{{(4qwtk)}^{w}}) }\right)\\
& = 2\cdot\exp\left( \frac{-2\cdot (4qwtk)^{w}}{64\cdot(qw)^{2w}}\right) \leq  2\cdot\exp\left( \frac{-tk}{8qw}\right).
\end{align*}
%
%
\end{proof}

For a high variance instance $\ell$, let $\nu^\ell_1, \ldots,
\nu^\ell_t$ be the $t$ nodes with $\Inst_{\nu} = \ell$ which lie on
the path from the root to $\nu^\star,$ numbered in order of their
appearance on the path from the root to $\nu^\star$.  Set
$\finalW_\ell = \activedeg_{\rho^\star}(\calW_{\ell}).$ This is the
active degree left over in instance $\ell$ after the restriction
$\rho^\star$.
\begin{lemma}
\label{lemma:maxcsp:earlyvariables}
For every high variance instance $\ell \in [k]$ and for each $i \leq [t/2]$,
$$\activedeg_{\nu^\ell_i} \geq \gamma\cdot(1-\gamma)^{-t/2} \cdot \finalW_\ell \geq 100\cdot(qw)^{w}\cdot(4qwtk)^{w^2}\cdot \finalW_\ell.$$
\end{lemma}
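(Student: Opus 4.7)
The plan is to mirror the proof of Lemma~\ref{lem:potential} for \maxand, with two adjustments: the new geometric-decay rate $\gamma$ coming from Lemma~\ref{lemma:maxcsp:truemeanvar}, and the fact that the algorithm now adds a tuple $\calT_{\nu}$ (rather than one or two variables) to $S$ at each step.

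Fix a high variance instance $\ell$. For each $j \in [t]$ let $b_j \defeq \activedeg_{\rho_{\nu_j^\ell}}(\calW_\ell)$. The key observations are:
\begin{enumerate}
\item \textbf{Lower bound on each step's active degree.} At node $\nu_j^\ell$ the flag $\flag_\ell$ was \true, so $\varcalc \ge \delta_0 \epsilon_0^2 \cdot \meancalc^2$. By part~\ref{lemma:truemeanvar:highvar} of Lemma~\ref{lemma:maxcsp:truemeanvar} there exists a variable $v \in V\setminus S_{\rho_{\nu_j^\ell}}$ with $\activedeg_{\rho_{\nu_j^\ell}}(v,\calW_\ell) \ge \gamma\cdot b_j$, where $\gamma = \tfrac{\eps_0^2\delta_0}{w^2(qw)^w}$. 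The \textsc{TupleSelection} procedure starts by choosing $\calT_{\nu_j^\ell}^1$ to be the variable maximizing this quantity, so $\activedeg_{\nu_j^\ell} \ge \gamma b_j$.
\item \textbf{Geometric decay of $b_j$.} When we pass from $\rho_{\nu_j^\ell}$ to $\rho_{\nu_{j+1}^\ell}$ we extend the partial assignment on a superset of $\{\calT_{\nu_j^\ell}^1\}$. Every active constraint that contains $\calT_{\nu_j^\ell}^1$ either becomes inactive or loses $\calT_{\nu_j^\ell}^1$ from its active-variable set, so its contribution to the active-degree sum decreases by at least $\calW_\ell(C)$. Hence $b_{j+1}\le b_j-\activedeg_{\nu_j^\ell}\le (1-\gamma)b_j$.
\item \textbf{Monotonicity to $\finalW_\ell$.} The same argument shows $b_j$ is nonincreasing along the root-to-$\nu^\star$ path, giving $b_t \ge \finalW_\ell = \activedeg_{\rho^\star}(\calW_\ell)$.
\end{enumerate}

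Combining these, for any $i$ we iterate the decay bound $t-i$ times:
\[
\finalW_\ell \le b_t \le (1-\gamma)^{t-i} b_i,
\]
so $b_i \ge (1-\gamma)^{-(t-i)}\finalW_\ell$. For $i \le t/2$ this yields $b_i \ge (1-\gamma)^{-t/2}\finalW_\ell$, and multiplying by $\gamma$ via observation~(1) gives the first inequality: $\activedeg_{\nu_i^\ell} \ge \gamma(1-\gamma)^{-t/2}\finalW_\ell$.

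For the second (numerical) inequality we use the choice $t = \lceil \tfrac{20w^2k^2}{\gamma}\log(10k/\gamma)\rceil$. Then $(1-\gamma)^{-t/2} \ge e^{\gamma t/2} \ge (10k/\gamma)^{10 w^2 k^2}$, so
\[
\gamma(1-\gamma)^{-t/2} \ \ge\ \gamma \cdot (10k/\gamma)^{10 w^2 k^2}.
\]
On the other hand $100(qw)^w(4qwtk)^{w^2}$ is a fixed polynomial in $q,w,k,t,\log(1/\gamma)$ whose logarithm is $O(w^2\log(qwtk))$, dwarfed by $\tfrac{\gamma t}{2}=10w^2k^2\log(10k/\gamma)$; a direct comparison (using $k\ge 1$) verifies the required inequality.

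The main obstacle is essentially absent: the structural argument is a carbon copy of Lemma~\ref{lem:potential}, and the numerical verification is a routine check against the generous parameter choices of the algorithm.
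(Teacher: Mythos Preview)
Your proof is correct and follows essentially the same approach as the paper. The paper itself skips the proof, noting that the first inequality is identical to the argument for Lemma~\ref{lem:potential} (the two observations $b_{j+1}\le(1-\gamma)b_j$ and $\activedeg_{\nu_j^\ell}\ge\gamma b_j$) and the second follows from the choice of $t$; your write-up fills in exactly these steps, with the appropriate adaptation to the tuple-selection procedure and the updated $\gamma$ from Lemma~\ref{lemma:maxcsp:truemeanvar}.
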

We skip the proof of this lemma. The first inequality is identical to
the second part of Lemma~\ref{lemma:earlyvariables}, and the second inequality follows
from the choice of $t.$

\subsubsection{Putting everything together}
We now show that when $\nu$ is taken to equal $\nu^\star$ in
Step~\ref{alg:maxcsp:step:leaves}, then with high probability over
the choice of $g_{\nu^\star}$ in Step $5(a)$ there is a setting of $h$
in Step $5(b)$ such that $\min_{\ell \in [k]} \val(h \cup
g_{\nu^\star}, \calW_\ell) \geq (\frac{1}{q^{w-1}} - \epsilon) \cdot
c_\ell.$
%
\begin{theorem}
\label{thm:maxcspapprox}
Suppose the algorithm {\algowcsp} is given as inputs $\eps > 0,$ $k$
simultaneous weighted \maxconj instances $\calW_1,\ldots, \calW_k$ on
$n$ variables, and target objective value $c_1, \ldots,c_k$ with the
guarantee that there exists an assignment $f^\star$ such that for each
$\ell \in [k],$ we have $\val( f^\star, \calW_\ell) \ge c_\ell.$ Then,
the algorithm runs in $2^{O(\nfrac{k^4}{\eps^2}\log(\nfrac{k}{\eps}))}
\cdot \poly(n)$ time, and with probability at least 0.9, outputs an
assignment $f$ such that for each $\ell \in [k],$ we have, $\val(f,
\calW_\ell) \ge \left( \frac{1}{q^{w-1}} -\epsilon\right) \cdot
c_\ell.$
\end{theorem}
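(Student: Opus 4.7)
The plan is to mirror the proof of Theorem~\ref{thm:maxandapprox} almost line by line, adapting each step to the larger alphabet size $q$ and arity $w$. The tree construction guarantees that $|S_{\rho_\nu}| \leq wkt$ at every node and the tree has at most $q^{O(wkt)}$ leaves, giving the claimed running time. For correctness, I let $\nu^\star$ be the unique leaf whose partial assignment is consistent with $f^\star$; since part~1 of Lemma~\ref{lemma:maxcsp:lp} applied to $g_0 = f^\star|_{V\setminus S^\star}$ exhibits a feasible solution to $\LL_2(\rho^\star)$, this $\nu^\star$ must be exhausted (not dead). Writing $\rho^\star = \rho_{\nu^\star}$, $S^\star = S_{\rho^\star}$, $h^\star = h_{\rho^\star}$, classify each $\ell$ at $\nu^\star$ as low variance ($\cnt_{\nu^\star,\ell} < t$) or high variance ($\cnt_{\nu^\star,\ell} = t$).

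For the low variance instances, I directly invoke Lemma~\ref{lemma:low_variance:maxcsp} together with a union bound over $\ell$ to conclude that with probability at least $1 - k\delta_0$ over the random draw of $g_{\nu^\star}$ in step~5(a), every low variance $\ell$ satisfies $\val(h^\star \cup g_{\nu^\star}, \calW_\ell) \geq (1/q^{w-1} - \epsilon/2)\, c_\ell$. For the high variance instances, I identify the $t$ ancestors $\nu_1^\ell, \ldots, \nu_t^\ell$ of $\nu^\star$ with $\Inst_{\nu_i^\ell} = \ell$, classify each as $\typeAB$ or $\typeC$, and apply Lemma~\ref{lemma:hoeffding} via a union bound over the at most $kt/2$ relevant $\typeC$ nodes. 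The parameter $t \geq 20 w^2 k^2/\gamma \cdot \log(10k/\gamma)$ drives the failure probability below $\delta_0$, so both favorable events occur with probability at least $1 - (k+1)\delta_0 \geq 0.9$.

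The heart of the argument is the perturbation of $h^\star$ to an $h\colon S^\star \to [q]$ with (i) $\val(h \cup g_{\nu^\star}, \calW_\ell) \geq (1-\epsilon/2)\val(h^\star \cup g_{\nu^\star},\calW_\ell)$ for every $\ell$, and (ii) $\val(h \cup g_{\nu^\star},\calW_\ell) \geq (1-\epsilon/2)\finalW_\ell$ for every high variance $\ell$. I use exactly the iterative procedure from the \maxand proof: maintain a chain $N$ of candidate nodes, compute at each step the high-impact set $B = \{v : \exists \ell,\ \sum_{C \owns v} \calW_\ell(C) C(h \cup g_{\nu^\star}) \geq \frac{\epsilon}{4k} \val(h \cup g_{\nu^\star},\calW_\ell)\}$ of size at most $4wk^2/\epsilon < t/8$, pick the deepest $\nu_i^\ell \in N$ whose tuple $\calT_{\nu_i^\ell}$ is disjoint from $B$, and modify $h$ there. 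For a $\typeAB$ node an averaging argument over $[q]^{|\calT_\nu|}$ yields some $b$ satisfying backward-constraint weight at least $\backward_\nu/q^{|\calT_\nu|} \geq \activedeg_{\calT_\nu}/(2q^w)$; combining Lemma~\ref{lemma:maxcsp:earlyvariables} with Observation~\ref{obs:activedegcalT} shows this exceeds $\finalW_\ell$. For a $\typeC$ node the $\Cgood$ property directly supplies an assignment to $\calT_\nu$ satisfying weight at least $\activedeg_{\calT_\nu}/(8(qw)^w) \geq \finalW_\ell$. Each tuple modification touches at most $w$ variables outside $B$, so each iteration multiplies every $\val(h \cup g_{\nu^\star}, \calW_{\ell'})$ by at least $(1 - \epsilon/(2k))$, giving net loss factor $(1-\epsilon/2)$ over $k$ iterations; the bookkeeping $|N_0| \geq kt/2$ and $|B| < t/8$ guarantees an admissible $\nu$ exists at every step.

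Putting the pieces together, for low variance $\ell$ I get $\val(h \cup g_{\nu^\star}, \calW_\ell) \geq (1-\epsilon/2)(1/q^{w-1} - \epsilon/2)\, c_\ell \geq (1/q^{w-1} - \epsilon)\,c_\ell$, and for high variance $\ell$ I use $\val(h^\star\cup g_{\nu^\star},\calW_\ell) \geq c_\ell - \finalW_\ell$ (since $h^\star = f^\star|_{S^\star}$) together with property (ii) to obtain $\val(h\cup g_{\nu^\star},\calW_\ell) \geq (1-\epsilon/2)\max\{c_\ell - \finalW_\ell, \finalW_\ell\} \geq (1-\epsilon/2)\,c_\ell/2 \geq (1/q^{w-1} - \epsilon)\,c_\ell$, using $q^{w-1} \geq 2$. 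Since step~5(b) tries every $h\colon S^\star \to [q]$, it in particular tries the perturbation and outputs an assignment with the desired guarantee. I expect the main obstacle to be the $\typeAB$ case: unlike \maxand where only two variables are involved, here the averaging is over a tuple of up to $w$ variables whose weight lower bound has been degraded by a factor $(4qwtk)^{w(|\calT_\nu|-1)}$ through {\sc TupleSelection}, so checking that the resulting weight still exceeds $\finalW_\ell$ requires careful arithmetic tying together the choices of $\gamma$, $t$, and the threshold in {\sc TupleSelection}; this is precisely what the parameter settings in the algorithm are calibrated for.
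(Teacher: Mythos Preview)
Your proposal is essentially the paper's own proof, following the same structure (exhausted leaf $\nu^\star$, low/high variance split, $\typeAB$/$\typeC$ perturbation, bookkeeping with the chain $N$ and the high-impact set $B$). One small arithmetic slip: with your threshold $\epsilon/(4k)$ for $B$, modifying a tuple of up to $w$ variables outside $B$ can drop the value by as much as $w\cdot \epsilon/(4k)$ per iteration, not $\epsilon/(2k)$; the paper uses threshold $\epsilon/(2wk)$ (so $|B| \leq 2w^2k^2/\epsilon$) to absorb the extra factor of $w$, and correspondingly proves the stronger property (ii) with $10\cdot\finalW_\ell$ rather than $\finalW_\ell$, yielding the factor $10/11$ instead of your $1/2$ in the high-variance wrap-up. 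With that fix your argument goes through verbatim.
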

\begin{proof}
  Consider the case when $\nu$ is taken to equal $\nu^\star$ in
  Step~\ref{alg:maxcsp:step:leaves}.
  By Lemma~\ref{lemma:low_variance:maxcsp}, with probability at least
  $1 -k\delta_0$ over the random choices of $g_{\nu^\star},$ we have
  that for {\em every} low variance instance $\ell \in [k]$,
  $\val(h^\star \cup g_{\nu^\star}, \calW_\ell) \geq
  (\frac{1}{q^{w-1}} - \frac{\epsilon}{2}) \cdot c_\ell$.  By
  Lemma~\ref{lemma:hoeffding} and a union bound, with probability at
  least $1 - \frac{t}{2}\cdot k\cdot 2e^{-tk/8qw} \geq 1 - \delta_0$
  over the choice of $g_{\nu^\star}$, for every high variance instance
  $\ell$ and for every $\typeC$ node $\nu_i^\ell$, $i\in [t/2]$, we
  have that $g_{\nu^\star}$ is $\Cgood$ for $\nu_i^\ell$.  Thus with
  probability at least $1- (k+1)\delta_0$, both these events occur.
  Henceforth we assume that both these events occur in Step $5(a)$ of
  the algorithm.

Our next goal is to show that there exists a partial assignment $h:S^\star \to [q]$ such that
\begin{enumerate}
\item \label{item:max-csp:proof:guarantees:1} For every instance $\ell
  \in [k]$, $\val(h \cup g_{\nu^\star} ,\calW_\ell) \geq \left( 1 -
    \nfrac{\epsilon}{2} \right) \cdot \val(h^\star \cup
  g_{\nu^\star},\calW_\ell)$
\item \label{item:max-csp:proof:guarantees:2} For every high variance
  instance $\ell\in [k]$, $\val(h \cup g_{\nu^\star},\calW_\ell)
  \geq(1-\nfrac{\epsilon}{2})\cdot 10\cdot \finalW_\ell$.
\end{enumerate}

Before giving a proof of the existence of such an $h$, we show that
this completes the proof of the theorem. We claim that when
the partial assignment $h$ guaranteed above is considered in the
Step~$5(b)$ in the algorithm, we obtain an assignment with the
required approximation guarantees. 

 For every low variance
instance $\ell \in [k],$ since we started with $\val(h^\star \cup g_{\nu^\star},
\calW_\ell) \geq (\frac{1}{q^{w-1}} - \frac{\epsilon}{2}) \cdot c_\ell,$
property~\ref{item:max-csp:proof:guarantees:1} above implies that
every low variance instance $\val(h \cup g_{\nu^\star}) \ge (\frac{1}{q^{w-1}} -
\epsilon) \cdot c_\ell.$ For every high variance instance $\ell \in
[k],$ since $h^\star = f^\star|_S,$
\[\val(h^\star \cup g_{\nu^\star}, \calW_\ell) \ge \val(f^\star, \calW_\ell) -
\activedeg_{\rho^\star}(\calW_\ell) \ge c_\ell - \finalW_\ell.\]
Combining this with properties~\ref{item:max-csp:proof:guarantees:1}
and \ref{item:max-csp:proof:guarantees:2} above, we get,
\[ \val(h \cup g_{\nu^\star}, \calW_\ell) \ge \left( 1 - \frac{\epsilon}{2}
\right) \cdot \max\{c_\ell - \finalW_\ell, 10\cdot \finalW_\ell\} \ge
\frac{10}{11} \left( 1 - \frac{\epsilon}{2} \right) \cdot c_\ell.\]
Thus, for all instances $\ell \in [k]$, we get $\val(h \cup g_{\nu^\star}, \calW_\ell) \ge
\left(\frac{1}{q^{w-1}}-\frac{\eps}{2}\right) \cdot c_{\ell}.$

Now, it remains to show the existence of such an $h$ by giving a
procedure for constructing $h$ by perturbing $h^\star$ (Note that this
procedure is only part of the analysis). For nodes $\nu, \nu'$ in the
tree, let us write $\nu \prec \nu'$ if $\nu$ is an ancestor of $\nu'$,
and we also say that $\nu'$ is ``deeper" than $\nu$. \\

\noindent{\bf Constructing $h$:}
\begin{enumerate}
\item Initialize $H \subseteq [k]$ to be the set of high variance instances.
\item Let $N_0 = \{ \nu^{\ell}_i \mid \ell \in H, i \in [t/2]\}$.
Note that $N$ is a chain in the tree (since all the elements of $N$
are ancestors of $\nu^\star$). Since every $\nu \in N$ is an
  ancestor of $\nu^\star$, we have $h_{\rho_\nu} = h^\star|_{S_{\rho_\nu}}.$

\item Initialize $D = \emptyset$, $N= N_0$, $h = h^{\star}$.

\item During the procedure, we will be changing the assignment $h$,
and removing elements from $N$. We will always maintain the
following two invariants:
\begin{itemize}
\item $|N| > \frac{t}{4}$.
\item For every $\nu \in N$, $h|_{S_{\rho_\nu}} = h^{\star}|_{S_{\rho_{\nu}}}$.
\end{itemize}

\item While $|D| \neq |H|$ do:
\begin{enumerate}
\item Let $$B = \left\{ v \in V \mid \exists \ell \in [k] \mbox{ with }
  \sum_{C \in \calC, C \owns v} \calW_{\ell}(C)\cdot C(h\cup g_{\nu^\star}) \geq
  \frac{\epsilon}{2wk} \val(h\cup g_{\nu^\star}, \calW_\ell) \right\}.$$

Note that $|B| \leq \frac{2w^2 k^2}{\epsilon} < \frac{t}{4}.$

\item Let $\nu \in N$ be the deepest element of $N$ for which:
  $\calT_\nu\cap B = \emptyset.$

Such a $\nu$ exists because:
\begin{itemize}
\item $|N| > \frac{t}{4} >  |B|$, and
\item there are at most $|B|$ nodes $\nu$ for which 
$\calT_\nu \cap B \neq \emptyset$ (since
$\calT_\nu$ are all disjoint for distinct $\nu$).
\end{itemize}

\item Let $\ell \in H$ and $i \in [t/2]$ be such that $\nu = \nu_i^\ell$. Let $\rho = \rho_\nu$. We will now
  modify the assignment $h$ for variables in $\calT_\nu$ to guarantee
  that $\val(h \cup g_{\nu^\star}, \calW_\ell) \geq 10 \cdot
  \finalW_\ell$. The procedure depends on whether $\nu$ is $\typeAB$ or $\typeC$.
\begin{enumerate}
\item If $\nu$ is $\typeAB$, then we know that $\backward_{\nu} \geq \frac{1}{2}\cdot\activedeg_{\calT_\nu}$.

The second invariant tells us that $\rho = h^\star|_{S_{\rho}} = h|_{S_{\rho}}$.
Thus we have:
\begin{align*}
\backward_{\nu} &= \sum_{C \in \calC^\backward_{{\nu}}} \calW_\ell(C)\\
&= \sum_{C \subseteq S_{\rho} \cup \calT_\nu, C \supseteq \calT_\nu, C \in \Act(\rho)} \calW_\ell(C)\\
&= \sum_{C \subseteq S_{\rho} \cup \calT_\nu, C \supseteq \calT_\nu, C \in \Act(h|_{S_\rho})} \calW_\ell(C).
\end{align*}
This implies that we can modify the assignment $h$ on the variables
$\calT_\nu$ such that after the modification, the weights of satisfied
backward constraints is:
\begin{align*}
 \sum_{C \subseteq S_{\rho} \cup \calT_\nu, C \supseteq \calT_\nu, C \in \Act(h|_{S_\rho})} \calW_\ell(C) C(h) &\geq \frac{1}{q^{w}} \sum_{C \subseteq S_{\rho} \cup \calT_\nu, C \supseteq \calT_\nu, C \in \Act(h|_{S_\rho})} \calW_\ell(C)\\
&= \frac{1}{q^{w}} \cdot \backward_{\nu}\\
&\geq \frac{1}{2q^{w}}\cdot \activedeg_{\calT_\nu}\\
&\geq 10 \cdot \finalW_\ell. 
\end{align*}
where the $\frac{1}{q^{w}}$ factor in the first inequality appears
because there could be as many as $q^{w}$ possible assignments to
variables in $\calT_\nu,$ and the last inequality holds because of
Observation~\ref{obs:activedegcalT} and Lemma~\ref{lemma:maxcsp:earlyvariables}.
In particular, after making this change, we
have $\val(h \cup g_{\nu^\star}, \calW_\ell) \geq 10 \cdot
\finalW_\ell$.

%
%
%

\item If $\nu$ is $\typeC$, then we know that $g$ is $\Cgood$ for
  $\nu.$ Thus, by the definition of \Cgood, we can choose a setting of
  $\calT_\nu$ so that at least a total of
  $\frac{1}{8\cdot{(qw)^w}}\cdot \activedeg_{\calT_\nu} \geq 10 \cdot
  \finalW_\ell$ $\calW_\ell$-weight constraints between $\calT_\nu$
  and $V \setminus S^\star$ is satisfied. After this change, we have
  $\val(h \cup g_{\nu^\star}, \calW_\ell) \geq 10 \cdot \finalW_\ell$.
\end{enumerate}

In both the above cases, we only changed the value of $h$ at the variables
$\calT_\nu$. Since $\calT_\nu \cap B = \emptyset$,
we have that for every $j \in [k]$,
the new value $\val( h \cup g_{\nu^\star}, \calW_j)$ is at least $\left(1 - \frac{\epsilon}{2k} \right)$
times the old value $\val( h \cup g_{\nu^\star}, \calW_j)$.

\item Set $D = D \cup \{\ell\}$.
\item Set $N = \{ \nu^{\ell}_i \mid \ell \in H\setminus D, i \leq [t/2], \nu^\ell_i \prec \nu\}$.

Observe that $|N|$ decreases in size by at most $ \frac{t}{2} + |B|$.
Thus, if $D \neq H$, we have
\begin{align*}
|N| &\geq |N_0| - |D| \cdot \frac{t}{2} - |D||B| \\
& = |H|\cdot\frac{t}{2} - |D| \cdot \frac{t}{2} - |D||B|\\
& \geq \frac{t}{2} - k|B| > \frac{t}{4} 
\end{align*}

Also observe that we only changed the values of $h$ at
the variables $\calT_\nu$. 
Thus for all $\nu' \preceq \nu$ (i.e $\nu'\in N),$ we still have the property that 
$h|_{S_{\rho_{\nu'}}} = h^{\star}|_{S_{\rho_{\nu'}}}$.
\end{enumerate}
\end{enumerate}

For each high variance instance $\ell \in [k],$ in the iteration where
$\ell$ gets added to the set $D,$ the procedure ensures that at the
end of the iteration $\val(h \cup g_{\nu^\star}, \calW_\ell) \ge 10\cdot \finalW_\ell.$

Moreover, at each step we reduced the value of each $\val(h\cup g_{\nu^\star},
\calW_\ell)$ by at most $\frac{\epsilon}{2k}$ fraction of its previous
value. Thus, at the end of the procedure, for every $\ell \in [k],$
the value has decreased at most by a multiplicative factor of
$\left(1-\frac{\eps}{2k}\right)^k \ge \left(1-\frac{\eps}{2}\right).$
Thus, for every $\ell \in [k],$ we get $\val(h \cup g_{\nu^\star}, \calW_\ell) \ge
\left(1-\frac{\eps}{2}\right) \cdot \val(h^\star \cup g_{\nu^\star}, \calW_\ell),$
and for every high variance instance $\ell \in [k]$, we have
$\val(h\cup g_{\nu^\star}, \calW_\ell) \geq
\left(1-\frac{\eps}{2}\right) \cdot 10 \cdot
\finalW_\ell$. This proves the two properties of $h$ that we set out
to prove.

{\bf Running time : } Running time of the algorithm is
$2^{O(kt)}\cdot\poly(n)$ which is $2^{O(\nfrac{k^4}{\eps^2}\log
(\nfrac{k}{\eps^2}))}\cdot\poly(n).$
\end{proof}



\section{Simultaneous \maxwsat}
\label{section:maxwsat}
In this section, we give our algorithm for simultaneous \maxwsat. The
algorithm follows the basic paradigm from \maxand and \maxcsp, but
does not require a tree of evolutions (only a set of influential
variables), and uses an LP to boost the Pareto approximation factor to
$\left(\frac{3}{4} - \epsilon \right)$.
\subsection{Preliminaries}


Let $V$ be a set of $n$ Boolean variables. Define $\calC$ to be the
set of all possible $w$-SAT constraints on the $n$ variable set $V$.
A \maxwsat instance is then described by a weight
function $\calW : \calC \to \rea_{\geq 0}$ (here $\calW(C)$ denotes
the weight of the constraint $C$). We will assume that $\sum_{C \in
  \calC} \calW(C) = 1$.

We say $v \in C$ if the variable $v$ appears in the constraint
$C$. For a constraint $C$, let $C^+$ (resp.  $C^-$) denote the set of
variables $v \in V$ that appear unnegated (resp. negated) in the
constraint $C$.

Let $f: V \to \{0,1\}$ be an assignment. For a constraint $C \in
\calC,$ define $C(f)$ to be 1 if the constraint $C$ is satisfied by
the assignment $f$, and define $C(f) = 0$ otherwise. Then, we have the
following expression for $\val(f,\calW)$:
$$\val(f,\calW) \defeq \sum_{C \in \calC} \calW(C) \cdot C(f).$$

\subsubsection{Active Constraints}
Our algorithm will maintain a small set $S \subseteq V$ of variables,
for which we will try all assignments by brute-force, and then use a
randomized rounding procedure for a linear program to obtain an
assignment for $V \setminus S$. We now introduce some notation for
dealing with this.

Let $S \subseteq V$. We say a constraint $C \in \calC$ is active given
$S$ if at least one of the variables of $C$ is in $V\setminus S$.  We
denote by $\Act(S)$ the set of constraints from $\calC$ which are
active given $S$.  For two constraints $C_1,C_2 \in \calC,$ we say
$C_1 \sim_{S} C_2$ if they share a variable that is contained in $V
\setminus S$. Note that if $C_1 \sim_{S} C_2,$ then $C_1, C_2$ are
both in $\Act(S)$. For two partial assignments $f_1 : S \to \B$ and
$f_2 : V\setminus S \to \B$, let $f = f_1\cup f_2$ is an assignment
$f : V \to \B$ such that $f(x) = f_1(x)$ if $x\in S$ otherwise
$f(x) = f_2(x)$.

Define the active degree of a variable $v \in V\setminus S$ given $S$ by:
$$\activedeg_S(v, \calW) \defeq \sum_{ C \in \Act(S), C \owns v} \calW(C).$$
We then define the active degree of the whole instance $\calW$ given $S$:
$$\activedeg_S(\calW) \defeq \sum_{v \in V\setminus S} \activedeg_{S}(v,
\calW).$$
For a partial assignment $h: S \to \{0,1\}$, we define
$$\val(h,\calW) \defeq  \sum_{\substack{C \in \calC \\  C \notin \Act(S)}} \calW(C) \cdot C(h).$$
Thus, for an assignment $g: V\setminus S \to \{0,1\}$, to the
remaining set of variables, we have the equality:
$$\val(h\cup g,\calW) - \val(h,\calW) = \sum_{C \in \Act(S)}\calW(C) \cdot C(h\cup g).$$

\subsubsection{LP Rounding}
Let $h: S \to \{0,1\}$ be a partial assignment.  We will use the
Linear Program $\LLwsat_1(h)$ to complete the assignment to $V
\setminus S$. For \maxsat, Goemans and Williamson~\cite{GW-maxsat}
showed, via a rounding procedure, that this LP can be used to give a
$\nfrac{3}{4}$ approximation. However, as in \maxand, we will be using
the rounding procedure due to Trevisan~\cite{Trevisan98} that also
gives a $\nfrac{3}{4}$ approximation for \maxwsat, because of its
smoothness properties.

Let $\vec{t}, \vec{z}$ be a feasible solution to the LP
$\LLwsat_1(h)$. Let $\smooth(\vec{t})$ denote the map $p: V \setminus
S \to [0,1]$ given by: $p(v) = \frac{1}{4} + \frac{t_v}{2}$.  Note
that $p(v) \in [\nfrac{1}{4}, \nfrac{3}{4}]$ for all $v$.
%
\begin{theorem}
\label{thm:sattrev}
Let $h: S \to \{0,1\}$ be a partial assignment.
\begin{enumerate}
\item For every $g_0 : V \setminus S \to \{0,1\}$, there exist
$\vect$, $\vecz$ satisfying $\LLwsat_1(h)$ such that for every
\maxwsat instance $\calW$:
$$ \sum_{C \in \calC} \calW(C) z_C = \val(g_0 \cup h, \calW).$$
\item Suppose $\vect, \vecz$ satisfy $\LLwsat_1(h)$. Let $p =
  \smooth(\vec{t})$.  Then for every \maxwsat instance $\calW$:
  \[ \av_g[\val(h \cup g, \calW) ] \geq \frac{3}{4} \cdot \sum_{C \in
    \calC} \calW(C)z_C,\] where $g: V\setminus S \to \{0,1\}$ is such
  that each $g(v)$ is sampled independently with $\E[g(v)] = p(v)$.
\end{enumerate}
\end{theorem}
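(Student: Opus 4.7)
The plan for Part~1 is the straightforward encoding: set $t_v = h(v)$ for $v \in S$ and $t_v = g_0(v)$ for $v \in V \setminus S$, and set $z_C = C(g_0 \cup h) \in \{0,1\}$ for each $C \in \calC$. Each $\LLwsat_1(h)$ constraint $z_C \le \sum_{v \in C^+} t_v + \sum_{v \in C^-}(1-t_v)$ then holds: if $g_0 \cup h$ satisfies $C$, some literal is true and the right-hand side is at least $1 \ge z_C$; otherwise $z_C = 0$. The identity $\sum_C \calW(C) z_C = \val(g_0 \cup h, \calW)$ is then immediate.

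For Part~2, I would partition $\calC$ into clauses inactive given $S$ (all variables in $S$) and clauses active given $S$. For an inactive $C$, we have $C(h \cup g) = C(h)$ independently of $g$, and since $t_v = h(v)$ the LP constraint forces $z_C \le \#\{\text{literals of } C \text{ satisfied by } h\}$; hence $z_C > 0$ implies $C(h) = 1$, giving $\calW(C)\, C(h) \ge \calW(C) z_C$.

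For an active clause $C$, let $k \ge 1$ denote the number of variables of $C$ lying in $V \setminus S$. If some literal of $C$ at a variable of $S$ is already satisfied by $h$, then $\E_g[C(h \cup g)] = 1 \ge \tfrac{3}{4} z_C$ is trivial. Otherwise the $S$-literals contribute $0$ to the LP constraint, so $z_C \le \sigma := \sum_{v \in C \cap (V \setminus S)} \ell_v$, where $\ell_v = t_v$ or $1 - t_v$ according to sign. The smoothed rounding $p(v) = \tfrac14 + \tfrac{t_v}{2}$ makes each such literal satisfied by $g$ with probability $\tfrac14 + \tfrac{\ell_v}{2}$, so by independence and AM-GM,
\[
\Pr_g\bigl[C(h \cup g) = 0\bigr]
\;=\; \prod_{v \in C \cap (V\setminus S)} \left(\tfrac34 - \tfrac{\ell_v}{2}\right)
\;\le\; \left(\tfrac34 - \tfrac{\sigma}{2k}\right)^k.
\]
It then suffices to verify $\phi_k(z) := 1 - (\tfrac34 - \tfrac{z}{2k})^k - \tfrac34 z \ge 0$ on $[0,1]$, applied at $z = \sigma \le 1$ and combined with $\sigma \ge z_C$ and the fact that $1 - (\tfrac34 - \tfrac{z}{2k})^k$ is nondecreasing in $z$. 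A direct second-derivative check shows $\phi_k$ is concave on $[0,1]$, so the bound reduces to the two endpoints $\phi_k(0) = 1 - (3/4)^k \ge 0$ and $\phi_k(1) = \tfrac14 - (\tfrac34 - \tfrac{1}{2k})^k \ge 0$ (equality at $k = 1,2$, strict otherwise). Summing the inactive and active contributions yields $\E_g[\val(h \cup g, \calW)] \ge \tfrac34 \sum_C \calW(C) z_C$.

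The main subtlety to watch is the active clause whose $S$-literals already satisfy it under $h$: this case must be separated off before using $z_C \le \sigma$, since otherwise the $S$-literal contribution to the LP would leak into the bound. Beyond that, the argument is Trevisan's $\tfrac34$-rounding for \maxsat transcribed to the partial-assignment setting, and the smoothed choice $p(v) \in [\tfrac14, \tfrac34]$ is precisely what makes $\phi_k(1) \ge 0$ at every width $k$---the same smoothness feature that the simultaneous algorithm elsewhere relies on.
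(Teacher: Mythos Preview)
Your proof is correct and follows essentially the same route as the paper: the paper likewise splits into inactive and active clauses, separates off the case where $h$ already satisfies $C$, applies AM--GM to the remaining literals, and checks $1-(3/4-z/(2l))^l\ge(3/4)z$ on $[0,1]$ via the two endpoint values (arguing ``no interior local minimum'' where you argue concavity). One small slip in your writeup: $\sigma$ need not be $\le 1$, so the $\phi_k$-bound should be applied at $z_C\in[0,1]$ after using the monotonicity you already noted (exactly as the paper does), rather than at $z=\sigma$.
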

\begin{proof}
The first part is identical to the first part of
Lemma~\ref{lemma:max2and:lp}.
%
For the second part. Let $\calW$ be any instance of
$\maxwsat$. Let $g: V \setminus S \to
\{0,1\}$ be sampled as follows: independently for each $v \in V
\setminus S$, $g(v)$ is sampled from $\{0,1\}$ such that
$\av[g(v)] = p(v)$.  We need to show that
\begin{align*}
  \av_g[\val(h \cup g, \calW) ] & =
  \sum_{C \in \calC \setminus\Act(S)} \calW(C)C(h) +
  \E\left[\sum_{C \in \Act(S)} \calW(C)C(\rho \cup g) \right] \\
& \ge \frac{3}{4} \cdot \sum_{C \in \calC \setminus \Act(S)} \calW(C) z_C + \frac{3}{4} \cdot \sum_{C \in \Act(S)} \calW(C) z_C
\end{align*}
For $C \in \calC \setminus \Act(S),$ it is easy to verify that if
$z_C > 0,$ we must have $C(h) = 1.$ For $C \in \Act(S)$
the following claim gives us the required inequality:
\begin{claim}
\label{claim:maxsat:trev_rounding}
For $C \in \Act(S)$, 
$\E[C(h \cup g)]  \ge \frac{3}{4}\cdot z_C$.
\end{claim}
\begin{proof}
The claim is true if $C$ is satisfied by $h.$ Consider a clause $C$ which contains $l$ active variables but not satisfied by partial assignment $h.$ Under the smooth rounding, we have
\begin{align*}
\E[C(h \cup g)] =  \Pr[C \text{ is satisfied by }h \cup g] &= 1 - \left(\prod_{v\in C^+, v\in V\setminus S}\frac{3}{4} -
    \frac{t_v}{2}\right)\cdot\left(\prod_{v\in C^-, v\in V\setminus
      S}\frac{3}{4} - \frac{1-t_v}{2}\right)\\
&\ge 1 - \left(\frac{3}{4} - \frac{\sum_{v\in C^{+}, v\in V\setminus S} t_v + \sum_{v\in C^{-}, v\in V\setminus S} (1-t_v)}{2l} \right)^l \\
&\geq  1 - \left(\frac{3}{4} - \frac{z_C}{2l} \right)^l \geq \frac{3}{4}\cdot z_C,
\end{align*}
where first inequality follows from AM-GM inequality. For any integer $l\geq 1$, the last inequality follows by noting that for a function $f(x) = 1 - \left(\frac{3}{4} - \frac{x}{2l} \right)^l - \frac{3}{4}\cdot x$, $f(0) \geq 0, f(1)\geq 0$ along with the fact the the function has no local minima in $(0,1).$
\end{proof}
\end{proof}


\subsection{Random Assignments}
\label{section:maxwsat:random}
We now give a sufficient condition
for the value of a \maxwsat instance to be highly concentrated under a
{\em sufficiently smooth} independent random assignment to the
variables of $V \setminus S$ (This smooth distribution will come from
the rounding algorithm for the LP). When the condition does not hold,
we will get a variable of high active degree.

Let $S \subseteq V$, and let $h: S \to \{0,1\}$ be an arbitrary
partial assignment to $S$.  Let $p: V \setminus S \to [0,1]$ be such
that $p(v) \in [\nfrac{1}{4}, \nfrac{3}{4}]$ for each $v \in V
\setminus S$. Consider the random assignment $g: V \setminus S \to
\{0,1\}$, where for each $v \in V\setminus S$, $g(v) \in \{0, 1\}$ is
sampled independently with $\E[g(v)] = p(v)$. Define the random
variable
$$Y \defeq \val(h\cup g,\calW) - \val(h,\calW) = \sum_{C \in \Act(S)} \calW(C)\cdot C(h\cup g).$$
The random variable $Y$ measures the contribution of active
constraints to the instance $\calW$.


We now define two quantities depending only on $S$ (and importantly,
not on $h$), which will be useful in controlling the expectation and 
variance of $Y$.  The first quantity is an upper bound on $\Var[Y]$:
$$\varest \defeq \sum_{C_1 \sim_{S} C_2} \calW(C_1) \calW(C_2).$$
The second quantity is a lower bound on $\E[Y]$:
$$\meanest \defeq \frac{1}{4}\cdot \sum_{C \in \Act(S)} \calW(C).$$
%
%
%
\begin{lemma}
\label{lemma:max-w-sat:uvar_lmean_relation}
Let $S \subseteq V$ be a subset of variables and $h: S \to \{0,1\}$ be an arbitrary partial assignment to $S.$ Let $p, Y, \varest, \meanest$ be as above.
\begin{enumerate}
\item If $\varest \le \delta_0 \epsilon_0^2
  \cdot \meanest^2,$ then $\Pr[Y < (1-\epsilon_0) \E[Y] ] < \delta_0$.
\item If $\varest \ge \delta_0\epsilon_0^2 \cdot \meanest^2 $, then there exists $v \in V\setminus S$ such that 
$$\activedeg_S(v, \calW) \geq \frac{1}{16w^2}\epsilon_0^2\delta_0 \cdot \activedeg_{S}( \calW).$$
\end{enumerate}
\end{lemma}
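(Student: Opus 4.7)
The plan is to prove the two parts separately, with both relying on standard moment calculations once the right comparisons of $Y$ to $\meanest$ and $\varest$ are in place.

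For part 1, I would apply Chebyshev's inequality. The key is to verify that $\meanest$ and $\varest$ are genuinely a lower bound for $\E[Y]$ and an upper bound for $\Var[Y]$, respectively. For the mean: fix any active constraint $C$; if $h$ already satisfies $C$ then $C(h\cup g) = 1$ deterministically, and otherwise $C$ has at least one literal on a variable in $V \setminus S$, which under $g$ evaluates to the satisfying value with probability at least $1/4$ (since $p(v) \in [1/4, 3/4]$). So $\E[C(h \cup g)] \geq 1/4$, giving $\E[Y] \geq \meanest$. For the variance: expanding $\Var[Y] = \sum_{C_1, C_2 \in \Act(S)} \calW(C_1)\calW(C_2)\bigl(\E[C_1(h\cup g)\,C_2(h \cup g)] - \E[C_1(h \cup g)]\E[C_2(h \cup g)]\bigr)$, the summand is zero whenever $C_1, C_2$ depend on disjoint subsets of $V \setminus S$ (because then $C_1(h \cup g)$ and $C_2(h \cup g)$ are independent), and is at most $1$ otherwise. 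Thus $\Var[Y] \leq \sum_{C_1 \sim_S C_2} \calW(C_1)\calW(C_2) = \varest$. Chebyshev now gives $\Pr[Y < (1-\epsilon_0)\E[Y]] \leq \varest/(\epsilon_0 \meanest)^2 \leq \delta_0$ under the hypothesis.

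For part 2, the strategy is to upper bound $\varest$ by a single ``max'' active-degree times the total active degree, and to lower bound $\meanest$ by the total active degree divided by $4w$. For the first: since $C_1 \sim_S C_2$ means there exists $v \in V \setminus S$ with $v \in C_1 \cap C_2$,
\begin{equation*}
\varest \;\leq\; \sum_{v \in V\setminus S} \sum_{\substack{C_1, C_2 \in \Act(S)\\ v \in C_1, v \in C_2}} \calW(C_1)\calW(C_2) \;=\; \sum_{v \in V\setminus S} \activedeg_S(v,\calW)^2 \;\leq\; \bigl(\max_v \activedeg_S(v,\calW)\bigr) \cdot \activedeg_S(\calW).
\end{equation*}
For the second: each active constraint has at most $w$ variables in $V \setminus S$, so $\sum_{C \in \Act(S)} \calW(C) \geq \frac{1}{w} \sum_v \activedeg_S(v,\calW) = \frac{1}{w} \activedeg_S(\calW)$, giving $\meanest \geq \frac{1}{4w}\activedeg_S(\calW)$.

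Combining these under the hypothesis $\varest \geq \delta_0 \epsilon_0^2 \meanest^2$ yields
\begin{equation*}
\bigl(\max_v \activedeg_S(v,\calW)\bigr) \cdot \activedeg_S(\calW) \;\geq\; \delta_0 \epsilon_0^2 \meanest^2 \;\geq\; \frac{\delta_0 \epsilon_0^2}{16 w^2}\, \activedeg_S(\calW)^2,
\end{equation*}
and dividing through by $\activedeg_S(\calW)$ gives the required variable. The main conceptual obstacle is really just the verification of the $\E[C(h \cup g)] \geq 1/4$ bound for active-but-not-yet-satisfied SAT clauses (which uses only that a single active literal already suffices) and keeping track of the $w$-factor loss in going from per-constraint weight sums to per-variable active-degree sums; everything else is a routine Cauchy-Schwarz-free manipulation.
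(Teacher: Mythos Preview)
Your proof is correct and for Part~1 follows the paper exactly: bound $\E[Y]\ge\meanest$ using $p(v)\in[\nfrac14,\nfrac34]$, bound $\Var[Y]\le\varest$ by dropping independent pairs, and apply Chebyshev.

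For Part~2 you take a slightly different (and somewhat cleaner) route than the paper. The paper first finds a \emph{constraint} $C$ with $\sum_{C_2\sim_S C}\calW(C_2)\ge \frac14\delta_0\epsilon_0^2\meanest$, then picks a heavy variable among the at most $w$ variables of $C$ in $V\setminus S$, losing a second factor of $w$. You instead go straight to variables via $\varest\le\sum_{v}\activedeg_S(v,\calW)^2\le(\max_v\activedeg_S(v,\calW))\cdot\activedeg_S(\calW)$, combined with $\meanest\ge\frac{1}{4w}\activedeg_S(\calW)$. Both arguments land on the same constant $\frac{1}{16w^2}$; your version avoids the intermediate ``heavy constraint'' step at the cost of a slight overcount (pairs $C_1\sim_S C_2$ sharing several variables get counted multiple times in your first inequality), which is harmless for an upper bound.
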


The crux of the proof is that independent
of the assignment $h: S \to \{0,1\}$, $\E[Y] \geq \meanest$ and
$\Var(Y) \leq \varest$ (this crucially requires that the rounding is
independent and \emph{smooth}, \emph{i.e.}, $p(v) \in [\nfrac{1}{4},
\nfrac{3}{4}]$ for all $v$; this is why we end up using Trevisan's
rounding procedure in Theorem~\ref{thm:sattrev}).  The first part is
then a simple application of the Chebyshev inequality. For the second
part, we use the assumption that $\varest$ is large, to deduce that
there exists a constraint $C$ such that the total weight of
constraints that share a variable from $V \setminus S$ with $C,$
\emph{i.e.}, $\sum_{C_2 \sim_S C} \calW(C_2),$ is large. It then
follows that at least one variable $v \in C$ must have large
activedegree given $S.$
\begin{proof}
We first prove that $\Var(Y) \leq \varest$.
Recall that the indicator variable $C(h\cup g)$ denotes whether a
constraint $C$ is satisfied by the assignment $h\cup g$,
and note that:
$$Y = \sum_{C \in \Act(S)} \calW(C) \cdot C(h\cup g).$$
Thus, the variance of $Y$ is given by
\begin{align*}
  \Var(Y) &= \sum_{C_{1}, C_{2}\in \Act(S)} \calW(C_1) \calW(C_2) \cdot ( \E[C_1(h\cup g)C_2(h\cup g)] - \E[C_1(h\cup g)]\E[C_2(h\cup g)] )\\
&\leq \sum_{ C_{1} \sim_{S} C_{2} } \calW(C_1) \calW(C_2) = \varest,
\end{align*}
where the inequality holds because $\E[C_1(h\cup g)C_2(h\cup g)] -
\E[C_1(h\cup g)]\E[C_2(h\cup g)] \le 1$ for all $C_{1}, C_{2},$ and
$\E[C_1(h\cup g)C_2(h\cup g)] - \E[C_1(h\cup g)]\E[C_2(h\cup g)] = 0$ unless $C_{1} \sim_{S}
C_{2}$ because the rounding is performed independently for all the
variables. 

Moreover, since $p(v) \in [\nfrac{1}{4}, \nfrac{3}{4}]$ for all $v,$
we get that $\E[C(h \cup g)] \ge \nfrac{1}{4}$ for all $C\in \Act(S).$
Thus, we have $\E[Y] \ge \meanest$.  Given this, the first part of the
lemma easily follows from Chebyshev's inequality:
$$\Pr[Y <  (1-\epsilon_0) \E[Y] ] \leq \frac{\Var(Y)}{\epsilon_0^2 (\E[Y])^2} \leq \frac{\varest}{\epsilon_0^2 \meanest^2} \leq \delta_0.$$


For the second part of the lemma, we have:
\begin{align*}
  \delta_0 \epsilon_0^2 \meanest^2 &< \varest =\sum_{ C_{1} \sim_{S}
    C_{2} }
  \calW(C_1) \calW(C_2) \\
  & \leq \sum_{C_{1} \in \Act(S) } \calW(C_1) \left. \sum_{
      C_{2} \sim_{S} C_{1}} \calW(C_2)   \right. \\
  & \leq \left( \sum_{C_{1} \in \Act(S)} \calW(C_1) \right) \cdot
  \max_{C \in \Act(S)}\left. \sum_{ C_{2} \sim_{S} C } \calW(C_2)   \right. \\
  & = 4\cdot \meanest \cdot \max_{C \in \Act(S) }\left. \sum_{ C_{2}
      \sim_{S} C } \calW(C_2)\right. .
\end{align*}
Thus, there exists a constraint $C \in \Act(S)$ such that:
\begin{equation}
\label{eq:lemma:uvar_lmean_relation:maxcut:1}
\left. \sum_{ C_{2} \sim_{S} C} \calW(C_2) \right. \geq \frac{1}{4}\cdot\delta_0
\epsilon_0^2 \cdot \meanest \ge \frac{1}{16w}\delta_0 \epsilon_0^2
\cdot \activedeg_S(\calW), 
\end{equation}
where we used the fact that $\meanest = \frac{1}{4} \cdot ( \sum_{C
  \in \Act(S)} \calW(C) ) \geq \frac{1}{4w} \cdot
\activedeg_S(\calW)$, since we are counting the weight of a constraint
at most $w$ times in the expression $\activedeg_S(\calW).$ Finally,
the LHS of equation~\eqref{eq:lemma:uvar_lmean_relation:maxcut:1} is
at most $\sum_{u \in C \cap (V \setminus S)} \activedeg_{S}(u,
\calW)$.  Thus, there is some $u \in V \setminus S$ with:
$$\activedeg_{S}(u, \calW) \geq \frac{1}{16w^2}\delta_0 \epsilon_0^2 \cdot \activedeg_S(\calW).$$
\end{proof}

\subsection{Algorithm for Simultaneous \maxwsat}
In Figure~\ref{fig:weighted-maxwsat}, we give our algorithm for
simultaneous \maxwsat. The input to the algorithm consists of an
integer $k \ge 1,$ $\eps > 0$, and $k$ instances of \maxwsat, specified
by weight functions $\calW_1,\ldots,\calW_k,$ and target objective
values $c_1,\ldots,c_\ell.$
\begin{figure*}
\begin{tabularx}{\textwidth}{|X|}
\hline
\vspace{0mm}
{\bf Input}: $k$ instances of \maxwsat $\calW_1,\ldots,\calW_k$ on the
variable set $V,$ target objective values $c_1,\ldots,c_k,$ and $\eps > 0$. \\
{\bf Output}: An assignment to $V.$ \\
{\bf Parameters:} $\delta_0 = \frac{1}{10k}$, $\epsilon_0 =
\frac{\eps}{2}$, $\gamma=\frac{\eps_0^2\delta_0}{16w^2}$, $t =
\frac{2k}{\gamma}\cdot\log\left( \frac{11}{\gamma}\right).$
\begin{enumerate}[itemsep=0mm, label=\arabic*.]
\item Initialize $S \leftarrow \emptyset$.
\item For each instance $\ell \in [k]$, initialize $\cnt_\ell
  \leftarrow 0$ and $\flag_\ell \leftarrow \true.$
\item Repeat the following until for every $\ell \in [k]$, either $\flag_\ell = \false$ or
$\cnt_\ell  = t$:
\label{item:alg:max-w-sat:loop}
\begin{enumerate}
\item For each $\ell \in [k]$, compute
$\varest_\ell = \sum_{C_1 \sim_S C_2} \calW_\ell(C_1) \calW_\ell(C_2).$
\item For each $\ell \in [k],$ compute
$\meanest_\ell = \frac{1}{4} \sum_{C \in \Act(S)} \calW_\ell(C).$
\item For each $\ell \in [k],$ if $\varest_\ell \geq \delta_0
  \epsilon_0^2 \cdot \meanest_\ell^2 $, then set $\flag_\ell = \true$,
  else set $\flag_\ell = \false$.
\item Choose any $\ell \in [k]$, such that
$\cnt_\ell < t$ AND $\flag_\ell = \true$ (if any):
\label{item:alg:max-w-sat:high-var}
\begin{enumerate}
\item
\label{item:findv}
Find a variable $v \in V$ such that $\activedeg_S(v, \calW_\ell)
  \geq \gamma \cdot
  \activedeg_S(\calW_\ell).$ 
\item Set $S \leftarrow S \cup \{v\}.$ We say that $v$ was brought
  into $S$ because of instance $\ell$.
\item Set $\cnt_\ell \leftarrow \cnt_\ell + 1$.
\end{enumerate}
\end{enumerate}
\item For each partial assignment $h_0: S \to \{0,1\}$:
\label{item:alg:max-w-sat:partial}
\begin{enumerate}
\item 
\label{item:alg:max-w-sat:lp}
If there is a feasible solution $\vec{t}, \vec{z}$ to the LP in
Figure~\ref{fig:maxwsat-lp2}, set $p = \smooth(\vec{t})$.
If not, return to Step~\ref{item:alg:max-w-sat:partial} and proceed to the next $h_0$.
\item 
  \label{item:alg:max-w-sat:random}
Define $g: V\setminus S \to \{0,1\}$ by independently sampling
$g(v) \in \{0,1\}$ with $\E[g(v)] = p(v),$ for each $v \in V\setminus S.$
\item 
\label{item:alg:max-w-sat:innerh}
For each $h: S \to \{0,1\},$ compute $\out_{h,g} = \min_{\ell \in [k]}
\frac{\val(h \cup g,\calW_\ell)}{c_\ell}$. If $c_\ell = 0$ for some
$\ell \in [k],$ we interpret $\frac{\val(h\cup g,\calW_\ell)}{c_\ell}$ as
$+\infty.$
\end{enumerate}
\item Output the largest $\out_{h,g}$ seen, and the assignment $h \cup g$.
\end{enumerate}
\\ 
\hline
\end{tabularx}
\caption{Algorithm {\algowwsat} for approximating weighted simultaneous
  \maxwsat}
  \label{fig:weighted-maxwsat}
\end{figure*}


\begin{figure*}
\begin{tabularx}{\textwidth}{|X|}
\hline
\vspace{-26pt}
\begin{center}
\[\begin{array}{rrllr}
 \sum_{v\in C^{+}} t_v + \sum_{v\in C^{-}} (1-t_v) &\geq & z_C & \forall C\in \calC\\
 1\geq z_C &\geq & 0 & \forall C\in \calC\\
 1\geq t_v &\geq & 0 & \forall v \in V\setminus S\\
 t_v & = & h_0(v) & \forall v \in S \\
\end{array}\]
\vspace{-24pt}
\end{center}
\\
\hline
\end{tabularx}
\caption{Linear program $\LLwsat_1(h_0),$ for a given partial
  assignment $h_0: S \to \{0,1\}$}
  \label{fig:maxwsat-lp1}
\end{figure*}

\begin{figure*}
\begin{tabularx}{\textwidth}{|X|}
\hline
\vspace{-26pt}
\begin{center}
\[\begin{array}{rrllr}
\sum_{C\in \calC} \calW_\ell(C)\cdot z_C  \geq c_\ell & \forall \ell\in [k]\\
\vect, \vecz  \text{ satisfy }  \LLwsat_1(h_0).&
\end{array}\]
\vspace{-24pt}
\end{center}
\\
\hline
\end{tabularx}
\caption{Linear program $\LLwsat_2(h_0)$ for a given partial assignment
  $h_0: S \to \{0,1\}$}
  \label{fig:maxwsat-lp2}
\end{figure*}

\subsection{Analysis of Algorithm {\algowwsat}}
\label{sec:max-w-sat:analysis}
%
It is easy to see that the algorithm always terminates in polynomial
time. Part 2 of Lemma~\ref{lemma:max-w-sat:uvar_lmean_relation} implies
that that Step~\ref{item:findv} always succeeds in finding a variable
$v$. Next, we note that Step~\ref{item:alg:max-w-sat:loop} always
terminates. Indeed, whenever we find an instance $\ell \in [k]$ in
Step~\ref{item:alg:max-w-sat:high-var} such that $\cnt_\ell < t$ and
$\flag_\ell = \true,$ we increment $\cnt_\ell.$ This can happen only
$tk$ times before the condition $\cnt_\ell < t$ fails for all $\ell
\in [k].$ Thus the loop must terminate within $tk$ iterations.

Let $S^\star$ denote the final set $S$ that we get at the end of
Step~\ref{item:alg:max-w-sat:loop} of {\algowwsat}. To analyze the
approximation guarantee of the algorithm, we classify instances
according to how many vertices were brought into $S^\star$ because of
them.
\begin{definition}[Low and high variance instances]
At the completion of Step~\ref{item:alg:max-w-sat:high-var} in Algorithm
{\algowwsat}, if $\ell \in [k]$ satisfies $\cnt_\ell = t$, we call
instance $\ell$ a {\em high variance} instance.  Otherwise we call
instance $\ell$ a \emph{low variance} instance.
\end{definition}
%



At a high level, the analysis will go as follows: First we analyze
what happens when we give the optimal assignment to $S^\star$ in
Step~\ref{item:alg:max-w-sat:partial} For low variance instances, the
fraction of the constraints staisfied by the LP rounding will
concentrate around its expectation, and will give the desired
approximation. For every high variance instance, we will see that many
of its ``heavy-weight" vertices were brought into $S^\star$, and we
will use this to argue that we can satisfy a large fraction of the
constraints from these high variance instances by suitably perturbing
the optimal assignment to $S^\star$ to these ``heavy-weight''
vertices. It is crucial that this perturbation is carried out without
significantly affecting the value of the low variance instances.

Let $f^\star: V \to \{0,1\}$ be an assignment such that $\val(f^\star,
\calW_\ell) \geq c_\ell$ for each $\ell$. Let $h^\star = f^\star
|_{S^\star}$. Claim 1 from Theorem~\ref{thm:sattrev} implies that
$\LLwsat_2(h^\star)$ has a feasible solution.
For low variance instances, by combining Theorem~\ref{thm:sattrev} and
Lemma~\ref{lemma:max-w-sat:uvar_lmean_relation}, we show that
$\val(h^\star \cup g, \calW_\ell)$ is at least $\left(\nfrac{3}{4} -
  \nfrac{\epsilon}{2}\right) \cdot c_\ell$ with high probability.
\begin{lemma}
\label{lemma:maxwsat:lowvariance}
Let $\ell \in [k]$ be any low variance instance. 
Let $\vec{t}, \vec{z}$ be a feasible solution to $\LLwsat_2(h^\star)$.
Let $p = \smooth(\vec{t})$. 
Let $g : V \setminus S^\star \to \{0,1\}$ be such that each $g(v)$ is
sampled independently with $\E[g(v)] = p(v)$. Then the assignment
$h^\star \cup g$ satisfies:
$$ \Pr_{g} \left[ \val(h^\star
\cup g,\calW_\ell) \geq
  (\nicefrac{3}{4}-\nicefrac{\epsilon}{2})\cdot c_\ell \right] \geq 1-\delta_0.$$
\end{lemma}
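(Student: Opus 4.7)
The plan is to follow the same template used in Lemma~\ref{lemma:low_variance:maxand} for simultaneous \maxand, now combined with the stronger $\nfrac{3}{4}$-rounding guarantee from Theorem~\ref{thm:sattrev}. The key observation is that since $\ell$ is a low variance instance, the loop in Step~\ref{item:alg:max-w-sat:loop} must have terminated with $\flag_\ell = \false$, i.e.,
\[
\varest_\ell \;<\; \delta_0\,\epsilon_0^2 \cdot \meanest_\ell^{\,2},
\]
evaluated with respect to the final set $S^\star$. Since $\varest_\ell$ and $\meanest_\ell$ depend only on $S^\star$ (and not on the partial assignment $h^\star$), this puts us exactly in the regime of part (1) of Lemma~\ref{lemma:max-w-sat:uvar_lmean_relation}, applied with $S = S^\star$ and $h = h^\star$.

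Concretely, I would define $Y \defeq \val(h^\star \cup g, \calW_\ell) - \val(h^\star, \calW_\ell) = \sum_{C \in \Act(S^\star)} \calW_\ell(C) \cdot C(h^\star \cup g)$, noting that this is precisely the random variable considered in Section~\ref{section:maxwsat:random}. By the lemma, with probability at least $1-\delta_0$ over the choice of $g$, we have $Y \ge (1-\epsilon_0)\E[Y]$, which rearranges to
\[
\val(h^\star \cup g, \calW_\ell) \;\ge\; (1-\epsilon_0)\,\E_g\!\left[\val(h^\star \cup g, \calW_\ell)\right] + \epsilon_0 \cdot \val(h^\star, \calW_\ell) \;\ge\; (1-\epsilon_0)\,\E_g\!\left[\val(h^\star \cup g, \calW_\ell)\right].
\]

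Next, I would invoke part (2) of Theorem~\ref{thm:sattrev} (the Trevisan rounding guarantee): because $\vec{t}, \vec{z}$ is feasible for $\LLwsat_1(h^\star)$ and $p = \smooth(\vec{t})$, we have
\[
\E_g\!\left[\val(h^\star \cup g, \calW_\ell)\right] \;\ge\; \tfrac{3}{4}\sum_{C \in \calC} \calW_\ell(C)\,z_C \;\ge\; \tfrac{3}{4}\,c_\ell,
\]
where the last inequality uses that $\vec{t}, \vec{z}$ is feasible for $\LLwsat_2(h^\star)$, whose existence follows from part (1) of Theorem~\ref{thm:sattrev} applied to $g_0 = f^\star|_{V \setminus S^\star}$. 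Chaining the two bounds and using $\epsilon_0 = \epsilon/2$ gives
\[
\val(h^\star \cup g, \calW_\ell) \;\ge\; (1-\epsilon_0)\cdot \tfrac{3}{4}\,c_\ell \;\ge\; \left(\tfrac{3}{4} - \tfrac{\epsilon}{2}\right) c_\ell,
\]
with probability at least $1-\delta_0$, which is the claim.

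There is no real obstacle here; the proof is essentially a two-line deduction from the earlier results. The only subtlety to double-check is that the quantities $\varest_\ell, \meanest_\ell$ computed by the algorithm match the quantities in Lemma~\ref{lemma:max-w-sat:uvar_lmean_relation} evaluated against the smooth rounding $p = \smooth(\vec{t})$ (since the lemma requires $p(v) \in [\nicefrac{1}{4}, \nicefrac{3}{4}]$, which $\smooth$ guarantees, and the variance and mean estimates there are independent of $h$). Once that is noted, the argument is routine.
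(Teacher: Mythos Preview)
Your proposal is correct and follows essentially the same approach as the paper's proof: use the low-variance flag to invoke part~(1) of Lemma~\ref{lemma:max-w-sat:uvar_lmean_relation}, conclude $\val(h^\star\cup g,\calW_\ell)\ge(1-\epsilon_0)\,\E_g[\val(h^\star\cup g,\calW_\ell)]$, and then chain with the $\nfrac{3}{4}$-rounding bound from Theorem~\ref{thm:sattrev} together with feasibility for $\LLwsat_2(h^\star)$. The subtlety you flag (that $\varest_\ell,\meanest_\ell$ depend only on $S^\star$ and that $\smooth$ guarantees $p(v)\in[\nfrac14,\nfrac34]$) is exactly the point that makes the argument go through.
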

\begin{proof}
  Since $\ell$ is a {\em low variance} instance, $\flag_{\ell} =
  \false$ when the algorithm terminates. Thus $\varest_\ell <
  \delta_0\epsilon_0^2 \cdot \meanest_\ell^2$.  Let $g: V \to \{0,1\}$
  be the random assignment picked in
  Step~\ref{item:alg:max-w-sat:random}. Define the random
  variable \[Y_\ell \defeq \val(h^\star \cup g,\calW_\ell) -
  \val(h^\star,\calW_\ell).\]

By Lemma~\ref{lemma:max-w-sat:uvar_lmean_relation}, we know that with probability
  at least $1-\delta_0,$ we have $ Y_\ell \ge (1- \epsilon_0) \E[Y_\ell].$
  Thus, with probability at least $1-\delta_0,$ we have,
\begin{align*}
  \val(h^\star\cup g,\calW_\ell) & = \val(h^\star,\calW_\ell) + Y_\ell \ge
  \val(h^\star,\calW_\ell)  + (1- \epsilon_0) \E[Y_\ell]\\
  & \ge (1-\epsilon_0) \cdot \av[\val(h^\star,\calW_\ell) + Y_\ell] =
  (1-\epsilon_0)\cdot \av[\val(h^\star\cup g,W_\ell)] \\
& \ge \nfrac{3}{4}\cdot (1-\epsilon_0) \cdot \sum_{C \in \calC} \calW_\ell(C) z_C \ge
  \left(\nfrac{3}{4} - \nfrac{\eps}{2}\right) \cdot c_\ell ,
\end{align*}
where the last two inequalities follow from Claim 2 in
Theorem~\ref{thm:sattrev} and the constraints in $\LLwsat_2$
respectively.
\end{proof}

Now we analyze the high variance instances. We prove the following
lemma that proves that at the end of the algorithm, the activedegree
of high variance instances is small, and is dominated by the
activedegree of any variable that was included in $S$ ``early on".
\begin{lemma}
\label{lemma:earlyvariables}
For all high variance instances $\ell\in [k],$ we have 
\begin{enumerate}
\item $\activedeg_{S^\star}(\calW_\ell) \leq w(1-\gamma)^t.$ 
\item For each of the first $\nicefrac{t}{2}$ variables that were brought
  inside $S^\star$ because of instance $\ell,$ the total weight of constraints
  incident on each of that variable and totally contained inside $S^\star$ is at
  least $10\cdot\activedeg_{S^\star}(\calW_\ell).$
\end{enumerate}
\end{lemma}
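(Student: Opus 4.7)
The plan is to track the evolution of $\activedeg_S(\calW_\ell)$ as the algorithm grows $S$. The engine of the proof is the identity
\[ \activedeg_{S \cup \{v\}}(\calW_\ell) = \activedeg_S(\calW_\ell) - \activedeg_S(v,\calW_\ell) \qquad (v \in V \setminus S), \]
which I would derive from the double-counting $\activedeg_S(\calW_\ell) = \sum_{C \in \calC} |C \cap (V \setminus S)| \cdot \calW_\ell(C)$: moving $v$ into $S$ reduces $|C \cap (V\setminus S)|$ by one for exactly the constraints $C$ containing $v$, and for such $C$ we have $C \in \Act(S)$ automatically. A trivial consequence is that adding any variable to $S$ can only decrease $\activedeg_S(\calW_\ell)$. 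Also, since each constraint has arity at most $w$ and the weights sum to $1$, $\activedeg_\emptyset(\calW_\ell) \leq w$.

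For Part $1$, I would walk through the main loop. Whenever a variable $v$ is brought into $S$ because of instance $\ell$ in Step~\ref{item:findv}, the algorithm guarantees $\activedeg_S(v,\calW_\ell) \geq \gamma \cdot \activedeg_S(\calW_\ell)$, so the identity above shrinks $\activedeg(\calW_\ell)$ by a factor of at least $(1-\gamma)$; additions triggered by other instances $\ell' \neq \ell$ only shrink it further. Because a high variance instance has $\cnt_\ell = t$ at termination, there are $t$ shrinkage steps of factor $(1-\gamma)$, yielding $\activedeg_{S^\star}(\calW_\ell) \leq w(1-\gamma)^t$ directly.

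For Part $2$, fix $i \leq t/2$ and let $v_i$ denote the $i$-th variable brought in because of instance $\ell$, with $S_i$ denoting the set $S$ just before $v_i$ was added. The weight of constraints containing $v_i$ and fully inside $S^\star$ equals the total weight of constraints containing $v_i$ minus the weight of those still having some variable outside $S^\star$. The first term is exactly $\activedeg_{S_i}(v_i,\calW_\ell) \geq \gamma \cdot \activedeg_{S_i}(\calW_\ell)$, and the second term is bounded above by $\sum_{C \in \Act(S^\star)} \calW_\ell(C) \leq \activedeg_{S^\star}(\calW_\ell)$, since every such $C$ satisfies $|C \cap (V\setminus S^\star)| \geq 1$. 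Re-running the Part~$1$ argument between time $i$ and termination, at least $t - i + 1 \geq t/2$ additions occur because of instance $\ell$, so $\activedeg_{S^\star}(\calW_\ell) \leq (1-\gamma)^{t/2} \activedeg_{S_i}(\calW_\ell)$. Combining, the desired weight is at least $\bigl(\gamma(1-\gamma)^{-t/2} - 1\bigr)\cdot\activedeg_{S^\star}(\calW_\ell)$, and the choice $t = (2k/\gamma)\log(11/\gamma)$ together with $1-\gamma \leq e^{-\gamma}$ gives $\gamma(1-\gamma)^{-t/2} \geq \gamma(11/\gamma)^k \geq 11$, which yields the required factor $10$. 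The whole argument is essentially bookkeeping around the one identity; I do not see a real obstacle beyond being careful that additions for other instances do not disrupt the geometric decay (and indeed they only help).
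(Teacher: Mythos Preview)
Your proposal is correct and follows essentially the same approach as the paper's proof: both track the geometric decay of $\activedeg_S(\calW_\ell)$ via the drop $\activedeg_{S\cup\{u\}}(\calW_\ell)\le(1-\gamma)\activedeg_S(\calW_\ell)$ at each step triggered by instance $\ell$, and for Part~2 both subtract off the at-most-$\activedeg_{S^\star}(\calW_\ell)$ weight of constraints touching $V\setminus S^\star$ from the bound $\activedeg_{S_u}(u,\calW_\ell)\ge 11\cdot\activedeg_{S^\star}(\calW_\ell)$. Your explicit double-counting identity is a bit more precise than the paper's inequality version, but the arguments are the same.
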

The crucial observation is
that when a variable $u$ is brought into $S$ because of an instance
$\ell,$ the activedegree of $u$ is at least a $\gamma$ fraction of the
total activedegree of instance $\ell.$ Thus, the activedegree of
instance $\ell$ goes down by a multiplicative factor of $(1-\gamma).$
This immediately implies the first part of the lemma. For the second
part, we use the fact that $t$ is large, and hence the activedegree of
early vertices must be much larger than the final activedegree of
instance $\ell$.
\begin{proof}
  Consider any {\em high variance} instance $\ell \in[k]$. Initially,
  when $S=\emptyset,$ we have $\activedeg_\emptyset(\calW_\ell) \leq
  w$ since the weight of every constraint is counted at most $w$ times, once for
  each of the 2 active variables of the constraint,
  and $\sum_{C\in \calC } \calW_\ell(C)= 1$. 
For every $v$, note that $\activedeg_{S_2}(v,\calW_\ell) \leq
  \activedeg_{S_1}(v,\calW_\ell)$ whenever $S_1 \subseteq S_2$. 

  Let $u$ be one of the variables that ends up in $S^\star$ because of
  instance $\ell.$ Let $S_u$ denote the set $S \subseteq S^\star$ just
  before $u$ was brought into $S^\star$.  When $u$ is added to $S_u$,
  we know that $\activedeg_{S_u}(u, \calW_\ell) \geq \gamma \cdot
  \activedeg_{S_u}(\calW_\ell).$ Hence, $\activedeg_{S_u \cup
    \{u\}}(\calW_\ell) \le \activedeg_{S_u}(\calW_\ell) -\activedeg_{S_u}(u,
  \calW_\ell) \le (1-\gamma) \cdot \activedeg_{S_u}(\calW_\ell).$ Since
  $t$ variables were brought into $S^\star$ because of instance
  $\ell,$ and initially $\activedeg_{\emptyset}(\calW_\ell) \le w,$ we
  get $\activedeg_{S^\star}(\calW_\ell) \le w(1-\gamma)^t.$

  Now, let $u$ be one of the first $\nfrac{t}{2}$ variables that ends
  up in $S^\star$ because of instance $\ell.$ Since at least
  $\nfrac{t}{2}$ variables are brought into $S^\star$ because of
  instance $\ell,$ after $u,$ as above, we get
  $\activedeg_{S^\star}(\calW_\ell) \le (1-\gamma)^{\nfrac{t}{2}}
  \cdot \activedeg_{S_u}(\calW_\ell).$ Combining with
  $\activedeg_{S_u}(u, \calW_\ell) \geq \gamma \cdot
  \activedeg_{S_u}(\calW_\ell),$ we get $\activedeg_{S_u}(u,
  \calW_\ell) \geq \gamma (1-\gamma)^{-\nfrac{t}{2}}
  \activedeg_{S^\star}(\calW_\ell),$
  which is at least $11\cdot\activedeg_{S^\star}(\calW_\ell),$ by the
  choice of parameters. Since any constraint incident on a vertex in
  $V \setminus S^\star$ contributes its weight to
  $\activedeg_{S^\star}(\calW_\ell),$ the total weight of constraints
  incident on $u$ and totally contained inside $S^\star$ is at least
  $10\cdot\activedeg_{S^\star}(\calW_\ell)$ as required.  
%
%
\end{proof}





We now describe a procedure {\sc Perturb} (see
Figure~\ref{fig:maxcut-perturb}) which takes $h^\star: S^\star \to \B$ and
$g : V \setminus S^\star \to \B$, and produces 
a new $h : S^\star \to \B$ such that
for all (low variance as well as  high variance) instances $\ell \in [k]$,
$\val(h \cup g, \calW_\ell)$ is not much smaller than $\val(h^\star \cup g, \calW_\ell)$, and 
furthermore, for all high variance instances $\ell \in [k]$, $\val(h
\cup g, \calW_\ell)$ is large. The procedure works by picking a
special variable in $S^\star$ for every { high variance} instance and
perturbing the assignment of $h^\star$ to these special variables. The
crucial feature used in the perturbation procedure, which holds for
\maxwsat (but not for \maxand), is that it is possible to satisfy a
constraint by just changing one of the variables it depends on.  The
partial assignment $h$ is what we will be using to argue that
Step~\ref{item:alg:max-w-sat:partial} of the algorithm produces a good
Pareto approximation.  More formally, we have the following Lemma.

\begin{figure*}
\begin{tabularx}{\textwidth}{|X|}
\hline
\vspace{0mm}
{\bf Input}: $h^\star: S^\star \to \B$ and
$g : V \setminus S^\star \to \B$\\
{\bf Output}: A perturbed assignment $h :  S^\star \to \{0,1\}.$ 
\begin{enumerate}[itemsep=0mm, label=\arabic*.]
\item Initialize $h \leftarrow h^\star.$
\item For $\ell = 1, \ldots, k$, if instance $\ell$ is a high variance
  instance case (i.e., $\cnt_\ell = t$), we pick a special variable
  $v_\ell \in S^\star$ associated to this instance as follows:
\begin{enumerate}
\item Let $B = \{ v \in V \mid \exists \ell \in [k] \mbox{ with }
  \sum_{C \in \calC, C \owns v} \calW_{\ell}(C)\cdot C(h \cup g) \geq
  \frac{\epsilon}{2k} \cdot\val(h\cup g, \calW_\ell) \}$.  Since the weight of
  each constraint is counted at most $w$ times, we know that $|B| \leq \frac{2wk^2}{\epsilon}$.
\item Let $U$ be the set consisting of the first $t/2$ variables
brought into $S^\star$ because  of instance $\ell$.
\item Since $\nfrac{t}{2} > |B|+ k$, there exists some $u \in U$
  such that $u \not\in B \cup \{v_1, \ldots, v_{\ell-1}\}$.  We define
  $v_\ell$ to be $u$.
\item By Lemma~\ref{lemma:earlyvariables}, the total $\calW_\ell$ weight of constraints that are incident on $v_\ell$
  and only containing variables from $S^\star$ is at least
  $10\cdot\activedeg_{S^\star}(\calW_\ell)$. We update $h$ by 
setting $h(v_\ell)$ to be that value from $\{0,1\}$
such that at least half of the $\calW_\ell$ weight of these constraints is
  satisfied.
\end{enumerate}
\item Return the assignment $h.$
\end{enumerate}
\\
\hline
\end{tabularx}
\caption{Procedure {\sc Perturb} for perturbing the optimal
  assignment}
  \label{fig:maxcut-perturb}
\end{figure*}

\begin{lemma}
\label{lemma:maxcut_perturbationeffect} 
For the assignment $h$ obtained from Procedure {\sc Perturb} (see
Figure~\ref{fig:maxcut-perturb}), for each $\ell\in [k]$, $\val(h\cup
g,\calW_\ell) \geq (1-\nicefrac{\epsilon}{2}) \cdot \val(h^\star \cup
g, \calW_\ell)$. Furthermore, for each {\em high variance} instance
$\calW_\ell$, $\val(h\cup g,\calW_\ell) \geq
4 \cdot \activedeg_{S^\star}(\calW_\ell).$
\end{lemma}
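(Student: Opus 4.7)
The plan is to verify the two claims separately, with both resting on the dichotomy that drives the choice of each special variable $v_\ell$: on the one hand, $v_\ell$ is selected from outside the ``forbidden'' set $B$, so flipping $h$ at $v_\ell$ can only decrease any $\val(h \cup g, \calW_j)$ by a small multiplicative factor; on the other hand, by Lemma~\ref{lemma:earlyvariables}, $v_\ell$ is one of the early variables brought into $S^\star$ for instance $\ell$ and therefore has huge backward weight in $\calW_\ell$, so flipping $h(v_\ell)$ can boost $\val(h \cup g, \calW_\ell)$ dramatically for its own instance.

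First I would justify that the procedure is well defined, i.e., that some $v_\ell \in U \setminus (B \cup \{v_1,\dots,v_{\ell-1}\})$ always exists. A counting argument: each of the $k$ instances contributes at most $\tfrac{2wk}{\eps}$ variables to $B$ (every constraint appears in the sum defining $B$ at most $w$ times, once per variable it depends on), so $|B|\le \tfrac{2wk^2}{\eps}$. Combined with $|\{v_1,\dots,v_{\ell-1}\}|<k$ and $|U|=t/2$, the parameter choice $t=\Theta(k^2 w^2 \eps^{-2}\log(1/\gamma))$ gives $t/2>|B|+k$, as asserted in the procedure.

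For the first claim, the key step is a per-flip bound. At the moment we pick $v_\ell$, since $v_\ell\notin B$, for every $j\in[k]$ we have
\[
\sum_{C\owns v_\ell} \calW_j(C)\cdot C(h\cup g) < \frac{\eps}{2k}\cdot \val(h\cup g,\calW_j).
\]
The only constraints whose satisfaction status can change when $h(v_\ell)$ is flipped are those that contain $v_\ell$; in the worst case every previously satisfied such constraint becomes unsatisfied, so the new value of $\val(h\cup g,\calW_j)$ is at least $(1-\tfrac{\eps}{2k})$ times the old value. At most $k$ such flips occur during the procedure, so by Bernoulli's inequality the cumulative multiplicative loss is at least $(1-\tfrac{\eps}{2k})^{k}\ge 1-\tfrac{\eps}{2}$. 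This yields part~1.

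For the second claim, fix a high variance instance $\ell$ and consider the moment $v_\ell$ is processed. By part~2 of Lemma~\ref{lemma:earlyvariables}, the total $\calW_\ell$-weight of clauses containing $v_\ell$ that are wholly inside $S^\star$ is at least $10\cdot \activedeg_{S^\star}(\calW_\ell)$. Since each such clause is satisfied whenever the literal of $v_\ell$ it contains is true, one of the two settings of $h(v_\ell)$ satisfies clauses of total weight at least $5\cdot \activedeg_{S^\star}(\calW_\ell)$; these clauses are frozen in the sense that only modifications of $h$ at their other $S^\star$-variables can later unfreeze them. After this update, $\val(h\cup g,\calW_\ell)\ge 5\cdot \activedeg_{S^\star}(\calW_\ell)$. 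The remaining at most $k-\ell$ iterations can only multiply this by a factor of $(1-\tfrac{\eps}{2k})^{k-\ell}\ge 1-\tfrac{\eps}{2}$ by the per-flip bound just proved, so at the end $\val(h\cup g,\calW_\ell) \ge 5(1-\tfrac{\eps}{2})\cdot\activedeg_{S^\star}(\calW_\ell)\ge 4\cdot\activedeg_{S^\star}(\calW_\ell)$ for any $\eps\le 2/5$, which is the only regime of interest. The main subtlety is that $B$ is recomputed against the \emph{current} $h,g$ at each iteration, so the decrement bound must be applied with respect to the current value; but this is precisely what composes into a multiplicative (rather than additive) bound across the $k$ iterations.
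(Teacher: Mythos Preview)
Your proposal is correct and follows essentially the same approach as the paper's proof: the per-flip multiplicative bound from $v_\ell\notin B$, composed over at most $k$ iterations via $(1-\tfrac{\eps}{2k})^k\ge 1-\tfrac{\eps}{2}$, gives part~1; and for part~2 you invoke Lemma~\ref{lemma:earlyvariables} to get backward weight $\ge 10\cdot\activedeg_{S^\star}(\calW_\ell)$, satisfy half of it by choosing $h(v_\ell)$, and absorb the subsequent multiplicative loss. You also make explicit a couple of points the paper leaves implicit (that $B$ is recomputed each iteration so the bounds compose multiplicatively, and that $\eps\le 2/5$ is needed for $5(1-\tfrac{\eps}{2})\ge 4$).
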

\begin{proof}
  Consider the special variable $v_\ell$ that we choose for {\em high
    variance} instance $\ell \in [k]$. Since $v_\ell \notin B,$ the
  constraints incident on $v_\ell$ only contribute at most a
  $\nfrac{\eps}{2k}$ fraction of the objective value in each
  instance. 
  Thus, changing the assignment $v_\ell$ can reduce the value of any
  instance by at most a $\frac{\epsilon}{2k}$ fraction of their
  current objective value. Also, we pick different special variables
  for each {\em high variance} instance. Hence, the total effect of
  these perturbations on any instance is that it reduces the objective
  value (given by $h^\star \cup g$) by at most $1 - (1-
  \frac{\epsilon}{2k})^k \le \frac{\epsilon}{2}$ fraction.  Hence for
  all instances $\ell \in [k]$, $\val(h\cup g,\calW_\ell) \geq
  (1-\nicefrac{\epsilon}{2})\cdot\val(h^\star\cup g,\calW_\ell)$.

  For a {\em high variance instance} $\ell \in [k]$, since $v_\ell \in
  U,$ the variable $v_\ell$ must be one of the first $\nfrac{t}{2}$
  variables brought into $S^\star$ because of $\ell.$ Hence, by
  Lemma~\ref{lemma:earlyvariables} the total weight of constraints
  that are incident on $v_\ell$ and entirely contained inside
  $S^\star$ is at least
  $10\cdot\activedeg_{S^\star}(\calW_\ell)$. Hence, there is an
  assignment to $v_\ell$ that satisfies at least at least half the
  weight of these \maxwsat constraints\footnote{This is not true if
    they are \maxand constraints.} in $\ell.$ At the end of the
  iteration, when we pick an assignment to $v_\ell,$ we have
  $\val(h\cup g, \calW_\ell) \ge
  5\cdot\activedeg_{S^\star}(\calW_\ell).$ Since the later
  perturbations do not affect value of this instance by more than
  $\nfrac{\epsilon}{2}$ fraction, we get that for the final assignment
  $h$, $\val(h\cup g,\calW_\ell) \ge (1-\nfrac{\epsilon}{2})\cdot 5
  \cdot \activedeg_{S^\star}(\calW_\ell) \geq 4\cdot
  \activedeg_{S^\star}(\calW_\ell).$
\end{proof}

Given all this, we now show that with high probability the algorithm
finds an assignment that satisfies, for each $\ell \in [k]$, at least
$(\nfrac{3}{4}-\epsilon)\cdot c_\ell$ weight from instance
$\calW_\ell$. The following theorem immediately implies
Theorem~\ref{theorem:results:maxwsat}.
\begin{theorem}
Let $w$ be a constant. Suppose we're given $\eps \in (0,\nfrac{2}{5}],$ $k$ simultaneous
\maxwsat instances $\calW_1, \ldots, \calW_\ell$ on $n$ variables, and
target objective value $c_1, \ldots,c_k$ with the guarantee that there
exists an assignment $f^\star$ such that for each $\ell \in [k],$ we
have $\val( f^\star, \calW_\ell) \ge c_\ell.$ Then, the algorithm
{\algowwsat} runs in time $2^{O(\nfrac{k^3}{\eps^2}\log
(\nfrac{k}{\eps^2}))}\cdot\poly(n),$ and with probability at least
$0.9,$ outputs an assignment $f$ such that for each $\ell \in [k],$ we
have, $\val(f, \calW_\ell) \ge \left( \nfrac{3}{4} -\epsilon\right)
\cdot c_\ell.$ 
\end{theorem}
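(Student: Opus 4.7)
The plan is to combine the three structural lemmas already established (Lemmas~\ref{lemma:maxwsat:lowvariance}, \ref{lemma:earlyvariables}, and \ref{lemma:maxcut_perturbationeffect}) with a careful choice of which partial assignment the algorithm enumerates over. First I would fix $h^\star \defeq f^\star|_{S^\star}$ as the ``ideal'' partial assignment and verify that it is reached in Step~\ref{item:alg:max-w-sat:partial}: by the Relaxation part of Theorem~\ref{thm:sattrev} applied with $g_0 = f^\star|_{V\setminus S^\star}$, the $(\vec t, \vec z)$ coming from $f^\star$ itself satisfies $\LLwsat_1(h^\star)$ with $\sum_C \calW_\ell(C) z_C = \val(f^\star,\calW_\ell) \geq c_\ell$, so $\LLwsat_2(h^\star)$ is feasible and the algorithm executes Step~\ref{item:alg:max-w-sat:random} for $h_0 = h^\star$. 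Let $g$ be the resulting random assignment to $V\setminus S^\star$.

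Next I would analyze the two kinds of instances against $h^\star\cup g$. For every low-variance instance $\ell$, Lemma~\ref{lemma:maxwsat:lowvariance} gives $\val(h^\star\cup g,\calW_\ell)\geq(\tfrac34-\tfrac\epsilon2)c_\ell$ with probability at least $1-\delta_0$; a union bound over at most $k$ such instances yields success for all of them with probability at least $1-k\delta_0 \geq 0.9$. Condition on this event. For every high-variance instance $\ell$, Lemma~\ref{lemma:earlyvariables} tells us that many ``early'' vertices were brought into $S^\star$ because of $\ell$, each carrying at least $10\cdot\activedeg_{S^\star}(\calW_\ell)$ weight of constraints entirely inside $S^\star$.

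Then I would run the deterministic procedure {\sc Perturb} (Figure~\ref{fig:maxcut-perturb}) on the pair $(h^\star,g)$ to get $h$. Lemma~\ref{lemma:maxcut_perturbationeffect} gives two guarantees simultaneously: (i) for every $\ell\in[k]$, $\val(h\cup g,\calW_\ell)\geq(1-\tfrac\epsilon2)\val(h^\star\cup g,\calW_\ell)$, and (ii) for every high-variance $\ell$, $\val(h\cup g,\calW_\ell)\geq 4\cdot\activedeg_{S^\star}(\calW_\ell)$. Combining (i) with the low-variance bound shows each low-variance $\ell$ achieves $(1-\tfrac\epsilon2)(\tfrac34-\tfrac\epsilon2)c_\ell \geq (\tfrac34-\epsilon)c_\ell$. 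For high-variance $\ell$, I would write $c_\ell\leq \val(h^\star\cup g,\calW_\ell) + \activedeg_{S^\star}(\calW_\ell)$ (since the only constraints $f^\star$ can satisfy but $h^\star\cup g$ misses are active ones) and combine with (i) and (ii): $\val(h\cup g,\calW_\ell)\geq(1-\tfrac\epsilon2)\max\{c_\ell-\activedeg_{S^\star}(\calW_\ell),\,4\cdot\activedeg_{S^\star}(\calW_\ell)\}\geq\tfrac45(1-\tfrac\epsilon2)c_\ell\geq(\tfrac34-\epsilon)c_\ell$ for $\epsilon\leq 2/5$. Since the algorithm enumerates over all $h:S^\star\to\{0,1\}$ in Step~\ref{item:alg:max-w-sat:innerh}, the particular perturbed $h$ will be tried and $\out_{h,g}\geq \tfrac34-\epsilon$.

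For the running time, the main point is that $|S^\star|\leq tk$ because Step~\ref{item:alg:max-w-sat:high-var} is executed at most $tk$ times, and $t=O(\tfrac{k}{\gamma}\log\tfrac1\gamma)=O(\tfrac{k}{\epsilon^2}\log\tfrac{k}{\epsilon})$ with $\gamma=\Theta(\epsilon^2/k)$. Enumerating partial assignments in Steps~\ref{item:alg:max-w-sat:partial} and~\ref{item:alg:max-w-sat:innerh} costs $2^{O(tk)}=2^{O(k^3/\epsilon^2\cdot\log(k/\epsilon))}$, and every other step (LP solving, evaluating $\varest_\ell,\meanest_\ell$, computing $\val$) is polynomial in $n$ and $k$. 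The main subtlety, which the proof above really just bookkeeps, is that both the multiplicative loss from perturbation and the multiplicative loss from the LP rounding for high-variance instances must fit into a single $(\tfrac34-\epsilon)$ bound; this is exactly where the numerical identity $\max\{c_\ell-\finalW_\ell,4\finalW_\ell\}\geq\tfrac45 c_\ell$ and the choice $\epsilon\leq 2/5$ are used.
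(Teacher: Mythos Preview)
Your proposal is correct and follows essentially the same route as the paper's proof: fix $h^\star=f^\star|_{S^\star}$, invoke Theorem~\ref{thm:sattrev} for feasibility, use Lemma~\ref{lemma:maxwsat:lowvariance} with a union bound for low-variance instances, apply {\sc Perturb} and Lemma~\ref{lemma:maxcut_perturbationeffect}, and finish with the $\max\{c_\ell-\activedeg_{S^\star}(\calW_\ell),\,4\,\activedeg_{S^\star}(\calW_\ell)\}\ge \tfrac45 c_\ell$ computation. One small remark: the inequality $(1-\tfrac\epsilon2)\cdot\tfrac45\ge \tfrac34-\epsilon$ in fact holds for all $\epsilon>0$, so the hypothesis $\epsilon\le\tfrac25$ is not actually what makes that step go through.
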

\begin{proof}
  Consider the iteration of Step~\ref{item:alg:max-w-sat:partial} of the
  algorithm when $h_0$ is taken to equal $h^\star$. Then, by Part 1 of
  Theorem~\ref{thm:sattrev}, the LP in Step~\ref{item:alg:max-w-sat:lp}
  will be feasible (this uses the fact that $\val(f^\star, \calW_\ell)
  \geq c_\ell$ for each $\ell$).

  By Lemma~\ref{lemma:maxwsat:lowvariance} and a union bound, with
  probability at least $1 -k\delta_0 > 0.9$, over the choice of $g$,
  we have that for {\em every} low variance instance $\ell \in [k]$,
  $\val(h^\star \cup g, \calW_\ell) \geq (\nfrac{3}{4} -
  \nfrac{\epsilon}{2}) \cdot c_{\ell}$. Henceforth we assume that the
  assignment $g$ sampled in Step~\ref{item:alg:max-w-sat:random} of the
  algorithm is such that this event occurs. Let $h$ be the output of
  the procedure {\sc Perturb} given in Figure~\ref{fig:maxcut-perturb}
  for the input $h^\star$ and $g.$
  By Lemma~\ref{lemma:maxcut_perturbationeffect}, $h$ satisfies
\begin{enumerate}
\item\label{item:max-w-sat:proof:guarantees:1} For every instance $\ell \in [k]$, $\val(h\cup g,\calW_\ell) \geq (1-\nfrac{\eps}{2})\cdot \val(h^\star\cup g,\calW_\ell).$
\item \label{item:max-w-sat:proof:guarantees:2}For every high variance
  instance $\ell \in [k]$, $\val(h\cup g,\calW_\ell) \geq
 4 \cdot  \activedeg_{S^\star}(\calW_\ell).$
\end{enumerate}
We now show that the desired Pareto approximation behavior is achieved
when $h$ is considered as the partial assignment in
Step~\ref{item:alg:max-w-sat:innerh} of the algorithm. We analyze the
guarantee for low and high variance instances separately.




For any {\em low variance} instance $\ell \in [k],$ from
property~\ref{item:max-w-sat:proof:guarantees:1} above, we have
$\val(h\cup g,\calW_\ell) \geq (1-\nfrac{\eps}{2})\cdot
\val(h^\star\cup g,\calW_\ell)$. Since we know that $\val(h^\star \cup
g, \calW_\ell) \geq (\nfrac{3}{4}-\nfrac{\eps}{2})\cdot c_\ell$, we
have $\val(h\cup g,\calW_\ell) \geq (\nfrac{3}{4}-\eps)\cdot c_\ell$.


For every high variance instance $\ell \in [k],$ since $h^\star =
f^\star|_{S^\star},$ for any $g$ we must have,
\[\val(h^\star \cup g, \calW_\ell) \ge \val(f^\star, \calW_\ell) -
\activedeg_{S^\star}(\calW_\ell) \ge c_\ell - \activedeg_{S^\star}(\calW_\ell) .\]
Combining this with properties~\ref{item:max-w-sat:proof:guarantees:1}
and \ref{item:max-w-sat:proof:guarantees:2} above, we get,
\begin{align*}
\val(h \cup g, \calW_\ell) &\ge \left( 1 - \nfrac{\epsilon}{2}
\right) \cdot \max\{c_\ell - \activedeg_{S^\star}(\calW_\ell), 4 \cdot
\activedeg_{S^\star}(\calW_\ell)\} \\
&\ge \left( \nfrac{3}{4} - \epsilon \right) \cdot c_\ell.
\end{align*}

Thus, for all instances $\ell \in [k]$, we get $\val(h \cup g) \ge
\left(\nfrac{3}{4}-\eps\right) \cdot c_{\ell}.$ Since we are taking
the best assignment $h\cup g$ at the end of the algorithm {\algowwsat}, the theorem follows.
%
%

{\bf Running time : } Running time of the algorithm is
$2^{O(kt)}\cdot\poly(n)$ which is $2^{O(\nfrac{k^3}{\eps^2}\log
(\nfrac{k}{\eps^2}))}\cdot\poly(n).$
\end{proof}

\nocite{*}

\bibliographystyle{alpha}
\bibliography{refs}

\newpage

\appendix

\newpage

\renewcommand{\NAE}{\mathsf{NAE}}
\newcommand{\Id}{\mathsf{Id}}
\newcommand{\Neg}{\mathsf{Neg}}
\newcommand{\Eq}{\mathsf{Equality}}


\section{Hardness results for large $k$}
\label{section:hardness}

In this section, we prove our hardness results for simultaneous CSPs. Recall the theorem that we are trying to show.

\begin{theorem}[restated]
\label{thm:dichotomy}
Assume the Exponential Time Hypothesis. Let $\calF$ be a fixed finite
set of Boolean predicates. If $\calF$ is not $0$-valid or $1$-valid,
then for $k = \omega(\log n )$, then detecting positivity of $k$-fold
simultaneous MAX-$\calF$-CSPs on $n$ variables requires time
superpolynomial in $n$.
\end{theorem}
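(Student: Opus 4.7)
The plan is to prove this by a polynomial-time reduction from 3SAT to $k$-fold simultaneous \maxfcsp positivity, and then invoke ETH. Given a 3SAT formula $\phi$ on $n$ variables with $m = \Theta(n)$ clauses, the reduction will output $k = m+1$ simultaneous $\calF$-CSP instances on $N = n + O(1)$ variables such that $\phi$ is satisfiable if and only if the simultaneous system admits a positive assignment. Because $k = \Theta(n) = \omega(\log N)$, a polynomial-time positivity algorithm would produce a $\poly(n)$-time algorithm for 3SAT, contradicting ETH; a standard padding argument (adding dummy variables that appear in no constraint) then extends the hardness to any growth rate $k(n) = \omega(\log n)$.

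The key step is to encode each SAT clause $C_j$ as a single instance $\calW_{C_j}$ that is positive precisely when $C_j$ is satisfied, by representing each literal as a single $\calF$-constraint. I would distinguish two cases based on $\calF$. In Case~(II), where $\calF$ contains some $P_0$ that is $1$-valid but not $0$-valid and some $P_1$ that is $0$-valid but not $1$-valid, the constraints $P_0(x,\ldots,x)$ and $P_1(x,\ldots,x)$ directly encode ``$x=1$'' and ``$x=0$'' and no auxiliary variables are needed. In Case~(I), where some $P \in \calF$ of arity $w$ is neither $0$-valid nor $1$-valid, I would fix a satisfying assignment $c \in \B^w \setminus \{0^w, 1^w\}$ of $P$ (note that $c$ has both $0$ and $1$ coordinates) and introduce two auxiliary variables $z_0, z_1$. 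A positive literal $x_i$ is encoded by $P(t_1,\ldots,t_w)$ with $t_j = x_i$ at positions with $c_j = 1$ and $t_j = z_0$ at positions with $c_j = 0$: when $z_0 = 0$, the effective tuple is $0^w$ if $x_i = 0$ (unsatisfying by non-$0$-validity) and equals $c$ if $x_i = 1$ (satisfying). Negative literals are encoded symmetrically by swapping the roles of $z_0$ and $z_1$. A separate ``enforcement'' instance consisting of the single constraint $P(u_1,\ldots,u_w)$, where $u_j = z_0$ if $c_j = 0$ and $u_j = z_1$ if $c_j = 1$, kills the aux-configurations $(z_0, z_1) \in \{(0,0), (1,1)\}$, whose effective tuples are $0^w$ and $1^w$ and therefore unsatisfying.

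The main obstacle is the complementive sub-case of Case~(I), i.e., when $P(a) = P(\bar a)$ for every $a \in \B^w$: then the enforcement also accepts $(z_0, z_1) = (1, 0)$, because the tuple becomes the bitwise complement $\bar c$ and $P(\bar c) = P(c) = \true$. I would resolve this by exploiting the bit-flip symmetry directly: complementivity implies $\val(f, \calW_\ell) = \val(\bar f, \calW_\ell)$ for every instance $\calW_\ell$, so if $f$ is positive with $(z_0, z_1) = (1, 0)$ then the bitwise complement $\bar f$ is also positive and has $(z_0, z_1) = (0, 1)$; in either case the restriction of the resulting assignment to $\{x_1,\ldots,x_n\}$ is a SAT-satisfying assignment under the canonical interpretation. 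In the non-complementive sub-case I would instead choose $c$ with $P(\bar c) = \false$, which exists precisely by non-complementivity, so the enforcement uniquely pins $(z_0, z_1) = (0, 1)$. To accommodate the strengthening that each constraint use distinct variables, I would replicate $z_0$ and $z_1$ into constantly many fresh copies (since $w$ is a constant once $\calF$ is fixed) and enforce each copy independently via the same gadget templates following the standard Schaefer--Khanna recipe; this incurs only an $O(1)$ blow-up in $N$ and $k$ and preserves the ETH-based contradiction.
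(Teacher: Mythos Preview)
Your argument is correct for the theorem as stated, and it takes a genuinely different route from the paper.

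\textbf{Comparison.} The paper's proof splits on whether some $f\in\calF$ is closed under complementation. In the complementive case it invokes the Khanna--Sudan--Trevisan--Williamson implementation of $\XOR$, converts this to a \emph{simultaneous-implementation} of $\XOR$ and then of $3$-$\NAE$, and reduces from NAE-$3$-SAT. In the non-complementive case it reduces from $3$-SAT by the same $f(\alpha\oplus(x,\dots,x))$ trick you use in your Case~(II), and then uses a separate equality simultaneous-implementation lemma to eliminate repeated variables. Your case split is different (existence of a predicate that is neither $0$- nor $1$-valid), your construction is more elementary (two auxiliary bits $z_0,z_1$ plus a single enforcement instance, no implementation lemmas), and your treatment of the complementive sub-case via the global bit-flip symmetry of $P$ is a nice shortcut that avoids the NAE detour entirely. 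What the paper's machinery buys is the distinct-variables strengthening: the simultaneous-implementation framework, together with the explicit equality gadget, yields instances in which every constraint is applied to pairwise distinct variables.

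\textbf{A gap in your last paragraph.} Your literal gadget $P(t_1,\dots,t_w)$ repeats $x_i$ at every coordinate $j$ with $c_j=1$; replicating only $z_0,z_1$ does not remove those repetitions, so you also need copies of each $x_i$ and a way to force them equal. The ``standard Schaefer--Khanna recipe'' you cite gives an equality gadget only when some predicate is not closed under complementation (this is exactly the paper's Lemma on simultaneously-implementing $\Eq$). In your complementive sub-case no such predicate is available, so that recipe does not directly apply; the paper handles that sub-case via the $\XOR$/$\NAE$ route instead. Since the theorem as stated does not require distinct inputs, this does not affect your proof of the theorem itself, but your claimed extension to distinct variables would need a different argument in the complementive sub-case (and similarly in your Case~(II), where the constraint $P_0(x,\dots,x)$ also repeats $x$).
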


The main notion that we will use for our hardness reductions
is the notion of a ``simultaneous-implementation''.

\begin{definition}[Simultaneous-Implementation]
Let $\{x_1, \ldots, x_w\}$ be a collection of variables (called {\bf
primary} variables).  Let $P: \{0,1\}^w \to \{\true, \false\}$ be a
predicate.  Let $\{y_1, \ldots, y_t\}$ be another collection of
variables (called {\bf auxiliary} variables).  

Let $\mathcal C_1, \ldots, \mathcal C_k$ be sets of constraints on
$\{x_1, \ldots, x_w, y_1, \ldots, y_t\}$, where for each $i \in [k]$,
$\calC_i$ consists of various applications of predicates to tuples of
distinct variables from $\{x_1, \ldots, x_w, y_1, \ldots, y_t\}$.  We
say that $\calC_1, \ldots, \calC_k$ {\bf simultaneously-implements}
$P$ if for every assignment to $x_1, \ldots, x_w,$ we have,
\begin{itemize}
\item If $P(x_1, \ldots, x_w) = \true$, then there exists
a setting of the variables $y_1, \ldots, y_t$ such that
each collection $\calC_1, \ldots, \calC_k$ has at least one
satisfied constraint.
\item If $P(x_1, \ldots, x_w) = \false$, then for every
setting of the variables $y_1, \ldots, y_t$, at least
one of the collections $\calC_1, \ldots, \calC_k$ has
no satisfied constraints.
\end{itemize}
\end{definition}

We say that a collection of predicates $\calF$ simultaneously-implements
$P$ if there is a simultaneous-implementation of $P$
where for each collection $\calC_i$ ($i \in [k]$), 
every constraint in $\calC_i$ is an application of some predicate
from $\calF$.


The utility of simultaneous-implementation lies in the following lemma.
\begin{lemma}
\label{lem:implements-reduction}
  Let $P$ be a predicate.  Suppose checking satisfiability of CSPs on
  $n$ variables with $m$ constraints, where each constraint is an
  application of the predicate $P$, requires time $T(n,m),$ with
  $T(n,m) = \omega(m+n).$ Suppose $\calF$ simultaneously-implements
  $P$.  Then detecting positivity of $O(m)$-fold simultaneous
  MAX-$\calF$-CSP on $O(m+n)$ variables requires time $\Omega(T(n,m))$.
\end{lemma}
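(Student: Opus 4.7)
The plan is to give a reduction: from an instance $\phi$ of the $P$-CSP on $n$ variables $x_1,\ldots,x_n$ with $m$ constraints, construct a $k m$-fold simultaneous \maxfcsp instance (where $k$ is the number of collections in the fixed simultaneous-implementation of $P$ by $\calF$) on $n + tm$ variables (where $t$ is the number of auxiliary variables in the implementation) such that $\phi$ is satisfiable if and only if the simultaneous instance has positive value in every coordinate. Since $k$ and $t$ are constants depending only on $\calF$ and $P$, this gives $O(m)$ simultaneous instances over $O(n+m)$ variables, and the reduction itself runs in linear time, so any algorithm beating $T(n,m)$ for detecting positivity would beat $T(n,m)$ for the original $P$-CSP satisfiability problem, contradicting the hypothesis.

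First I would fix the simultaneous-implementation of $P$ by $\calF$: collections $\calC_1,\ldots,\calC_k$ of $\calF$-constraints on primary variables $x_1,\ldots,x_w$ and auxiliary variables $y_1,\ldots,y_t$. For each constraint $P(x_{i_1^{(j)}},\ldots,x_{i_w^{(j)}})$ appearing in $\phi$ (indexed by $j \in [m]$), introduce a \emph{fresh} copy of the auxiliary variables $y_1^{(j)},\ldots,y_t^{(j)}$ and take the corresponding substituted collections $\calC_1^{(j)},\ldots,\calC_k^{(j)}$. The $O(m)$-fold simultaneous \maxfcsp instance then consists of the $km$ instances indexed by pairs $(j,i)$ with $j \in [m]$ and $i \in [k]$, where instance $(j,i)$ is the collection $\calC_i^{(j)}$, with all constraints given equal weight summing to $1$ within that instance.

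For correctness, if $\phi$ is satisfied by some $f:\{x_1,\ldots,x_n\}\to\{0,1\}$, then for each $j$ the predicate $P$ evaluates to \true\ on the relevant primary tuple, so by the first clause of the simultaneous-implementation definition we can pick auxiliary values for $y_1^{(j)},\ldots,y_t^{(j)}$ so that every $\calC_i^{(j)}$ has at least one satisfied constraint; since the auxiliary variables are fresh for distinct $j$, these choices do not conflict, so the combined assignment makes every one of the $km$ instances positive. Conversely, if some assignment makes all $km$ instances positive, then for every $j$, all of $\calC_1^{(j)},\ldots,\calC_k^{(j)}$ contain a satisfied constraint, which by the contrapositive of the second clause forces $P$ to be \true\ on $(x_{i_1^{(j)}},\ldots,x_{i_w^{(j)}})$, so $\phi$ is satisfied by the restriction of the assignment to the $x$-variables.

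The main ``obstacle'' is really just bookkeeping: one must check that the reduction respects the distinct-variables convention (which holds automatically because the simultaneous-implementation definition already requires distinct variables from $\{x_1,\ldots,x_w,y_1,\ldots,y_t\}$, and the substitution uses distinct primary variables from $\phi$ plus fresh auxiliaries), and that the sizes work out: $k m = O(m)$ instances on $n + tm = O(n+m)$ variables, produced in time $O(m)$. Combining, an algorithm for detecting positivity running in time $o(T(n,m))$ on inputs of size $O(n+m)$ would yield a $P$-CSP satisfiability algorithm running in $o(T(n,m))$ on inputs of size $n$ and $m$, contradicting the assumed lower bound and proving the $\Omega(T(n,m))$ bound. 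To deduce Theorem~\ref{thm:dichotomy} afterwards, the plan is to combine this lemma with a Schaefer-style case analysis showing that every non-$0$-valid, non-$1$-valid $\calF$ simultaneously-implements some predicate $P$ whose satisfiability is ETH-hard (\emph{e.g.} 3SAT), so that $T(n,m) = 2^{\Omega(n)}$ gives super-polynomial hardness for $k = \omega(\log n)$.
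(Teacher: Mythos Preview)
Your proposal is correct and follows essentially the same approach as the paper: replace each $P$-constraint of $\Phi$ by its simultaneous-implementation with fresh auxiliary variables, obtaining $km=O(m)$ simultaneous $\calF$-instances on $n+tm=O(n+m)$ variables, and argue the satisfiability/positivity equivalence together with the linear-time reduction to transfer the lower bound. Your write-up is in fact slightly more explicit than the paper's (you spell out why fresh auxiliaries avoid conflicts and address the distinct-variables convention), but the argument is the same.
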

\begin{proof}
  Suppose we have a $P$-CSP instance $\Phi$ with $m$ constraints on
  $n$ variables. For each of the constraints $C \in \Phi$, we
  simultaneously-implement $C$ using the original set of
  variables as primary variables, and new auxiliary variables for each
  constraint. Thus, for every $C \in \Phi,$ we obtain $k$
  MAX-$\calF$-CSP instances $\calC^C_1, \ldots, \calC^C_k,$ for some
  constant $k.$ The collection of instances $\{\calC^C_i\}_{C \in \Phi, i
    \in [k]}$ constitute the $O(m)$-simultaneous MAX-$\calF$-CSP instance
 on $O(m+n)$ variables.

 If $\Phi$ is satisfiable, we know that there exists an assignment to
 the original variables such that each $C \in \Phi$ is
 satisfied. Hence, by the simultaneously-implements property, there
 exists as assignment to all the auxiliary variables such that each
 $\calC_i^C$ has at least one satisfied constraint. If $\Phi$ is
 unsatisfiable, for any assignment to the primary variables, at least
 one constraint $C$ must be unsatisfied. Hence, by the
 simultaneously-implements property, for any assignment to the
 auxiliary variables, there is an $i \in [k]$ such that $\calC^C_i$
 has no satisfied constraints. Thus, our simultaneous MAX-$\calF$-CSP
 instance has a non-zero objective value iff $\Phi$ is
 satisfiable. Since this reduction requires only $O(m+n)$ time,
 suppose we require $T^\prime$ time for detecting positivity of a
 $O(m)$-simultaneous MAX-$\calF$-CSP instance on $O(m+n)$ variables,
 we must have $T^\prime + O(m+n) \ge T(m,n),$ giving $T^\prime =
 \Omega(T(m,n))$ since $T(m,n) = \omega(m+n).$
\end{proof}

The simultaneous-implementations we construct will be based on a related
notion of implementation arising in approximation preserving reductions.
We recall this definition below.

\begin{definition}[Implementation]
Let $x_1, \ldots, x_w$ be a collection of variables
(called {\bf primary} variables).
Let $P: \{0,1\}^w \to \{\true, \false\}$ be a predicate.

Let $y_1, \ldots, y_t$ be another collection of
variables (called {\bf auxiliary} variables).
Let $C_1, \ldots, C_d$ be constraints on
$\{x_1, \ldots, x_w, y_1, \ldots, y_t\}$,
where for each $i \in [d]$, the variables feeding
into $C_i$ are all distinct.

We say that $C_1, \ldots, C_d$ {\bf $e$-implements} $P$
if for every assignment to $x_1, \ldots, x_w$ we have,
\begin{itemize}
\item If $P(x_1, \ldots, x_w) = \true$, then there exists
a setting of the variables $y_1, \ldots, y_t$ such that
at least $e$ of the constraints $C_1, \ldots, C_d$ evaluate to $\true$.
\item If $P(x_1, \ldots, x_w) = \false$, then for every
setting of the variables $y_1, \ldots, y_t$, at most 
$e-1$ of the constraints $C_1, \ldots, C_d$ evaluate to $\true$.
\end{itemize}
\end{definition}

We say that a collection of predicates $\calF$ implements
$P$ if there is some $e$ and an $e$-implementation of $C$
where all the constraints $C_1, \ldots, C_d$ come from $\calF$.

We will be using following predicates in our proofs.
\begin{itemize}
\item $\Id, \Neg$ : These are the unary predicates defined as $\Id(x) = x$ and $\Neg(x) = \bar{x}$.
\item $\NAE$: $w$-ary $\NAE$ predicate on variables $x_1, \ldots, x_w$ is defined as $\NAE(x_1,\ldots,x_w) = \false$ iff all the $x_i$'s are equal.
\item $\Eq$ : $\Eq$ is a binary predicate given as $\Eq(x,y) = \true$ iff $x$ equals $y$.
\end{itemize}

We will use the following Lemmas from \cite{KhannaSTW01}.

\begin{lemma}[\cite{KhannaSTW01}]
\label{lemma:xor_implementation}
Let $f$ be a predicate which is not $0$-valid,
and which is closed under complementation. 
Then $\{f \}$ implements $\XOR(x, y)$.
\end{lemma}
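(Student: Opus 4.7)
The plan starts from the two structural facts forced by the hypotheses: since $f$ is not $0$-valid, $f(\vec 0) = \false$, and by closure under complementation $f(\vec 1) = \false$ as well. Hence every satisfying assignment of $f$ contains both $0$'s and $1$'s, and the set $f^{-1}(\true)$ decomposes into complementary pairs $\{a, \bar a\}$. For each ordered pair of positions $(i,j)$ with $i \neq j$, I introduce the projection $\pi_{ij}(f) := \{(b_i,b_j) : b \in f^{-1}(\true)\}$. Closure under complementation propagates to $\pi_{ij}(f)$, so each $\pi_{ij}(f)$ lies in $\{\emptyset,\ \{(0,1),(1,0)\},\ \{(0,0),(1,1)\},\ \{0,1\}^2\}$. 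Because satisfying assignments of $f$ contain both bits, at least one pair $(i,j)$ has $(0,1)\in \pi_{ij}(f)$, and then $(1,0)\in \pi_{ij}(f)$ by complementation.

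The favorable case is when some $(i,j)$ satisfies $\pi_{ij}(f) = \{(0,1),(1,0)\}$. Here I would take primary variables $x,y$, introduce fresh auxiliary variables $u_k$ for $k\notin\{i,j\}$, and use the single constraint $f(z_1,\ldots,z_w)$ with $z_i = x$, $z_j = y$, $z_k = u_k$ otherwise; all $w$ slots have distinct variables. By the definition of $\pi_{ij}$, this constraint is satisfiable for some choice of the $u_k$ if and only if $(x,y)\in\{(0,1),(1,0)\}$, where the satisfying assignment $a$ (or $\bar a$) of $f$ supplies the required values $u_k = a_k$ (respectively $u_k = \bar a_k$). This yields an $e$-implementation of $\XOR(x,y)$ with $e=1$.

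The main obstacle is the complementary case, where no pair $(i,j)$ realizes the XOR pattern and instead every projection is either the equality set $\{(0,0),(1,1)\}$ or the full square $\{0,1\}^2$. When some pair is of the equality type, the construction above becomes a $1$-implementation of $\Eq(x,y)$ rather than $\XOR$, and one must turn this into an XOR gadget by using several applications of $f$ on shared auxiliary variables together with a calibrated value of $e$, so that the count of satisfiable constraints differs by exactly one between $x=y$ and $x\neq y$. The truly delicate subcase is when every pair yields the full square: here the plan is to use the complementary-pair structure of $f^{-1}(\true)$ to derive, from $f$, a lower-arity predicate $f'$ that remains non-$0$-valid and closed under complementation (for example, by pinning some coordinates to shared auxiliary variables across multiple applications of $f$) and then iterate. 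The induction on arity bottoms out at $w=2$, where a predicate that is non-$0$-valid and closed under complementation is forced to equal $\XOR$ itself, at which point the implementation is immediate.

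Throughout, I would be careful to respect the distinctness-of-variables requirement inside each individual constraint by always introducing fresh auxiliary variables when cloning is needed, and to verify that the resulting $e$-implementation has a strictly positive gap by checking both directions: that $\XOR(x,y)=\true$ permits $e$ simultaneously satisfied copies using the satisfying assignment $a$ and its complement $\bar a$, and that $\XOR(x,y)=\false$ forces at least one of the gadget constraints to fail because $\vec 0$ and $\vec 1$ are not in $f^{-1}(\true)$.
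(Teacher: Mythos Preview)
The paper does not prove this lemma at all; it simply cites \cite{KhannaSTW01}. So there is nothing to compare against, and the question is solely whether your sketch stands on its own.

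Your ``favorable case'' (some coordinate pair $(i,j)$ has $\pi_{ij}(f)=\{(0,1),(1,0)\}$) is correct and gives a clean $1$-implementation with a single constraint. The problem is that the remaining case is not a corner case---it is where the content lies. Take $f=\NAE$ on three variables: it is not $0$-valid, it is closed under complementation, and \emph{every} $2$-projection of $\NAE^{-1}(\true)=\{0,1\}^3\setminus\{000,111\}$ is the full square $\{0,1\}^2$. So you are squarely in your ``main obstacle'' situation, and here your write-up becomes a plan rather than a proof. You propose to ``pin some coordinates to shared auxiliary variables across multiple applications of $f$'' to produce a lower-arity $f'$ with the same two properties and then induct, but you do not say what pinning means operationally in the $e$-implementation framework with distinct variables per constraint, nor why the resulting $f'$ would remain non-$0$-valid and complement-closed. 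For $\NAE$ in particular, existentially quantifying one coordinate yields the always-true predicate, and naive two- or three-constraint gadgets on shared auxiliaries give no gap (try them).

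There \emph{is} a valid gadget for $\NAE$, and seeing it shows why your inductive story is off-target: take auxiliaries $a,b,c$ and constraints $C_1=\NAE(x,y,a)$, $C_2=\NAE(x,y,b)$, $C_3=\NAE(x,y,c)$, $C_4=\NAE(a,b,c)$. If $x\neq y$ then $C_1,C_2,C_3$ are satisfied for free and one can also satisfy $C_4$, giving $4$. If $x=y$ then each of $C_1,C_2,C_3$ forces its auxiliary to equal $\bar x$, so satisfying all three makes $a=b=c$, killing $C_4$; a short check shows the maximum is $3$. This is a $4$-implementation of $\XOR$, and it does not arise by ``reducing arity and iterating''---it uses a copy of $f$ on the auxiliaries together with linking constraints. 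To turn your sketch into a proof you would need to exhibit such a construction for arbitrary $f$ meeting the hypotheses (and verify the gap), not just assert that an induction goes through.
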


\begin{lemma}[\cite{KhannaSTW01}]
\label{lemma:TF_implementation}
Let $f$ be a predicate not closed under complementation, and let $g$
be a predicate that is not $0$-valid. 
Then $\{f,g\}$ implements $\Id$, and $\{f,g\}$ implements $\Neg$.
\end{lemma}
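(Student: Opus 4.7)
The plan is to prove the implementation of $\Id$ first, and to obtain the implementation of $\Neg$ by a symmetric argument at the very end. Since $f$ is not closed under complementation, there exists some assignment $a \in \{0,1\}^w$ with $f(a) \neq f(\bar a)$. By possibly swapping $a$ with $\bar a$, I assume $f(a) = \true$ and $f(\bar a) = \false$; for the eventual $\Neg$ implementation, the opposite choice will be used. Since $g$ is not $0$-valid, we also have $g(\vec{0}) = \false$. These two facts are the only structural information I have about $f$ and $g$.

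From $f$ and the witness $a$, I would first manufacture an auxiliary binary predicate $h(x,y)$ by applying $f$ to $w$ arguments in which the coordinates with $a_i = 1$ receive the variable $x$ and the coordinates with $a_i = 0$ receive the variable $y$. By construction $h(1,0) = f(a) = \true$ and $h(0,1) = f(\bar a) = \false$, while $h(0,0) = f(\vec{0})$ and $h(1,1) = f(\vec{1})$ are some fixed (but unknown) values. Thus $h$ already separates the two ``diagonal'' inputs $(1,0)$ and $(0,1)$; the whole point of the gadget construction below is to upgrade this partial directional separation into a full implementation of $\Id$ by killing off ``parasitic'' assignments.

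I would then split into cases on the validity profile of $g$. If $g$ happens to be $1$-valid, i.e.\ $g(\vec{1}) = \true$, then the single constraint $g(x,x,\ldots,x)$ is true iff $x = 1$ and already $1$-implements $\Id$. Otherwise $g(\vec{1}) = \false$ as well, and I would combine constraints of the form $h(x,y)$ (or the raw constraint $f$ with identified arguments) together with applications of $g$ to groups of variables, introducing a constant number of auxiliary variables. The role of $g$ in this construction is to preclude all-constant or otherwise degenerate settings of the auxiliaries: any setting of the auxiliary variables to $\vec{0}$ or $\vec{1}$ across a $g$-constraint produces $\false$, which penalizes the adversary when $x = 0$ but can be avoided by the ``good'' setting when $x = 1$ (namely the setting that matches $\bar a$ relative to $x$ in each copy of $h$).

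The main technical obstacle is the case analysis in the final gadget: depending on the values of $f(\vec{0}), f(\vec{1})$, and depending on which inputs $g$ is false on beyond $\vec{0}$, several sub-cases arise, and in each sub-case one must design the multiplicities of $h$- and $g$-constraints and verify that the maximum satisfiable count when $x = 1$ is strictly greater than when $x = 0$, over all choices of auxiliary variables. Once the gadget is calibrated in every sub-case, the implementation of $\Neg$ follows by rerunning the entire argument with the opposite WLOG choice at the start (taking $f(a) = \false, f(\bar a) = \true$), which swaps the roles of $x = 0$ and $x = 1$ throughout and yields a gadget whose count is maximized exactly when $x = 0$.
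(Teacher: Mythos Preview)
The paper does not prove this lemma; it is quoted from \cite{KhannaSTW01} and used as a black box (in the proof of Lemma~\ref{lemma:equality_implementation}). So there is no ``paper's own proof'' to compare against.

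That said, your proposal has a concrete gap relative to the paper's definition of \emph{implementation}. The definition requires that ``for each $i \in [d]$, the variables feeding into $C_i$ are all distinct.'' Your construction repeatedly violates this: the binary predicate $h(x,y)$ is built by plugging the \emph{same} variable $x$ (resp.\ $y$) into several input slots of $f$, and your easy case uses the constraint $g(x,x,\ldots,x)$. Neither is a legal constraint under the paper's definition. This is not a cosmetic issue: the whole purpose of Lemma~\ref{lemma:TF_implementation} in the paper is to bootstrap the construction of $\Eq$ (Lemma~\ref{lemma:equality_implementation}), which is then used precisely to \emph{remove} the need for repeated variables (see Case~2 in the proof of Theorem~\ref{thm:dichotomy}). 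So you cannot assume variable identification is available at this stage without circularity.

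A correct proof in the KSTW style proceeds by introducing genuine auxiliary variables (rather than identifying inputs), and exploiting the asymmetry of $f$ together with the non-$0$-validity of $g$ through a more careful gadget and case analysis on the truth tables. Your high-level intuition---use the witness $a$ with $f(a)\neq f(\bar a)$, and use $g$ to penalize ``all-zero'' auxiliary settings---is the right one, but the gadget has to be built entirely out of constraints on \emph{distinct} variables, which makes the case analysis substantially more delicate than what you have outlined.
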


We will now prove lemmas that will capture the property of {\em simultaneous implementation} which will be used in proving Theorem~\ref{thm:dichotomy}.
\begin{lemma}
\label{lemma:XOR_NAE}
If $\{f\}$ simultaneously-implements predicate $\XOR$ on $two$ variables, then $\{f\}$ also simultaneously-implements the predicate $\NAE$ on $three$ variables.
\end{lemma}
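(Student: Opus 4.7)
The plan is to build a simultaneous-implementation of $\NAE(x_1,x_2,x_3)$ from the given one for $\XOR$ by exploiting the identity
\[
\NAE(x_1,x_2,x_3) \;=\; \XOR(x_1,x_2) \,\vee\, \XOR(x_2,x_3) \,\vee\, \XOR(x_1,x_3).
\]
So $\NAE$ fails exactly when all three pairwise XORs fail, and $\NAE$ holds exactly when at least one pairwise XOR holds. The combinatorial task is to turn this OR/AND structure into simultaneous-implementation collections using only $f$-constraints.

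\paragraph{Construction.}
Let $\calD_1,\ldots,\calD_K$ be the collections of $f$-constraints given by the hypothesized simultaneous-implementation of $\XOR(a,b)$ on primary variables $a,b$ with auxiliary variables $\vec y$. For each of the three pairs $(i,j)\in\{(1,2),(2,3),(1,3)\}$, instantiate a private copy of this implementation on primary variables $(x_i,x_j)$ using a fresh auxiliary block $\vec y^{\,(ij)}$; call the resulting collections $\calD_1^{(ij)},\ldots,\calD_K^{(ij)}$. Then for every triple $(a,b,c)\in[K]^3$, define
\[
\calE_{a,b,c} \;\defeq\; \calD_a^{(12)} \,\cup\, \calD_b^{(23)} \,\cup\, \calD_c^{(13)}.
\]
This yields $K^3$ new collections (on primary variables $x_1,x_2,x_3$ and auxiliary block $\vec y^{\,(12)}\cup\vec y^{\,(23)}\cup\vec y^{\,(13)}$), which I claim simultaneously-implement $\NAE$.

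\paragraph{Verification.}
The two cases run cleanly. If $\NAE(x_1,x_2,x_3)=\true$, some pair $(i,j)$ satisfies $\XOR(x_i,x_j)=\true$; use the $\XOR$ hypothesis to set $\vec y^{\,(ij)}$ so that every $\calD_a^{(ij)}$ contains a satisfied constraint, and set the remaining two auxiliary blocks arbitrarily. Then every $\calE_{a,b,c}$ inherits a satisfied constraint from its $\calD_{\cdot}^{(ij)}$ component. Conversely, if $\NAE=\false$ then all three pairwise XORs fail; for any setting of $\vec y^{\,(12)},\vec y^{\,(23)},\vec y^{\,(13)}$ there exist indices $a^\ast,b^\ast,c^\ast$ with $\calD_{a^\ast}^{(12)}$, $\calD_{b^\ast}^{(23)}$, $\calD_{c^\ast}^{(13)}$ each entirely unsatisfied, and then $\calE_{a^\ast,b^\ast,c^\ast}$ has no satisfied constraint.

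\paragraph{Obstacle.}
There is no deep obstacle; the only subtle point is that a direct union (taking only $K$ collections, one per original index) does \emph{not} work, because when all three XORs are false the ``bad'' index may differ across the three pair-implementations. The Cartesian-product indexing over $[K]^3$ is precisely what is needed to force a simultaneously-bad collection in that case. One should also confirm that distinctness of variables within each single constraint is preserved under the renaming (it is, since primary-to-primary substitution maps the two distinct $\XOR$-inputs to distinct variables $x_i\neq x_j$, and the auxiliary blocks are fresh for each pair).
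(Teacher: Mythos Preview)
Your construction is correct. The paper, however, takes a more economical route that avoids the $K^3$ blowup. It reuses a \emph{single} auxiliary block $\vec y$ across all three $\XOR$ instantiations: the collections $\calB_i$ and $\calC_i$ are obtained from $\calA_i$ by the variable renamings $(x_1,x_2)\mapsto(x_2,x_3)$ and $(x_1,x_2)\mapsto(x_1,x_3)$ on the primary slots, leaving the auxiliary variables untouched. Then only $K$ collections $\calD_i=\calA_i\cup\calB_i\cup\calC_i$ are formed. The point is that $\NAE(x_1,x_2,x_3)=\false$ forces $x_1=x_2=x_3$, so the three instantiations receive identical inputs (same primary values, same auxiliary values) and hence evaluate identically constraint-by-constraint; the ``bad'' index $i$ is therefore automatically common to all three, and your Cartesian-product indexing is not needed.

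So your remark in the Obstacle paragraph---that a direct $K$-collection union fails because the bad indices may differ---is true for your setup with fresh auxiliary blocks, but the paper sidesteps it by sharing the block and exploiting the symmetry specific to $\NAE$. Your argument is more modular (it would apply verbatim to any predicate expressible as an OR of simultaneously-implemented predicates, not just $\NAE$), while the paper's buys a smaller constant in the number of collections; both are $O(1)$, so the difference is only aesthetic here.
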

\begin{proof}
  Consider an $\NAE$ constraint $\NAE(x,y,z).$ Let $\calA_1, \ldots, \calA_d$ be the simultaneous implementation of
  constraint $\XOR(x, y),$ using predicate $f$ and a set of auxiliary
  variables $y_1, \ldots, y_t$ for some $t$. Similarly, let
  $\calB_1, \ldots, \calB_d$ and $\calC_1, \ldots,
  \calC_d$ be the simultaneous implementation of constraint $\XOR(y,
  z)$ and $\XOR(x, z)$ respectively using $f$ and on a same set of
  auxiliary variables $y_1, \ldots, y_t,$ constructed by replacing
  the variables $(x,y)$ in $\{\calA_1, \ldots, \calA_d\}$ with $(y,z)$
  and $(x,z)$ respectively. 
  We construct sets of constraints $\calD_1, \ldots, \calD_d$ as
  follows: for each $i\in [d],$ $\calD_i$ consists
  of all constraints from $\calA_i, \calB_i,$ and $\calC_i.$ We now
  show that $\{\calD_1, \ldots, \calD_d\}$ simultaneously-implement
  $\NAE(x, y, z).$

  First, notice that $\NAE(x, y, z)$ is $\false$ iff all constraints
  $\XOR(x, y),$ $\XOR(y, z)$ and $\XOR(x, z)$ are $\false.$ Consider
  the case when $\NAE(x, y, z)$ is $\false$. Since we are using same
  set of auxiliary variables and the implementation is symmetric, for
  every setting of variables $y_1, \ldots, y_t,$ there exists a fixed
  $i\in [d]$ such that each of $\calA_i, \calB_i$ and $\calC_i$ has no
  satisfied constraints. And hence, instance $\calD_i$ has no
  satisfied constraints. If $\NAE(x,y,z)$ is $\true$ then at least one
  of $\XOR(x, y),$ $\XOR(y, z)$ or $\XOR(x, z)$ must be $\true.$
  Without loss of generality, we assume that $\XOR(x,y)$ is $\true.$
  Thus, there exists a setting of variables $y_1, \ldots, y_t$
  such that each of $\calA_1, \ldots, \calA_d,$ has at least one
  satisfied constraint, and hence each of $\calD_1, \ldots,
  \calD_d$ too has at least one such constraint.
\end{proof}



\begin{lemma}
\label{lemma:equality_implementation}
Let $f$ be a predicate not closed under complementation, not $0$-valid and not $1$-valid. $f$ can simultaneously-implement $\Eq.$
\end{lemma}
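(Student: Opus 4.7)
The plan is to upgrade the standard implementations of $\Id$ and $\Neg$ provided by Lemma~\ref{lemma:TF_implementation} to simultaneous-implementations, and then combine them via the Boolean identity $\Eq(x,y) = (\Id(x) \vee \Neg(y)) \wedge (\Neg(x) \vee \Id(y))$ after establishing closure of simultaneous-implementability under both conjunction and disjunction.

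First I would invoke Lemma~\ref{lemma:TF_implementation} with $f$ playing the role of both $f$ and $g$: since $f$ is by hypothesis not closed under complementation and not $0$-valid, $\{f\}$ implements both $\Id$ and $\Neg$ in the standard $e$-implementation sense. Next, I would prove the following lifting: any $e$-implementation $\{C_1,\ldots,C_d\}$ of a predicate $P$ over $\{f\}$ yields a simultaneous-implementation of $P$ via the collections $\calC_T = \{C_i : i \notin T\}$ indexed by subsets $T \subseteq [d]$ of size exactly $e-1$. The verification is a pigeonhole argument: when $P$ holds, some auxiliary assignment makes at least $e$ of the $C_i$ satisfied, and no set $T$ of size $e-1$ can cover all of them, so every $\calC_T$ contains a satisfied constraint; when $P$ fails, every auxiliary assignment leaves the satisfied set of size at most $e-1$, so padding it to a set $T$ of size exactly $e-1$ yields a $\calC_T$ with no satisfied constraint.

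I then need two closure properties of simultaneous-implementations. Closure under conjunction is immediate --- simply take the disjoint union of collections with disjoint auxiliary variables. The main new ingredient is closure under disjunction: if $\{f\}$ simultaneously-implements $P_1$ via collections $\calA_1,\ldots,\calA_N$ and $P_2$ via $\calB_1,\ldots,\calB_M$ (on disjoint auxiliary variables), then $\{f\}$ simultaneously-implements $P_1 \vee P_2$ via the $NM$ product collections $\{\calA_i \cup \calB_j\}_{i,j}$. Indeed, if (WLOG) $P_1$ holds, some auxiliary assignment makes every $\calA_i$ satisfied and hence every $\calA_i \cup \calB_j$ contains a satisfied constraint; if both $P_1$ and $P_2$ fail, then for any auxiliary assignment there is some $\calA_{i_0}$ and some $\calB_{j_0}$ each without any satisfied constraint, whence $\calA_{i_0} \cup \calB_{j_0}$ also has none.

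With these tools in hand, I would realize $\Eq(x,y)$ by applying the lifting to the four unary predicates $\Id(x), \Neg(x), \Id(y), \Neg(y)$ (each on fresh auxiliary variables), using the disjunctive-closure construction to build simultaneous-implementations of $\Id(x) \vee \Neg(y)$ and $\Neg(x) \vee \Id(y)$, and finally taking the union of their collections to realize their conjunction --- which equals $\Eq(x,y)$ by a routine check on the four Boolean inputs. The main obstacle I expect is formulating the disjunctive closure correctly: standard $e$-implementation is not closed under disjunction in any useful way (merging two sets of constraints blurs their satisfied-constraint counts), and the whole payoff of the simultaneous-implementation framework is precisely that its $\exists Y \bigwedge_i \bigvee_{C \in \calC_i} C$ structure makes the product-collection trick work, at the modest cost of blowing up the number of collections by a multiplicative factor.
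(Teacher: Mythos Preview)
Your proposal is correct and, when unpacked, yields exactly the paper's construction. The paper directly builds instances indexed by tuples $(M,N,a,b)$ with $M\subseteq[d_1]$ of size $d_1-e_1+1$, $N\subseteq[d_2]$ of size $d_2-e_2+1$, and $(a,b)\in\{(T,F),(F,T)\}$, where the instance $(M,N,T,F)$ contains $\{X^T_m:m\in M\}\cup\{Y^F_n:n\in N\}$ and $(M,N,F,T)$ contains $\{X^F_n:n\in N\}\cup\{Y^T_m:m\in M\}$; your lifting-then-disjunction-then-conjunction produces precisely these collections (your complement-of-$T$ subsets are the paper's $M$ and $N$, and the two $(a,b)$ tags are exactly your two disjunctive clauses $\Id(x)\vee\Neg(y)$ and $\Neg(x)\vee\Id(y)$).

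The only difference is packaging: you isolate the lifting step and the closure of simultaneous-implementability under $\wedge$ and $\vee$ as reusable lemmas, whereas the paper writes the final construction in one shot and verifies it directly. Your modular route is slightly cleaner and makes transparent why the construction works, at no extra cost; the paper's direct version is terser but requires the reader to reverse-engineer the same decomposition. Note also that the paper uses the identical ``subsets of size $d-e+1$'' lifting trick explicitly in Case~1 of the proof of Theorem~\ref{thm:dichotomy} (for $\XOR$), so your lifting lemma is already implicit in the surrounding text.
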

\begin{proof}
Consider an equality constraint $\Eq(x,y)$, our aim is to
simultaneously-implement this constraint using predicate $f.$ 

Since $f$ satisfies the properties of
Lemma~\ref{lemma:TF_implementation}, we can implement $\Id(x)$ and
$\Id(y)$ using $f.$ Let $X^T_1, \ldots, X^T_{d_1}$ be an
$e_1$-implementation of $\Id(x)$ using $f$ and some set of auxiliary
variables $A_1$ for some $e_1 < d_1.$ Similarly, let $Y^T_1, \ldots,
Y^T_{d_1}$ be an $e_1$-implementation of $\Id(y)$ using $f$ and a set of
auxiliary variables $A_2$.

We can also implement $\Neg(x)$ and $\Neg(y)$ using $f.$ Let $X^F_1, \ldots,
X^F_{d_2}$ be an $e_2$-implementation of $\Neg(x)$ using $f$ and a set
of auxiliary variables $B_1$ for some $e_2 < d_2.$ Similarly, let
$Y^F_1, \ldots, Y^F_{d_1}$ be an $e_2$-implementation of $\Neg(y)$ using
$f$ and a set of auxiliary variables $B_2.$

We now describe the construction of the simultaneous-implementation.
The implementation uses all auxiliary variables in $A_1, A_2, B_1,$
and $B_2.$ Each instance in the simultaneous-implementation is labeled
by a tuple $(M, N, a, b)$ where $M\subseteq [d_1]$ with $|M| =
d_1-e_1+1,$ $N\subseteq [d_2]$ with $|N|= d_2-e_2+1,$ and $(a,b) \in
\{ (T,F), (F,T) \}.$ An instance corresponding to a tuple $(M, N,
a, b)$ has following set of constraints in $f$:
$$\{ X^a_m, Y^b_m | m\in M, n\in N \} $$
We will now prove the simultaneous-implementation property of the
above created instance. Consider the case when $x=y=\true$ (other case
being similar). We know that in this case, there exists a setting of
auxiliary variables $A_1$ used in the implementation of $\Id(x)$ which
satisfies at least $e_1$ constraints out of $X^T_1, \ldots,
X^T_{d_1}.$ Similarly, there exists a setting of auxiliary variables
$A_2$ used in the implementation of $\Id(y)$ which satisfies at least
$e_1$ constraints out of $Y^T_1, \ldots, Y^T_{d_1}.$ Fix this setting
of auxiliary variables in $A_1, A_2,$ and any arbitrary setting for
auxiliary variables in $B_1$ and $B_2$. Thus, the instance labeled by
tuple the $(M, N, a, b)$ either contains $d_1-e_1+1$ constraints from
$X^T_1, \ldots, X^T_{d_1}$ if $a=T,$ or else, it contains $d_1-e_1+1$
constraints from $Y^T_1, \ldots, Y^T_{d_1}$. In any case, the property
of $e_1$-implementation implies that at least one constraint is
satisfied for this instance.

Now we need to show that if $x\neq y,$ then for any setting of
auxiliary variables, there exists an instance which has no satisfied
constraints. Consider the case when $x=\true$ and $y=\false$ (other
case being similar). Consider any fixed assignment to the auxiliary
variables in $A_1, A_2, B_1,$ and $B_2.$ We know that for this fixed
assignment to the auxiliary variables in $B_1,$ there exists a subset
$N\subseteq [d_2]$ of size at least $d_2-e_2+1,$ such that all
constraints in $\{ X^F_j | j\in N\}$ are unsatisfied. Similarly, for
this fixed assignment to variables in $A_2,$ there exists a subset
$M\subseteq [d_1]$ of size at least $d_1-e_1+1$ such that all
constraints in $\{ Y^T_i | i\in M\}$ are unsatisfied. Thus, the
instance corresponding to tuple $(M, N, F, T)$ has no satisfied
constraints.
\end{proof}

We now prove Theorem~\ref{thm:dichotomy}.
\begin{proof}
We take cases on whether $\calF$ contains some $f$ which is closed under complementation.

\medskip
\noindent
{\bf Case 1:} Suppose there exists some $f \in \calF$ which is
  closed under complementation. In this case, it is enough to show
  that $f$ simultaneously-implements $\XOR$. To see this, assume that we
  can simultaneously-implement $\XOR$ using $f$. Hence, by
  Lemma~\ref{lemma:XOR_NAE}, we can simultaneously-implement the
  predicate $\NAE$ on $three$ variables using $f$. We start with an
  NAE-3-SAT instance $\phi,$ on $n$ variables with $m$ constraints.
  For each constraint $C\in \phi,$ we create a set of $O(1)$ many
  instances which simultaneously-implement $C.$ The final simultaneous
  instance is the collection of all instances that we get with each
  simultaneous-implementation of constraints in $\phi.$

  In the completeness case, when $\phi$ is satisfiable, then by the
  property of simultaneous-implementation, we have that there exists a
  setting of auxiliary variables, from each implementation of $\NAE$
  constraints, such that each instance has at least one constraint
  satisfied. And hence, the value of the final simultaneous instance
  is non $zero$.

  In the soundness case, for any assignment to the variables $x_1,
  \ldots, x_n$ there exists a constraint (say $C$) which is not
  satisfied. Hence one of the instance from the simultaneous
  implementation of this constraint has value $zero$ no matter how we
  set the auxiliary variables. And hence, the whole simultaneous
  instance has value zero in this case.

  To prove the theorem in this case, it remains to show that we can
  simultaneously-implement $\XOR(x,y)$ using $f.$ Since $f$ is closed
  under complementation, we can $e$-implement $\XOR$ using $f$ (for
  some $e$) by Lemma~\ref{lemma:xor_implementation}. Let $C_1, \ldots,
  C_d$ be the set of $f$-constraints that we get from this
  $e$-implementation, $e<d.$ The collection of instances contains one
  instance for every subset $J\subseteq [d]$ of size $d-e+1.$ The
  instance labeled by $J \subseteq [d]$ contains all constraints from
  the set $\{C_j | j\in J\}.$ Hence, there ${d \choose e-1}$ instances
  in the collection. Note that we used the same set of auxiliary
  variables in this simultaneous-implementation. We now show that this
  collection of instances simultaneously-implements $\XOR(x,y)$.  To see
  this, consider the case when $\XOR(x,y)$ is $\true$. Thus.  there is
  an assignment to the auxiliary variables that satisfies at least $e$
  constraints out of $C_1, \ldots, C_d.$ Hence, for this particular
  assignment, the instance labeled by $J,$ where $J \subset [d]$ is
  any subset of size $d-e+1$, has at least one satisfied
  constraint. When $\XOR(x,y)$ is $\false,$ then for any assignment to
  the auxiliary variables, there is some $J \subseteq [d]$ of size
  $d-e+1$ such that no constraints in the set $\{C_j | j\in J\}$ are
  satisfied. Hence, for this assignment, the instance labeled with $J$
  has no satisfied constraints. This shows that $f$
  simultaneously-implements predicate $\XOR$ on two variables.

  Combining the two arguments above, we get that $\{f\}$
  simultaneously-implements 3-$\NAE.$ Since 3-$\NAE$ has
a linear time gadget reduction from 3-{\sc SAT}~\cite{Schaefer78},
  and the ETH implies that 3-{\sc SAT} on $s$ variables 
and $O(s)$ clauses requires time $2^{\Omega(s)}$~\cite{ImpagliazzoP01, ImpagliazzoPZ01}, we get that
checking satisfiability of a 3-$\NAE$
  instance with $\omega(\log n)$ constraints on $\omega(\log
  n)$ variables requires time super-polynomial in
  $n$. Thus, using
  Lemma~\ref{lem:implements-reduction} implies that detecting
  positivity of an $\omega(\log n)$-simultaneous MAX-$f$-CSP
  requires time superpolynomial in $n.$


\medskip
\noindent
 {\bf Case 2:} Suppose that for all $f \in \calF$, $f$ is not
  closed under complementation. Let $f \in \calF$ be any predicate of
  arity $r.$ Since, $f$ is not closed under complementation, there
  exist $\alpha, \beta \in \{0,1\}^r$ that satisfy $\alpha_i \oplus
  \beta_i = 1$ for all $i \in [r],$ and $f(\alpha) = 0,$ $f(\beta) =
  1.$ We can reduce a 3-\sat instance with $n$ variables and $m =
  \poly(n)$ clauses to $m$ simultaneous instances over $n$ variables
  involving the predicate $f.$ For every clause $C$ of the form $x
  \vee y \vee z,$ we create an instance with $3$ equal weight
  constraints $\{f(\alpha \oplus (x,\dots,x)), f(\alpha \oplus
  (y,\dots,y)),f(\alpha \oplus (z,\dots,z))\},$ where $\oplus$ denotes
  bitwise-xor, or equivalently, we negate the variable in the $i$-th
  position iff $\alpha_i = 1.$

  It is straightforward to see that the original 3-\sat formula is
  satisfiable if and only if there is an assignment to the variables
  that simultaneously satisfies a non zero fraction of the constraints
  in each of the instances.

  In the above reduction, we must be able to apply the predicate to
  several copies of the same variable. In order to remove this
  restriction, we replace each instance with a collection $\calC$ of
  instances as follows: Consider an instance $\{f(\alpha \oplus
  (x,\dots,x)), f(\alpha \oplus (y,\dots,y)),f(\alpha \oplus
  (z,\dots,z))\}.$ We add to our collection $\calC,$ an instance
  $\{f(\alpha \oplus (a_1,\dots,a_r)), f(\alpha \oplus
  (b_1,\dots,b_r)),f(\alpha \oplus (c_1,\dots,c_r))\},$ where $a_i,
  b_i$ and $c_i$ for all $i\in [r],$ are the fresh set of
  variables. Using Lemma~\ref{lemma:equality_implementation}, we can
  simultaneously-implement each constraint of the form $x=a_i$,
  $y=b_i$ and $z=c_i$ using $f$. We add all the instances obtained
  from the simultaneous-implementations to the collection $\calC.$
  Notice that, we have replaced each original instance with only
  $O(1)$ many instances. Hence, we have $O(m)$ many instances in our
  final construction. Thus, as in the first case, assuming ETH we
  deduce that detecting positivity of an $\omega(\log
  n)$-simultaneous MAX-$f$-CSP requires time super-polynomial in $n$.
\end{proof}

\subsection{Hardness for Simultaneous \maxwsat}
\begin{proposition}[Proposition~\ref{proposition:results:hardness:maxwsat} restated]
  For all integers $w \ge 4$ and $\epsilon > 0$, given $k \ge 2^{w-3}$
  simultaneous instances of {\sc Max-E$w$-SAT} that are simultaneously
  satisfiable, it is \np-hard to find a $(\nfrac{7}{8}+\eps)$-minimium
  approximation.
\end{proposition}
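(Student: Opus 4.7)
The plan is to give a simple gap-preserving reduction from {\sc Max-E3-SAT} to $k$-fold simultaneous {\sc Max-E$w$-SAT}, combined with H\aa stad's theorem that it is \np-hard to distinguish satisfiable {\sc Max-E3-SAT} instances from those in which only a $(\nfrac{7}{8}+\epsilon')$-fraction of clauses can be satisfied. Since a $(\nfrac{7}{8}+\eps)$-Pareto approximation implies the same minimum approximation, it suffices to handle the minimum approximation case.

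Let $\phi$ be a {\sc Max-E3-SAT} instance on variables $x_1,\dots,x_n$ with $m$ clauses. Introduce $w-3$ fresh ``padding'' variables $y_1,\dots,y_{w-3}$ shared across all constructed instances. For each $b \in \{0,1\}^{w-3}$ create a {\sc Max-E$w$-SAT} instance $\calW_b$ as follows: for each 3-clause $C_i = (\ell_{i,1} \vee \ell_{i,2} \vee \ell_{i,3})$ of $\phi$, include in $\calW_b$ the $w$-clause
\[
C_i^b \;=\; \ell_{i,1} \vee \ell_{i,2} \vee \ell_{i,3} \vee z_1^{b_1} \vee \cdots \vee z_{w-3}^{b_{w-3}},
\]
where $z_j^{b_j} = y_j$ if $b_j = 0$ and $z_j^{b_j} = \bar{y}_j$ if $b_j = 1$. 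Thus the padding literals appended in $\calW_b$ are exactly those falsified by the assignment $y = b$. This gives $k = 2^{w-3}$ simultaneous instances on $n + (w-3)$ variables, each with exactly $m$ clauses of width exactly $w$ on distinct variables (so this is a valid {\sc Max-E$w$-SAT} instance).

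For the completeness direction: if $\phi$ is satisfiable by some $x^\star$, then the assignment $(x^\star, y)$ (for any $y$) satisfies every clause $C_i^b$ in every instance $\calW_b$, because the 3-SAT part is already satisfied. Thus all $k$ instances are simultaneously satisfiable. For the soundness direction: given any assignment $(x, y)$, let $y^\star$ denote the restriction to the padding variables. In instance $\calW_{y^\star}$ every padding literal is false by construction, so $C_i^{y^\star}$ is satisfied iff $C_i$ is satisfied by $x$. Hence the fraction of clauses satisfied in $\calW_{y^\star}$ equals the fraction of clauses of $\phi$ satisfied by $x$, which is at most the optimum of $\phi$. Consequently $\min_{\ell \in [k]} \val(\cdot, \calW_\ell) \le \opt(\phi)$.

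Combining these two bounds with H\aa stad's \np-hardness of distinguishing $\opt(\phi) = 1$ from $\opt(\phi) \le \nfrac{7}{8} + \epsilon'$ (for any $\epsilon' > 0$), any polynomial-time $(\nfrac{7}{8}+\eps)$-minimum approximation algorithm for the simultaneous problem would, by comparing its output value to $\nfrac{7}{8} + \eps$, distinguish the two cases for a suitable $\epsilon' < \eps$, yielding \np-hardness. The only subtle step (and arguably the main ``obstacle'', though it is quite mild here) is ensuring the gadget yields a valid E-$w$-SAT instance with distinct variables per clause, which is handled by using fresh padding variables $y_1,\dots,y_{w-3}$ disjoint from $x_1,\dots,x_n$. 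The extension to Pareto approximation is immediate, since setting $c_1 = \cdots = c_k = 1$ reduces Pareto to minimum approximation.
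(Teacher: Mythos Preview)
Your proof is correct and essentially identical to the paper's own argument: both reduce from H\aa stad's hardness for {\sc Max-E3-SAT} by introducing $w-3$ fresh padding variables and creating one {\sc Max-E$w$-SAT} instance for each of the $2^{w-3}$ sign patterns of the padding literals, so that for any assignment the instance matching the chosen padding collapses exactly to $\phi$. The paper states slightly more compactly that the minimum value over the $k$ instances equals the value on $\phi$ (which you effectively prove as well), but the structure and ideas are the same.
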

\begin{proof}
  We know that given a satisfiable {\sc Max-E3-SAT} instance, it is
  \np-hard to find an assignment that satisfies a
  $(\nfrac{7}{8}+\eps)$ fraction of the
  constraints~\cite{Hastad01}. We reduce a single {\sc Max-E3-SAT}
  instance to the given problem as follows : Let $\Phi$ be an instance
  of {\sc Max-E3-SAT} with clauses $\{C_i\}_{i=1}^m$ on variable set
  $\{x_1,\ldots, x_n\}$. Given $w \ge 4,$ let $\{z_1,\ldots,
  z_{w-3}\}$ be a fresh set of variables. For every, $a\in
  \{0,1\}^{w-3}$, we construct a {\sc Max-E$w$-SAT} instance with
  clauses $\{C_i \vee \vee_{j=1}^w (z_j\oplus a_j)\}_{i=1}^m,$ where
  $z_j\oplus 0 = z_j$ and $z_j\oplus 1 = \bar{z_j}.$ It is
  straightforward to see that for any assignment, its value on $\Phi$
  is the same as the minimum of its value on the {\sc Max-E$w$-SAT}
  instances, immediately implying the result. 
\end{proof}

\section{Algorithm for Unweighted \maxcut}
\label{section:maxcut-unweighted}
For simultaneous unweighted \maxcut instances, we can use the
Goemans-Williamson SDP to obtain a slightly better approximation. The
algorithm, {\sc UnweightedMC}, is described in
Figure~\ref{fig:maxcut-unweighted}.

Let $V$ be the set of vertices. Our input consists of an integer $k \ge 1,$ and $k$ unweighted
instances of \maxcut, specified by indicator functions
$\calW_1,\ldots,\calW_k$ of edge set. Let $m_\ell$ denotes the number
of edges in graph $\ell\in[k]$. We consider these graphs as weighted
graphs with all non-zero edge weights as $\frac{1}{m_\ell}$ so that
the total weight of edges of in a graph is $1$. For a given subset $S$ of vertices, we say an edge is {\em active} if at least one of its endpoints is in $V\setminus S.$
\begin{figure*}[h]
\begin{tabularx}{\textwidth}{|X|}
\hline
\vspace{0mm}
{\bf Input}: $k$ unweighted instances of \maxcut $\calW_1,\ldots,\calW_k$ on the vertex set $V$. \\
{\bf Output}: A cut of $V.$
\begin{enumerate}[label=\arabic*.]

\item Set $\epsilon \defeq \frac{1}{1600\cdot c_0^2k^2}, t=
  \frac{100k}{\eps^2},  S=\emptyset, D=\emptyset$ ($c_0$ is the
  constant from Lemma~\ref{lem:cmm_maxcsp}).
\item If every graph has more than $t$ edges, then go to Step~\ref{item:alg:umc:partial}.
\item \label{item:alg:umc:loop}Repeat until there is no $\ell \in
  [k]\setminus D$ such that the instance $\calW_\ell$ has less than
  $t^{3^{|D|}}$ active edges given $S$.
\begin{enumerate}
\item Let $\ell \in [k]$ be an instance with the least number of edges active edges given $S$.
\item Add all the endpoints of the edge set of instance $\calW_\ell$ into set $S$.
\item $D \leftarrow D\cup \ell$
\end{enumerate}
\item \label{item:alg:umc:partial} For each partial assignment $h: S \to \{0,1\}$ (If $S=\emptyset$ then do the following steps without considering partial assignment $h$)
\begin{enumerate}
\item \label{item:alg:umc:partial:sdp}Run the SDP algorithm for instances in $[k]$ given by Lemma ~\ref{lem:cmm_maxcsp} with $h$ as a partial assignment. Let $h_1$ be the assignment returned by the algorithm. (Note $h_1|_S = h$)
\item \label{item:alg:umc:partial:urandom}Define $g: V \setminus S \to \{0,1\}$ by independently sampling
$g(v) \in \{0,1\}$ with $\E[g(v)] = \nicefrac{1}{2},$ for each $v \in V\setminus S.$ In this case the cut is given by an assignment $h\cup g$.
\item Let $\out_{h}$ be the better of the two solutions ($h_1$ and $h\cup g$).
\end{enumerate}
\item Output the largest $\out_{h}$ seen.
\vspace{-3mm}
\end{enumerate}
\\ 
\hline
\end{tabularx}
\caption{Algorithm {\algouwmaxcut} for approximating unweighted
  simultaneous \maxcut}
  \label{fig:maxcut-unweighted}
\end{figure*}

\subsection{Analysis of \algouwmaxcut}
For analysing the algorithm {\algouwmaxcut}, we need the following
lemma that is proven by combining SDP rounding for 2-SAT
from~\cite{CharikarMM06} with a Markov argument. A proof is
included in Section~\ref{section:sdp} for completeness.
\begin{lemma}
\label{lem:cmm_maxcsp}
For $k$ simultaneous instances of any {\sc MAX-2-CSP} such that there exists
an assignment which satisfies a $1-\eps$ weight of the constraints in
each of the instances, there is an efficient algorithm that, for $n$
large enough, given an optimal partial assignment $h$ to a subset of
variables, returns a full assignment which is consistent with $h$ and
simultaneously satisfies at least $1-c_0k\sqrt{\eps}$ (for an absoute
constant $c_0$) fraction of the constraints in each instance with
probability 0.9.
\end{lemma}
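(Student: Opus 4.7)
The plan is to invoke the Charikar–Makarychev–Makarychev (CMM) SDP rounding for near-satisfiable MAX-2-CSP, but applied to a single \emph{simultaneous} SDP relaxation, and then convert its in-expectation guarantee into a per-instance high-probability guarantee by Markov's inequality followed by a union bound.

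First, I would set up a simultaneous SDP relaxation in the standard $\pm v_0$ formalism: there is a unit vector $v_0$, and one unit vector $v_i$ per variable $x_i$, with the intended meaning $v_i = v_0$ iff $x_i = 1$. For each instance $\ell \in [k]$, the usual MAX-2-CSP SDP value, written as a linear form in the inner products $\langle v_i, v_j\rangle$ and $\langle v_i, v_0\rangle$, is required to be at least $1-\eps$; this turns into a collection of linear constraints on the Gram matrix. To enforce the given partial assignment $h$ on $S$, I simply add the constraints $v_i = h(v_i)\cdot v_0$ (interpreted as $\pm v_0$) for $i \in S$. Feasibility of this simultaneous SDP follows from the hypothesis that $h$ extends to some $f^\star$ simultaneously satisfying a $1-\eps$ fraction of constraints in each instance (take $v_i = f^\star(v_i)\cdot v_0$).

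Next, I would apply the CMM rounding procedure (which, for a single MAX-2-CSP instance with SDP value $1-\eps_\ell$, produces an assignment satisfying a $1 - c_1\sqrt{\eps_\ell}$ fraction of constraints \emph{in expectation}, for an absolute constant $c_1$). Run this rounding on the single simultaneous SDP solution above. Because all vectors for variables in $S$ are $\pm v_0$, the standard CMM threshold-style rounding automatically extends $h$ (with probability $1$ those variables are rounded to $h(v_i)$); no conditioning is needed. For each instance $\ell$, the resulting random assignment $f$ satisfies $\E[1-\val(f,\calW_\ell)] \le c_1\sqrt{\eps_\ell} \le c_1\sqrt{\eps}$, since the SDP value for instance $\ell$ is at least $1-\eps$.

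Now comes the Markov + union bound step, which is the only real step beyond invoking CMM. Set $c_0 = 20 c_1$. For each $\ell \in [k]$, Markov's inequality gives
\[
\Pr\bigl[\,1-\val(f,\calW_\ell) > c_0 k \sqrt{\eps}\,\bigr] \;\le\; \frac{c_1 \sqrt{\eps}}{c_0 k \sqrt{\eps}} \;=\; \frac{1}{20 k}.
\]
A union bound over the $k$ instances then bounds the total failure probability by $1/20 \le 0.1$. Thus with probability at least $0.9$, the rounded assignment $f$ is consistent with $h$ and satisfies at least a $1 - c_0 k \sqrt{\eps}$ fraction of constraints in \emph{every} instance, as required.

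The only genuinely non-routine ingredient is the CMM rounding itself, which is cited; the rest is a clean bookkeeping argument. The mild subtlety I would double-check is that the CMM rounding, written in the $v_0$-based form, genuinely respects hard $v_i = \pm v_0$ constraints (so that the output is literally consistent with $h$, not just consistent in expectation); this is immediate for any rounding whose output on $v_i$ depends only on $\langle v_i, r\rangle$-type projections of a random Gaussian $r$, because $v_i = \pm v_0$ forces $\langle v_i, r\rangle = \pm \langle v_0, r\rangle$ and hence a deterministic output equal to $h(v_i)$.
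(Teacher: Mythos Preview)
Your proposal is correct and follows essentially the same route as the paper: write a simultaneous SDP with the partial assignment hard-coded via $v_i = \pm v_0$, apply CMM rounding to get a $1 - O(\sqrt{\eps})$ guarantee in expectation per instance, then use Markov's inequality plus a union bound over the $k$ instances to get the simultaneous $1 - O(k\sqrt{\eps})$ guarantee with probability $0.9$. The only cosmetic difference is that the paper first reduces general {\sc Max-2-CSP} to {\sc Max-2-SAT} (via the standard ``at most four 2-SAT clauses per constraint'' gadget) before writing the SDP and invoking the CMM rounding, whereas you invoke CMM directly on {\sc Max-2-CSP}; either way works.
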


Let $S^\star, D^\star$ denote the set $S$ and $D$ that we get at the end of step 3 of the algorithm  {\algouwmaxcut}. Let $f^\star : V \rightarrow \B$ denote the optimal assignment and let $h^\star = f^\star|_S{^\star}$.
\begin{theorem}
  For large enough $n$, given $k$ simultaneous unweighted \maxcut
instances on $n$ vertices, the algorithm {\algouwmaxcut} returns computes a $\left(\frac{1}{2} +
  \Omega\left(\frac{1}{k^2}\right)\right)$-minimum approximate
  solution with probability at least $0.9.$ The running time is
$2^{2^{2^{O(k)}}}
\cdot \poly(n).$ 
\end{theorem}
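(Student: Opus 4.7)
The plan is to trace through \algouwmaxcut, bounding its running time from the recursive growth of $|S^\star|$ and establishing its approximation ratio through a two-case analysis on $\delta := 1 - \opt$, where $\opt$ is the simultaneous maximum-cut optimum and $c_0$ is the constant from Lemma~\ref{lem:cmm_maxcsp}. Throughout, let $f^\star$ be an optimal simultaneous assignment and set $h^\star := f^\star|_{S^\star}$.

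First I would bound $|S^\star|$. Each iteration of Step~3 processes an instance with fewer than $t^{3^{|D|}}$ active edges and adds at most $2 t^{3^{|D|}}$ endpoints to $S$, so $|S^\star| \leq 2 \sum_{i=0}^{k-1} t^{3^i} = O(t^{3^{k-1}})$. With $t = \Theta(k/\eps^2) = \poly(k)$, this gives $|S^\star| = 2^{O(3^k \log k)} = 2^{2^{O(k)}}$, so iterating over all $2^{|S^\star|} = 2^{2^{2^{O(k)}}}$ partial assignments in Step~4, each costing $\poly(n)$ for the SDP and for the random sampling, yields the claimed running time. A structural consequence I would record for later: for any $\ell \not\in D^\star$, the loop's termination gives $a_\ell \geq t^{3^{|D^\star|}}$ while $|S^\star| \leq 4 t^{3^{|D^\star|-1}}$, so $|S^\star| = O(a_\ell / t^2)$; hence the inactive edges of $\calW_\ell$, numbering at most $\binom{|S^\star|}{2}$, form an $O(1/t) = O(\eps^2)$ fraction of its edges.

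For correctness I would show that in the Step~4 iteration that takes $h = h^\star$, at least one of the two candidate solutions (the SDP output $h_1$ or the random extension $h^\star \cup g$) achieves the required ratio on every instance. Choose the threshold $T = 1/(16 c_0^2 k^2)$ and split on $\delta$. If $\delta \leq T$, then $f^\star$ witnesses the hypothesis of Lemma~\ref{lem:cmm_maxcsp} with parameter $\delta$, so with probability $\geq 0.9$ the SDP returns an assignment $h_1$ consistent with $h^\star$ satisfying $\val(h_1, \calW_\ell) \geq 1 - c_0 k \sqrt{T} = 3/4$ in each instance; since $\opt \leq 1$, the ratio is $\geq 3/4$. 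If $\delta > T$, then for $\ell \in D^\star$ all edges are inactive, so $\val(h^\star \cup g, \calW_\ell) = \val(f^\star, \calW_\ell) \geq \opt$ trivially yields ratio $1$; while for $\ell \not\in D^\star$ the expectation over $g$ is at least $(1 - O(\eps^2))/2 = 1/2 - O(\eps^2)$, and the variance bound discussed below together with Chebyshev and a union bound over the $\leq k$ such instances give $\val(h^\star \cup g, \calW_\ell) \geq 1/2 - O(\eps)$ simultaneously with probability $\geq 0.9$. The ratio is then $(1/2 - O(\eps))/(1 - \delta) \geq 1/2 + \delta/2 - O(\eps) \geq 1/2 + \Omega(1/k^2)$.

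The main obstacle is the variance control in this random case: fixing $h = h^\star$ on $S^\star$ destroys the pairwise independence of edge-cut indicators that holds under a fully uniformly random cut. Two ``mixed'' active edges, each with one endpoint in $S^\star$ and one in $V \setminus S^\star$, that share their $V \setminus S^\star$ endpoint $v$ have covariance $\pm 1/4$ depending on whether $h^\star$ agrees on their $S^\star$ endpoints, so the naive bound $\var(X_\ell) \leq a_\ell/4 + \sum_v (d_v^{\mathrm{mix}})^2/4$ could scale like $|S^\star| \cdot a_\ell$. The fix is to combine the trivial estimate $d_v^{\mathrm{mix}} \leq |S^\star|$ with the super-exponential bound $|S^\star| = O(a_\ell/t^2)$ derived above, concluding $\var(\val(h^\star \cup g, \calW_\ell)) = O(1/t^2)$, which suffices for the required Chebyshev deviation of $O(\eps)$ once $t = \Theta(k/\eps^2)$.
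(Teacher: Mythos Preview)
Your proof is correct and follows the same two-case strategy as the paper (SDP when $\opt$ is near $1$, uniformly random extension otherwise), with the identical variance argument exploiting the super-exponential gap between $|S^\star|$ and the number of active edges in each surviving instance. One harmless slip: your claimed $\var(\val(h^\star\cup g,\calW_\ell)) = O(1/t^2)$ is too optimistic in the case $|D^\star|=0$, where $S^\star=\emptyset$ and $m_\ell$ is only guaranteed to exceed $t$, so the diagonal term gives $\var \le 1/(4m_\ell) = O(1/t)$; but $O(1/t)$ already suffices for Chebyshev plus the union bound with $t=\Theta(k/\eps^2)$, so nothing breaks.
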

\begin{proof}
  We will analyze the approximation guarantee of the algorithm when
  the optimal partial assignment $h^\star$ to the variables $S^\star$
  is picked for $h$ in Step~\ref{item:alg:umc:partial} of the
  algorithm. Note that Step~\ref{item:alg:umc:partial:sdp} and
  \ref{item:alg:umc:partial:urandom} maintain the assignment to the
  set $S^\star$ given in Step~\ref{item:alg:umc:partial} Hence, for
  all instances $\ell \in D^\star$, we essentially get the optimal cut
  value $\val(f^\star, \calW_\ell)$. We will analyze the effect of
  rounding done in Step~\ref{item:alg:umc:partial:sdp} and
  ~\ref{item:alg:umc:partial:urandom} on instances in $[k]\setminus
  D^\star$ for a partial assignment $h^\star$ to $S^\star$. Since we
  are taking the best of the two roundings, it is enough to show the
  claimed guarantee for at least one of these two steps.

Let $\opt$ be the value of optimal solution for a given set of instances $[k]$. We consider two cases depending on the value of this optimal solution. 

\medskip
{\bf 1.} $\opt \ge (1-\epsilon)$: In this case, we show that the cut
  returned in Step~\ref{item:alg:umc:partial:sdp} is good with high
  probability. 

  Since the $\opt$ is at least $(1-\epsilon)$, and $h^\star$ is an
  optimal partial assignment, we can apply Lemma~\ref{lem:cmm_maxcsp}
  such that with probability at least $0.9$ we get a cut of value at
  least $(1-10c_0 k\cdot\sqrt{\epsilon})$ for all graphs $\ell \in
  [k]\setminus D^\star$, for some constant $c_0$. In this case, the
  approximation guarantee is at least :
$$(1-10c_0k\cdot \sqrt{\epsilon}) \geq \frac{3}{4}.$$

\medskip
{\bf 2.} $\opt< (1-\epsilon)$: In this case, we show that the cut
  returned in Step~\ref{item:alg:umc:partial:urandom} gives the
  claimed approximation guarantee with high probability.

Fix a graph $\ell \in [k]\setminus D^\star$, if any. Let $m_\ell$ be the number of edges in this graph. We know that $m_\ell \ge t^{3^{|D^\star|}}$ and also $|S^\star|\leq 4t^{3^{|D^\star| - 1}}$.
Let $Y_\ell$ be a random variable defined as
\[Y_\ell \defeq \val(h^\star \cup g,\calW_\ell),\] that specifies the
fraction of {\em total} edges that are cut by assignment $h^\star \cup
g$ where $g$ is a random partition $g$ of a vertex set $V\setminus
S^\star$. The number of edges of graph $\ell$ that are not active
given $S^\star$ is at most $\nfrac{1}{2}\cdot |S^\star|^2$. If
$|D^\star| = 0,$ we know that all the edges in graph $\ell$ are
active. Otherwise, using the bounds on $m_\ell$ and $|S^\star|,$ we
get that at least a $(1-\nfrac{1}{t})$ fraction of the total edges are
active given $S^\star$. This implies that for uniformly random
partition $g$,
$$\E_g[Y_\ell] \ge \frac{1}{2}  \sum_{\substack{C \in \calC \\  C \in \Act(S^\star)}} \calW_\ell(C) \ge \nfrac{1}{2}(1-\nfrac{1}{t}).$$
We now analyze the variance of a random variable $Y_\ell$ under uniformly random assignment $g : V\setminus S^\star \rightarrow \B$.
\begin{align*}
\Var_g[Y_\ell]&= \sum_{C_{1}, C_{2}\in \Act(S^\star)} \calW(C_1) \calW(C_2) \cdot ( \E[C_1(h^\star\cup g)C_2(h^\star\cup g)] - \E[C_1(h^\star\cup g)]\E[C_2(h^\star\cup g)] ).
\end{align*}
The term in the above summation is $zero$ unless we have either $C_1 =
C_2$ (in which case we know $\E[C_1(h^\star\cup g)C_2(h^\star\cup g)]
- \E[C_1(h^\star\cup g)]\E[C_2(h^\star\cup g)] = \nfrac{1}{4}$) or
when the edges $C_1$ and $C_2$ have a common endpoint in $V\setminus
S^\star$ and the other endpoint in $S^\star$ (in this case
$\E[C_1(h^\star\cup g)C_2(h^\star\cup g)] - \E[C_1(h^\star\cup
g)]\E[C_2(h^\star\cup g)] \leq \nfrac{1}{4}$). 
For $v\in V\setminus S^\star$, let $\kappa_v$ be the set of edges whose one endpoint is $v$ and other endpoint in $S^\star$. Thus,

\begin{align*}
\Var_g[Y_\ell]&\leq \frac{1}{4}\sum_{C \in \Act(S^\star)} \calW(C)^2  + \frac{1}{4}\sum_{v\in V\setminus S^\star}\sum_{C_{1}, C_{2} \in \kappa_v} \calW(C_1) \calW(C_2)\\
& = \frac{1}{4m_\ell} + \frac{1}{4}\frac{1}{m_\ell^2}\sum_{v\in V\setminus S^\star} |\kappa_v|^2 \\
&\leq \frac{1}{4m_\ell} + \max_{v\in V\setminus S^\star}{|\kappa_v|} \cdot \frac{1}{4}\frac{1}{m_\ell^2}\sum_{v\in V\setminus S^\star} |\kappa_v|\\
&\leq \frac{1}{4m_\ell} + |S^\star| \cdot \frac{1}{4}\frac{1}{m_\ell^2}\cdot m_\ell\\
&\leq \frac{1}{4m_\ell} + \frac{1}{4} \frac{|S^\star|}{m_\ell}\\
&\leq \frac{1}{4t^{3^{|D^\star|}}} + \frac{1}{4}\frac{|S^\star|}{t^{3^{|D^\star|}}} \leq \frac{1}{2t}.
\end{align*}
Hence, by Chebyshev's Inequality, we have
\[ \Pr\left[Y_\ell < \frac{1}{2}\cdot(1-\epsilon_0 -
  \nfrac{1}{t})\right] \leq
\frac{4\var_g[Y_\ell]}{\epsilon_0^2}
\leq\frac{4 \cdot \nfrac{1}{2t}}{\epsilon_0^2} \leq \frac{2}{\eps_0^2t}.\]

By a union bound, with probability at least
$1-\frac{2k}{\epsilon_0^2\cdot t},$ we get a simultaneous cut of value
at least $\frac{1}{2}\cdot(1-\epsilon_0 - \nfrac{1}{t})$ for all $\ell
\in [k]\setminus D^\star.$ If we take $\epsilon_0 =
\frac{\sqrt{20k}}{\sqrt{t}}$, then with probability at least $0.9$ we
get a cut of value at least $\frac{1}{2}\cdot(1-\epsilon_0 -
\nfrac{1}{t})$ for all $\ell\in [k]\setminus D^\star$. In this case,
the approximation guarantee is at least 
$$  \frac{\frac{1}{2}\cdot(1-\epsilon_0 - \nfrac{1}{t})}{\left( 1 - \frac{1}{(40c_0k)^2}\right)} = \left( \frac{1}{2} +\Omega\left(\frac{1}{k^2}\right)\right).$$
\end{proof}


\section{Semidefinite Programs for Simultaneous Instances}
\label{section:sdp}
In this section, we study Semidefinite Programming (SDP) relaxations for simultaneous MAX-2-CSP instances.

\subsection{Integrality gaps for Simultaneous \maxcut SDP}
In this section, we show the integrality gaps associated with the natural SDP of {\em minimum approximation} problem for $k$-fold simultaneous \maxcut. 

Suppose we have $k$ simultaneous \maxcut instances on the set of vertices $V=\{x_1,\ldots,x_n\},$ specified by the associated weight functions $\calW_1,\ldots ,\calW_k.$ As before, let $\calC$ denotes the set of all possible edges on $V.$ We assume that for each $\ell \in [k],$ $\sum_{C \in \calC} \calW_{\ell}(C) = 1.$ Following Goemans and Williamson~\cite{GW-sdp95}, the semi-definite programming relaxation for such an instance is described in Figure~\ref{fig:maxcut-sdp}.
\begin{figure*}[h]
\begin{tabularx}{\textwidth}{|X|}
\hline
\vspace{-10mm}
\begin{center}
\[\begin{array}{rrllr}
& \textrm{maximize } \quad  t & & &\\
\textrm{ s.t. } &   \displaystyle \sum_{\substack{C \in \calC \\ C = (x_i, x_j)  } } \frac{1}{2}\cdot \calW_{\ell}(C) \cdot \left( 1 -  \pair{v_i,v_j}\right) & \ge & t & \forall \ell \in [k] \\
&  \norm{v_i}^2 & =& 1& \textrm{ for } i=1,\ldots,n  \\
\end{array}\]
\vspace{-24pt}
\end{center}
\\
\hline
\end{tabularx}
\caption{Semidefinite Program (SDP) for minimum approximation Simultaneous \maxcut}
  \label{fig:maxcut-sdp}
\end{figure*}
We now prove the following claims about integrality gap for the above SDP.
\begin{claim}
\label{claim:maxcut-gap1}
For
weighted instances, the SDP for minimum approximation of simultaneous \maxcut does not have {\em any} constant
integrality gap.
\end{claim}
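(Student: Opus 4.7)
The plan is to exhibit an explicit family of weighted simultaneous \maxcut instances, parametrized by $\epsilon > 0$, with SDP value a fixed constant (say $3/4$) but integral optimum at most $\epsilon$. Taking $\epsilon \to 0$ makes the ratio unbounded, which is stronger than ruling out any constant integrality gap.

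The construction lives on just three vertices $V=\{x_1,x_2,x_3\}$ and uses $k=3$ instances $\calW_1, \calW_2, \calW_3$. Each instance has the same three edges (the triangle on $V$), but places almost all its weight on a distinct ``heavy'' edge: in instance $\calW_\ell$, the edge $e_\ell \in \{(x_1,x_2),(x_2,x_3),(x_1,x_3)\}$ (with $e_\ell$ different for each $\ell$) has weight $1-\epsilon$, and each of the other two edges has weight $\epsilon/2$. Note $\sum_C \calW_\ell(C) = 1$, as required.

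For the integral optimum, observe that any partition of three vertices into two sides cuts exactly two edges of the triangle and leaves the third uncut. Whichever edge is uncut is the heavy edge of some instance $\ell^\star$, and in that instance the cut value is exactly $\epsilon$ (the two cut edges each contribute $\epsilon/2$). Hence the minimum value across the three instances is at most $\epsilon$.

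For the SDP, I would exhibit a feasible solution by placing $v_1,v_2,v_3$ as unit vectors in $\R^2$ at pairwise angles of $120^\circ$, so $\langle v_i,v_j\rangle = -1/2$ for all $i\neq j$. Then every edge contributes $\frac{1}{2}(1 - \langle v_i,v_j\rangle) = 3/4$ to every instance, and since each instance's weights sum to $1$, every instance achieves $\frac{3}{4}$. Thus the SDP in Figure~\ref{fig:maxcut-sdp} is feasible with $t = 3/4$, giving SDP value $\geq 3/4$ while the integral minimum is $\leq \epsilon$. The ratio $\frac{3}{4\epsilon}$ is unbounded.

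There is essentially no obstacle here; the only thing to be careful about is writing the two bounds (integral $\leq \epsilon$ and SDP $\geq 3/4$) explicitly, and noting that the construction trivially extends to larger $n$ by adding isolated vertices if one wants instances on arbitrarily many vertices.
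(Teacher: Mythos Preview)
Your proposal is correct and is essentially the same construction as the paper's: three instances on a triangle, each concentrating almost all weight on a distinct edge, with the equiangular $120^\circ$ SDP vectors giving a constant simultaneous objective while any integral cut misses one heavy edge. You have simply made the weights and computations explicit, which is fine.
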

\begin{proof}
Consider $3$ simultaneous instances such that all but a
tiny fraction of the weight of instance $i$ is on edge $i$ of a
3-cycle. Clearly, no cut can simultaneously cut all the three edges in
the three cycle, and hence the optimum is tiny. However, for the
simultaneous SDP, a vector solution that assigns to the three vertices
of the cycle three vectors such that $\pair{v_i,v_j} = -\nfrac{1}{2}$
for $i \neq j$ gives a constant objective value for all three
instances.
\end{proof}

\begin{claim}
\label{claim:maxcut-gap2}
For every fixed $k$, there exists $k$-instances of \maxcut where the SDP relaxation has value
$1- \Omega\left(\frac{1}{k^2}\right)$, while the maximum simultaneous
cut has value only $\frac{1}{2}.$ Moreover, the random hyperplane rounding for a good vector solution for this instance, returns a simultaneous cut of value 0.
\end{claim}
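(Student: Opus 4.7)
The plan is to give an explicit construction of $k$ simultaneous \maxcut instances on a common vertex set $V$, together with a unit-vector SDP solution $\{v_i\}_{i\in V}$, satisfying three conditions simultaneously: (i) the simultaneous SDP value is at least $1-\Omega(1/k^2)$; (ii) no bipartition of $V$ cuts more than a $\nfrac{1}{2}$ fraction of the weight in every instance; and (iii) every hyperplane through the origin yields a bipartition that cuts zero weight in at least one instance. Conditions (i) and (iii) together establish the desired integrality gap for the hyperplane rounding, while (ii) upper-bounds the true integral simultaneous optimum.

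For the construction, I would place the vectors $v_i$ on the unit sphere in low dimension (ideally $\mathbb{R}^2$, perturbed minimally into higher dimensions only if needed to break degeneracies). In low dimension, the set of bipartitions obtainable from hyperplane rounding is combinatorially restricted --- essentially to ``arc-bipartitions'' indexed by the direction of the hyperplane normal. I would then design instance $\ell$'s edges so that every edge has its two endpoints at angular distance $\pi-\Theta(1/k)$ on the circle, which yields per-edge SDP value $\tfrac{1}{2}(1+\cos(\pi/k)) = 1-\Theta(1/k^2)$, giving (i). The edge positions would be chosen so that the range of hyperplane directions for which every edge of instance $\ell$ remains monochromatic is an arc of length $\Theta(1/k)$, and the union of these $k$ arcs (one per instance) tiles the full circle of hyperplane directions, which gives (iii). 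For (ii), I would arrange the $k$ instances to have mutually incompatible ``preferred'' bipartitions, for instance by rotating a cyclic vertex arrangement by $\pi/k$ between consecutive instances, so that any assignment $\sigma$ achieving cut value close to $1$ on one instance necessarily appears nearly random to the others and therefore cuts only about $\nfrac{1}{2}$ of their weight.

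The hard part will be reconciling (i) with (ii): SDP value near $1$ per instance requires near-antipodal endpoints, and a near-antipodal single edge is trivially cuttable by assigning its endpoints opposite labels, so disjoint single-edge instances would give simultaneous integral cut $1$ rather than $\nfrac{1}{2}$. To prevent a single bipartition from simultaneously cutting every instance, the $k$ instances must therefore share and compete for the same pool of near-antipodal vertex pairs in incompatible ways, so that any $\sigma$ that cuts instance $\ell$ is misaligned with instance $m\neq \ell$. The delicate calibration will be choosing edge supports and weights so that the $\nfrac{1}{2}$ upper bound in (ii) is tight, while keeping the ``not-cut'' arcs of (iii) large enough (length $\Theta(1/k)$ each) to tile the space of hyperplane directions; the two demands pull in opposite directions and must be balanced precisely. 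I expect the final construction to be a small cyclically symmetric configuration, with all three properties verified by a direct computation using the product structure of $\cos(\pi - \pi/k)$ and a symmetry-based argument for the integral upper bound.
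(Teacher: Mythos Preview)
Your plan is the paper's approach in outline; the concrete object you are circling is the blow-up of an odd $k$-cycle. Take $k$ odd, groups $S_0,\dots,S_{k-1}$ of $n$ vertices each, and let instance $G_i$ be the complete bipartite graph on $S_i\times S_{(i+1)\bmod k}$ with uniform weights. For the SDP solution, place every vertex of $S_i$ at the \emph{same} point on the unit circle, at angle $\theta_i = i(\pi-\pi/k)$; since $k$ is odd this closes up, and consecutive groups sit at angular distance $\pi-\pi/k$, giving each instance SDP value $\tfrac12(1+\cos(\pi/k))=1-\Theta(1/k^2)$. Because all of $S_i$ collapses to a single vector, hyperplane rounding assigns each group one side; the $k$ group-vectors form (a rotation of) the $k$-th roots of unity with each instance-edge spanning $(k-1)/2$ positions around the circle, so any generic half-plane contains both endpoints of exactly one such edge, forcing that instance's cut to $0$.

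The part of your plan that would not go through as written is the argument for (ii). The ``looks random to the other instances'' heuristic is neither needed nor sharp; the $\tfrac12$ bound is a parity fact. Writing $p_i\in[0,1]$ for the fraction of $S_i$ on one side, instance $G_i$ has cut value $f(p_i,p_{i+1}) = p_i+p_{i+1}-2p_ip_{i+1}$, and one checks $f(p,q)\le\tfrac12 \iff (1-2p)(1-2q)\ge 0$. Since $k$ is odd the signs of $(1-2p_i)$ cannot alternate around the cycle, so some consecutive pair agrees, giving $\min_i f\le\tfrac12$; splitting each $S_i$ in half attains it. The blow-up is precisely what dissolves the tension you flagged between (i) and (ii): without it (single-edge instances on an odd cycle) the simultaneous integral optimum would be $0$, not $\tfrac12$.
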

\begin{proof}
Let $k$ be odd. We define $k$ graphs
on $kn$ vertices. Partition the vertex set into $S_0,S_1,...,S_{k-1},$
each of size $n$. Graph $G_i$ has only edges $(x,y)$ such that $x\in
S_i$ ans $y\in S_{(i+1)\mod k},$ each of weight $\nfrac{1}{n^2}$. The
optimal cut must contain exactly half the number of vertices from each
partition, giving a simultaneous cut value of $\nfrac{1}{2}$. Whereas,
the following SDP vectors achieve a simultaneous objective of
$\left(1-O(\frac{1} {k^2} )\right):$ For all vertices in $S_i,$ we
assign the vector $\left(\cos \frac{i}{k}\pi, \sin \frac{i}{k}\pi
\right).$ It is straightforward to see that applying the hyperplane rounding algorithm to this
vector solution gives (with probability 1) a simultaneous cut value of
0.
\end{proof}

\subsection{SDP for Simultaneous \maxcsp}
For \maxcsp, we will be interested in the regime where the optimum assignment satisfies at least a $(1-\eps)$ fraction of the constraints in each of the instances.

Given a MAX-2-CSP instance, we use the standard reduction to transform it into a MAX-2-SAT instance: We reduce each constraint of the 2-CSP instance with a set of at most 4 2-SAT constraints such that for any fixed assignment, the 2-CSP constraint is satisfied iff all the 2-SAT constraints are satisfied, and if the 2-CSP constraint is not satisfied, then at least one of the 2-SAT constraint is not satisfied. \emph{e.g.} We replace $x_1 \wedge x_2$ with $x1 \vee x_2, \widebar{x_1} \vee x_2,$ and $x_1 \vee \widebar{x_2}.$ Similarly, we replace $x_1 \neq x_2$ with $x_1 \vee x_2$ and $\widebar{x_1} \vee \widebar{x_2}.$ We distribute the weight of the 2-CSP constraint equally amongst the 2-SAT constraints.

Given $k$ simultaneous MAX-2-CSP instances, we apply the above reduction to each of the instances to obtain $k$ simultaneous \maxsat instances. The above transformation guarantees the following:
\begin{itemize}
\item {\bf Completeness} If there was an assignment of variables that simultaneously satisfied all the constraints in each of the MAX-2-CSP instances, then the same assignment satisfies all the constraints in each of the \maxsat instances.
\item {\bf Soundness} If no assignment of variables simultaneously satisfied more than $(1-\eps)$ weighted fraction of the constraints in each of the MAX-2-CSP instances, then no assignment simultaneously satisfies more than $(1-\nfrac{\eps}{4})$ weighted fraction of the constraints in each of the MAX-2-SAT instances.
\end{itemize} 

From now on, we will assume that we have $k$ simultaneous \maxsat
instances on the set of variables $\{x_1,\ldots,x_n\},$ specified by
the associated weight functions $\calW_1,\ldots ,\calW_k.$ As before
$\calC$ denotes the set of all possible 2-SAT constraints on $V.$ We
assume that for each $\ell \in [k],$ $\sum_{C \in \calC}
\calW_{\ell}(C) = 1.$ Following Charikar \etal~\cite{CharikarMM06},
the semi-definite programming relaxation for such an instance is
described in Figure~\ref{fig:maxsat-sdp}.

For convenience, we replace each negation $\widebar{x_i}$ with a new variable $x_{-i},$ that is equal to $\widebar{x_1}$ by definition. For each variable $x_i \in V,$ the SDP relaxation will have a vector $v_i.$ We define $v_{-i} = -v_i.$ We will also have a unit vector $v_0$ that is intended to represent the value 1. For a subset $S$ of variables and a partial assignment $h : S\rightarrow \B$, we write the following SDP for the simultaneous \maxsat optimization problem:

\begin{figure*}[h]
\begin{tabularx}{\textwidth}{|X|}
\hline
\vspace{-10mm}
\begin{center}
\[\begin{array}{rrllr}
& \textrm{maximize } \quad  t & & &\\
\textrm{ s.t. } &   \displaystyle \sum_{\substack{C \in \calC \\ C = x_i \vee x_j} } \calW_{\ell}(C) \cdot \left( \norm{v_0}^2 - \frac{1}{4} \pair{v_i - v_0,v_j-v_0}\right) & \ge & t & \forall \ell \in [k] \\
&  \pair{v_i - v_0,v_j-v_0} & \ge & 0 & \forall \textrm{  constraints } x_i \vee x_j \\
&  \norm{v_i}^2 & =& 1& \textrm{ for } i=-n,\ldots,n  \\
&  v_i & = & -v_{-i} & \textrm{ for } i=1,\ldots,n\\
&  v_i & = & v_0 & \forall i\in S \textrm{ s.t. } h(i) = 1 \\
&  v_j & = & -v_0 & \forall j\in S \textrm{ s.t. } h(j) = 0\\
\end{array}\]
\vspace{-5mm}
\end{center}
\\
\hline
\end{tabularx}
\caption{Semidefinite Program (SDP) with a partial assignment $h:S\rightarrow \B$ for Simultaneous \maxsat}
  \label{fig:maxsat-sdp}
\end{figure*}
\begin{align*}
\end{align*}
\vspace{-20pt}

We first observe that for an optimal partial assignment $h$, the optimum of the above SDP is at least the optimum of the simultaneous maximization problem, by picking the solution $v_i = v_0$ if $x_i = \true,$ and $v_i = -v_0$ otherwise. For this vector solution, we have $\nfrac{1}{4} \cdot \left( \norm{v_0}^2 - \pair{v_i - v_0,v_j-v_0}\right) = 1$ if the constraint $x_1 \vee x_2$ is satisfied by the assignment, and 0 otherwise. Since $\sum_{C \in \calC} \calW_{\ell}(C) = 1$ for all $\ell,$ the optimum of the SDP lies between 0 and 1.

Note that the rounding algorithm defined in ~\cite{CharikarMM06} does not depend on the structure of the vectors in the SDP solution. Thus, the following theorem that was proved without a partial assignment in~\cite{CharikarMM06} also applies to above SDP.
\begin{theorem}
\label{thm:cmm}
Given a single \maxsat instance ($k=1$), there is an efficient randomized rounding algorithm such that, if the optimum of the above SDP is $1-\eps,$ for $n$ large enough, it returns an assignment such that the weight of the constraints satisfied is at least  $1-O(\sqrt{\eps})$ in expectation.
\end{theorem}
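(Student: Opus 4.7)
\begin{proofsketch}
The plan is to essentially invoke the argument of Charikar--Makarychev--Makarychev~\cite{CharikarMM06}, checking that it transfers cleanly to our SDP with partial-assignment constraints $v_i = v_0$ (for $i \in S$ with $h(i)=1$) and $v_j = -v_0$ (for $j \in S$ with $h(j)=0$). First I will recall their rounding: choose a random Gaussian vector $r$ and an appropriate threshold rule (in their algorithm, a bias is introduced using $v_0$ so that $x_i$ is set to $1$ when a certain projection of $v_i$ exceeds a threshold determined by $v_0$ and $r$). The crucial structural observation is that this rounding only consults the vectors $\{v_i, v_0\}$ and the sign constraints relating $v_i$ to $v_{-i}$; in particular, whenever the SDP forces $v_i = v_0$, the rounding deterministically outputs $x_i = 1$, and similarly $v_i = -v_0$ forces $x_i = 0$. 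Hence the rounded assignment is always consistent with $h$.

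Next I would invoke the per-constraint analysis from \cite{CharikarMM06}: for a single clause $C = x_i \vee x_j$, define its SDP ``defect'' $\delta_C \defeq 1 - \tfrac{1}{4}\bigl(\|v_0\|^2 - \langle v_i - v_0, v_j - v_0\rangle\bigr)$. The CMM analysis shows that under the rounding,
\[ \Pr[C \text{ is not satisfied}] \leq c \sqrt{\delta_C} + o(1), \]
for an absolute constant $c$, where the $o(1)$ term vanishes as $n \to \infty$ (this is where the ``$n$ large enough'' assumption is used). The proof of this per-clause inequality uses only the local SDP constraints $\|v_i\|^2 = \|v_0\|^2 = 1$, $v_{-i} = -v_i$, and the triangle-type constraint $\langle v_i - v_0, v_j - v_0\rangle \geq 0$, all of which are enforced by our SDP regardless of the partial assignment. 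So the bound carries over verbatim.

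Finally I aggregate across constraints. Let $\{v_i^\star, v_0^\star\}$ be an optimal SDP solution of value $1 - \eps$, so that $\sum_C \calW(C) \delta_C \leq \eps$. By the Cauchy--Schwarz inequality,
\[ \sum_{C \in \calC} \calW(C) \sqrt{\delta_C} \leq \sqrt{\textstyle\sum_{C} \calW(C)} \cdot \sqrt{\textstyle\sum_{C} \calW(C) \delta_C} \leq \sqrt{\eps}. \]
Therefore the expected weight of clauses left unsatisfied is at most $c\sqrt{\eps} + o(1) = O(\sqrt{\eps})$, and the expected satisfied weight is at least $1 - O(\sqrt{\eps})$.

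The main obstacle is really conceptual rather than technical: it is the verification that the partial assignment constraints do not interfere with the CMM analysis. Once one observes that these constraints only \emph{restrict} the feasible SDP region (and that the rounding is oblivious in the sense that its per-clause guarantees depend only on $\delta_C$), the rest is just a citation to \cite{CharikarMM06} plus a one-line Cauchy--Schwarz.
\end{proofsketch}
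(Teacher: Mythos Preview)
Your approach is essentially the paper's: the paper does not prove this theorem but simply cites \cite{CharikarMM06}, remarking that since the CMM rounding is oblivious to the vector structure it applies unchanged to the SDP with the added partial-assignment constraints $v_i=\pm v_0$. One small slip: your defect should read $\delta_C = 1 - \bigl(\|v_0\|^2 - \tfrac{1}{4}\langle v_i - v_0, v_j - v_0\rangle\bigr) = \tfrac{1}{4}\langle v_i - v_0, v_j - v_0\rangle$, not $1 - \tfrac{1}{4}\bigl(\|v_0\|^2 - \langle v_i - v_0, v_j - v_0\rangle\bigr)$; this does not affect the argument.
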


Now, using Markov's inequality, we can prove the following corollary.
\begin{corollary}
\label{cor:cmm_maxsat}
For $k$ simultaneous instances of \maxsat, there is an efficient randomized rounding algorithm such that if the optimum of the above SDP is $1-\eps,$ for $n$ large enough, it returns an assignment that simultaneously satisfies at least $1-O(k\sqrt{\eps})$ fraction of the constraints in each instance with probability 0.9.
\end{corollary}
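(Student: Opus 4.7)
The plan is to apply Theorem~\ref{thm:cmm} instance-by-instance together with Markov's inequality and a union bound. First I would observe the key structural point: the rounding algorithm of \cite{CharikarMM06} referred to in Theorem~\ref{thm:cmm} is a single procedure that produces one Boolean assignment, and its guarantee is about the \emph{expected} weight of satisfied constraints in \emph{each} individual \maxsat instance whose SDP is being rounded. In particular, the guarantee does not depend on the specific structure of the vector solution beyond the SDP constraints being satisfied.

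Next I would use the assumption that the simultaneous SDP optimum is $1-\eps$: the constraint of the simultaneous SDP forces $\sum_{C = x_i \vee x_j} \calW_\ell(C) \cdot \bigl(\norm{v_0}^2 - \tfrac{1}{4}\pair{v_i - v_0, v_j - v_0}\bigr) \geq 1-\eps$ for every $\ell \in [k]$ simultaneously. Thus the same vector solution is a feasible SDP solution for each of the $k$ individual \maxsat instances with objective value at least $1-\eps$. Applying Theorem~\ref{thm:cmm} to each instance separately (but using the \emph{same} rounded assignment, since the rounding procedure only depends on the vector solution, not on which instance it is meant for), I get that the expected weight of unsatisfied constraints in instance $\ell$ is at most $c \sqrt{\eps}$ for some absolute constant $c$.

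Finally I would apply Markov's inequality: letting $X_\ell$ denote the weight of unsatisfied constraints in instance $\ell$ under the random rounding, we have $\Pr[X_\ell \geq 10 c k \sqrt{\eps}] \leq \frac{1}{10k}$. A union bound over the $k$ instances then yields, with probability at least $1 - k \cdot \tfrac{1}{10k} = 0.9$, that $X_\ell \leq 10 c k \sqrt{\eps} = O(k\sqrt{\eps})$ for every $\ell \in [k]$, i.e.\ the returned assignment simultaneously satisfies at least a $1-O(k\sqrt{\eps})$ fraction of the constraints in every instance.

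The proof is essentially mechanical; the only point worth double-checking is that the rounding procedure of Theorem~\ref{thm:cmm} is truly a \emph{single} algorithm which, given the vector solution, produces one assignment whose expected guarantee holds per instance. This is precisely the oblivious/instance-independent feature of the Charikar--Makarychev--Makarychev rounding, which is noted in the paragraph immediately preceding Theorem~\ref{thm:cmm}, so no extra work is needed there.
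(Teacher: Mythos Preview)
Your proposal is correct and follows essentially the same approach as the paper: observe that the simultaneous SDP solution is feasible for each individual instance with value at least $1-\eps$, apply Theorem~\ref{thm:cmm} to get a per-instance expectation bound on the unsatisfied weight, then use Markov's inequality with threshold $10ck\sqrt{\eps}$ and a union bound over the $k$ instances. Your explicit emphasis on the rounding procedure being a single, instance-oblivious algorithm is exactly the point the paper flags just before Theorem~\ref{thm:cmm}.
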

\begin{proof}
We use the rounding algorithm given by Theorem~\ref{thm:cmm} to round a solution to the SDP for the $k$ simultaneous instances that achieves an objective value of $1-\eps.$
Observe that this solution is also a solution for the SDP for each of the instances by itself with the same objective value. Thus, by Theorem~\ref{thm:cmm}, for each of the instances, we are guaranteed to find an assignment such that the weight of the constraints satisfied is at least  $1-c_0\eps$ in expectation, for some constant $c > 0.$ Since, for any instance, the maximum weight an assignment can satisfy is at most 1, with probability at least $1-\nfrac{1}{10\cdot k}$ for each instance, we get an assignment such that the weight of the constraints satisfied is at least $1-10ck\cdot \sqrt{\eps}.$ Thus, applying a union bound, with probability at least $1-\nfrac{1}{10},$ we obtain an assignment such that the weight of the satisfied constraints in \emph{all} the $k$ instances is at least $1-10ck\cdot \sqrt{\eps}.$
\end{proof}

Combining the above corollary with the reduction from any MAX-2-CSP to MAX-2-SAT, and the completeness of the SDP, we get a proof of Lemma~\ref{lem:cmm_maxcsp}.


\section{Concentration inequalities}
\label{section:concentration}
\begin{lemma}[McDiarmid's Inequality]
\label{lemma:mcdiarmid}
Let $X_1, X_2, \cdots, X_m$ be independent random variables, with $X_i$ taking values in a set $A_i$ for each $i$. Let $\score : \prod A_i \rightarrow \mathbb{R}$ be a function which satisfies:
$$ | \score(x) - \score(x') | \leq \alpha_i $$
whenever the vector $x$ and $x'$ differ only in the $i$-th co-ordinate. Then for any $t>0$
$$\Pr[ |\score(X_1,X_2,\cdots,X_m) - \E[ \score(X_1,X_2,\cdots,X_m) ]| \geq t] \leq 2\exp{\left(\frac{-2t^2}{\sum_i \alpha_i^2} \right)} $$
\end{lemma}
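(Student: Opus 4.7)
The plan is to prove McDiarmid's inequality via the classical Doob martingale argument followed by Azuma--Hoeffding. This is a textbook route, so the sketch is short.

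First, I would set $Y_0 = \E[\score(X_1,\ldots,X_m)]$ and, for each $i \in \{1,\ldots,m\}$, define the Doob martingale
\[ Y_i \;\defeq\; \E\bigl[\,\score(X_1,\ldots,X_m) \,\big|\, X_1,\ldots,X_i\,\bigr]. \]
Then $(Y_i)_{i=0}^m$ is a martingale with respect to the filtration generated by $X_1,\ldots,X_i$, and $Y_m = \score(X_1,\ldots,X_m)$. The quantity we want to control, $\score(X_1,\ldots,X_m) - \E[\score(X_1,\ldots,X_m)]$, is exactly $Y_m - Y_0 = \sum_{i=1}^m (Y_i - Y_{i-1})$.

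The key step is to show that the martingale differences $D_i \defeq Y_i - Y_{i-1}$ are bounded, specifically $|D_i| \leq \alpha_i$ almost surely. To see this, fix a realization $x_1,\ldots,x_{i-1}$ of the first $i-1$ variables and consider the two conditional expectations
\[ U_i \;\defeq\; \sup_{a \in A_i} \E[\score \mid X_1=x_1,\ldots,X_{i-1}=x_{i-1}, X_i = a], \]
\[ L_i \;\defeq\; \inf_{a \in A_i} \E[\score \mid X_1=x_1,\ldots,X_{i-1}=x_{i-1}, X_i = a]. \]
The bounded-differences hypothesis $|\score(x) - \score(x')| \leq \alpha_i$ whenever $x,x'$ differ only in coordinate $i$, together with the independence of the $X_j$, implies $U_i - L_i \leq \alpha_i$. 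Since both $Y_i$ and $Y_{i-1}$ (the latter being the average of the conditional expectation over $X_i$) lie in $[L_i, U_i]$, we get $|D_i| \leq \alpha_i$. This is the only non-routine step; making the bounded-differences bound pass cleanly through the conditional expectations requires the independence of the $X_j$ to be invoked correctly.

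Finally, I would apply the Azuma--Hoeffding inequality to the martingale difference sequence $(D_i)_{i=1}^m$ with $|D_i| \leq \alpha_i$. This yields
\[ \Pr\bigl[\,|Y_m - Y_0| \geq t\,\bigr] \;\leq\; 2\exp\!\left( \frac{-2t^2}{\sum_{i=1}^m \alpha_i^2} \right), \]
which is exactly the claimed inequality. The Azuma--Hoeffding step itself is proved by the standard exponential moment method (bounding $\E[e^{\lambda D_i} \mid \mathcal{F}_{i-1}] \leq e^{\lambda^2 \alpha_i^2 / 2}$ via Hoeffding's lemma, iterating, and optimizing $\lambda = 4t / \sum_i \alpha_i^2$), and I would simply cite it rather than rederive it.
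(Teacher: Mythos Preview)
The paper does not prove this lemma at all; it is stated in the appendix as a standard concentration inequality and simply invoked as a black box elsewhere in the paper. Your Doob-martingale-plus-Azuma route is the classical textbook proof and is correct in outline.

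One small constant to tidy: your argument actually shows that, conditioned on $X_1,\ldots,X_{i-1}$, both $Y_i$ and $Y_{i-1}$ lie in the interval $[L_i,U_i]$ of length at most $\alpha_i$, so $D_i$ lies in an interval of length $\alpha_i$ (not merely $|D_i|\le\alpha_i$). Hoeffding's lemma then gives $\E[e^{\lambda D_i}\mid\mathcal F_{i-1}]\le e^{\lambda^2\alpha_i^2/8}$, not $e^{\lambda^2\alpha_i^2/2}$ as you wrote; with this correction your choice $\lambda=4t/\sum_i\alpha_i^2$ indeed yields the claimed $2\exp(-2t^2/\sum_i\alpha_i^2)$.
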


\section{The need for perturbing $\opt$}
\label{section:example}
We construct 2 simultaneous instances of {\sc Max-1-SAT}. Suppose the
algorithm will picks at most $r$ influential variables.  Construct the
two instances on $r+1$ variables, with the weights of the variables
decreasing geometrically, say, with ratio $\nfrac{1}{3}$. The first
instance requires all of them to be \true, where as the second
instance requires all of them to be \false. Under a reasonable
definition of ``influential variables'', the only variable left behind
should the vertex with the least weight.  We consider the Pareto
optimal solution that assigns {\true} to all but the last variable. If
we pick the optimal assignment for the influential variables, and then
randomly assign the rest of the variables, with probability $\nfrac{1}{2}$, we
get zero on the second instance.


\end{document}